\documentclass[11pt]{article}
\usepackage{smile}
\usepackage{amsmath}
\usepackage{fullpage}
\usepackage{url}
\usepackage{csquotes}
\usepackage{lscape}
\usepackage{bigints}
\usepackage{framed}
\usepackage{mdframed}
\usepackage{enumerate}
\usepackage[inline]{enumitem}
\usepackage[T1]{fontenc}
\usepackage{moresize}
\usepackage{bm}
\usepackage{bbm}
\usepackage{dsfont}
\usepackage{amsmath}
\usepackage{amssymb}
\usepackage{amsthm}
\usepackage{amsfonts}
\usepackage{stmaryrd}
\usepackage{array}
\usepackage{mathrsfs}
\usepackage{mathtools} 
\usepackage{extarrows}
\usepackage{stackrel}
\usepackage{relsize,exscale}
\usepackage{scalerel}
\usepackage{colortbl}
\usepackage[nodisplayskipstretch]{setspace}
\usepackage{color}
\usepackage[dvipsnames]{xcolor}
\usepackage{cancel}
\usepackage{soul}
\usepackage{xfrac}
\usepackage{siunitx}
\usepackage{graphicx}
\usepackage{float}
\usepackage{rotating}
\usepackage{subcaption}
\usepackage[ruled,vlined]{algorithm2e}
\usepackage{overpic}
\usepackage[all]{xy}
\usepackage{tikz}
\usetikzlibrary{arrows,matrix,positioning,calc,automata,patterns}
\usepackage{booktabs}
\usepackage{dcolumn}
\usepackage{multirow}
\usepackage{diagbox}
\usepackage{tabularx}
\usepackage{verbatim}
\usepackage{listings}
\usepackage[ruled,vlined]{algorithm2e}
\usepackage{fancyvrb}
\usepackage{hyperref}
\usepackage[round]{natbib}
\usepackage{hyperref}
\usepackage{cleveref}
\usepackage{theoremref}
\usepackage{sectsty}

\hypersetup{
    unicode=false,          
    pdftoolbar=true,        
    pdfmenubar=true,        
    pdffitwindow=false,     
    pdfstartview={FitH},    
    pdftitle={My title},    
    pdfauthor={Author},     
    pdfsubject={Subject},   
    pdfcreator={Creator},   
    pdfproducer={Producer}, 
    pdfkeywords={key1, key2}, 
    pdfnewwindow=true,      
    colorlinks=true,        
    linkcolor=blue,         
    citecolor=blue,         
    filecolor=blue,         
    urlcolor=cyan           
}

\usepackage{stackengine}
\stackMath
\newcommand\tenq[2][1]{%
\def\useanchorwidth{T}%
\ifnum#1>1%
\stackunder[0pt]{\tenq[\numexpr#1-1\relax]{#2}}{\!\scriptscriptstyle\thicksim}%
\else%
\stackunder[1pt]{#2}{\!\scriptstyle\thicksim}%
\fi%
}

\makeatletter
\DeclareRobustCommand\widecheck[1]{{\mathpalette\@widecheck{#1}}}
\def\@widecheck#1#2{%
    \setbox\z@\hbox{\m@th$#1#2$}%
    \setbox\tw@\hbox{\m@th$#1%
       \widehat{%
          \vrule\@width\z@\@height\ht\z@
          \vrule\@height\z@\@width\wd\z@}$}%
    \dp\tw@-\ht\z@
    \@tempdima\ht\z@ \advance\@tempdima2\ht\tw@ \divide\@tempdima\thr@@
    \setbox\tw@\hbox{%
       \raise\@tempdima\hbox{\scalebox{1}[-1]{\lower\@tempdima\box
\tw@}}}%
    {\ooalign{\box\tw@ \cr \box\z@}}}
\makeatother

\def\given{\,|\,}

\def\tr{\mathop{\text{tr}}\kern.2ex}

\def\B{{\mathbb B}}
\def\D{{\mathbb D}}
\def\P{{\mathrm P}}
\def\Q{{\mathrm Q}}

\def\E{{\mathrm E}}
\def\R{{\mathbb R}}

\def\Ab{\mathbf{A}}
\def\bu{\bm{u}}
\def\sW{{\sf W}}
\def\sL{{\sf L}}
\def\rD{{\rm D }}

\def\d{{\mathrm d}}

\newcommand{\op}{\mathrm{op}}

\newcommand{\KL}{\mathrm{KL}}

\newcommand\yestag{\addtocounter{equation}{1}\tag{\theequation}}

\newcommand{\pa}[1]{\left(#1\right)}
\newcommand{\bp}[1]{\left[#1\right]}
\newcommand{\ap}[1]{\left|#1\right|}
\newcommand{\cp}[1]{\left\{#1\right\}}
\newcommand{\al}[1]{\begin{align*}#1\end{align*}}

\newcommand{\pr}[1]{\mathrm{P}\pa{#1}}

\newcommand{\ep}[1]{\E\bp{#1}}
\newcommand{\cdl}{\middle |}

\newcolumntype{L}[1]{>{\raggedright\let\newline\\\arraybackslash\hspace{0pt}}m{#1}}
\newcolumntype{C}[1]{>{  \centering\let\newline\\\arraybackslash\hspace{0pt}}m{#1}}
\newcolumntype{R}[1]{>{ \raggedleft\let\newline\\\arraybackslash\hspace{0pt}}m{#1}}
\newcolumntype{d}[1]{D{.}{.}{#1}}
\newcolumntype{H}{>{\setbox0=\hbox\bgroup}c<{\egroup}@{}}
\newcolumntype{Z}{>{\setbox0=\hbox\bgroup}c<{\egroup}@{\hspace*{-\tabcolsep}}}
\newcolumntype{s}{>{\hsize=.5\hsize}X}

\numberwithin{equation}{section}

\newtheorem{theorem}{Theorem}[section]
\newtheorem{lemma}{Lemma}[section]

\newtheorem{assumption}{Assumption}[section]

\newtheorem{innercustomgeneric}{\customgenericname}
\providecommand{\customgenericname}{}
\newcommand{\newcustomtheorem}[2]{%
  \newenvironment{#1}[1]
  {%
   \renewcommand\customgenericname{#2}%
   \renewcommand\theinnercustomgeneric{##1}%
   \innercustomgeneric
  }
  {\endinnercustomgeneric}
}
\newcustomtheorem{customdefinition}{Definition}
\newcustomtheorem{customdefinitions}{Definitions}
\newcustomtheorem{customtheorem}{Theorem}
\newcustomtheorem{customassumption}{Assumption}
\newcustomtheorem{customlemma}{Lemma}
\newcustomtheorem{customexample}{Example}
\theoremstyle{definition}
\newtheorem{definition}{Definition}[section]
\newtheorem{example}{Example}[section]
\newtheorem{remark}{Remark}[section]

\usepackage{enumitem}
\makeatletter
\newcommand{\mylabel}[2]{#2\def\@currentlabel{#2}\label{#1}}
\makeatother

\graphicspath{{./fig3/}}

\allowdisplaybreaks

\let\check\widecheck

\begin{document}

\title{\LARGE Generative modeling for the bootstrap}

\author{Leon Tran\thanks{Department of Statistics, University of Washington, Seattle, WA 98195, USA; e-mail: {\tt leontk@uw.edu}}, ~~Ting Ye\thanks{Department of Biostatistics, University of Washington, Seattle, WA 98195, USA. E-mail: \tt{tingye1@uw.edu}}, ~~Peng Ding\thanks{Department of Statistics, University of California, Berkeley, CA 94720, USA; email: {\tt pengdingpku@berkeley.edu}}, ~and~Fang Han\thanks{Department of Statistics, University of Washington, Seattle, WA 98195, USA; e-mail: {\tt fanghan@uw.edu}}
}

\date{\today}

\maketitle

\vspace{-1em}

\begin{abstract}
Generative modeling builds on and substantially advances the classical idea of simulating synthetic data from observed samples. This paper shows that this principle is not only natural but also theoretically well‐founded for bootstrap inference: it yields statistically valid confidence intervals that apply simultaneously to both regular and irregular estimators, including settings in which Efron’s bootstrap fails. In this sense, the generative modeling-based bootstrap can be viewed as a modern version of the smoothed bootstrap: it could mitigate the curse of dimensionality and remain effective in challenging regimes where estimators may lack root-$n$ consistency or a Gaussian limit.
\end{abstract}

{\bf Keywords:} GAN bootstrap, flow bootstrap, Wasserstein metric, M-estimator, isotonic regression.

\section{Introduction}

Simulating synthetic data from observed samples is by no means a new idea. In statistics, this principle has been proposed repeatedly in support of various data-analytic tasks. Its roots can be traced at least to the work of \citet{scott1954comparison} and \citet{neyman1956distribution}, who employed synthetic sampling for model checking and for detecting unsuspected patterns. Later, \citet{efron1979bootstrap} introduced the {\it bootstrap}, which relies on repeated sampling from the empirical distribution function for statistical inference.  \citet{rubin1993statistical} and \citet{little1993statistical} further explored the idea in the context of privacy protection, advocating the release of fully synthetic datasets and proposing to use {\it multiple imputation} \citep{rubin1987multiple}, which generates new data by sampling from a Bayesian posterior distribution.

From a different corner of the scientific landscape, machine learning---originally centered on prediction \citep{breiman2001statistical}---has undergone tremendous advances, particularly with the rise of deep learning. Against this backdrop, the seminal contributions of \cite{kingma2013auto} and \cite{goodfellow2014generative}, followed by \cite{chen2018neural}, \cite{song2020score}, and many others, sparked a new revolution. At the heart of this revolution, now known as {\it generative modeling}, lies the principle of ``creating data from noise'' \citep{song2020score}: a perspective that differs in intriguing ways from the early ideas explored by Scott, Neyman, Efron, and Rubin.

Motivated by these subtle differences, as well as by the remarkable empirical success of generative modeling, this paper proposes to leverage the principle of generative modeling for bootstrap inference. In particular, we develop a new framework that embraces the perspective of ``creating data from noise'' and enables bootstrap procedures based on repeated sampling from a generative model learned from the observed data. Alternatively, this framework could be viewed as generalizing
\begin{itemize}
\item[(a)] the parametric bootstrap \citep{efron2012bayesian}, which resamples from a learned parametric model;  
\item[(b)] the smoothed bootstrap \citep{efron1979bootstrap,silverman1987bootstrap}, which resamples from a nonparametric estimate of the data distribution. 
\end{itemize}
In this sense, generative modeling-based bootstrap can be regarded as a modern version of the smoothed bootstrap: it approximates the distribution of an unknown statistical estimator by resampling from a flexible, nonparametric estimator of the underlying data distribution.

From a theoretical standpoint, under the proposed general framework we establish broad criteria ensuring that {\it any} data distribution estimator satisfying these conditions yields a consistent bootstrap method for both regular and irregular statistical procedures---the latter being settings in which Efron's bootstrap is known to fail \citep{kosorok2008bootstrapping,sen2010inconsistency,groeneboom2024confidence,lin2024failure,lin2026consistency}. The resulting theory, presented in Theorems~\ref{thm_reg} and~\ref{thm:iso}, may thus be viewed as a modern counterpart of \citet[Theorem~2.1]{bickel1981some}, who already envisioned the possibility of resampling from a general estimator of the data distribution and developed the foundational theory more than forty years ago.

Specializing our framework to concrete generative modeling techniques, and building largely on the theoretical insights of \cite{biau2020some}, \cite{shen2023asymptotic}, and \cite{irons2022triangular}, we identify conditions under which generative adversarial networks (GANs) and flow-based generative models naturally fit within our setup, thereby giving rise to GAN-based and flow-based bootstrap procedures. From this perspective, the flow bootstrap is particularly appealing: unlike GAN-based approaches, flow estimators are typically more regular and guaranteed to be nondegenerate. Consequently, they lead to consistent bootstrap procedures that apply to both regular and irregular estimators, whereas the GAN bootstrap generally lacks comparable consistency guarantees in the irregular setting. This provides an additional, statistical inferential, perspective favoring flow-based over GAN-based generative models \citep{kobyzev2020normalizing}.

The rest of this paper is organized as follows. Section~\ref{sec:frame} introduces the general framework and provides illustrative examples. Sections~\ref{sec:regular} and~\ref{sec:iso} develop the corresponding bootstrap consistency theory for regular M-estimators and for the isotonic regression estimator, the latter being a prominent example of an irregular estimator. Section~\ref{sec:gai} presents conditions under which certain versions of GANs and flow-based models yield consistent bootstrap procedures. Simulation results are reported in Section~\ref{sec:sim}, and the proofs of the main theorems are collected in Section~\ref{sec:proofmain}.
Supporting lemmas are stated in Sections~\ref{sec:proofsupp} and~\ref{sec:otherresults}.

\section{Generative modeling-based bootstrap} \label{sec:frame}

\subsection{A general framework}\label{sec:framework}

Consider random vectors 
\[
\mZ,\mZ_1,\mZ_2, \ldots \in \cZ \subset \R^p 
\]
sampled independently from some unknown {\it data distribution} $\P_Z$ with an unknown support $\cZ$. In this paper, the {\it support} of the distribution $\P_Z$ of $\mZ$ refers to the smallest closed set $\cZ \subseteq \R^p$ such that $\P(\mZ \in \cZ) = 1$.  A common statistical task is to estimate and infer an {\it estimand} $\theta_0 = \theta_0(\P_Z)$ using an {\it estimator} $\hat\theta_n = \hat\theta_n(\mZ_1,\ldots,\mZ_n)$, which is a function of the data $\{\mZ_i : i \in [n] \}$ with size $n$ and $[n] := \{1,2,\ldots,n\}$.

Unlike estimation, inference requires a deeper understanding of the stochastic behavior of $\hat\theta_n$: in particular its (limiting) distribution. To approximate the distribution of $\hat\theta_n$, bootstrap methods are widely used and typically proceed in two steps:
\begin{itemize}
\item[] \textbf{Step 1:} For each bootstrap iteration, resample $n$ {\it synthetic observations} $\tilde\mZ_1,\ldots,\tilde\mZ_n \in \tilde{\cZ}_n$ from a (random) distribution $\bP_{\tilde Z,n}$, with support $\tilde{\cZ}_n$, that is learned from the data and intended to approximate $\P_Z$.
\item[] \textbf{Step 2:} Use the conditional distribution of $\hat\theta_n(\tilde\mZ_1,\ldots,\tilde\mZ_n)$ given the original sample to approximate the sampling distribution of $\hat\theta_n = \hat\theta_n(\mZ_1,\ldots,\mZ_n)$.
\end{itemize}
Different bootstrap procedures arise from different choices of $\bP_{\tilde Z,n}$. The choice $\bP_{\tilde Z,n} = \bP_n^Z$, the empirical measure of $\{\mZ_i\}_{i\in[n]}$, corresponds to the original proposal of \cite{efron1979bootstrap} and remains the most widely used form of bootstrap.

Adopting the generative modeling philosophy, we introduce a new class of choices for $\bP_{\tilde Z,n}$ by incorporating additional randomness. Let 
\[
\mU, \mU_1, \mU_2, \ldots ~~{\rm and}~~~ \tilde\mU_1, \tilde\mU_2, \ldots \in \cU \subset \R^p
\]
be  random vectors sampled independently from a {\it known} distribution $\P_U$, with support $\cU$, and independent of the data. One may regard $\mU_i$'s and $\tilde\mU_i$'s as {\it noise} and $\P_U$ as the corresponding {\it noise distribution}. A broad class of generative models approximates the data distribution $\P_Z$ by learning a {\it generator}
\begin{align}\label{eq:generator}
\hat\mG_n: \cU \to \tilde{\cZ}_n,
\end{align}
from either the paired observations $\{(\mZ_i,\mU_i)\}_{i\in[n]}$ or from $\{\mZ_i\}_{i\in[n]}$ alone. The goal of this learning process is to ensure that the pushforward distribution $\hat\mG_n \# \P_U$ is close, in 
some predefined metric, to the true data distribution $\P_Z$. The sample $\{\hat\mG_n(\tilde\mU_i)\}_{i\in[n]}$ then constitutes size-$n$ {\it synthetic data}, created from noise.

Because $\hat\mG_n \# \P_U$ is intended to approximate $\P_Z$, it is natural to introduce a new class of bootstrap procedures by setting
\[
\tilde\mZ_i = \hat\mG_n(\tilde\mU_i), \qquad i \in [n],
\]
and using the conditional distribution of
\[
\hat\theta_n\big(\tilde\mZ_1,\ldots,\tilde\mZ_n\big) \given \{(\mZ_i,\mU_i)\}_{i\in[n]}
\]
to approximate the sampling distribution of $\hat\theta_n(\mZ_1,\ldots,\mZ_n)$. In this paper, we refer to such procedures as {\it generative modeling-based bootstraps}.

\subsection{Examples}\label{sec:example}

Different generative models correspond to different choices of $\hat\mG_n$ in \eqref{eq:generator}. To introduce the generative models of interest, we begin with some additional notation. For any vector, let $\dim(\cdot)$ denote its dimension, and let $\|\cdot\|_2$, and $\|\cdot\|_\infty$ denote its $\ell_2$, and $\ell_\infty$ norms, respectively. Whenever ``$\leq$'' is used to compare two vectors, the comparison is done componentwise. For any (not necessarily square) matrix $\Ab$, let $\|\Ab\|_{\op}$ denote its spectral norm, and $\|\Ab\|_{\max}$ denote the maximum absolute value among its entries. For a square matrix, let $\det(\cdot)$ denote its determinant. Throughout the manuscript, the symbols ``$\vee$'' and ``$\wedge$'' represent the maximum and minimum, respectively, of two quantities.

 We first introduce the function class of {\it neural networks}.

\begin{definition}[Neural networks] \label{def:nn}
A neural network function class, denoted by $\cF_\alpha( L, W, B, q_1, q_2)$, consists of all neural networks with {\it depth} $L$, {\it width bound} $W$, {\it magnitude bound} $B$, input dimension $q_1$, output dimension $q_2$, and activation function 
\[
\alpha(\cdot): \R \rightarrow \R. 
\]
A function $\mf \in \cF_\alpha(L, W, B, q_1, q_2)$ is a mapping $\mf:\R^{q_1} \to \R^{q_2}$ defined recursively by $\mf(\mx) = \mx^{(L)}$, where $\mx^{(0)} = \mx$ and
\[
\mx^{(\ell)} = \alpha \Big(\Ab^{(\ell)} \mx^{(\ell-1)} + \mb^{(\ell)}\Big), \quad \ell \in [L],
\]
with $\alpha(\cdot)$ applied componentwise and the matrices $\Ab^{(1)},\ldots,\Ab^{(L)}$ and vectors $\mb^{(1)},\ldots,\mb^{(L)}$ satisfying
\begin{align*}
    \max_{\ell \in [L]} \|\Ab^{(\ell)}\|_{\op} \,\vee\, \max_{\ell \in [L]} \|\mb^{(\ell)}\|_{2} &\le B~~~{\rm and}~~~\max_{\ell \in [L]} \dim\big(\mb^{(\ell)}\big) \le W.
\end{align*}
\end{definition}

With Definition~\ref{def:nn}, we are ready to introduce the neural network-based (Wasserstein) GAN.

\begin{example}[Wasserstein GAN-based generative models, \cite{arjovsky2017wasserstein}] \label{exe:wgan}
Fix sequences of positive integers $L_n^{\rm gen}$,$ W_n^{\rm gen}$, $B_n^{\rm gen}$ and $L_n^{\rm disc}, W_n^{\rm disc}$ that may depend on the sample size $n$. Also, choose an activation function $\alpha(\cdot): \R \rightarrow \R$. Define the classes of {\it generator neural networks} and {\it discriminator neural networks} as
\[
\cG_n := \cF_\alpha(L_n^{\rm gen}, W_n^{\rm gen}, B_n^{\rm gen}, p, p),
\qquad 
\cD_n := \cF_\alpha(L_n^{\rm disc}, W_n^{\rm disc}, 1, p, 1).
\]
A Wasserstein GAN (W-GAN) aims to minimize the loss function 
\[
\sW: \cG_n \times \cD_n \times \R^p \times \R^p \to \R,
\qquad 
\sW(\mG, D, \mz, \bu) := D(\mG(\bu)) - D(\mz),
\]
which is closely connected to the Wasserstein metric \citep[Equation~(3)]{arjovsky2017wasserstein}. To train a GAN generator $\hat\mG_n^{\rm GAN}$, alternating maximization/minimization is performed so that
\[
\hat D_n^{(k)}
    \in 
    \arg\max_{D \in \cD_n}
    \sum_{i=1}^n \sW\big(\hat\mG_n^{(k-1)}, D, \mZ_i, \mU_i\big)
~~~
{\rm and}
~~~
\hat\mG_n^{(k)}
    \in 
    \arg\min_{\mG \in \cG_n}
    \sum_{i=1}^n \sW\big(\mG, \hat D_n^{(k)}, \mZ_i, \mU_i\big).
\]
Both updates are implemented using stochastic gradient descent over the corresponding neural network parameters. The final generator $\hat\mG_n^{\rm GAN}$ (and discriminator $\hat D_n^{\rm GAN}$) is then taken as $\hat\mG_n^{(k)}$ (and $\hat D_n^{(k)}$) for some sufficiently large $k$.
\end{example}

Flow-based generative models provide attractive alternatives to GAN-based approaches, offering more tractable and stable distributions \citep{kobyzev2020normalizing}. We illustrate this using the following {\it autoregressive flows} \citep{huang2018neural} coupled with affine transformers \citep{dinh2016density}, which are referred to as {\it affine autoregressive flows}, beginning with a definition of bijective monotone upper triangular functions.

\begin{definition}[Bijective monotone upper triangular functions]\label{def:bmut} A {\it bijective monotone upper triangular} function $\mF = (F_1,\ldots,F_p)^\top:\R^p \to \R^p$ is a function that satisfies (a) $\mF$ is bijective, (b) each $F_i$ is strictly increasing in each of its coordinates; and
(c) each $F_i$ depends only on the first $i$ coordinates of the input.  That is, for any $\mx = (x_1, \ldots, x_p)^\top \in \R^p$,
      \[
      F_i(\mx) = F_i(x_1, \ldots, x_i).
      \]
\end{definition} 

With Definition~\ref{def:bmut}, we are now ready to introduce the affine autoregressive flows and the corresponding generative models.

\begin{definition}[Affine autoregressive flows] \label{def:tf}
For any positive integer $\nu$, a function class $\cF_\nu$ is called a class of  affine autoregressive flows of depth $\nu$ if it can be expressed as
\[
    \cF_\nu 
    = 
    \Bigl\{
        \mF^\nu \circ \boldsymbol\Sigma_\nu \circ \cdots \circ \mF^1 \circ \boldsymbol\Sigma_1
        :
        \det(\boldsymbol\Sigma_j) \neq 0,\;
        \mF^j \in \cT_\uparrow(p)
    \Bigr\},
\]
where $\cT_\uparrow(p)$ denotes the set of all {\it bijective monotone upper triangular} functions with domain and range $\R^p$.
\end{definition}


\begin{example}[Affine autoregressive flow-based generative models] \label{exe:flow}
Assume the known noise distribution $\P_U$ admits a Lebesgue density, $p_U$. For any $\mS \in \cF_\nu$, write 
\[
\mS = \mF^\nu \circ \boldsymbol\Sigma_\nu \circ \cdots \circ \mF^1 \circ \boldsymbol\Sigma_1,
\quad 
\mF^i = (F_1^i,\ldots,F_p^i)^\top,
\]
and define the objective function \(\Gamma : \cF_\nu \times \R^p \to \R\) as
$$
\Gamma(\mS,\mz) 
= \log p_U(\mS(\mz)) + \sum_{i=1}^\nu \Big\{
        \log \det(\boldsymbol\Sigma_i) 
        + \sum_{j=1}^p \log\big( \sD_j F_j^i(\mx^{(i)}) \big)
    \Big\}, 
$$
where $\mx^{(i)} 
= \boldsymbol\Sigma_i \circ \cdots \circ \mF^1 \circ \boldsymbol\Sigma_1 (\mz)$ for $i = 1,\ldots,\nu$, and $\sD_j$ denotes the partial derivative with respect to the \(j\)-th coordinate for $j=1,\ldots, p$.

The change-of-variables formula gives
\[
    \Gamma(\mS,\mz) = \log p_{S^{-1}(U)}(\mz),
\]
where \(p_{S^{-1}(U)}\) is the Lebesgue density of the transformed random variable \(\mS^{-1}(\mU)\).  
In other words, the objective \(\Gamma(\mS,\mz)\) returns the log-density of \(\mS^{-1}(\mU)\) evaluated at the point \(\mz\).

Training a flow generator \(\hat\mG_n^{\rm flow}\) therefore reduces to the following (nonparametric) maximum likelihood estimation problem:
\begin{align}\label{eq:flow}
\hat\mG_n^{\rm flow}
    = (\hat\mS_n^{\rm flow})^{-1}~~\text{ with }
   \hat\mS_n^{\rm flow} \in \arg\max_{\mS \in \cF_\nu}
    \sum_{i=1}^n \Gamma(\mS, \mZ_i).
\end{align}
In \eqref{eq:flow}, the optimization problem is often solved by considering a smaller function class than $\cF_\nu$, leading to Real NVP \citep{dinh2016density} and many other popular normalizing flow models; cf. \citet[Section 3.1]{papamakarios2021normalizing}. Next, thanks to the affine autoregressive structure, $\hat\mS_n^{\rm flow}$ can 
be inverted analytically, leading to a computationally stable and tractable flow generator $\hat\mG_n^{\rm flow}$.
\end{example}

\subsection{Discussion}

We conclude this section with a brief discussion of the connections between the generative modeling-based bootstrap framework and the classical bootstrap literature, along with some related works. To begin with, we note that the framework introduced in Section~\ref{sec:framework} also encompasses many classical bootstrap procedures. For example, let $\P_U$ denote the Lebesgue measure on $[0,1]^p$, and define
\[
\hat\mG_n(\bu)
    = \sum_{i=1}^n \mZ_i \cdot 
      \mathds{1}\!\Big( \tfrac{i-1}{n} < u_1 \le \tfrac{i}{n} \Big),
    \qquad 
    \text{for any } \bu = (u_1,\ldots,u_p)^\top \in [0,1]^p,
\]
where $\mathds{1}(\cdot)$ denotes the indicator function. This construction immediately recovers Efron's original bootstrap. Such a connection is natural because, at least in one dimension, the quantile map constitutes the {\it optimal transport} from $\P_U$ to $\P_Z$ under any convex cost function \citep[Theorem~1.5.1]{panaretos2020invitation}.

In a similar vein, the smoothed bootstrap introduced by \cite{efron1979bootstrap} can also be accommodated within the framework of Section~\ref{sec:framework}. Specifically, given any {\it proper} estimator of the underlying data-generating distribution (e.g., the kernel density estimators), one may generate new samples by applying the Brenier map \citep{brenier1991polar} to uniformly distributed noise via the optimal transport. From this perspective, the smoothed bootstrap may be interpreted as a generative modeling-based bootstrap procedure, even though Efron's original motivation was rooted in a rather different philosophical standpoint.

Overall, despite the natural appeal of such generative modeling-based bootstrap methods, it is somewhat striking that the literature along this direction remains relatively sparse. Three notable exceptions are \cite{haas2020statistical}, which suggested using GANs to implement a version of the smoothed bootstrap; \cite{dahl2022time}, which explored GAN-based bootstrap inference for time series; and \cite{athey2024using}, which investigated the use of GANs for causal inference. In all cases, however, the scope is relatively specialized and the emphasis is predominantly empirical.

\section{Theory for regular M-estimators} \label{sec:regular}
We illustrate the validity of generative modeling-based bootstrap by first considering one of the most prevalent classes of estimators: {\it M-estimators}.  
Let 
\[
\sL : \R^q \times \R^p \to \R
\]
be a general objective function mapping a $q$-dimensional parameter 
and a $p$-dimensional data point 
to a real value.  
For a parameter space \(\cK \subset \R^q\), define the population and empirical maximizers
\[
\meta_0 := \arg\max_{\meta \in \cK} \E\big[\sL(\meta, \mZ)\big],
\qquad
\hat\meta_n \approx \arg\max_{\meta \in \cK} \sum_{i=1}^n \sL(\meta, \mZ_i),
\]
where the uniqueness of the maximizers is assumed and ``\(\approx\)'' allows for numerical optimization error.

In the bootstrap analogue, define similarly
\[
\tilde\meta_0 \in \arg\max_{\meta \in \cK} 
    \E\!\Big[\sL\big(\meta, \hat\mG_n(\mU)\big) \,\big|\, \mZ_1,\ldots,\mZ_n, \mU_1, ..., \mU_n\Big],
\qquad
\tilde\meta_n \approx \arg\max_{\meta \in \cK}
    \sum_{i=1}^n \sL\big(\meta, \hat\mG_n(\tilde\mU_i)\big).
\]
Bootstrap inference then proceeds by approximating the distribution of  
\(
\hat\meta_n - \meta_0
\)
using the conditional distribution of  
\(
\tilde\meta_n - \tilde\meta_0
\)
given the original data.

To establish consistency of generative modeling-based bootstrap procedures, we first lay out the required conditions on the data/noise space, the generator, and the M-estimators.

\begin{assumption}[Data space, I]\label{assump_ndata}
Assume that:
\begin{enumerate}[label=(\alph*)]
\item $\mZ, \mZ_1, \mZ_2, \ldots \in \cZ \subset \R^p$ are independently drawn from an unknown distribution $\P_Z$ that admits a continuous Lebesgue density $p_Z$ that has nonzero variance;
\item the set $\cZ$ is convex and compact.
\end{enumerate}
\end{assumption}

\begin{assumption}[Noise space, I]\label{assump_noise1}
Assume that $\mU, \mU_1, \mU_2, \ldots$ and $\tilde\mU_1, \tilde\mU_2, \ldots \in \cU$ are independently drawn from the known distribution $\P_U$, are independent of the data, and have nonzero variance.
\end{assumption}

\begin{assumption}[Generator, I]\label{assump_bootstrap} Assume that:
\begin{enumerate}[label=(\alph*)]
\item the generator $\hat\mG_n: \cU \rightarrow \tilde\cZ_n$, introduced in \eqref{eq:generator}, is a function of $\{(\mZ_i,\mU_i)\}_{i \in [n]}$;

\item the random measure $\bP_{\tilde Z \mid \cO}$ has nonzero variance $\P_{\cO}$-almost surely, and satisfies
\[
\sW_1\big(\bP_{\tilde Z \mid \cO},\, \P_Z\big) = o_{\P_{\cO}}(1),
\]
where $\sW_1(\cdot,\cdot)$ denotes the Wasserstein-1 distance (in Euclidean metric space) and $\bP_{\tilde Z \mid \cO}$ is the conditional distribution of $\tilde \mZ = \hat\mG_n(\mU)$ given  
\[
\cO := \sigma(\mZ_1,\mZ_2,\ldots,\mU_1,\mU_2,\ldots),
\]
the $\sigma$-field generated by data and noise, with the associated probability measure $\P_\cO$; 

\item for all $n$,  $\tilde{\cZ}_n \subseteq \tilde{\cZ}$  $\P_{\cO}$-almost surely, where $\tilde{\cZ}$ is a nonrandom and compact subset of $\R^p$.
\end{enumerate}

\end{assumption}

\begin{assumption}[Objective function]\label{assump_l}
For any $\mz \in \cZ \cup \tilde{\cZ}$ and $\meta \in \cK$, assume that:
\begin{enumerate}[label=(\alph*)]
\item $\cK$ is convex and compact;
\item the map $\boldsymbol\eta \mapsto \sL(\boldsymbol\eta,\mz)$ is twice continuously differentiable on $\cK$;
\item letting $\rD^k_{\meta}$ be the $k$-th derivative with respect to $\meta$, the maps
\[
\mz \mapsto \sL(\boldsymbol\eta, \mz),\qquad
\mz \mapsto \rD_{\boldsymbol\eta}\sL(\boldsymbol\eta, \mz),\qquad
\mz \mapsto \rD_{\boldsymbol\eta}^2\sL(\boldsymbol\eta, \mz)
\]
are continuous on $\cZ \cup \tilde \cZ$;
\item Fisher's information matrix $- {\rm D}_{\boldsymbol\eta}^2\E[\sL(\boldsymbol\eta_0, \mZ)]$ is  invertible.
\end{enumerate}
\end{assumption}

\begin{assumption}[M-estimator]\label{assump_eta}
Assume that:
\begin{enumerate}[label=(\alph*)]
\item $\boldsymbol\eta_0$ is an interior point of $\cK$;
\item $\boldsymbol\eta_0$ uniquely maximizes $\E\Big[\sL(\boldsymbol\eta,\mZ)\Big]$;
\item the estimator $\hat\meta_n$ is consistent for $\meta_0$ in the sense that
$
\|\hat{\boldsymbol\eta}_n - \boldsymbol\eta_0\|_2
    \xrightarrow[\;]{\P_{\cO}} 0;
$
\item $\hat\meta_n$ is an approximate empirical maximizer in the sense that  
      $\hat\meta_n \in \cK$ and
\[
\frac1n \sum_{i=1}^n \sL(\hat{\boldsymbol\eta}_n, \mZ_i) + o_{\P_{\cO}}(n^{-1})
    \;\ge\;
    \sup_{\boldsymbol \eta \in \cK} \frac1n \sum_{i=1}^n \sL(\boldsymbol\eta, \mZ_i)
    ,
\]
allowing for numerical optimization error.
\end{enumerate}
\end{assumption}

\begin{assumption}[Bootstrap M-estimator]\label{assump_bsmest}
Assume that:
\begin{enumerate}[label=(\alph*)]
\item The bootstrap estimator satisfies
$
\|\tilde{\boldsymbol\eta}_n - \tilde{\boldsymbol\eta}_0\|_2
    \xrightarrow[\;]{\P_{\cO\tilde U}} 0,
$
where $\P_{\cO\tilde U}$ refers to the joint distribution of  
$(\mZ_1,\mZ_2,\ldots)$ and $(\mU_1,\mU_2,\ldots,\tilde\mU_1,\tilde\mU_2,\ldots)$;
\item $\tilde\meta_n$ is an approximate empirical maximizer in the bootstrap world,  
      in the sense that $\tilde\meta_n \in \cK$ and
\[
\frac1n \sum_{i=1}^n \sL\big(\tilde{\boldsymbol\eta}_n,\, \hat\mG_n(\tilde\mU_i)\big) + o_{\P_{\cO\tilde U}}(n^{-1})
    \;\ge\;
    \sup_{\boldsymbol{\eta} \in \cK}\frac1n \sum_{i=1}^n \sL\big({\boldsymbol\eta},\, \hat\mG_n(\tilde\mU_i)\big).
\]
\end{enumerate}
\end{assumption}

Assumption~\ref{assump_l} corresponds to the ``classical conditions'' described in, e.g., \citet[Chapter~5.6]{Vaart_1998}. Assumption~\ref{assump_eta} represents the standard identifiability and consistency condition for M-estimators, while Assumption~\ref{assump_bsmest} serves as its bootstrap analogue. These assumptions hold automatically under Glivenko--Cantelli conditions for the loss function $\sL$ over $\meta$; see \citet[Theorem~5.7]{Vaart_1998}. Section~\ref{sec:gai} will provide sufficient conditions under which the generative models discussed in Section~\ref{sec:example} satisfy Assumption~\ref{assump_bootstrap}.

Although it is in principle possible to establish consistency of the generative modeling-based bootstrap under weaker smoothness conditions than, for instance, Assumption~\ref{assump_l}, by appealing to more refined empirical process techniques \citep[Chapter~3.6]{van1996weak}, we believe that doing so would add limited additional insight. The present theory already fulfills its intended purpose and underscores the main message: generative modeling-based bootstraps can consistently recover the distribution of regular M-estimators with appropriate theoretical guarantees. In this sense, they provide a viable alternative to existing bootstrap procedures.

In detail, with the above assumptions, the following theorem gives the bootstrap consistency of $\tilde{\boldsymbol\eta}_n$ to approximate the distribution of $\hat{\boldsymbol\eta}_n$.

\begin{theorem}[Bootstrap consistency, regular M-estimators]\label{thm_reg}
Under Assumptions \ref{assump_ndata}-\ref{assump_bsmest}, we have 
$$
\sup_{\mt \in \R^q} \Big|\pr{\sqrt{n} (\tilde{\boldsymbol\eta}_n - \tilde{\boldsymbol\eta}_0) \leq \mt| \mathcal{O}} - \pr{\sqrt{n}(\hat{\boldsymbol\eta}_n - \boldsymbol\eta_0) \leq \mt}\Big| = o_{\P_{\cO}}(1).
$$
\end{theorem}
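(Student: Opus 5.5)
We prove both sides have the same Gaussian limit and then use Pólya's theorem to make the convergence uniform in $\mt$. Write $\mathbf{V}_0 := -\rD^2_{\meta}\E[\sL(\meta_0,\mZ)]$ and $\boldsymbol\Sigma_0 := \mathrm{Var}(\rD_\meta \sL(\meta_0,\mZ))$. Assumption~\ref{assump_l} gives the classical expansion (e.g. \citet[Theorem~5.23]{Vaart_1998}, or a direct Taylor argument since $\cZ$ and $\cK$ are compact). Together with Assumption~\ref{assump_eta}, this yields
\[
\sqrt n(\hat\meta_n-\meta_0)=\mathbf V_0^{-1}\frac{1}{\sqrt n}\sum_{i=1}^n \rD_\meta\sL(\meta_0,\mZ_i)+o_{\P_\cO}(1)\rightsquigarrow N(0,\mathbf V_0^{-1}\boldsymbol\Sigma_0\mathbf V_0^{-1}).
\]
The interior condition makes the population score vanish. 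The approximate-maximizer condition, combined with a second-order Taylor expansion whose Hessian remainder is uniformly small by equicontinuity, gives the $\sqrt n$-rate and the linearization. If the limit covariance were degenerate, Pólya would need care at the boundary; I would handle that by arguing on the support, and otherwise treat it as nondegenerate.

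The core of the argument is the same expansion in the bootstrap world, conditionally on $\cO$, along subsequences. By the subsequence characterization of convergence in probability, it suffices to show that every subsequence has a further subsequence along which $\sW_1(\bP_{\tilde Z\mid\cO},\P_Z)\to0$ almost surely. We then show that, for such $\cO$-paths, the conditional law of $\sqrt n(\tilde\meta_n-\tilde\meta_0)$ converges to the same normal.

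The key transfer step uses $\sW_1$-convergence. Weak convergence on the compact set $\cZ\cup\tilde\cZ$ (Assumption~\ref{assump_bootstrap}(c)) gives $\E[f(\tilde\mZ)\mid\cO]\to\E f(\mZ)$ uniformly over any class of functions that is equicontinuous and uniformly bounded on that set. By Assumption~\ref{assump_l}(b,c), $\{\mz\mapsto \sL,\rD_\meta\sL,\rD^2_\meta\sL\,(\meta,\mz):\meta\in\cK\}$ is such a class, since joint continuity holds on a compact product. Hence:
\begin{enumerate*}[label=(\roman*)]
\item $\E[\sL(\cdot,\tilde\mZ)\mid\cO]\to\E\sL(\cdot,\mZ)$ uniformly on $\cK$, so by uniqueness (Assumption~\ref{assump_eta}(b)) the argmax $\tilde\meta_0\to\meta_0$; in particular $\tilde\meta_0$ is eventually interior and $\E[\rD_\meta\sL(\tilde\meta_0,\tilde\mZ)\mid\cO]=0$;
\item $\tilde{\mathbf V}_n:=-\E[\rD^2_\meta\sL(\tilde\meta_0,\tilde\mZ)\mid\cO]\to\mathbf V_0$, which is therefore invertible;
\item $\tilde{\boldsymbol\Sigma}_n:=\mathrm{Var}(\rD_\meta\sL(\tilde\meta_0,\tilde\mZ)\mid\cO)\to\boldsymbol\Sigma_0$.
\end{enumerate*}

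The bootstrap sample is i.i.d. from $\bP_{\tilde Z\mid\cO}$, so we have a triangular array. We then rerun the Taylor argument with Assumption~\ref{assump_bsmest}, using a uniform LLN for the Hessian over a shrinking neighbourhood; the variance of a uniformly bounded equicontinuous class is uniformly $O(1/n)$. This gives
\[
\sqrt n(\tilde\meta_n-\tilde\meta_0)=\tilde{\mathbf V}_n^{-1}\frac{1}{\sqrt n}\sum_{i=1}^n \rD_\meta\sL(\tilde\meta_0,\hat\mG_n(\tilde\mU_i))+o_{\P}(1)\quad\text{conditionally on }\cO.
\]
The summands are centered, bounded and have variance tending to $\boldsymbol\Sigma_0$, so the Lindeberg condition holds trivially, and Lindeberg--Feller gives the same limit $N(0,\mathbf V_0^{-1}\boldsymbol\Sigma_0\mathbf V_0^{-1})$.

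The main obstacle is the conditional remainder control. Assumption~\ref{assump_bsmest} gives consistency only in $\P_{\cO\tilde U}$-probability, not conditionally on each $\cO$-path. I would first convert it via Markov's inequality: $\P(\|\tilde\meta_n-\tilde\meta_0\|>\epsilon\mid\cO)\to0$ in $\P_\cO$-probability. Passing to further subsequences, this holds almost surely jointly with the $\sW_1$ convergence. The $o(n^{-1})$ optimization error is handled the same way. Having both limits, the continuous Gaussian cdf and Pólya's theorem turn pointwise convergence into the stated uniform $o_{\P_\cO}(1)$ bound along each subsequence, which completes the proof.
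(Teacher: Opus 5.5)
Your proof is correct and follows the same skeleton as the paper's:
\begin{itemize}
\item pass to subsequences along which $\sW_1(\bP_{\tilde Z\mid\cO},\P_Z)\to0$ almost surely;
\item transfer this to uniform convergence over $\cK$ of conditional expectations of $\sL$, $\rD_\meta\sL$ and $\rD^2_\meta\sL$, which gives $\tilde\meta_0\to\meta_0$ and convergence of the Hessian and score variance;
\item apply a triangular-array CLT, then Cram\'er--Wold and P\'olya.
\end{itemize}
The small differences are cosmetic. For the transfer, the paper's Lemma~\ref{lemma:w2cvg} uses a coupling bound plus uniform continuity, while you use the Arzel\`a--Ascoli fact. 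For the CLT, the paper uses Lyapunov and you use Lindeberg.

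The genuine difference is the bootstrap linearization, Lemma~\ref{cdllinear}(b)(iv). The paper follows the proof template of \citet[Theorem~5.23]{Vaart_1998}:
\begin{itemize}
\item it obtains the $\sqrt n$-rate from a rate theorem (Lemma~\ref{lemma:vdv552}), fed by a Dudley-entropy modulus bound (Lemma~\ref{lemma:Ldonsker});
\item it controls the normalized Taylor remainder $\check R$ with a conditional asymptotic-equicontinuity lemma (Lemmas~\ref{lemma:triangular_array_1924} and~\ref{lemma:donsker_verify});
\item it compares with the one-step point $\check\meta_n$ by completing the square.
\end{itemize}
You instead expand the bootstrap empirical criterion to second order with the empirical Hessian. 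On a shrinking neighbourhood of $\tilde\meta_0$ that Hessian equals $-\tilde{\mathbf V}_n+o(1)$, by a one-point LLN and uniform continuity of $\rD^2_\meta\sL$ on the compact $\cK\times(\cZ\cup\tilde\cZ)$. The same square-completion then gives the rate and the linearization at once. Your route is shorter and uses fully the pointwise $C^2$ smoothness that Assumption~\ref{assump_l} grants. The paper's route is heavier, but it is the one that extends to criteria that are only Lipschitz in $\meta$.

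Your explicit Markov-plus-subsequence conversion is also a real contribution. It turns the $\P_{\cO\tilde U}$-statements of Assumption~\ref{assump_bsmest} (consistency and the $o(n^{-1})$ optimization error) into almost-sure conditional statements. The paper skips this step when it applies Slutsky conditionally in its Step~4.

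One correction: ``arguing on the support'' does not rescue a degenerate limit. If a diagonal entry of $\mathbf V_0^{-1}\boldsymbol\Sigma_0\mathbf V_0^{-1}$ vanishes, the limiting CDF has a jump and P\'olya's theorem no longer applies. The uniform-in-$\mt$ claim can then fail outright. For example, take $\sL(\meta,\mz)=-(\eta_1-z_1)^2-(\eta_2-1)^2$, whose second score coordinate is constant in $\mz$. Let the admissible $o(n^{-1})$ optimization errors place $\sqrt n(\hat\eta_{n,2}-1)=n^{-1/4}$ and $\sqrt n(\tilde\eta_{n,2}-1)=-n^{-1/4}$. Then at $\mt=(t_1,0)$ with $t_1$ large, the two probabilities differ by nearly $1$.

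So positivity of that diagonal has to be assumed, not argued around. The paper implicitly needs this too: its claim that $\mathrm{Var}(\tilde V_{1,n_{k_\ell}}\mid\cO)$ tends to a nonzero constant does not follow from $\tilde\mZ$ having nonzero variance.
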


\begin{remark}
For reasons similar to those discussed prior to Theorem~\ref{thm_reg}, we do not attempt to extend Theorem~\ref{thm_reg} to high-dimensional regimes in which the data dimension $p$ is large relative to the sample size $n$. Instead, this setting is examined empirically in Section~\ref{sec:sim}. The simulation results reported there provide encouraging evidence that generative modeling-based bootstrap methods implemented via GANs and normalizing flows can match the best performance of Efron's original bootstrap, while being substantially less affected by the curse of dimensionality than the smoothed bootstrap (based on kernel density estimators).
\end{remark}

\section{Theory for isotonic regression: an irregular estimator} \label{sec:iso}

Inference for irregular estimators---those that typically fail to achieve root-$n$ consistency and do not admit a Gaussian limit---has long been a central topic in mathematical statistics and econometrics. Prominent examples include shape-constrained inference \citep{groeneboom2014nonparametric} and Manski-type estimators \citep{manski1981structural,cattaneo2020bootstrap}. 

Because the limiting distributions in such problems are often intricate, bootstrap-based methods are particularly attractive. However, it is now well understood that Efron’s original bootstrap is generally inconsistent in these settings. As a result, inference for irregular estimators is typically conducted using variants of the smoothed bootstrap \citep{kosorok2008bootstrapping,sen2010inconsistency,groeneboom2024confidence}.

This section contributes to this literature by analyzing a canonical irregular estimator, the isotonic regression estimator, for which a comprehensive (smoothed) bootstrap consistency theory appears to remain unavailable. In addition, we investigate the use of generative modeling-based bootstrap methods as an alternative to the traditional smoothed bootstrap. Specifically, we establish general conditions under which the generative modeling-based bootstrap consistently approximates the sampling distribution of the estimator.

In detail, isotonic regression concerns pairs $\mZ_i = (X_i, Y_i) \in \cX \times \cY$, for $i \in [n]$, which are assumed to be independent and identically distributed (i.i.d.), with marginal distributions $\P_X$ and $\P_Y$, and corresponding supports $\cX$ and $\cY$. Assume the regression model
\[
Y_i = f_0(X_i) + \xi_i,
\]
where $\cX$ is {\it known}, $f_0 : \cX \to \R$ is an unknown, fixed, and nondecreasing function,  the errors $\xi_i$ are i.i.d., independent of the $X_i$'s, and $\E[\xi_1]=0$. The isotonic regression estimator of $f_0$ is a {\it shape-constrained} least squares, given by
\[
\hat f_n 
:= 
\underset{f : \cX \to \R \text{ nondecreasing}}{\arg\min} 
\sum_{i=1}^n (Y_i - f(X_i))^2.
\]
A common inferential goal is to construct confidence intervals for $f_0(x_0)$  based on $\hat f_n(x_0)$.

Let $\sigma^2 := \E[\xi_i^2]$. The following facts are well known.
\begin{enumerate}[label=(\alph*)]
\item {\bf (Cube-root asymptotics)}  
      Supposing $X_1$ is distributed uniformly on $[0,1]$, $x_0 \in (0,1)$, and $f_0$ and the errors obey mild regularity conditions, the estimator $\hat f_n(x_0)$ satisfies a cube-root rate of convergence to $f_0(x_0)$ and converges weakly to a Chernoff-type distribution \citep{brunk1969estimation,han2022berry}. Specifically,
      \begin{align}\label{eq:weaklimit-iso}
      \Big(\frac{n}{\sigma^2}\Big)^{1/3}
      \big(\hat f_n(x_0) - f_0(x_0)\big)
      \;\text{ converges weakly to }\;
      \Big(\frac{f_0'(x_0)}{2}\Big)^{1/3} \cdot \D,
      \end{align}
      where $f_0'$ denotes the derivative of $f_0$, and $\D$ is the Chernoff distribution:  
      \[
      \D = 2 \cdot \arg\max_{t \in \R} \Big\{\B(t) - t^2\Big\},
      \]
      with $\B$ denoting the two-sided Brownian motion.

\item \textbf{(Failure of the classical bootstrap)}  
      Efron's original bootstrap fails for isotonic regression: the bootstrap limiting distribution of $\hat f_n(x_0)$ does not match the distribution in \eqref{eq:weaklimit-iso}. See, for example, \cite{groeneboom2024confidence} and references therein.
\end{enumerate}

We now propose a generative modeling-based bootstrap approach for statistical inference in isotonic regression.  
To this end, consider the bootstrap data
\[
(\tilde X_i, \tilde Y_i)^\top := \hat\mG_n(\tilde\mU_i), \qquad i \in [n],
\]
with marginal distributions $\bP_{\tilde X  \mid \cO}$ and $\bP_{\tilde Y \mid \cO}$, supports $\tilde \cX_n$ and $\tilde \cY_n$, respectively, together with the induced regression structure
\begin{align*}
\tilde Y_i &= \tilde f_0(\tilde X_i) + \tilde \xi_i,\\
\text{where} \qquad 
\tilde f_0(x) &:= \E\!\big[\tilde Y_i \mid \tilde X_i = x,\, \cO\big], 
\qquad 
\tilde \xi_i := \tilde Y_i - \tilde f_0(\tilde X_i).
\end{align*}
Note that $\tilde f_0$ is not necessarily nondecreasing, and the bootstrap residuals $\{\tilde \xi_i\}_{i\in[n]}$ are not conditionally independent of the covariates $\{\tilde X_i\}_{i\in[n]}$ given $\cO$. Furthermore, the supports $\tilde\cX_n$ and $\tilde\cY_n$ may not necessarily equal $\cX$ and $\cY$.

The bootstrap isotonic regression estimator is defined as
\begin{align}\label{eq:bootstrap-isotonic}
\tilde f_n 
:=
\underset{f : \cX \to \R \text{ nondecreasing}}{\arg\min}
\sum_{i=1}^n \Big(\tilde Y_i - f(\tilde X_i)\Big)^2 
\, \mathds{1}(\tilde X_i \in \cX),
\end{align}
where the indicator function in \eqref{eq:bootstrap-isotonic} ensures that the optimization only involves those $\tilde X_i \in \cX$, on which the function $f_0$ is well defined.
 Our goal is to show that, under suitably mild conditions, the conditional distribution of  
$\tilde f_n(x_0)$ consistently approximates the sampling distribution of $\hat f_n(x_0)$.  
This naturally requires additional regularity conditions on both the data-generating mechanism and the generator~$\hat\mG_n$. 

In what follows, $\sD_x$ and $\sD_x^2$ denote the first and second partial derivative, respectively, with respect to the first argument. 
\begin{assumption}[Data space, II]\label{assump_iso}
Assume that:
\begin{enumerate}[label=(\alph*)]
\item the regression function $f_0: \cX \rightarrow \R$ is twice continuously differentiable on $\cX$, with derivative uniformly bounded away from $0$;
\item $(X_1,\xi_1),(X_2,\xi_2),\ldots$ are i.i.d., each $X_i$ is independent of $\xi_i$, and  satisfy $\E[\xi_1]=0$,  ${\mathrm{Var}}(X_1) >0 $ and $\sigma^2 > 0$;
\item $\mZ_1$ admits a Lebesgue density $p_Z$ such that
$$
(x,y) \mapsto p_Z(x,y) \text{ is twice continuously differentiable on $\cX \times \R$,}
$$
\item $X_1$ admits a Lebesgue density $p_X$ such that $p_X$ is uniformly bounded away from zero on $\cX$; 
      \item $\cX, \cY$ are bounded, closed intervals satisfying $\cZ = \cX \times \cY$, and $x_0$ is an interior point of $\cX$.
\end{enumerate}
\end{assumption}

\begin{assumption}[Generator, II]\label{ass:generator}
Assume that, for all sufficiently large $n$, the conditional distribution of $\tilde \mZ = \hat\mG_n(\mU)$ given $\cO$ admits a Lebesgue density $\tilde p_n(\cdot) = \tilde p_n(\cdot \mid \cO)$ so that
\begin{enumerate}[label=(\alph*)]
\item the map 
$(x,y) \mapsto \tilde p_n(x, y)$ is always continuously differentiable and almost surely twice continuously differentiable on $\tilde\cX_n \times \R $;
\item $\tilde X_1$ admits a Lebesgue density $\tilde p_X$ such that some universal constant $\tilde K > 0$ exists, for which
\[
\tilde K^{-1} \le \tilde p_X(x) \text{ for all } x \in \cX,
~~~{\rm and}~~~
|\tilde p_n(\mz)| \vee \|{\rD} \tilde p_n(\mz)\|_2 \,\vee\, \|{\rD}^2 \tilde p_n(\mz)\|_{\rm op} \le \tilde K  
\text{ for all } \mz  \in \tilde\cZ_n,
\]
almost surely; 
\item $\tilde \cX_n$ is an interval satisfying $\cX \subseteq \tilde\cX_n $ almost surely. 
\end{enumerate}

\end{assumption}

Assumption~\ref{ass:generator} resembles the classical conditions imposed in the smoothed bootstrap literature to ensure bootstrap consistency for irregular estimators; see, for example, \citet[Section~4]{sen2010inconsistency}. Section~\ref{sec:gai} will provide sufficient conditions under which a class of flow-based generative models satisfies Assumption~\ref{ass:generator}. 

With the above assumptions, the following theorem establishes bootstrap consistency for the isotonic regression estimator. 

\begin{theorem}[Bootstrap consistency, isotonic regression] \label{thm:iso}Assume Assumptions \ref{assump_noise1},  \ref{assump_bootstrap}, \ref{assump_iso}, and \ref{ass:generator}. We then have
        $$
    \sup_{t \in \R} \ap{\pr{n^{1/3}(\tilde f_n(x_0) - \tilde f_0(x_0)) \leq t \cdl \mathcal{O}} -\pr{n^{1/3}(\hat f_n(x_0) - f_0(x_0)) \leq t }} = o_{\P_{\cO}}(1).
    $$
\end{theorem}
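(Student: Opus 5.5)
Since the sampling distribution of $n^{1/3}(\hat f_n(x_0)-f_0(x_0))$ converges weakly to the continuous law $(\sigma^2 f_0'(x_0)/2)^{1/3}\D$ by \eqref{eq:weaklimit-iso} (valid under Assumption~\ref{assump_iso}, after rescaling for a non-uniform $p_X$: $(4\sigma^2 f_0'(x_0)/p_X(x_0))^{1/3}\cdot\arg\max\{\B(t)-t^2\}$), Pólya's theorem reduces the claim to two facts. First, the bootstrap statistic $n^{1/3}(\tilde f_n(x_0)-\tilde f_0(x_0))$ converges conditionally in distribution, in $\P_\cO$-probability, to a law of the same form with constants $\tilde\sigma^2(x_0)$, $\tilde f_0'(x_0)$, $\tilde p_X(x_0)$. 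Second, these constants converge in probability to $\sigma^2$, $f_0'(x_0)$, $p_X(x_0)$. I would do the second step first. By Assumption~\ref{assump_bootstrap}(b), $\bP_{\tilde Z\mid\cO}\to\P_Z$ weakly in probability. Assumption~\ref{ass:generator}(b) gives uniformly bounded $\tilde p_n$ with bounded first and second derivatives, so by Arzelà--Ascoli every subsequence of $\tilde p_n$ has a further subsequence converging in $C^1$ on compacts. Weak convergence identifies the limit as $p_Z$, so $\tilde p_n\to p_Z$ and $\D\tilde p_n\to \D p_Z$ locally uniformly near $\{x_0\}\times\R$, in probability (argue along a.s.\ subsequences). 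Compact supports (Assumption~\ref{assump_bootstrap}(c)) then let us integrate: $\tilde p_X=\int\tilde p_n\,dy\to p_X$, $\tilde f_0=\int y\tilde p_n\,dy/\tilde p_X\to f_0$ in $C^1$ near $x_0$, and the conditional variance $\tilde\sigma^2(x)=\int (y-\tilde f_0(x))^2\tilde p_n\,dy/\tilde p_X\to\sigma^2$ uniformly near $x_0$. In particular $\tilde f_0'(x_0)$ is bounded away from zero with probability tending to one, so $\tilde f_0$ is locally increasing near $x_0$ even though it need not be globally monotone.

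For the first step I would fix a realization of $\cO$ in a high-probability event on which these bounds hold uniformly, and run the standard switching-relation argument for isotonic regression under a triangular array. Write $\tilde f_n(x_0)>a$ iff the argmin of $\tilde{\mathbb{G}}_n(x)-a\tilde{\mathbb{F}}_n(x)$ lies left of $x_0$, where $\tilde{\mathbb F}_n,\tilde{\mathbb G}_n$ are the empirical cumulative sums of $\mathds{1}(\tilde X_i\in\cX)$ and $\tilde Y_i\mathds{1}(\tilde X_i\in\cX)$. Setting $a=\tilde f_0(x_0)+n^{-1/3}t$ and localizing at scale $n^{-1/3}$, the rescaled process splits into a drift and a noise part. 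The drift is $\int_{x_0}^{x_0+n^{-1/3}h}(\tilde f_0(u)-\tilde f_0(x_0)-n^{-1/3}t)\tilde p_X(u)\,du$, which after scaling by $n^{2/3}$ tends to $\tfrac12\tilde f_0'(x_0)\tilde p_X(x_0)h^2-t\tilde p_X(x_0)h$ uniformly on compacts, using the uniform $C^2$ control. The noise part is a local empirical process with covariance approximately $\tilde\sigma^2(x_0)\tilde p_X(x_0)(h\wedge h')$. Its convergence to Brownian motion follows from a Lindeberg–Feller CLT for triangular arrays together with the bracketing/tightness bounds of the local empirical process. These bounds hold uniformly over the conditional laws because $\tilde Y$ is bounded and $\tilde p_n$ is bounded (Assumption~\ref{ass:generator}(b)). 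The lack of independence between $\tilde\xi_i$ and $\tilde X_i$ only changes the variance to the local conditional variance $\tilde\sigma^2(x_0)$, which converges to $\sigma^2$ by the first step.

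The main obstacle is the rate theorem (tightness of the localized argmin at scale $n^{-1/3}$, uniformly in the conditioning), because $\tilde f_0$ need not be monotone globally. I would handle it with the van der Vaart--Wellner rate theorem applied to the localized criterion. Near $x_0$ (within a fixed $\delta$), the quadratic drift lower bound comes from $\tilde f_0'\ge c>0$ there, and the modulus bound is $\E\sup_{|h|\le\delta'}|\cdot|\lesssim\sqrt{\delta'}$ uniformly. Far from $x_0$ I would use consistency of $\tilde f_n$ on an interior neighborhood, which follows from $\tilde f_0\to f_0$ uniformly plus a uniform Glivenko–Cantelli argument. This confines the argmin to the $\delta$-window with probability tending to one.

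Argmax continuity then yields conditional weak convergence of $n^{1/3}(\tilde f_n(x_0)-\tilde f_0(x_0))$ to $(4\tilde\sigma^2\tilde f_0'(x_0)/\tilde p_X(x_0))^{1/3}\arg\max\{\B(t)-t^2\}$ evaluated at the random constants. Combining this with the continuity of the limit in its constants and Pólya's theorem gives the uniform-in-$t$ statement in $\P_\cO$-probability.
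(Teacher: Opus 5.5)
Your argument is correct in outline, but it follows a different route from the paper's.

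\textbf{How the paper proceeds.} The paper never uses the switching relation or an argmin theorem. It starts from the max--min representation of $\tilde f_n(x_0)$, a max over left endpoints and a min over right endpoints of local averages (Lemma~\ref{lemma:maxmin}), and reparametrizes the window as $[x_0-\ell n^{-1/3},x_0+un^{-1/3}]$. It then:
\begin{itemize}
\item proves a conditional uniform CLT for the two-parameter process $n^{1/3}(\overline{\tilde Y}_{[\ell_n,u_n]}-\tilde f_0(x_0))$ toward $G_{\ell,u}$ on compacts with $\ell+u\ge k$ (Lemmas~\ref{lemma:biascalc} and~\ref{lemma:fidi_uclt});
\item localizes by showing that the optimal endpoints $(\tilde L^*,\tilde U^*)$ are $\Theta_\P(1)$ (Lemma~\ref{lemma:ul_o1});
\item applies the continuous mapping theorem to the sup--inf functional;
\item matches the result with the original-data limit $\sup_{\ell>0}\inf_{u\ge 0}G_{\ell,u}$ taken from \cite{han2022berry}.
\end{itemize}
What this buys is that both worlds converge to the same functional of the same process. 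No explicit Chernoff constants, no argmin uniqueness and no rate theorem are needed. The price is a two-parameter process-level CLT and a separate tightness argument for the optimal window.

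\textbf{How your route compares.} You use switching, a one-parameter localized criterion, the rate theorem, and the argmin continuous mapping theorem (with almost sure uniqueness of the argmin of two-sided Brownian motion with quadratic drift). This is more classical machinery. It yields the limit with explicit constants $\tilde\sigma^2(x_0),\tilde f_0'(x_0),\tilde p_X(x_0)$. The final comparison then needs only the classical Brunk--Wright limit and continuity of a rescaled fixed law in its scale.

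\textbf{Two remarks on your write-up.}

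First, the obstacle you single out is milder than you suggest. Since $\tilde p_X\ge\tilde K^{-1}$ on all of $\cX$ and $\tilde\cZ_n\subseteq\tilde\cZ$, your Arzel\`a--Ascoli argument runs on all of $\cX\times\tilde\cY$, where $\tilde\cY$ is a compact interval containing every $\tilde\cY_n$. It gives $\sup_{x\in\cX}|\tilde f_0'(x)-f_0'(x)|=o_\P(1)$, which is Lemma~\ref{lemma:isocvg}. Hence $\tilde f_0$ is strictly increasing on the whole of $\cX$ with probability tending to one, a fact the paper uses in Lemma~\ref{lemma:ul_o1}. The population criterion $x\mapsto\int_{\inf\cX}^{x}(\tilde f_0(v)-a)\tilde p_X(v)\,\d v$ then has a global quadratic lower bound around its minimizer, so your consistency detour is unnecessary, though it also works.

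Second, ``fixing a realization of $\mathcal O$ in a high-probability event'' needs to be made precise, because the argmin theorem is a limit statement and cannot be applied to constants that are merely close to their targets for a fixed $n$. You have two options:
\begin{itemize}
\item use the subsequence device of Lemma~\ref{lemma:prob_and_as}, running your deterministic triangular-array argument for each $\omega$ along an almost surely convergent subsequence; or
\item prove the convergence for every deterministic sequence of laws that obeys the uniform bounds of Assumption~\ref{ass:generator} and converges to $p_Z$ in $C^1$.
\end{itemize}
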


\section{GAN and flow bootstraps} \label{sec:gai}

This section gives sufficient conditions, under which the GAN- and flow-based generative models are able to satisfy the requirements in Theorems \ref{thm_reg} and \ref{thm:iso}. To this end, we first regulate the noise distribution $\P_U$.


\begin{assumption}[Noise space, II]\label{assump_noise2}
Assume that: 
\begin{enumerate}[label=(\alph*)]
\item the support of $\P_U$, $\cU$, is convex, compact, and contains $\mathbf{0}$, and that $\P_U$ admits a continuously differentiable Lebesgue density $p_U$ on $\cU$; 
\item there exists some constant $r_0 > 0$ such that $p_U$ is uniformly lower bounded away from $0$ on the set $\{\bu \in \R^p : \|\bu \|_2 \leq r_0\}.$
\end{enumerate}
\end{assumption}

The following slightly stronger condition is needed for Theorem \ref{thm:iso}, particularly concerning Assumption $\ref{ass:generator}$.

\begin{assumption}[Noise space, III]\label{assump_noise3}
Supposing that $\mU = (U_1, U_2)^\top \in \R^2$, assume that $p_U$ is twice-continuously differentiable on $\cU_1 \times \R$, where $\cU_1$ is the support of $U_1$.
\end{assumption}

\subsection{W-GAN}

This section demonstrates that suitably trained W-GANs in Example~\ref{exe:wgan} satisfy Assumption~\ref{assump_bootstrap}. Our analysis builds upon the theoretical results of \cite{biau2020some} on GANs and \cite{shen2023asymptotic} on the asymptotic properties of neural networks.

\begin{assumption}[W-GAN]\label{trainedwgan}
Assume that:
\begin{enumerate}[label=(\alph*)]
\item the activation function $\alpha: \R\rightarrow \R$  is $1$-Lipschitz with $\alpha(0) = 0$, and the neural network parameters  $L^{\rm gen}$ and $B^{\rm gen} $ are fixed positive constants;

\item the W-GAN is well trained in the sense that
\[
\sup_{D \in {\rm Lip_1}(p,1)}\Big\{
\frac1n \sum_{i=1}^n 
\sW(\hat\mG_n^{\rm GAN}, D, \mZ_i, \mU_i)
-
\frac1n \sum_{i=1}^n 
\sW(\hat\mG_n^{\rm GAN}, \hat D_n^{\rm GAN}, \mZ_i, \mU_i)\Big\}
=
o_{\P_{\cO}}(1),
\]
where ${\rm Lip}_1(p,1)$ denotes the set of all $1$-Lipschitz functions from $\R^p$ to $\R$, and enjoys the universal approximation property in the sense that
\[
\frac1n \sum_{i=1}^n 
\sW(\hat\mG_n^{\rm GAN}, \hat D_n^{\rm GAN}, \mZ_i, \mU_i)
=
o_{\P_{\cO}}(1);
\]
\item $\mathrm{Var}(\hat \mG_n^{\rm GAN}(\mU)\mid \cO) > 0$ holds $\P_{\cO}$-almost surely.
\end{enumerate}
\end{assumption}

\begin{theorem}[GAN bootstrap]\label{theorem:gan}
Assume that Assumptions~\ref{assump_ndata}, \ref{assump_noise1}, \ref{assump_noise2},  and \ref{trainedwgan} hold.  
Then Assumption~\ref{assump_bootstrap} is satisfied by the GAN generator.
\end{theorem}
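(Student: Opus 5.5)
We have to verify the three parts of Assumption~\ref{assump_bootstrap} for $\hat\mG_n = \hat\mG_n^{\rm GAN}$. Part (a) is immediate, since $\hat\mG_n^{\rm GAN}$ is trained only from $\{(\mZ_i,\mU_i)\}_{i\in[n]}$. The nonzero-variance requirement in part (b) is exactly Assumption~\ref{trainedwgan}(c). That leaves the Wasserstein consistency in part (b) and the uniform compact range in part (c).

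\textbf{Part (c).} Take any $\mG \in \cG_n$ and any $\bu \in \cU$. Because $\alpha$ is $1$-Lipschitz with $\alpha(0)=0$, each layer satisfies
\[
\|\mx^{(\ell)}\|_2 \le \|\Ab^{(\ell)}\mx^{(\ell-1)}+\mb^{(\ell)}\|_2 \le B^{\rm gen}\|\mx^{(\ell-1)}\|_2 + B^{\rm gen}.
\]
Iterating over $L^{\rm gen}$ layers gives $\|\mG(\bu)\|_2 \le R := (B^{\rm gen})^{L^{\rm gen}}\sup_{\bu\in\cU}\|\bu\|_2 + \sum_{k=1}^{L^{\rm gen}} (B^{\rm gen})^k$. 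This bound is finite because $\cU$ is compact (Assumption~\ref{assump_noise2}(a)), and it does not depend on the width $W_n^{\rm gen}$. We can therefore set $\tilde\cZ$ equal to the closed ball of radius $R$. The same recursion shows that every $\mG\in\cG_n$ is $(B^{\rm gen})^{L^{\rm gen}}$-Lipschitz, which we use below.

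\textbf{Part (b), Wasserstein consistency.} By Kantorovich--Rubinstein duality,
\[
\sW_1(\hat\mG_n\#\P_U,\P_Z)=\sup_{D\in{\rm Lip}_1}\{\E[D(\hat\mG_n(\mU))\mid\cO]-\E D(\mZ)\}.
\]
We split this into three pieces:
\[
\sW_1 \le \sup_{D}\Big|\E[D(\hat\mG_n(\mU))\mid\cO]-\tfrac1n\textstyle\sum_i D(\hat\mG_n(\mU_i))\Big| + \sup_D\Big|\tfrac1n\textstyle\sum_i D(\mZ_i)-\E D(\mZ)\Big| + \sup_D \tfrac1n\textstyle\sum_i \sW(\hat\mG_n,D,\mZ_i,\mU_i).
\]
The third piece is $o_{\P_\cO}(1)$ by the two displays in Assumption~\ref{trainedwgan}(b): the first bounds it by the empirical loss at $\hat D_n^{\rm GAN}$ plus $o(1)$, and the second says that loss is $o(1)$. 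The second piece equals $\sW_1(\bP_n^Z,\P_Z)$. Since $\cZ$ is compact, this tends to $0$ almost surely by Varadarajan's theorem (weak convergence is equivalent to $\sW_1$ convergence on a compact set).

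The first piece is the main obstacle, because $\hat\mG_n$ depends on the same $\mU_i$'s that appear in the empirical average. We handle it by taking a supremum over all generators. Restricting $D$ to vanish at $\mathbf 0$ changes nothing, so the first piece is at most
\[
\sup_{\mG\in\cG_n}\sW_1(\mG\#\P_U,\mG\#\bP_n^U)\le (B^{\rm gen})^{L^{\rm gen}}\,\sW_1(\bP_n^U,\P_U).
\]
The inequality holds because pushing forward both measures by a common Lipschitz map inflates $\sW_1$ by at most the Lipschitz constant; concretely, apply $\mG$ to the optimal coupling of $\bP_n^U$ and $\P_U$. Since $\cU$ is compact, $\sW_1(\bP_n^U,\P_U)\to 0$ almost surely. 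This argument uses only the uniform Lipschitz bound, so it is unaffected by the width possibly growing with $n$. It replaces the covering-number approach of \cite{biau2020some}, which would still be a fallback if needed.

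Combining the three pieces gives $\sW_1(\bP_{\tilde Z\mid\cO},\P_Z)=o_{\P_\cO}(1)$, which completes the verification of Assumption~\ref{assump_bootstrap}.
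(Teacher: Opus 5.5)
Your proof is correct, and its skeleton matches the paper's. Part (a) and the nonzero-variance condition are read off directly, part (c) comes from the same layer-by-layer recursion, and part (b) combines Kantorovich--Rubinstein duality with Assumption~\ref{trainedwgan}(b) after controlling the gap between empirical and population W-GAN losses uniformly over $1$-Lipschitz $D$.

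The difference lies in how you control that empirical-process gap. The paper normalizes $D(\mathbf 0)=0$ so that $\{D\circ\hat\mG_n^{\rm GAN}\}$ and $\{D\}$ sit inside fixed bounded-Lipschitz classes $BL(\cU,R)$ and $BL(\cZ,R')$. It then invokes the uniform Glivenko--Cantelli result (Lemma~\ref{lemma:uniform_gc_class}), driven by the covering-number bound of Lemma~\ref{lemma:lip_covering_num}. You instead identify the two pieces as $\sW_1(\hat\mG_n\#\P_U,\hat\mG_n\#\bP_n^U)\le (B^{\rm gen})^{L^{\rm gen}}\sW_1(\P_U,\bP_n^U)$ and $\sW_1(\bP_n^Z,\P_Z)$, and conclude by Varadarajan's theorem plus compactness.

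Both devices neutralize the dependence of $\hat\mG_n^{\rm GAN}$ on the $\mU_i$'s in the same way, through a data-free Lipschitz constant. Your route buys several things:
\begin{itemize}
\item It is more elementary.
\item It does not need the normalization $D(\mathbf 0)=0$.
\item It does not need the convexity hypothesis under which Lemma~\ref{lemma:lip_covering_num} is stated; for the two empirical Wasserstein terms, finite first moments would suffice.
\item It yields the explicit bound $\sW_1(\bP_{\tilde Z\mid\cO},\P_Z)\le (B^{\rm gen})^{L^{\rm gen}}\sW_1(\bP_n^U,\P_U)+\sW_1(\bP_n^Z,\P_Z)+\sup_{D\in{\rm Lip}_1}\frac1n\sum_{i=1}^n\sW(\hat\mG_n,D,\mZ_i,\mU_i)$, from which rates follow via known empirical-Wasserstein rates.
\end{itemize}
The paper's entropy route, in turn, is the template that still works when the relevant class is not all of ${\rm Lip}_1$. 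An example is exploiting the smaller complexity of the discriminator class $\cD_n$, as in \cite{biau2020some}, where no duality shortcut is available.
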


\subsection{Affine autoregressive flows}

This section concerns affine autoregressive flows in Example \ref{exe:flow}. To facilitate the analysis, we focus on the following subclass of $\cF_{\nu}$ that encourages more regularity and is encouraged by \cite{irons2022triangular}:
\[ 
\cF_{\nu,K,M} 
:= 
\Big\{
\mF^\nu \circ \boldsymbol\Sigma_\nu \circ \cdots \circ \mF^1 \circ \boldsymbol\Sigma_1
\Big\}
\subset \cF_\nu,
\]
where, for each $i \in [\nu]$:
\begin{enumerate}[label=(\alph*)]
\item $\mF^i \in \cT_\uparrow(p)$, $\mF^i(\mathbf{0}) = \mathbf{0}$, and $\mF^i$ is three-times continuously differentiable on $\R^p$;
\item  $\boldsymbol{\Sigma}_{i}$ is symmetric, and satisfies $K^{-1} \le \lambda_{\rm min}(\boldsymbol\Sigma_i)$ and  $\lambda_{\rm max}(\boldsymbol\Sigma_i)\ \le K,$ where $\lambda_{\rm min}(\boldsymbol\Sigma_i)$ and $\lambda_{\rm max}(\boldsymbol\Sigma_i)$ denote the smallest and largest eigenvalues of $\boldsymbol\Sigma_i$, respectively;
\item the absolute values of all first-, second-, and third-order partial derivatives of $\mF^i$ are uniformly bounded above by~$M$ and $M^{-1} \leq   \inf_{\mz \in \R^p}  \sD_j F^i_j(\mz)$ for $j\in[p]$.
\end{enumerate}

Fixing $\nu, K, M >1$, we follow  \cite{irons2022triangular} and focus on the following more regular affine autoregressive flow as an alternative to $\hat\mS_n^{\rm flow}$ introduced in \eqref{eq:flow}:
\[
\hat\mG_n^{\rm rflow}
    = (\hat\mS_n^{\rm rflow})^{-1}~~\text{ with }
   \hat\mS_n^{\rm rflow} \in \argmax_{\mS \in \cF_{\nu,K,M}}
    \sum_{i=1}^n \Gamma(\mS, \mZ_i).
\]

\begin{assumption}[Affine autoregressive flow]\label{assump_flows}
Assume that, for all sufficiently large $n$,
\begin{enumerate}[label=(\alph*)]
\item the triangular flow $\hat\mG_n^{\rm rflow} = (\hat\mS_n^{\rm rflow})^{-1}$ is well trained in the sense that
      \[
      \frac{1}{n}\sum_{i=1}^n \Gamma(\hat\mS_n^{\rm rflow}, \mZ_i)
      -
      \frac{1}{n}\sum_{i=1}^n \log p_Z(\mZ_i)
      =
      o_{\P_{\cO}}(1);
      \]
      
      \item for $r_0 > 0$ defined in Assumption \ref{assump_noise2}, we have
      $$
      \cZ \subseteq \Big\{\mz \in \R^p : \|\mz\|_2 \leq (KpM)^{-\nu} r_0\Big\} \subseteq \Big\{\mz \in \R^p : \|\mz\|_2 \leq r_0\Big\} \subseteq \cU.
      $$
\end{enumerate}
\end{assumption}

\begin{theorem}[Flow bootstrap]\label{flow_wcvg}
Assume Assumptions~\ref{assump_noise1},~\ref{assump_noise2}~and~\ref{assump_flows}, and $\E[|\log p_Z(\mZ)|] < \infty$.
\begin{enumerate}[label=(\alph*)]
    \item \textbf{Regular estimators:}  
          If Assumption~\ref{assump_ndata} holds, then Assumption~\ref{assump_bootstrap} is satisfied by the flow generator.
    \item \textbf{Irregular estimators:}  
          If Assumptions \ref{assump_iso}~and~\ref{assump_noise3} hold,  
          then Assumptions~\ref{assump_bootstrap} and~\ref{ass:generator} are satisfied by the flow generator.
\end{enumerate}
\end{theorem}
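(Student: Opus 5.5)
The argument splits into a shared core (Wasserstein convergence of the flow) and a regularity part needed only for~(b).

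\textbf{Part~(a): verifying Assumption~\ref{assump_bootstrap}.}

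First, note that the flow class yields uniformly controlled maps.
\begin{itemize}
\item For any $\mS\in\cF_{\nu,K,M}$, each layer $\mF^i\circ\boldsymbol\Sigma_i$ fixes $\mathbf 0$.
\item Each layer is Lipschitz with constant at most $KpM$, since $\|\boldsymbol\Sigma_i\|_{\op}\le K$ and $\mF^i$ has partial derivatives bounded by $M$.
\item Each layer has a triangular Jacobian with diagonal entries at least $M^{-1}$ and entries bounded by $M$, so its inverse is Lipschitz with constant $C(K,M,p)$.
\item Consequently $\mS$ and $\mS^{-1}$ are bi-Lipschitz with constants depending only on $(\nu,K,M,p)$, and $\|\mS(\mz)\|_2\le (KpM)^\nu\|\mz\|_2$.
\end{itemize}

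Condition~(c) of Assumption~\ref{assump_bootstrap} follows from these bounds. Since $\cU$ is compact and $\hat\mG_n=(\hat\mS_n^{\rm rflow})^{-1}$ is uniformly Lipschitz with $\hat\mG_n(\mathbf 0)=\mathbf 0$, we get $\tilde\cZ_n\subseteq\tilde\cZ:=\{\|\mz\|_2\le C\,\sup_{\bu\in\cU}\|\bu\|_2\}$, which is nonrandom and compact.

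For nonzero variance, note that $\hat\mG_n$ is a diffeomorphism and $\P_U$ has a density, so $\tilde\mZ$ has a Lebesgue density and cannot be degenerate. Condition~(a) holds because $\hat\mG_n$ depends only on the data.

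\textbf{Core step: $\sW_1$ convergence.} Since $\Gamma(\mS,\mz)=\log p_{\mS^{-1}(U)}(\mz)$, Assumption~\ref{assump_flows}(a) says the empirical log-likelihood of $\hat\mS_n^{\rm rflow}$ matches that of the truth up to $o_{\P_\cO}(1)$. I plan to convert this into
\[
\KL\big(\P_Z\,\|\,\bP_{\tilde Z\mid\cO}\big)=o_{\P_\cO}(1).
\]
The KL divergence equals $\E[\log p_Z(\mZ)]-\E[\Gamma(\hat\mS,\mZ)\mid\cO]$. So the task is a uniform law of large numbers:
\[
\sup_{\mS\in\cF_{\nu,K,M}}\Big|\frac1n\sum_{i=1}^n\Gamma(\mS,\mZ_i)-\E\,\Gamma(\mS,\mZ)\Big|\to0,
\]
together with the usual LLN for $\log p_Z$, which uses $\E|\log p_Z|<\infty$.

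The uniform LLN is the main obstacle. I would handle it as follows.
\begin{itemize}
\item Assumption~\ref{assump_flows}(b) places $\mS(\cZ)$ inside the ball of radius $r_0$. There $p_U$ is bounded below by Assumption~\ref{assump_noise2}(b) and above by compactness, so $\log p_U(\mS(\mz))$ is bounded and Lipschitz uniformly.
\item The log-determinant terms are bounded, because $\log\det\boldsymbol\Sigma_i\in[-p\log K,\,p\log K]$ and $\log \sD_jF^i_j\in[-\log M,\,\log M]$.
\item These terms are uniformly Lipschitz in $\mz$ on $\cZ$, using the bounded second derivatives.
\item On the compact set $\cZ$, the class $\{\Gamma(\mS,\cdot)\}$ is uniformly bounded and equicontinuous. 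By Arzel\`a--Ascoli it is totally bounded in sup-norm, hence Glivenko--Cantelli.
\end{itemize}

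From KL to $\sW_1$: Pinsker's inequality turns the KL bound into a total-variation bound. On the common compact set $\cZ\cup\tilde\cZ$ we have $\sW_1\le \mathrm{diam}\cdot\mathrm{TV}$, which yields Assumption~\ref{assump_bootstrap}(b).

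\textbf{Part~(b): additionally verifying Assumption~\ref{ass:generator}.} Here $p=2$, and the density is
\[
\tilde p_n(\mz)=p_U(\hat\mS(\mz))\,\big|\det \rD\hat\mS(\mz)\big|.
\]
\begin{itemize}
\item \emph{Smoothness (a).} The maps $\mF^i$ are $C^3$, so $\det\rD\hat\mS$ is $C^2$. Assumption~\ref{assump_noise3} gives $p_U\in C^2$ on the relevant region, so $\tilde p_n$ is $C^2$.
\item \emph{Bounds (b).} The derivatives up to order two of $\tilde p_n$ are bounded in terms of $(\nu,K,M)$ and the $C^2$-norm of $p_U$ on the compact set $\cU$, giving the constant $\tilde K$.
\item \emph{Lower bound on $\tilde p_X$.} For $\mz$ with $x\in\cX$, we have $\hat\mS(\mz)$ in the $r_0$-ball wherever $\mz$ lies in the small ball of Assumption~\ref{assump_flows}(b). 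There $p_U$ is bounded below, and the Jacobian determinant is at least $(K^{-1}M^{-1})^{2\nu}$. Integrating over a $y$-interval of length bounded below inside that ball gives a uniform lower bound on $\tilde p_X$. This uses $\cZ=\cX\times\cY$ lying inside the ball.
\item \emph{Support (c).} The support $\tilde\cX_n$ is the projection of the connected set $\hat\mG_n(\cU)$, since $\cU$ is convex. It is therefore an interval. It contains $\cX$ because $\cZ\subseteq\hat\mG_n(\cU)$, which holds as $\hat\mS(\cZ)\subseteq\cU$.
\end{itemize}
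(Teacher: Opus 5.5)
Your proposal is correct and follows essentially the same route as the paper's proof. Uniform bi-Lipschitz bounds on $\cF_{\nu,K,M}$ give the compact support and $\hat\mS(\cZ)\subseteq\cU$; a uniform law of large numbers for the bounded, Lipschitz class $\{\Gamma(\mS,\cdot)\}$ on $\cZ$ (you use Arzel\`a--Ascoli where the paper cites a covering-number lemma) turns Assumption~\ref{assump_flows}(a) into $\KL(\P_Z\|\bP_{\tilde Z\mid\cO})=o_{\P_\cO}(1)$, and Pinsker then gives the $\sW_1$ bound; part~(b) uses, as the paper does, the change-of-variables density, the slice $\{x\}\times\cY\subseteq\cZ$ for the lower bound on $\tilde p_X$, and connectedness of $\hat\mG_n(\cU)$ for the interval support. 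The one step you leave vague, namely why Assumption~\ref{assump_noise3} makes $p_U\circ\hat\mS$ twice continuously differentiable on all of $\tilde\cX_n\times\R$, is also only asserted in the paper (via $\hat\mS_n^{\rm rflow}(\tilde\cX_n\times\R)\subseteq\cU_1\times\R$), and that inclusion can fail when the $\boldsymbol\Sigma_i$ mix coordinates, so neither argument fully justifies Assumption~\ref{ass:generator}(a) there.
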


Comparing Theorems \ref{theorem:gan} and \ref{flow_wcvg}, we can see that the flow bootstrap has stronger theoretical guarantees for irregular estimators.

\section{Simulation} \label{sec:sim}

\subsection{Methods and implementation}

This section complements the theoretical developments with illustrative empirical results. To this end, we compare four bootstrap procedures: Efron's original bootstrap, which resamples from the empirical measure; Efron's smoothed bootstrap, which resamples from a kernel density estimator of the underlying distribution; the GAN bootstrap (Example \ref{exe:wgan}); and the flow bootstrap (Example \ref{exe:flow}).

To implement the smoothed bootstrap, we employ the tophat kernel and select the bandwidth according to Silverman's rule of thumb. The kernel density estimator is fitted using the implementation provided in \texttt{scikit-learn} \citep{scikit-learn}.

To implement the GAN bootstrap, we specify both the generator and the discriminator as fully connected neural networks with fixed width $200$ and depth $6$ across all simulation settings. A dropout probability of $0.4$ is applied to all hidden layers (but not the input nor output layers) during training. The weight matrices in the generator are initialized with i.i.d.\ Gaussian entries with mean zero and variance $0.02$, and the bias vectors are initialized as $\mathbf{0}$. The same initialization scheme is adopted for the discriminator. Both networks are trained using full-batch ADAM with learning rate $0.0001$ and parameters $\beta_1 = 0.5$ and $\beta_2 = 0.9$. The training procedure follows Algorithm~1 of \cite{gulrajani2017improved}, implemented as in \cite{cao_NF_imp}. We train the generator for $2000$ steps; for each generator update, the discriminator is updated $5$ times. The gradient penalty coefficient used in discriminator training is set as $\lambda = 1$.

To implement the flow bootstrap, we adopt the GLOW architecture \citep{kingma2018glow} using the implementation provided in \cite{duan_NF_imp}. The flow model has depth $10$. The parameters in the ActNorm layers are initialized as $\mathbf{0}$. The neural networks used in the affine coupling layers are fully connected $\tanh$ networks with width $8$ and depth $3$, initialized using the default PyTorch initialization. We select the \texttt{RealNVP} option in the implementation. Each invertible convolution is initialized as a random orthogonal matrix. Training is performed using full-batch ADAM with learning rate $0.005$ and the default PyTorch values of $\beta_1$ and $\beta_2$. The flow model is trained for $1000$ steps.

All implementations are carried out in PyTorch \citep{paszke2019pytorch}. The code to reproduce all simulation results is available at \url{https://github.com/leonkt/generative_modeling_for_bootstrap}.

\subsection{Regular estimator: ordinary least squares}

In our first simulation setting, we generate the following independent base random variables:
\[
S \sim \mathrm{Unif}[-4,4], \qquad 
\epsilon_1, \ldots, \epsilon_{p-1} \sim \mathrm{Unif}[-0.5,0.5], 
\qquad \text{and} \qquad 
\epsilon_p \sim \mathrm{Unif}[-7,7].
\]
We then construct the predictor vector $\mX \in \R^{p-1}$ as follows. First, we independently sample
\[
X_j \sim \mathrm{Beta}(2,5), 
\qquad \text{for } j \in [5].
\]
For the remaining coordinates, we set
\[
X_j 
= \sin\Big\{ \frac{(j+1)S}{p} \Big\}
  + \cos\Big\{ \frac{(j+1)S}{p} \Big\}
  + \epsilon_j,
\qquad \text{for } j \in [p-1] \setminus [5].
\]

The response variable $Y$ is generated according to
\[
Y = \boldsymbol{\beta}_0^\top \mX + \epsilon_p,
\]
where the regression coefficient $\boldsymbol{\beta}_0 = (1,\ldots,1)^\top \in \R^{p-1}$ is the parameter of interest. The observed data vector is $\mZ = (\mX^\top, Y)^\top \in \R^{p}$. The estimator $\hat{\boldsymbol{\beta}}_n$ of $\boldsymbol{\beta}_0$ is the ordinary least squares (OLS) estimator. Without loss of generality, we do not include an intercept term in the OLS specification.

Table~\ref{table_regularols} reports the empirical coverage probabilities for the OLS estimator. We vary the dimension $p \in \{24,50,100\}$ and the sample size $n \in \{500,1000,2000\}$. The empirical coverage probabilities are computed using elliptical confidence regions based on $500$ Monte Carlo replications.

More specifically, in each replication and under each bootstrap scheme, we generate $1000$ bootstrap samples and compute the corresponding least squares estimator, denoted by $\tilde{\boldsymbol{\beta}}_n$, for each resample. For the smoothed, GAN, and flow bootstraps, we additionally draw $50{,}000$ samples from the learned bootstrap distribution and compute the least squares estimator based on this large synthetic sample, denoted by $\tilde{\boldsymbol{\beta}}_0$. For the original bootstrap, the confidence ball is centered at $\hat{\boldsymbol{\beta}}_n$, which we set $\tilde{\boldsymbol{\beta}}_0$ to be.

More specifically, the empirical coverage is assessed by computing, in each Monte Carlo replication, the statistic
\[
n \Big\| \hat{\boldsymbol{\beta}}_n - \boldsymbol{\beta}_0 \Big\|_2^2,
\]
and comparing it with the empirical $(100 \cdot \alpha)\%$-quantile of
\[
n \Big\| \tilde{\boldsymbol{\beta}}_n - \tilde{\boldsymbol{\beta}}_0 \Big\|_2^2,
\]
approximated using the $1000$ bootstrap samples. We consider significance levels $\alpha \in \{0.90, 0.95\}$.

It can be readily observed that, in most cases, the smoothed bootstrap exhibits substantial distortion. In contrast, the original bootstrap, the GAN bootstrap, and the flow bootstrap perform markedly better. We emphasize that this setting is known to favor Efron's original bootstrap \citep{mammen1993bootstrap}. Therefore, the fact that the GAN and flow bootstraps are able to match its performance is particularly revealing.

\subsection{Isotonic Regression}

In our second simulation setting, we generate bivariate data $\mZ = (X,Y)^\top$ such that
\[
X \sim \mathrm{Unif}[0,1], 
\qquad 
Y = X + \epsilon,
\qquad 
\epsilon \sim \mathrm{Unif}[-0.01,0.01],
\]
where $X$ and $\epsilon$ are independent. In this setting, the true regression function is $f_0(x) = x$. We fix $x_0 = 0.5$ and evaluate the empirical coverage probabilities of the $100 \cdot \alpha\%$ confidence intervals for $f_0(x_0)$ constructed using the four bootstrap schemes.

Table~\ref{table_irregulariso} reports the empirical coverage probabilities of the different bootstrap schemes for varying sample sizes. Coverage probabilities are computed over $500$ Monte Carlo replications.

In each replication and under each bootstrap scheme, we generate $1000$ bootstrap samples, retain those with covariates in $[0,1]$, and compute the isotonic regression estimator evaluated at $x_0$, denoted by $\tilde{f}_n(x_0)$, for each bootstrap sample. For the smoothed, GAN, and flow bootstraps, we additionally generate $50{,}000$ samples from the learned bootstrap distribution and define $\tilde{f}_0(x_0)$ as the local average
\[
\tilde{f}_0(x_0)
=
\frac{1}{\big| \{ 1 \le i \le 50000 : 0.4999 \le \tilde{X}_i \le 0.5001 \} \big|}
\sum_{i : 0.4999 \le \tilde{X}_i \le 0.5001} \tilde{Y}_i.
\]
For Efron's original bootstrap, we set $\tilde{f}_0(x_0) = \hat{f}_n(x_0)$, where $\hat{f}_n(x_0)$ is computed from the original data.

Empirical coverage is evaluated by computing, in each Monte Carlo replication, the statistic
\[
n^{1/3} \big( \hat{f}_n(x_0) - f_0(x_0) \big),
\]
and comparing it with the equal-tailed $(100 \cdot \alpha)\%$ confidence interval of
\[
n^{1/3} \big( \tilde{f}_n(x_0) - \tilde{f}_0(x_0) \big),
\]
approximated using the $1000$ bootstrap samples. We consider significance levels $\alpha \in \{0.90, 0.95\}$.

It can be seen that the original bootstrap fails in this setting, as expected. In contrast, the remaining three approaches—the smoothed, GAN, and flow bootstraps—all deliver satisfactory empirical coverages. 

{
\renewcommand{\tabcolsep}{5pt}
\renewcommand{\arraystretch}{1}
\begin{table}
\caption{Empirical coverage probabilities for the four bootstrap procedures, regular settings}\label{table_regularols}\vspace{8pt}
\centering
\begin{tabular}{cccccccccc}
\toprule
\multirow{2}{*}{$p$} & \multirow{2}{*}{$n$} &
\multicolumn{4}{c}{90\% Coverage} &
\multicolumn{4}{c}{95\% Coverage} \\
\cmidrule(r{6pt}){3-6} \cmidrule(l{6pt}){7-10}
& & original & smoothed & GAN & flow & original& smoothed & GAN & flow\\
\midrule
24 & 500      &   0.852 & 0.588 & 0.922 & 0.902&
       0.938 & 0.702 & 0.964 & 0.944 \\
& 1000 & 0.932 & 0.648 & 0.944 & 0.924 &
       0.972 & 0.780 & 0.984 & 0.978 \\  
& 2000 & 0.920 & 0.648 & 0.966 & 0.934 &
       0.984 & 0.770 & 0.996 & 0.972 \\
       \midrule
50 & 500      &   0.920 & 0.726 & 0.882 & 0.926&
       0.931 & 0.828 & 0.942 & 0.974 \\
& 1000 & 0.934 & 0.780 & 0.904 & 0.972 &
       0.954 & 0.856 & 0.944 & 0.982 \\  
& 2000 & 0.850 & 0.700 & 0.920 & 0.990 &
       0.924 & 0.804 & 0.982 & 0.996 \\
       \midrule
100 & 500      &   0.972 & 0.918 & 0.846 & 0.980&
       0.990 & 0.936 & 0.882 & 0.994 \\
& 1000 & 0.926 & 0.812 & 0.860 & 1.000 &
       0.932 & 0.910 & 0.902 & 1.000 \\  
& 2000 & 0.898 & 0.878 & 0.898 & 1.000 &
       0.992 & 0.898 & 0.934 & 1.000 \\
\bottomrule
\end{tabular}
\end{table}
}

{
\renewcommand{\tabcolsep}{5pt}
\renewcommand{\arraystretch}{1}
\begin{table}
\caption{Empirical coverage probabilities of for the four bootstrap procedures, irregular settings}\label{table_irregulariso}\vspace{8pt}
\centering
\begin{tabular}{ccccccccc}
\toprule
 \multirow{2}{*}{$n$} &
\multicolumn{4}{c}{90\% Coverage} &
\multicolumn{4}{c}{95\% Coverage} \\
\cmidrule(r{6pt}){2-5} \cmidrule(l{6pt}){6-9}
 & original & smoothed & GAN & flow & original& smoothed & GAN & flow\\
\midrule
 1000      &   0.644 & 0.966 & 0.916 & 0.930&
       0.702 & 0.984 & 0.948 & 0.974 \\
 2000 & 0.698 & 0.912 & 0.896 & 0.924 &
       0.762 & 0.970 & 0.926 & 0.958 \\  
 3000 & 0.700 & 0.894 & 0.904 & 0.898 &
       0.762 & 0.954 & 0.932 & 0.940 \\
\bottomrule
\end{tabular}
\end{table}
}

\section{Proofs of main theorems} \label{sec:proofmain}

We start this section with an introduction to additional notation and conventions. 
We use $\P$ as shorthand for the joint distribution of 
$(\mZ, \mZ_1,\mZ_2,\ldots)$, $(\mU, \mU_1,\mU_2,\ldots)$,
and $(\tilde\mU_1,\tilde\mU_2,\ldots)$. 
The bootstrap samples are denoted by $\tilde \mZ = \hat \mG_n(\mU)$ and $\tilde \mZ_{i} = \hat\mG_n(\tilde \mU_i)$, the latter of which will sometimes be written as $\tilde \mZ_{i,n}$ when we need to emphasize the dependence of the distribution on $n$. 
We use $\bP_{\mid \cO}$ to denote the (regular) conditional probability of $\P$ given $\cO$. 

Consider $\mV, \mW, \mV_1, \mV_2, ...$ to be some general random variables in $\R^r$.  The distribution of $\mV$ under $\P$ is written as $\P_V$. Its conditional distribution under $\bP_{\mid \cO}$ will be written as $\bP_{V \mid \cO}$. 
Take a non-random sequence of real numbers, $a_n > 0$, converging to $0$. We say $\mV_{n} = O(a_n)$ if $\limsup_{n\rightarrow \infty} \|\mV_n\|_2/a_n < \infty$ almost surely. We say $\mV_n = o(a_n)$ if $\lim_{n\rightarrow\infty} \|\mV_n\|_2/a_n \rightarrow 0$ almost surely. 
We say $\mV_{n} = O_{\P}(a_n)$ if for every $\epsilon > 0$, there exists an $M_\epsilon > 0 $ such that $\limsup_{n\rightarrow\infty} \P(\|\mV_n\|_2/a_n \leq M_\epsilon) > 1-\epsilon$. We say $\mV_n = \Theta_\P(a_n)$ if $\mV_n = O_\P(a_n)$ and, for every $\epsilon > 0$, there exists an $m_\epsilon > 0$ such that $\limsup_{n\rightarrow\infty}\P(\|\mV_n\|_2/a_n \leq m_\epsilon) < \epsilon$ We say $\mV_n = o_{\P}(a_n)$ if for every $\epsilon > 0$, $\lim_{n\rightarrow\infty} \P(\|\mV_n\|_2/a_n > \epsilon) = 0$. When $a_n = 1$ for all $n=1,2,...$, an alternate notation to $\mV_n=o_\P(1)$ is $\mV_n \overset{\P}{\rightarrow} 0$.

The space of continuous, real-valued functions on some compact set $\cE \subseteq \R^r$ is denoted by $C(\cE)$. It is turned into a measurable space by endowing it with the supremum norm and giving it the Borel $\sigma$-algebra. In the following, consider $G_0, G_1, G_2, ...$ as a sequence of $C(\cE)$-valued random variables.


If $V$ is a real-valued random variable, then we define
$$
\|V\|_{\bP_{\mid \cO},\psi_2} = \inf\{C >0 : \E[\exp(V^2/C^2) \mid \cO] \leq 2 \text{ almost surely}\}.
$$

For a normed space $\cN$ with norm $\|\cdot \|$ and a positive number $\epsilon > 0$, we denote the closed $\epsilon$-ball centered at $x \in \cN$ as
$$
\cB(x, \epsilon, \|\cdot \|) = \Big\{m \in \cN : \|x_i - m\| \leq \epsilon\Big\}.
$$
Then, for any subset $\cS \subseteq \cN$, we denote its $\epsilon$-covering number by $N(\epsilon, \cS, \|\cdot\|)$, namely, 
\[
N(\epsilon, \cS, \|\cdot\|):= \inf\Big\{N: \text{there exist some } \{x_i\}_{i=1}^N, x_i \in \cN \text{ such that } \cS \subseteq\bigcup_{i=1}^n \cB(x_i, \epsilon, \|\cdot\|)\Big\}.
\]


A sequence of random variables $\mV_1, \mV_2, ...$ is said to converge weakly to $\mV$ (resp. $G_1, G_2, ...$ converges weakly to $G_0$) if, for every bounded, continuous function $f : \R^r \rightarrow \R$, (resp. $f : C(\cE) \rightarrow \R$)
$$
\E[f(\mV_n)] - \E[f(\mV)] \rightarrow 0 \qquad \text{(resp. $\E[f(G_n)] - \E[f(G_0)] \rightarrow 0$})
$$
A sequence of random variables $\mV_1, \mV_2, ...$ is said to converge weakly to $\mV$ conditionally on $\cO$ (resp. $G_1, G_2,...$ converges weakly to $G_0$ conditionally on $\cO$) if, for every bounded continuous function $f : \R^r \rightarrow \R$, (resp. $f : C(\cE) \rightarrow \R$)
$$
\E[f(\mV_n) \mid \cO] - \E[f(\mV)\mid \cO] \rightarrow 0 \qquad \text{(resp. $\E[f(G_n) \mid \cO] - \E[f(G_0) \mid \cO] \rightarrow 0$})
$$ 
almost surely. 

 
In the context of Section \ref{sec:iso}, taking any $\ell, u > 0$, we define positive sequences $\ell_n = x_0 - \ell n^{-1/3}$ and $u_n = x_0 + un^{-1/3}$, and introduce the following notation for local averages:
\al{
\overline{Y}_{[\ell_n, u_n]} &:= \frac{1}{|\{i: \ell_n \leq X_i \leq u_n\}|} \sum_{i=1}^n  Y_{i} \cdot\mathds{1}(\ell_n \leq X_i\leq u_n),\\
\overline{\widetilde{Y}}_{[\ell_n, u_n]} &:=\frac{1}{|\{i: \widetilde X_i \in [\ell_n, u_n] \cap \cX\}|} \sum_{i=1}^n  \widetilde Y_{i} \cdot\mathds{1}(\widetilde X_i \in [\ell_n, u_n] \cap \cX),\\
\overline{f}_{[\ell_n, u_n]} &:= \frac{1}{|\{i: \ell_n \leq X_i\leq u_n\}|} \sum_{i=1}^n  f_0(X_{i}) \cdot\mathds{1}(\ell_n \leq X_i \leq u_n),\\
\overline{\widetilde{f}}_{[\ell_n, u_n]} &:=\frac{1}{|\{i: \widetilde X_i \in [\ell_n, u_n] \cap \cX\}|} \sum_{i=1}^n  \widetilde f_0(\widetilde X_i) \cdot\mathds{1}(\tilde X_i \in [\ell_n, u_n] \cap \cX),\\
\overline{\xi}_{[\ell_n, u_n]} &:= \frac{1}{|\{i: \ell_n \leq X_i  \leq u_n\}|} \sum_{i=1}^n \xi_{i} \cdot\mathds{1}(\ell_n \leq X_i \leq u_n),\\
\overline{\tilde{\xi}}_{[\ell_n, u_n]} &:=\frac{1}{|\{i: \tilde X_i \in [\ell_n, u_n] \cap \cX\}|} \sum_{i=1}^n  \tilde \xi_{i} \cdot\mathds{1}(\tilde X_i \in [\ell_n, u_n] \cap \cX).
}
In addition, for any $\ell > 0$ and $u \ge 0$, define
\[
G_{\ell,u}
:=
\frac{\sigma}{\sqrt{p_X(x_0)}}
\cdot
\frac{\bB(u) - \bB(-\ell)}{u+\ell}
+
\frac{f_0'(x_0)}{2}(u-\ell).
\]

\subsection{Proof of Theorem \ref{thm_reg}}
\begin{proof}
    We appeal to Lemma \ref{lemma:prob_and_as} and thus, in the following, we will be explicit about the dependence of $\tilde{\boldsymbol{\eta}}_0$ on $n$, and denote it as $\tilde{\boldsymbol{\eta}}_{0,n}$. Take any subsequence $n_k$. It suffices to find a subsequence $n_{k_\ell}$ such that 
    $$
    \sup_{\mt \in \R^q} \Big|\pr{\sqrt{n_{k_\ell}} (\tilde{\boldsymbol\eta}_{n_{k_\ell}} - \tilde{\boldsymbol\eta}_{0,n_{k_\ell}}) \leq \mt| \mathcal{O}} - \pr{\sqrt{n_{k_\ell}}(\hat{\boldsymbol\eta}_{n_{k_\ell}} - \boldsymbol\eta_0) \leq \mt}\Big| \rightarrow 0
    $$
    almost surely. By Polya's theorem (Lemma \ref{lemma:polya_cdl}), it suffices to show that there is a subsequence $n_{k_\ell}$ and a Gaussian random variable $\mZ'$ such that $\sqrt{n_{k_\ell}} (\tilde{\boldsymbol\eta}_{n_{k_\ell}} - \tilde{\boldsymbol\eta}_{0,n_{k_\ell}})$ converges weakly to $\mZ'$ conditionally on $\cO$, and $\sqrt{n_{k_\ell}}(\hat{\boldsymbol\eta}_{n_{k_\ell}} - \boldsymbol\eta_0)$ converges weakly to $\mZ'$. Equivalently, by the Cramer-Wold device (Lemma \ref{lemma:cramerwold_cdl}), it suffices to show that, for any $\ma \in \R^q$, we have $\sqrt{n_{k_\ell}} \ma^\top(\tilde{\boldsymbol\eta}_{n_{k_\ell}} - \tilde{\boldsymbol\eta}_{0,n_{k_\ell}})$ converges weakly to $\ma^\top\mZ'$  conditionally on $\cO$ and $\sqrt{n_{k_\ell}}\ma^\top(\hat{\boldsymbol\eta}_{n_{k_\ell}} - \boldsymbol\eta_0)$ converges weakly to $\ma^\top\mZ'$.
    
~\\    
\textbf{Step 1: Reduce to an appropriate, almost surely converging subsequence $n_{k_\ell}$.}
  
Take any $\ma \in \R^q$. 
    By Lemma \ref{cdllinear}, $    \P(\sD_{\boldsymbol{\eta}}^2\E[\sL(\tilde{\boldsymbol\eta}_{0,n}, \tilde\mZ_{1,n}) \mid \cO]) \text{ is invertible}) \rightarrow 1$, and 
    $$
    \|\sD_{\boldsymbol{\eta}}^2\E[\sL(\tilde{\boldsymbol\eta}_{0,n}, \tilde\mZ_{1,n}) \mid \cO]^{-1} - \sD_{\boldsymbol{\eta}}^2\E[\sL(\boldsymbol\eta_0, \mZ)]^{-1}\|_{\rm op} = o_{\P}(1) ~~~\text{and}~~~ \|\tilde{\boldsymbol\eta}_{0,n} - \boldsymbol{\eta}_0\|_2 = o_{\P}(1).
    $$

    Furthermore,  $(\boldsymbol{\eta}, \mz) \mapsto \sD_{\boldsymbol{\eta}}\sL(\boldsymbol{\eta}, \mz)$ is continuous on $\cK \times (\cZ \cup \tilde\cZ)$ from Assumption \ref{assump_l}(b,c). Therefore,
\begin{equation}\label{eq:regthm:abovefact}
\sup_{\boldsymbol{\eta} \in \cK}\|{\rm Var}(\sD_{\boldsymbol{\eta}}\sL(\boldsymbol{\eta}, \tilde\mZ_{1,n}) \mid \cO) - {\rm Var}(\sD_{\boldsymbol{\eta}}\sL(\boldsymbol{\eta}, \mZ))\|_{\rm max} = o_{\P}(1)
\end{equation}
by Lemma \ref{lemma:w2cvg}. Continuity of $\boldsymbol{\eta} \mapsto  {\rm Var}(\sD_{\boldsymbol{\eta}}\sL(\boldsymbol{\eta}, \mZ))$ on $\cK$, by the bounded convergence theorem, combined with \eqref{eq:regthm:abovefact} implies
\al{
&\|{\rm Var}(\sD_{\boldsymbol{\eta}}\sL(\tilde{\boldsymbol{\eta}}_{0,n}, \tilde\mZ_{1,n}) \mid \cO) - {\rm Var}(\sD_{\boldsymbol{\eta}}\sL(\boldsymbol{\eta}_0, \mZ))\|_{\rm max} \\
\leq&  \|{\rm Var}(\sD_{\boldsymbol{\eta}}\sL(\tilde{\boldsymbol{\eta}}_{0,n}, \tilde\mZ_{1,n}) \mid \cO) - {\rm Var}(\sD_{\boldsymbol{\eta}}\sL(\tilde{\boldsymbol{\eta}}_{0,n}, \mZ))\|_{\rm max} + \|{\rm Var}(\sD_{\boldsymbol{\eta}}\sL(\tilde{\boldsymbol{\eta}}_{0,n}, \mZ)) - {\rm Var}(\sD_{\boldsymbol{\eta}}\sL({\boldsymbol{\eta}}_{0}, \mZ))\|_{\rm max} \\
=& o_{\P}(1) + \|{\rm Var}(\sD_{\boldsymbol{\eta}}\sL(\tilde{\boldsymbol{\eta}}_{0,n}, \mZ)) - {\rm Var}(\sD_{\boldsymbol{\eta}}\sL({\boldsymbol{\eta}}_{0}, \mZ))\|_{\rm max}
\tag{Lemma \ref{lemma:w2cvg}}\\
=& o_{\P}(1). \tag{$\|\tilde{\boldsymbol\eta}_{0,n} - \boldsymbol{\eta}_0\|_2 = o_{\P}(1)$ and continuity of $\boldsymbol{\eta} \mapsto {\rm Var}(\sD_{\boldsymbol{\eta}}\sL(\boldsymbol{\eta}, \mZ))$ on $\cK$}
}

Appealing to Lemma \ref{lemma:prob_and_as}, we select a subsequence $n_{k_\ell}$ such that $\sD_{\boldsymbol{\eta}}^2\E[\sL(\tilde{\boldsymbol\eta}_{0,n_{k_\ell}}, \tilde\mZ_{1,n_{k_\ell}}) \mid \cO]^{-1}$ exists for each $n_{k_\ell}$, almost surely, 
    $$
    \|{\rm Var}(\sD_{\boldsymbol{\eta}}\sL(\tilde{\boldsymbol{\eta}}_{0,n_{k_\ell}}, \tilde\mZ_{1,n_{k_\ell}}) \mid \cO) - {\rm Var}(\sD_{\boldsymbol{\eta}}\sL(\boldsymbol{\eta}_0, \mZ))\|_{\rm max} \rightarrow 0
    $$
    almost surely, 
    $$
    \|\sD_{\boldsymbol{\eta}}^2\E[\sL(\tilde{\boldsymbol\eta}_{0,n_{k_\ell}}, \tilde\mZ_{1,n_{k_\ell}}) \mid \cO]^{-1} - \sD_{\boldsymbol{\eta}}^2\E[\sL(\boldsymbol\eta_0, \mZ)]^{-1}\|_{\rm op} \rightarrow 0, ~~~\text{and}~~~ \|\tilde{\boldsymbol\eta}_{0,n_{k_\ell}} - \boldsymbol{\eta}_0\|_2 \rightarrow 0
    $$
    almost surely.  Applying Lemma \ref{cdllinear} again, we obtain
    $$
    \sqrt{n_{k_\ell}}\ma^\top(\hat{\boldsymbol\eta}_{n_{k_\ell}} - \boldsymbol\eta_0) = -\frac{1}{\sqrt{n_{k_\ell}}} \sum_{i=1}^{n_{k_\ell}} \ma^\top(\sD_{\boldsymbol{\eta}}^2\E[\sL(\boldsymbol\eta_0, \mZ)])^{-1} \sD_{\boldsymbol{\eta}}\sL(\boldsymbol\eta_0, \mZ_i) + o_{\P}(1)
    $$
    and
    $$
    \sqrt{n_{k_\ell}}\ma^\top(\tilde{\boldsymbol\eta}_{n_{k_\ell}} -\tilde{\boldsymbol\eta}_{0,n_{k_\ell}}) = -\frac{1}{\sqrt{n_{k_\ell}}} \sum_{i=1}^{n_{k_\ell}} \ma^\top(\sD_{\boldsymbol{\eta}}^2\E[\sL(\tilde{\boldsymbol\eta}_{0,n_{k_\ell}}, \tilde\mZ_{1,n_{k_\ell}}) \mid \cO])^{-1}\sD_{\boldsymbol{\eta}}\sL(\tilde{\boldsymbol\eta}_{0,n_{k_\ell}}, \tilde \mZ_{i,n_{k_\ell}}) + o_{\P}(1).
    $$

~\\
\textbf{Step 2: Apply the central limit theorem to the linear representation of $\sqrt{n_{k_\ell}}\ma^\top(\hat{\boldsymbol\eta}_{n_{k_\ell}} - \boldsymbol\eta_0)$.}

     The function $\mz \mapsto \sD_{\boldsymbol{\eta}}\sL(\boldsymbol\eta_0, \mz)$ is bounded on $\cZ$, since $\mz \mapsto \sD_{\boldsymbol{\eta}}\sL(\boldsymbol\eta_0, \mz)$ is continuous by Assumption \ref{assump_l}(c) and $\cZ$ is compact by Assumption \ref{assump_ndata}(b). The Cauchy-Schwarz inequality implies
     $$
     \E[\ma^\top(\sD_{\boldsymbol{\eta}}^2\E[\sL(\boldsymbol\eta_0, \mZ)])^{-1}\sD_{\boldsymbol{\eta}}\sL(\boldsymbol\eta_0, \mZ)\sD_{\boldsymbol{\eta}}\sL(\boldsymbol\eta_0, \mZ)^\top(\sD_{\boldsymbol{\eta}}^2\E[\sL(\boldsymbol\eta_0, \mZ)])^{-1} \ma] < \infty.
     $$ Furthermore, Assumption \ref{assump_ndata}(a) implies that
    $$  \ma^\top(\sD_{\boldsymbol{\eta}}^2\E[\sL(\boldsymbol\eta_0, \mZ)])^{-1} \sD_{\boldsymbol{\eta}}\sL(\boldsymbol\eta_0, \mZ_{n_{k_1}}), \ma^\top(\sD_{\boldsymbol{\eta}}^2\E[\sL(\boldsymbol\eta_0, \mZ)])^{-1} \sD_{\boldsymbol{\eta}}\sL(\boldsymbol\eta_0, \mZ_{n_{k_2}}), ...
    $$
    are independent, identically distributed and square integrable.
    Define $\mZ' \in \R^q$ as a Gaussian random variable with mean zero and variance matrix
    $$
    \boldsymbol{\Sigma} := \sD_{\boldsymbol{\eta}}^2\E[ \sL(\boldsymbol{\eta}_0, \mZ)]^{-1}\E[\sD_{\boldsymbol{\eta}}\sL(\boldsymbol\eta_0, \mZ)\sD_{\boldsymbol{\eta}}\sL(\boldsymbol\eta_0, \mZ)^\top]\sD_{\boldsymbol{\eta}}^2\E[ \sL(\boldsymbol{\eta}_0, \mZ)]^{-1}.
    $$
    The central limit theorem indicates that $\sqrt{n_{k_\ell}}\ma^\top(\hat{\boldsymbol\eta}_{n_{k_\ell}} - \boldsymbol\eta_0)$ converges weakly to a mean-zero Gaussian random variable with variance matrix
    $$
    \ma^\top\E[\sD_{\boldsymbol{\eta}}^2 \sL(\boldsymbol{\eta}_0, \mZ)]^{-1}\E[\sD_{\boldsymbol{\eta}}\sL(\boldsymbol\eta_0, \mZ)\sD_{\boldsymbol{\eta}}\sL(\boldsymbol\eta_0, \mZ)^\top]\E[\sD_{\boldsymbol{\eta}}^2 \sL(\boldsymbol{\eta}_0, \mZ)]^{-1}\ma = \ma^\top \boldsymbol{\Sigma}\ma.
    $$
    This is exactly the variance matrix of  $\ma^\top \mZ'$, so we have shown that $\sqrt{n_{k_\ell}}\ma^\top(\hat{\boldsymbol\eta}_{n_{k_\ell}} - \boldsymbol\eta_0)$ converges weakly to $\ma^\top \mZ'$.

~\\
\textbf{Step 3: Apply the Lyapunov central limit theorem (Lemma \ref{lemma:lyapunov_cdl})  to the linear representation of $\sqrt{n_{k_\ell}}\ma^\top(\tilde{\boldsymbol\eta}_{n_{k_\ell}} -\tilde{\boldsymbol\eta}_{0,n_{k_\ell}})$.}

 By Assumptions \ref{assump_noise1} and \ref{assump_bootstrap}(a), the random variables $\{\tilde\mZ_{i,n}\}_{n\geq 1, 1\leq i \leq n}$ form a triangular array conditional on $\cO$, so as the random variables
    $$
    \Big\{\ma^\top(\sD_{\boldsymbol{\eta}}^2\E[\sL(\tilde{\boldsymbol\eta}_{0,n_{k_\ell}}, \tilde\mZ_{1,n_{k_\ell}}) \mid \cO])^{-1}\sD_{\boldsymbol{\eta}}\sL(\tilde{\boldsymbol\eta}_{0,n_{k_\ell}}, \tilde \mZ_{i,n_{k_\ell}})\Big\}_{\ell \geq 1, 1 \leq i \leq n_{k_\ell}}.
    $$
    By the Cauchy-Schwarz inequality, for any $\check p \geq 1$, 
    \al{
    \E\Bigg[\ap{\E[\sD_{\boldsymbol{\eta}}^2\sL(\tilde{\boldsymbol\eta}_{0,n_{k_\ell}}, \tilde\mZ_{1,n_{k_\ell}}) \mid \cO])^{-1}\sD_{\boldsymbol{\eta}}\sL(\tilde{\boldsymbol\eta}_{0,n_{k_\ell}}, \tilde\mZ_{1,n_{k_\ell}})^\top \ma }^{\check p}\mid \cO\Bigg] \leq \\     \E\Bigg[\ap{\|\ma\|_2 \cdot \Big\|\E[\sD_{\boldsymbol{\eta}}^2\sL(\tilde{\boldsymbol\eta}_{0,n_{k_\ell}}, \tilde\mZ_{1,n_{k_\ell}}) \mid \cO])^{-1} \Big\|_{\rm op} \cdot \|\sD_{\boldsymbol{\eta}}\sL(\tilde{\boldsymbol\eta}_{0,n_{k_\ell}}, \tilde\mZ_{1,n_{k_\ell}})\|_2}^{\check p} \mid \cO \Bigg],
    }
which is further upper bounded by
    \begin{equation}
        \pa{\|\ma\|_2 \cdot \Big\|\E[\sD_{\boldsymbol{\eta}}^2\sL(\tilde{\boldsymbol\eta}_{0,n_{k_\ell}}, \tilde\mZ_{1,n_{k_\ell}}) \mid \cO])^{-1} \Big\|_{\rm op} \cdot \sup_{(\boldsymbol\eta, \mz) \in \cK \times \tilde\cZ}\|\sD_{\boldsymbol{\eta}}\sL(\boldsymbol\eta, \mz)\|_2}^{\check p}. \label{eq:secondthirdfinite}
    \end{equation}
     For notational concision, let
    $$
    \tilde V_{i, n_{k_\ell}}:= \ma^\top(\sD_{\boldsymbol{\eta}}^2\E[\sL(\tilde{\boldsymbol\eta}_{0,n_{k_\ell}}, \tilde\mZ_{1,n_{k_\ell}}) \mid \cO])^{-1}\sD_{\boldsymbol{\eta}}\sL(\tilde{\boldsymbol\eta}_{0,n_{k_\ell}}, \tilde \mZ_{i,n_{k_\ell}}) ~~~~~~\text{and}~~~~~~ S_{n_{k_\ell}}^2 := \sum_{i=1}^{n_{k_\ell}}  {\textrm{Var}}(\tilde V_{i, n_{k_\ell}} \mid \cO).
    $$
    When $\check p = 2$,  \eqref{eq:secondthirdfinite} verifies that
    $$
    \E\Big[\tilde V_{i, n_{k_\ell}}^2 | \cO\Big] < \infty
    $$
    almost surely. Also, $\E\Big[\tilde V_{i, n_{k_\ell}}^2 | \cO\Big]  >0$ due to Assumption \ref{assump_bootstrap}(b) that guarantees that $\tilde\mZ_{1}$ has nonzero variance, conditional on $\cO$, almost surely.
    We then check that the Lyapunov condition 
    $$
    \frac{1}{S_{n_{k_\ell}}^3} \sum_{i=1}^{n_{k_\ell}} \E\Big[|\tilde V_{i, n_{k_\ell}}|^3 \mid \cO\Big] \rightarrow 0
    $$
    holds almost surely, as $\ell\rightarrow \infty$. Since
    $\{\tilde V_{i, n_{k_l}}\}_{i=1}^{n_{k_l}}$ are identically distributed, the equivalent condition to check is that
    $$
    \frac{n_{k_\ell} \cdot \E[|\tilde V_{1, n_{k_\ell}}|^3 \mid \cO]}{n_{k_\ell}^{3/2}{\textrm{Var}}(\tilde V_{1, n_{k_\ell}} \mid \cO)^{3/2}} \rightarrow 0
    $$
    holds almost surely, as $\ell \rightarrow \infty$. For all $\ell \geq 1$,
    \al{
    |\tilde V_{1, n_{k_l}}|^3 \leq         \pa{\|\ma\|_2 \cdot \Big\|\E[\sD_{\boldsymbol{\eta}}^2\sL(\tilde{\boldsymbol\eta}_{0,n_{k_\ell}}, \tilde\mZ_{1,n_{k_\ell}}) \mid \cO])^{-1} \Big\|_{\rm op} \cdot \sup_{(\boldsymbol\eta, \mz) \in \cK \times \tilde\cZ}\|\sD_{\boldsymbol{\eta}}\sL(\boldsymbol\eta, \mz)\|_2}^3  
    }
    by \eqref{eq:secondthirdfinite} with $\check p=3$. Using the property of our subsequence, and the fact that $\|\mathbf{A}\|_{\rm op} \leq q\|\mathbf{A}\|_{\rm max}$ for matrices $\mathbf A \in \R^{q \times q}$,
    $$
    \|\E[\sD_{\boldsymbol{\eta}}^2\sL(\tilde{\boldsymbol\eta}_{0,n_{k_\ell}}, \tilde\mZ_{1,n_{k_\ell}}) \mid \cO]^{-1} \|_{\rm op} \rightarrow \|\E[\sD_{\boldsymbol{\eta}}^2\sL(\boldsymbol\eta_{0}, \mZ)]^{-1} \|_{\rm op} < \infty
    $$
    almost surely, by invertibility of the right-hand side by Assumption \ref{assump_l}(d). Handling the denominator using Lemma \ref{lemma:w2cvg} and our choice of subsequence, ${\textrm{Var}}(\tilde V_{1, n_{k_\ell}} \mid \cO)$ converges to a nonzero constant almost surely.
    
    Therefore, sending $\ell \rightarrow \infty$, we establish that
    $$
        \frac{n_{k_\ell} \cdot \E[|\tilde V_{1, n_{k_\ell}}|^3 \mid \cO]}{n_{k_\ell}^{3/2}{\textrm{Var}}(\tilde V_{1, n_{k_\ell}} \mid \cO)^{3/2}} \rightarrow 0
    $$
    almost surely. By the Lyapunov central limit theorem, Lemma \ref{lemma:lyapunov_cdl},
    $\sum_{i=1}^{n_{k_\ell}} \tilde{V}_{i,n_{k_\ell}}/ S_{n_{k_\ell}}$ then converges weakly to a standard normal conditionally on $\cO$ . 

~\\    
\textbf{Step 4: Verify the limiting distributions are the same, and conclude.}

    Notice that
    \begin{equation}\label{eq:regthm:therhs}
    \frac{1}{n_{k_\ell}} S_{n_{k_\ell}}^2  = {\textrm{Var}}(\tilde V_{1, n_{k_\ell}} \mid \cO)
    \end{equation}
by the identical distribution of $\{\tilde V_{i, n_{k_\ell}}\}_{i=1}^{n_{k_\ell}}$ conditional on $\cO$. The right-hand side of \ref{eq:regthm:therhs} is, by definition, 
$$
\ma^\top(\sD_{\boldsymbol{\eta}}^2\E[\sL(\tilde{\boldsymbol\eta}_{0,n_{k_\ell}}, \tilde\mZ_{1,n_{k_\ell}}) \mid \cO])^{-1}\E[\sD_{\boldsymbol{\eta}}\sL(\tilde{\boldsymbol\eta}_{0,n_{k_\ell}}, \tilde \mZ_{i,n_{k_\ell}}) \sD_{\boldsymbol{\eta}}\sL(\tilde{\boldsymbol\eta}_{0,n_{k_\ell}}, \tilde \mZ_{i,n_{k_\ell}})^\top \mid \cO]  (\sD_{\boldsymbol{\eta}}^2\E[\sL(\tilde{\boldsymbol\eta}_{0,n_{k_\ell}}, \tilde\mZ_{1,n_{k_\ell}}) \mid \cO])^{-1}\ma.
$$
First, 
\al{
\|\sD_{\boldsymbol{\eta}}^2\E[\sL(\tilde{\boldsymbol\eta}_{0,n_{k_\ell}}, \tilde\mZ_{1,n_{k_\ell}}) \mid \cO]^{-1} - \sD_{\boldsymbol{\eta}}^2\E[\sL(\boldsymbol\eta_0, \mZ)]^{-1}\|_{\rm max} &\leq \|\sD_{\boldsymbol{\eta}}^2\E[\sL(\tilde{\boldsymbol\eta}_{0,n_{k_\ell}}, \tilde\mZ_{1,n_{k_\ell}}) \mid \cO]^{-1} - \sD_{\boldsymbol{\eta}}^2\E[\sL(\boldsymbol\eta_0, \mZ)]^{-1}\|_{\rm op}\\
&\rightarrow 0 
}
almost surely, by the definition of $n_{k_\ell}$. In addition,
$$
\|{\rm Var}(\sD_{\boldsymbol{\eta}}\sL(\tilde{\boldsymbol\eta}_{0,n_{k_\ell}}, \tilde\mZ_{1,n_{k_\ell}}) \mid \cO) - {\rm Var}(\sD_{\boldsymbol{\eta}}\sL(\boldsymbol{\eta}_0, \mZ))\|_{\rm max} \rightarrow 0
$$
almost surely by the definition of $n_{k_\ell}$. Since $\sD_{\boldsymbol{\eta}}\sL(\tilde{\boldsymbol\eta}_{0,n_{k_\ell}}, \tilde\mZ_{1,n_{k_\ell}})$ has zero mean conditional on $\cO$, and $\sD_{\boldsymbol{\eta}}\sL(\boldsymbol{\eta}_0, \mZ)$ is zero mean,
$$
{\rm Var}(\sD_{\boldsymbol{\eta}}\sL(\tilde{\boldsymbol\eta}_{0,n_{k_\ell}}, \tilde\mZ_{1,n_{k_\ell}}) \mid \cO) = \E[\sD_{\boldsymbol{\eta}}\sL(\tilde{\boldsymbol\eta}_{0,n_{k_\ell}}, \tilde \mZ_{i,n_{k_\ell}}) \sD_{\boldsymbol{\eta}}\sL(\tilde{\boldsymbol\eta}_{0,n_{k_\ell}}, \tilde \mZ_{i,n_{k_\ell}})^\top \mid \cO]
$$
and
$$
{\rm Var}(\sD_{\boldsymbol{\eta}}\sL(\boldsymbol{\eta}_0, \mZ)) =  \E[\sD_{\boldsymbol{\eta}}\sL(\boldsymbol\eta_0, \mZ)\sD_{\boldsymbol{\eta}}\sL(\boldsymbol\eta_0, \mZ)^\top].
$$
Hence, 
$$
n_{k_\ell}^{-1/2}S_{n_{k_\ell}}/\sqrt{\ma^\top \boldsymbol\Sigma \ma} \rightarrow 1
$$
almost surely, so that Slutsky's theorem implies 
$$
n_{k_\ell}^{-1/2}\sum_{i=1}^{n_{k_\ell}} \tilde V_{i, n_{k_\ell}}\Big/ \sqrt{\ma^\top \boldsymbol{\Sigma} \ma}
$$
converges weakly to a standard normal  conditionally on $\cO$. Applying the continuous mapping theorem, 
$n_{k_\ell}^{-1/2}\sum_{i=1}^{n_{k_\ell}} \tilde V_{i, n_{k_\ell}}
$
converges weakly to $\sqrt{\ma^\top \boldsymbol{\Sigma}\ma} $ times a standard normal, conditionally on $\cO$. This limit is equal in distribution to $\ma^\top \mZ'$. Since $\ma$ was arbitrary, by the Cramer-Wold device we conclude $\sqrt{n_{k_\ell}}(\tilde{\boldsymbol\eta}_{n_{k_\ell}} -\tilde{\boldsymbol\eta}_{0,n_{k_\ell}})$ converges to $\mZ'$ conditionally on $\cO$ . 
By comparing the weak limits, Polya's theorem, and Lemma \ref{lemma:prob_and_as}, we establish
$$
    \sup_{\mt \in \R^q} \Big|\pr{\sqrt{n} (\tilde{\boldsymbol\eta}_{n} - \tilde{\boldsymbol\eta}_{0}) \leq \mt| \mathcal{O}} - \pr{\sqrt{n}(\hat{\boldsymbol\eta}_{n} - \boldsymbol\eta_0) \leq \mt}\Big| = o_{\P}(1).
$$
This completes the proof.
\end{proof}

\subsection{Proof of Theorem \ref{thm:iso}}
\begin{proof}
Take any $\epsilon > 0$. For the random variables $(\tilde L^* , \tilde U^*)$ defined in Lemma \ref{lemma:ul_o1}, and $(L_G^*, U_G^*)$ defined in Lemma \ref{lemma:glu_exists_tight}, we choose a $K_\epsilon > 0$ large enough and $k_\epsilon >0$ small enough so that
$$
\limsup_{n\rightarrow\infty}\Big\{\P(\tilde L^* > K_\epsilon) \vee \P( \tilde L^* < k_\epsilon) \vee \P(L_G^* > K_\epsilon) \vee \P(L_G^* <k_\epsilon)\Big\}   < \epsilon
$$
and
$$
\limsup_{n\rightarrow\infty}\Big\{\P(\tilde U^* > K_\epsilon) \vee \P( \tilde U^* < k_\epsilon ) \vee \P(U_G^* > K_\epsilon)\vee\P( U_G^* < k_\epsilon)\Big\} < \epsilon.
$$
By Lemma \ref{lemma:fidi_uclt},
$$
    \sup_{t \in \R, k_\epsilon \leq u,\ell \leq K_\epsilon} \ap{\pr{n^{1/3}(\overline {\tilde Y}_{[\ell_n, u_n]} - \tilde f_0(x_0)) \leq t \cdl \mathcal{O}} - \pr{G_{\ell, u}\leq t)}} = o_{\P}(1).
$$
In particular, we work on the event $ \{k_\epsilon \leq \tilde U^* , \tilde L^*, U_G^*, L_G^* \leq K_\epsilon\}$.We use the continuous mapping theorem on the space of random functions with bounded sample paths on $[k_\epsilon,K_\epsilon] \times [k_\epsilon, K_\epsilon]$, with the supremum norm. With respect to this metric, we use the continuous function $h \mapsto \sup_{k_\epsilon \leq \ell \leq K_\epsilon} \inf_{k_\epsilon \leq u \leq K_\epsilon}|h(\ell,u)|$ to obtain
$$
    \sup_{t \in \R} \ap{\P(\sup_{h_\epsilon < \ell \leq H_\epsilon} \inf_{h_\epsilon \leq u \leq H_\epsilon }n^{1/3}(\overline {\tilde Y}_{[x_0 - \ell n^{-1/3}, x_0 + u n^{-1/3}]} - \tilde f_0(x_0)) \leq t \mid  \mathcal{O}) - \P(\sup_{k_\epsilon < \ell \leq K_\epsilon} \inf_{k_\epsilon \leq u \leq K_\epsilon }G_{\ell, u}\leq t)} = o_{\P}(1).
$$
By Lemma \ref{lemma:ul_o1}, we can assume without loss of generality that the max-min formula holds so that
$$
    \sup_{t \in \R} \ap{\pr{n^{1/3}(\tilde f_n(x_0) - \tilde f_0(x_0)) \leq t\cdl \mathcal{O}} - \P(\sup_{k_\epsilon < \ell \leq K_\epsilon} \inf_{k_\epsilon \leq u \leq K_\epsilon} G_{\ell, u} \leq t)} = o_{\P_{\cO}}(1).
$$
Since we are working on the event where $k_\epsilon \leq  U_G^*, L_G^* \leq K_\epsilon$, we have, equivalently, 
$$
    \sup_{t \in \R} \ap{\pr{n^{1/3}(\tilde f_n(x_0) - \tilde f_0(x_0)) \leq t \cdl \mathcal{O}} - \P(\sup_{\ell > 0} \inf_{u \geq 0} G_{\ell, u} \leq t)} = o_{\P_{\cO}}(1).
$$
The theorem is then proven by Lemma \ref{lemma:iso_original_data}, then comparing the limits.
\end{proof}

\subsection{Proof of Theorem \ref{theorem:gan}}
\begin{proof}
First, Assumption \ref{trainedwgan}(c) corresponds to the condition that $\bP_{\tilde Z \mid \cO}$ has nonzero variance almost surely. Second, by definition of $\hat \mG_n^{\rm GAN}$, Assumption \ref{assump_bootstrap}(a) automatically holds. It remains to verify the remaining parts.

~\\
\textbf{Step 1: Show Assumption \ref{assump_bootstrap}(c).}

For a vector $\mx \in \R^p$ and a $1$-Lipschitz function $\alpha : \R \rightarrow \R$ applied componentwise,
\al{
\|\alpha(\mx) - \alpha(\mx)\|_2  &= \sqrt{(|\alpha(x_1) - \alpha(x_1')|)^2 + ... + (|\alpha(x_p) - \alpha(x_p')|)^2} \tag{definition of $\alpha$}\\
&\leq \sqrt{(|x_1 - x_1'|)^2 + ... + (|x_p-x_p'|)^2} \tag{$\alpha$ is 1-Lipschitz by Assumption \ref{trainedwgan} and square root function is increasing}\\
&= \|\mx - \mx'\|_2.
}\
Then, for $\Ab \in \R^{r \times p}, \mb \in \R^r$, we have
$$
\|\alpha(\Ab \mx + \mb) - \alpha(\Ab \mx + \mb)\|_2 \leq \|\Ab(\mx - \mx') \|_2 \leq \|\Ab\|_{\rm op} \cdot \|\mx - \mx'\|_2.
$$
Therefore, for $\mG \in \cF_\alpha(L^{\rm gen}, W_n^{\rm gen}, B^{\rm gen}, p, p)$,
$$
 \|\mG(\mx) - \mG(\mx')\|_2 \leq \|\mx - \mx'\|_2 \cdot \prod_{i=1}^{L^{\rm gen}} \|\Ab^{(i)}\|_{\rm op} \leq (B^{\rm gen})^{L^{\rm gen}} \cdot \|\mx - \mx'\|_2.
$$
That is, the class of functions $\cF_\alpha(L^{\rm gen}, W_n^{\rm gen},  B^{\rm gen}, p, p)$ is uniformly $(B^{\rm gen})^{L^{\rm gen}}$-Lipschitz on $\R^p$. Then, $D \circ \mG$, where $D \in {\rm Lip}(p,1)$, $D(\mathbf{0}) = 0$, and $ \mG \in \cF_\alpha(L^{\rm gen}, W_n^{\rm gen}, B^{\rm gen}, p, p)$ is $(B^{\rm gen})^{L^{\rm gen}}$-Lipschitz on $\R^p$ too. 

Next, for $\bu \in \cU$, with $\alpha$ applied componentwise,
\al{
\|\alpha(\Ab \bu + \mb)\|_2 &= \|\alpha(\Ab \bu + \mb) - \alpha(\mathbf{0})\|_2 \tag{Assumption \ref{trainedwgan}(a)}\\
&\leq \|\Ab \bu + \mb\|_2 \\
&\leq \|\Ab \bu\|_2 + \|\mb\|_2 \tag{triangle inequality}\\
&\leq B^{\rm gen} \cdot \|\bu\|_2 + \|\mb\|_2 \tag{definition of operator norm and $B^{\rm gen}$}\\
&\leq B^{\rm gen} \cdot \|\bu\|_2 + B^{\rm gen}. \tag{$\|\mb\|_2 \leq B^{\rm gen}$}
}
Hence, using the recursive definition of $\mG$, and the fact that $\cU$ contains $\mathbf{0}$, 
\begin{equation}\label{eq:theoremgan:GUcalc}
\|\mG(\bu)\|_2 \leq (B^{\rm gen})^{L^{\rm gen}} \cdot \|\bu\|_2 + \sum_{i=1}^{L^{\rm gen}}  (B^{\rm gen})^i.
\end{equation}
Since $\cU$ is compact by Assumption \ref{assump_noise2}, and $\mG$ is arbitrary, this shows that Assumption \ref{assump_bootstrap}(c) holds, because $\hat \mG_n^{\rm GAN} \in \cF_\alpha(L^{\rm gen}, W_n^{\rm gen},  B^{\rm gen}, p, p).$

~\\
\textbf{Step 2: Show Assumption \ref{assump_bootstrap}(b). }

Next, take $\epsilon > 0$. We aim to show
\al{
\P\pa{\sW_1(\bP_{\tilde Z \mid \cO}, \P_{Z}) \geq \epsilon } &= \P\pa{\sup_{D \in {\textrm{Lip}}(p,1)}\E\Big[ D( \hat \mG_n^{\rm GAN} (\mU)) - D(\mZ) \mid \cO\Big] \geq \epsilon} \tag{Lemma \ref{lemma:kr_dual}}\\
&= \P\pa{\sup_{D \in {\textrm{Lip}}(p,1), D(\mathbf{0}) = 0}\E\Big[ D( \hat \mG_n^{\rm GAN} ( \mU)) - D(\mZ) \mid \cO\Big]
\geq \epsilon}\\
&= \P\pa{\sup_{D \in {\textrm{Lip}}(p,1), D(\mathbf{0}) = 0}\E[\sW(\hat\mG_n^{\rm GAN}, D, \mZ, \mU) \mid \cO]
\geq \epsilon}
}
goes to zero. 

To do so, we first show 
\begin{align}\label{eq:thmgan:emp_proc_term}
 \sup_{D \in {\textrm{Lip}}(p,1), D(\mathbf{0}) = 0} \Big\{\frac{1}{n} \sum_{i=1}^n \sW(\hat\mG_n^{\rm GAN}, D, \mZ_i, \mU_i) - \E\Big[\sW(\hat\mG_n^{\rm GAN}, D, \mZ, \mU) \mid \cO\Big]\Big\} = o_{\P}(1).
\end{align}
Since $D(\mathbf{0}) = 0$, for any $\bu \in \cU$ and $\mz \in \cZ$,
$$
\|D\circ \hat\mG_n^{\rm GAN}(\bu)\|_2 \leq (B^{\rm gen})^{L^{\rm gen}} \cdot \|\bu\|_2 + \sum_{i=1}^{L^{\rm gen}}  (B^{\rm gen})^i
~~~\text{and}~~~
\|D(\mz)\|_2 \leq \|\mz\|_2,
$$
by \eqref{eq:theoremgan:GUcalc}. Summing up our calculations, and using the notation of Lemma \ref{lemma:lip_covering_num}, we obtain
$$
\Big\{D\circ \hat\mG_n^{\rm GAN} : D \in {\rm{Lip}}(p,1), D(\mathbf{0}) = 0 \Big\} \subseteq BL\pa{\cU, (B^{\rm gen})^{L^{\rm gen}} + (B^{\rm gen})^{L^{\rm gen}} \cdot \sup_{\bu \in \cU}\|\bu\|_2 + \sum_{i=1}^{L^{\rm gen}}  (B^{\rm gen})^i } 
$$ 
and
$$
\Big\{D : D \in {\rm{Lip}}(p,1), D(\mathbf{0}) = 0 \Big\} \subseteq BL\pa{\cZ, \sup_{\mz \in \cZ}\|\mz\|_2 + 1}.
$$
Accordingly,  Lemmas \ref{lemma:uniform_gc_class} and \ref{lemma:lip_covering_num} imply \eqref{eq:thmgan:emp_proc_term}. Using the first condition in Assumption \ref{trainedwgan}(b), it then suffices to show
$$
\frac1n \sum_{i=1}^n 
\sW(\hat\mG_n^{\rm GAN}, \hat D_n^{\rm GAN}, \mZ_i, \mU_i) = o_{\P}(1),
$$
which is exactly the second condition of Assumption \ref{trainedwgan}(b). We thus conclude the proof. 
\end{proof}

\subsection{Proof of Theorem \ref{flow_wcvg}}
\begin{proof}
\textbf{Step 1: Proof of Theorem \ref{flow_wcvg}(a).}

Assumption \ref{assump_bootstrap}(a) holds automatically by the definition of $\hat \mS_n^{\rm rflow}$. We then verify the remaining parts.

\noindent\textbf{Step 1a: Show that $\tilde \cZ_n \subseteq \tilde \cZ$ for some nonrandom compact set $\tilde \cZ \subseteq \R^p$.}

Take any $\mS = \mF^{\nu} \circ \boldsymbol{\Sigma}_\nu \circ ... \circ \mF^1 \circ \boldsymbol{\Sigma}_1 \in \cF_{\nu, K, M}$. Since $K >0$, part (a) of the definition of $\cT_\uparrow (p)$ and part (a) of the definition of $\cF_{\nu, K,M}$ indicate that $\mS$ is a continuously differentiable and bijective map from $\R^p$ to $\R^p$. Hence, $\mS^{-1}: \R^p \rightarrow \R^p$ exists. Since
\al{
\sD \mS(\mz) &= \sD(\mF^{\nu} \circ \boldsymbol{\Sigma}_\nu \circ ... \circ \mF^1 \circ \boldsymbol{\Sigma}_1)(\mz) \tag{definition of $S$}\\
&= \sD\mF^{\nu}  (\boldsymbol{\Sigma}_\nu \circ ... \circ \mF^2 \circ \mF^1 \circ \boldsymbol{\Sigma}_1 (\mz)) \sD(\boldsymbol{\Sigma}_\nu \circ ... \circ  \mF^1 \circ \boldsymbol{\Sigma}_1)(\mz) \tag{chain rule}\\
&= \sD\mF^{\nu}  (\boldsymbol{\Sigma}_\nu \circ ... \circ \mF^2 \circ \mF^1 \circ \boldsymbol{\Sigma}_1 (\mz)) \boldsymbol{\Sigma}_\nu  \sD(\mF^{\nu-1} \circ ... \circ  \mF^1 \circ \boldsymbol{\Sigma}_1)(\mz) \tag{pull out constant matrix $\boldsymbol{\Sigma}_\nu$}
}
for any $\mz \in \R^p$, we apply the chain rule $\nu-1$ more times in the same manner, to obtain
$$
\sD \mS(\mz) = \sD\mF^{\nu}(\mx^{\nu}) \boldsymbol{\Sigma}_\nu \, ... \,  \sD \mF^1(\mx^1) \boldsymbol{\Sigma}_1 \qquad \text{with} \qquad \mx^{i} = (\boldsymbol{\Sigma}_i \circ  ... \circ \mF^1 \circ \boldsymbol{\Sigma}_1)(\mz),
$$
which yields
$$
\det(\sD \mS(\mz)) = \prod_{i=1}^{\nu} \det(\sD\mF^{i}(\mx^{i})) \cdot \det(\boldsymbol{\Sigma}_i).
$$
Since $\mF^1, ..., \mF^\nu \in \cT_\uparrow (p),$ and the determinant of an upper triangular matrix is the product of its diagonal terms, we obtain
$$
\det(\sD \mF^i(\mz)) = \prod_{j=1}^p \sD_j F^i_j(\mz) 
$$
so that 
$$
\det(\sD \mS(\mz)) = \prod_{i=1}^{\nu} \prod_{j=1}^p \sD_j F^i_j(\mz) \cdot \det(\boldsymbol{\Sigma}_i).
$$
The definition of the class $\cF_{\nu, K, M}$ indicates that 
\begin{equation}
(K^pM)^{-\nu} \leq \det(\sD \mS(\mz)) \leq (K^pM)^{\nu}. \label{eq:flow_cvg:density_bd_ul}
\end{equation}
Since $\mz$ is arbitrary and applying Lemma \ref{lemma:c1inverse}, $\mS^{-1}$ is also continuously differentiable. Since $\hat \mS_n^{\rm rflow} \in \cF_{\nu, K, M}$ , this argument shows $\hat \mG_n^{\rm rflow} = (\hat \mS_n^{\rm rflow})^{-1}$ is continuously differentiable on $\R^p$. 

Moreover, by Lemma \ref{lemma:c1inverse}, we have
$$
\sD(\mS^{-1})(\mz) =  \sD\mS(\mS^{-1}(\mz))^{-1}.
$$
Accordingly, letting $\my = \mS^{-1}(\mz)$ and $\my^i = (\boldsymbol{\Sigma}_i \circ  ... \circ \mF^1 \circ \boldsymbol{\Sigma}_1)(\my)$ for each $i \in [\nu]$, we obtain
\al{
\sD\mS(\my)^{-1} &= 
(\sD\mF^{\nu}(\my^{\nu}) \boldsymbol{\Sigma}_\nu \, ... \,  \sD \mF^1(\my^1) \boldsymbol{\Sigma}_1)^{-1} \tag{formula for $\sD \mS$}\\
&=\boldsymbol{\Sigma}_1^{-1} (\sD \mF^1(\my^1))^{-1} ... \boldsymbol{\Sigma}_\nu^{-1} (\sD\mF^{\nu}(\my^{\nu}))^{-1}. \tag{formula for inverse of product of invertible matrices}
}
We thus reach that
\begin{align}
\|\sD\mS(\my)^{-1}\|_{\rm op} &\leq \|\boldsymbol{\Sigma}_1^{-1} (\sD \mF^1(\my^1))^{-1} ... \boldsymbol{\Sigma}_\nu^{-1} (\sD\mF^{\nu}(\my^{\nu}))^{-1}\|_{\rm op}\nonumber\\
&\leq \prod_{i=1}^\nu \|\boldsymbol{\Sigma}_i^{-1}\|_{\rm op} \cdot \|(\sD\mF^{i}(\my^{i}))^{-1}\|_{\rm op} \nonumber\tag{operator norm of product bound}\\
&= \prod_{i=1}^\nu \lambda_{\rm min}(\boldsymbol{\Sigma}_i)^{-1} \cdot \|(\sD\mF^{i}(\my^{i}))^{-1}\|_{\rm op} \nonumber\tag{$\boldsymbol{\Sigma}_i$ is symmetric and positive definite}\\
&\leq K^\nu \prod_{i=1}^\nu  \|(\sD\mF^{i}(\my^{i}))^{-1}\|_{\rm op} \nonumber\tag{definition of $\cF_{\nu, K, M}$}\\
&\leq K^\nu \cdot \prod_{i=1}^\nu M\sqrt{p}\pa{1+ M\|\sD\mF^{i}(\my^{i})\|_{\rm op}}^{p-1} \nonumber\tag{Lemma \ref{lemma:inverse_op_bound}}\\
&\leq K^\nu \cdot \prod_{i=1}^\nu M\sqrt{p}\pa{1+ M\sqrt{(pM)^2}}^{p-1}\nonumber\\
&= K^\nu \cdot \prod_{i=1}^\nu M\sqrt{p}\pa{1+ M^2p}^{p-1},\nonumber
\end{align}
where the final inequality comes from the fact that
$$
\|\Ab\|_{\rm op} \leq \sqrt{\max_{j} \sum_{i\leq j}  |A_{ij}| \cdot \max_{i'}\sum_{i' \geq j'} |A_{i'j'}|}
$$
and the fact that the magnitudes of all entries in $\sD\mF^{i}(\my^{i})$ are bounded by $M$ by definition of $\cF_{\nu, K, M}.$ Since $\mz$, and hence, $\my$ is arbitrary, we have
\begin{equation}
\sup_{\mz \in \R^p }\|\sD(\mS^{-1})(\mz)\|_{\rm op} \leq  K^\nu \cdot \prod_{i=1}^\nu M\sqrt{p}\pa{1+ M^2p}^{p-1}= K^\nu M^\nu p^{\nu/2}\pa{1+ M^2p}^{\nu(p-1)}\label{eq:flowcvg:ghat_lipschitz}
\end{equation}
for any $\mS \in \cF_{\nu, K, M}$.
Since $\hat \mG_n^{\rm rflow}$ is continuously differentiable on $\R^p$, the mean value theorem implies
\begin{align}
\|\hat\mG_n^{\rm rflow}(\mz) - \hat\mG_n^{\rm rflow}(\mz')\|_2 &\leq \sup_{\mz''\in\R^p}\|\sD\hat\mG_n^{\rm rflow}(\mz'')\|_{\rm op}\cdot\| \mz - \mz'\|_2 \nonumber\\
&\leq K^\nu M^\nu p^{\nu/2}\pa{1+ M^2p}^{\nu(p-1)}\cdot \|\mz - \mz'\|_2\nonumber\tag{$\hat\mS_{n}^{\rm rflow} \in \cF_{\nu, K,M}$ and \eqref{eq:flowcvg:ghat_lipschitz}}.
\end{align}

By setting $\mz' =\mathbf{0}$ and using the fact that each element of $\cF_{\nu, K,M}$ takes $\mathbf{0}$ to $\mathbf{0}$ by definition, the above inequality yields that
$$
\|\hat \mG_n^{\rm rflow}(\mz) - \hat \mG_n^{\rm rflow}(\mathbf{0})\|_2= \|\hat \mG_n^{\rm rflow}(\mz)\|_2 \leq K^\nu M^\nu p^{\nu/2}\pa{1+ M^2p}^{\nu(p-1)} \cdot \|\mz\|_2.
$$
Thus, for any $\textit{\textbf u} \in \cU$, we obtain
$$
\|\hat \mG_n^{\rm rflow}(\bu)\|_2 \leq K^\nu M^\nu p^{\nu/2}\pa{1+ M^2p}^{\nu(p-1)} \cdot \|\bu\|_2 \leq K^\nu M^\nu p^{\nu/2}\pa{1+ M^2p}^{\nu(p-1)} \cdot \sup_{\bu \in \cU}\|\bu\|_2,
$$
where Assumption \ref{assump_noise2} implies the boundedness of $\cU$ so that the right-hand side is finite. Since the right-hand side is nonrandom and does not depend on $n$, we reach the conclusion of {\bf Step 1a}.

~\\
\textbf{Step 1b: Show that $\tilde \cZ_n \supseteq \cZ$ almost surely.}

We have shown that each $\mS \in \cF_{\nu, K, M}$ is continuously differentiable on $\R^p$ and has an inverse that is also continuously differentiable on $\R^p$. Assumption \ref{assump_noise2} guarantees that $p_U$ exists, so that applying Lemma \ref{lemma:flowcvg:cov}, the density of $\mS^{-1}(\mU)$ is
\begin{equation}
p_{S^{-1}(U)}(\mz) = p_U(\mS(\mz)) \cdot |\det(\sD\mS(\mz))|. \label{eq:flowcvg:hatgn_density}
\end{equation}
\eqref{eq:flow_cvg:density_bd_ul} shows that the determinant is always positive, so that the above expression is $0$ if and only if $p_U(\mS(\mz)) = 0$, which happens if and only if $\mS(\mz) \not\in \cU$. 
Since $\mS$ is bijective, this shows that the support of $\mS^{-1}(\mU)$ is $\mS^{-1}(\cU)$, which denotes the image of the set $\cU$ under the mapping $\mS^{-1}$.

Next, by mean value theorem,
\begin{align}
\|\mS(\mz) - \mS(\mz')\|_2 
&\leq \sup_{\mz'' \in \R^p}\|\sD \mS(\mz'')\|_{\rm op} \cdot \|\mz - \mz'\|_2 \nonumber\\
&\leq \prod_{i=1}^\nu \|\boldsymbol{\Sigma}_i\|_{\rm op} \cdot \sup_{\mz'' \in \R^p}\|\sD\mF^{i}(\mz'')\|_{\rm op}\cdot \|\mz - \mz'\|_2  \nonumber\tag{upper bound by product of operator norms and take supremum}\\
&= \prod_{i=1}^\nu \lambda_{\rm max}(\boldsymbol{\Sigma}_i) \cdot  \sup_{\mz'' \in \R^p}\|\sD\mF^{i}(\mz'')\|_{\rm op} \cdot \|\mz - \mz'\|_2 \nonumber\tag{$\boldsymbol{\Sigma}_i$ is symmetric and positive definite}\\
&\leq K^\nu \prod_{i=1}^\nu \sup_{\mz'' \in \R^p}\|\sD\mF^{i}(\mz'')\|_{\rm op} \cdot \|\mz - \mz'\|_2 \nonumber \tag{definition of $\cF_{\nu,K,M}$}\\
&\leq (KpM)^{\nu} \cdot \|\mz - \mz'\|_2, \label{eq:flowcvg:slip}
\end{align}
where 
$$
 \sup_{\mz'' \in \R^p}\|\sD\mF^{i}(\mz'')\|_{\rm op} \leq pM
$$
from the inequality
$$
\|\Ab\|_{\rm op} \leq \sqrt{\max_{j} \sum_{i\leq j}  |A_{ij}| \cdot \max_{i'}\sum_{i' \geq j'} |A_{i'j'}|}
$$
and the fact that the absolute value all entries of $\sD\mF^{i}(\mz'')$ are upper bounded by $M$. Let $\mz' = \mathbf{0}$ and recall $\mS(\mathbf{0})=\mathbf{0}$. If $\mz \in \cZ$, then $\|\mz\|_2 \leq (KpM)^{-\nu}r_0$ for $r_0$ in Assumption \ref{assump_flows}(b). By \eqref{eq:flowcvg:slip}, $\|\mS(\mz)\|_2 \leq r_0$ implies
\begin{equation}
\mS(\cZ) \subseteq \Big\{\mx \in \R^p : \|\mx \|_2 \leq r_0\Big\} \label{eq:flowcvg:sz_subset_r0ball}
\end{equation}
and, in particular, $\mS(\mz) \in \cU$ by Assumption \ref{assump_flows}(b). Since $\mS$ is invertible, $\mz \in \mS^{-1}(\cU)$. Since $\mS$ is arbitrary, we set $\mS = \hat \mS_{n}^{\rm rflow}$, so that $\mz \in \hat\mG_n^{\rm rflow}(\cU) = \tilde\cZ_n$. Finally, as $\mz$ was arbitrary, we have $\cZ \subseteq \tilde \cZ_n$ almost surely, for each $n$, completing the proof of  {\bf Step 1b}.

~\\
\textbf{Step 1c: Show that $\bP_{\tilde Z \mid \cO}$ has nonzero variance, almost surely, and the Wasserstein distance between $\bP_{\tilde Z\mid \cO}$ and $\P_Z$ goes to $0$ in probability.}

Since $\hat \mG_n^{\rm rflow}$ is an invertible map and $\cU$ has nonzero variance, by Assumption \ref{assump_noise1}, $\hat\mG_n^{\rm rflow}(\mU)$ cannot be constant almost surely, and, thus, must have nonzero variance. This confirms $\bP_{\tilde Z\mid \cO}$ has nonzero variance, almost surely.

By \textbf{Step 1a} and \textbf{Step 1b}, we have shown $\cZ \subseteq \tilde \cZ_n \subseteq \tilde \cZ$ almost surely, for all $n=1,2, ...$. Now applying Lemma \ref{lemma:w1_kl_bddsupport}, we obtain
$$
    \sW_1(\bP_{\tilde Z \mid \cO}, \P_Z) \leq 2\cdot  \sup_{\mz \in \tilde\cZ } \|\mz\|_2 \cdot \sqrt{\frac{1}{2} \cdot \KL(\P_Z||\bP_{\tilde Z \mid \cO} )}
$$
almost surely. Due to the fact that $\tilde \cZ $ is bounded, it suffices to show
$$
\KL(\P_Z||\bP_{\tilde Z\mid \cO} ) = o_{\P}(1).
$$
We rewrite the KL-divergence as
$$
\KL(\P_Z || \bP_{\tilde Z\mid \cO}) = \E\Big[\log p_{Z}(\mZ)\Big] - \E\Big[\log \tilde p_n(\mZ) \mid \cO\Big].
$$
 By assumption, $\E[|\log p_Z(\mZ)|] < \infty$.
Taking any $\mS \in \cF_{\nu, K,M}$, \eqref{eq:flow_cvg:density_bd_ul} and \eqref{eq:flowcvg:hatgn_density} combined give,
$$
\inf_{\mz \in \cZ}p_U(\mS(\mz)) \cdot (K^pM)^{-\nu} \leq p_{S^{-1}(U)}(\mz) \leq \sup_{\mz \in \cZ} p_U(\mS(\mz)) \cdot (K^pM)^{\nu}.
$$
We have already shown $\mS(\cZ) \subseteq \{\mz \in \R^p : \|\mz\|_2 \leq r_0\}$. By Assumption \ref{assump_noise2}, $\inf_{\mz \in \cZ}p_U(\mS(\mz)) \geq \inf_{\mz \in \R^p : \|
\mz \| \leq r_0}p_U(\mz) > 0$. Furthermore, Assumption \ref{assump_noise2} implies continuity of $p_U$ on the compact set $\cU$. Assumption \ref{assump_flows}(b) guarantees $\sup_{\mz \in \cZ} p_U(\mS(\mz)) \leq \sup_{\bu \in \cU} p_U(\bu) < \infty$.  Put together, we have $\E[|\log p_{S^{-1}(U)}(\mZ)|] < \infty$ and 
\begin{equation}
\sup_{\mS \in \cF_{\nu, K,M}} \sup_{\mz \in \cZ}|\log p_{S^{-1}(U)}(\mz)| < \infty. \label{eq:flowcvg:gc_ub}
\end{equation}
Then, for any $\mz, \mz' \in \cZ$, 
$$
\Big|\log p_{S^{-1}(U)}(\mz) - \log p_{S^{-1}(U)}(\mz')\Big| \leq \frac{ (K^pM)^{\nu}}{\inf_{\bu : \|
\bu \| \leq r_0}p_U(\bu)} \cdot |p_{S^{-1}(U)}(\mz) - p_{S^{-1}(U)}(\mz')|
$$
since $p_{S^{-1}(U)}$ is bounded below by $\inf_{\bu : \|
\bu \| \leq r_0}p_U(\bu) \cdot (K^pM)^{-\nu}$.  Accordingly, by the definition of $p_{S^{-1}(U)}$ and \eqref{eq:flow_cvg:density_bd_ul},
\al{
 \Big|p_{S^{-1}(U)}(\mz) - p_{S^{-1}(U)}(\mz')\Big| &\leq (K^pM)^{\nu} \cdot \Big|p_U(\mS(\mz)) - p_U(\mS(\mz'))\Big|\\ &+ \sup_{\bu \in \cU} |p_u(\bu)| \cdot \Big|\det(\sD\mS(\mz)) - \det(\sD\mS(\mz')) \Big|.
}
The mean value theorem and convexity of $\cU_1 \times \R$ combined then imply
$$
  |p_U(\mS(\mz)) - p_U(\mS(\mz'))| \leq \sup_{\bu\in \cU} \|\sD p_U(\bu)\|_{\rm op} \cdot \sup_{\mS \in \cF_{\nu, K, M}} \sup_{\mz'' \in \R^p} \|\sD \mS(\mz'')\|_{\rm op} \cdot  \|\mz - \mz'\|_2.
$$
by the containment $\mS(\cZ) \subseteq \cU$ through \eqref{eq:flowcvg:slip} and Assumption \ref{assump_flows}(b).
The continuity of $\sD p_U$ on $\cU$ from Assumption \ref{assump_noise2} and the definition of $\cF_{\nu, K,M}$ means the right-hand side is finite. Furthermore, for some constant $L > 0$, depending on the dimension $p$, and parameters $\nu, M, K$, the Lipschitz condition
$$
\Big|\det(\sD\mS(\mz)) - \det(\sD\mS(\mz')) \Big| \leq L \cdot \|\mz - \mz'\|_2
$$
holds because the entries of $\sD\mS(\mz)$ are uniformly bounded in $\mz$ and $\mS$, and the determinant is a polynomial in the entries of its argument.
We have thus proven that $\{\log p_{S^{-1}(U)} : \mS \in \cF_{\nu, K,M}\}$ is uniformly bounded, and uniformly Lipschitz on $\cZ$. Hence, applying Lemma \ref{lemma:lip_covering_num} and then Lemma \ref{lemma:uniform_gc_class}, we obtain
$$
\sup_{\mS \in \cF_{\nu, K,M}} \ap{\frac{1}{n} \sum_{i=1}^n \log p_{S^{-1}(U)}(\mZ_i) - \E[\log p_{S^{-1}(U)}(\mZ)]}  = o_{\P}(1).
$$
Furthermore, we have assumed $\E[|\log p_Z(\mZ)|]< \infty $, so that the law of large numbers yields
$$
\frac{1}{n} \sum_{i=1}^n \log p_{Z}(\mZ_i) - \E[\log p_{Z}(\mZ)]  = o_{\P}(1).
$$
Combining the above two displays, we obtain
$$
\sup_{\mS \in \cF_{\nu, K,M}} \ap{\frac{1}{n} \sum_{i=1}^n \log p_Z(\mZ_i) - \log p_{S^{-1}(U)}(\mZ_i) - \E\Big[\log p_Z(\mZ) - \log p_{S^{-1}(U)}(\mZ)\Big]} = o_{\P}(1).
$$
Since $\hat\mS_n^{\rm rflow} \in \cF_{\nu, K,M}$, we also deduce
$$
\ap{\frac{1}{n} \sum_{i=1}^n \log p_Z(\mZ_i) - \log \tilde p_{n}(\mZ_i) - \E\Big[\log p_Z(\mZ) - \log \tilde p_{n}(\mZ) \mid \cO\Big]} = o_{\P}(1).
$$
Combined with Assumption \ref{assump_flows}(a), we finish the proof of {\bf Step 1c}.

~\\
\textbf{Step 2: Proof of Theorem \ref{flow_wcvg}(b).}

Assumption \ref{assump_iso} implies Assumption \ref{assump_ndata}, and thus  Assumption \ref{assump_bootstrap} holds under the conditions of Theorem \ref{flow_wcvg}(b). It remains to verify the remaining conditions.

\noindent\textbf{Step 2a: Show $\tilde \cX_n \supseteq \cX$ for all $n$, almost surely, and show $\tilde \cX_n$ is a closed interval.}

{\bf Step 1b} has shown that $\tilde \cZ_n \supseteq \cZ$ for all $n$, almost surely. To verify that $\tilde \cX_n$ is a closed interval, observe that $\hat \mG_n^{\rm rflow}$ is a continuous function and $\cU$ is a compact and connected set. Therefore, the image set, $\hat\mG_n^{\rm rflow}(\cU) = \tilde \cZ_n$, must be connected and compact. The map that projects onto the first component, $(x,y) \mapsto x$, is continuous. Therefore, $\tilde \cX_n$ must also be connected and compact. Since $\tilde \cX_n \subseteq \R$, it must be that $\tilde \cX_n$ is a closed interval.

~\\
\textbf{Step 2b: Show that $\tilde p_n$ is always twice-continuously differentiable on $\tilde \cX_n \times \R$.}

It suffices to show this stronger statement that implies Assumption \ref{ass:generator}(a). By Lemma \ref{lemma:flowcvg:cov},
$$
\tilde p_n(\mz) = p_U(\hat\mS_n^{\rm rflow}(\mz)) \cdot \det(\sD\hat\mS_n^{\rm rflow}(\mz))
$$
for $\mz \in \R^2$; note that due to \eqref{eq:flow_cvg:density_bd_ul} there is no absolute value. Observe that
$$
\hat\mS_n^{\rm rflow}(\tilde\cX_n \times \R) \subseteq \cU_1 \times \R.
$$
Note first that Assumption \ref{assump_noise3}(b) ensures that the function $p_U$ is twice continuously differentiable on $\hat\mS_n^{\rm rflow}(\tilde\cX_n \times \R)$. Second, the function $\hat\mS_n^{\rm rflow} \in \cF_{\nu, K,M}$ is twice continuously differentiable on $\R^2$. Then, using the chain rule, we obtain that $p_U \circ \hat\mS_n^{\rm rflow}$ is continuously differentiable on $\tilde\cX_n \times \R$. Using the product and chain rules on the first derivative confirms that $p_U \circ \hat\mS_n^{\rm rflow} $ is twice continuously differentiable on $\tilde\cX_n \times \R$.

The determinant of a matrix is a polynomial in the entries of the matrix. As a result, $\det(\sD\hat\mS_n^{\rm rflow}(\mz))$, as a function of $\mz$, is twice-continuously differentiable on $\tilde \cX_n \times \R$ if the map $\sD\hat\mS_n^{\rm rflow}(\mz)$ is twice-continuously differentiable on $\tilde \cX_n \times \R$. Equivalently, this happens if $\hat\mS_n^{\rm rflow}(\mz)$ is three-times continuously differentiable on $\tilde \cX_n \times \R$. By the product rule, this yields the conclusion of {\bf Step 2b}.

~\\
\noindent \textbf{Step 2c: Show that $\tilde p_X$ is lower bounded uniformly in $\cX$, and $\tilde p_n$, $\|\sD \tilde p_n\|_2$, and $\|\sD^2\tilde p_n\|_{\rm op}$ are upper bounded by a universal constant in $\tilde \cZ_n$.}

To show the lower bound, for each $x \in \cX$
\al{
\tilde p_{X}(x) &= \int_\R \tilde p_n(x,y) \, \d y \tag{marginal probability}\\ 
 &=   \int_\R p_U(\hat\mS_n^{\rm rflow}(x,y)) \cdot \det(\sD\hat\mS_n^{\rm rflow}(x,y)) \, \d y \tag{Lemma \ref{lemma:flowcvg:cov}}\\
 &\geq  (K^p M)^{-\nu} \cdot  \int_{\R} p_U(\hat\mS_n^{\rm rflow}(x,y)) \, \d y \tag{Equation \eqref{eq:flow_cvg:density_bd_ul}}\\
 &\geq (K^p M)^{-\nu} \cdot  \int_{\{y \in \cY: \|\hat\mS_n^{\rm rflow}(x,y)\|_2 \leq r_0\}} p_U(\hat\mS_n^{\rm rflow}(x,y)) \, \d y \tag{restrict the integration set}\\
 &\geq (K^p M)^{-\nu} \cdot \inf_{\bu \in \cU} p_U(\bu) \cdot  \int_{\{y \in \cY: \|\hat\mS_n^{\rm rflow}(x,y)\|_2 \leq r_0\}}  \, \d y \tag{ $p_U$ is uniformly lower bounded on $\{\mz \in \R^2 : \|\mz\|_2 \leq r_0\}$ by Assumption \ref{assump_noise2}}\\
 &= (K^p M)^{-\nu} \cdot \inf_{\bu \in \cU} p_U(\bu) \cdot  \int_{\cY}  \, \d y \tag{$(x,y) \in \cX \times \cY = \cZ$ from 
Assumption \ref{assump_iso}(e), then applying Equation \eqref{eq:flowcvg:sz_subset_r0ball}}\\
&> 0. \tag{ $\cY$ has positive measure from Assumption \ref{assump_iso}(e)}
}
Taking the infimum with respect to $x\in \cX$, we have shown $\tilde p_X$ is uniformly lower bounded in $\cX$.

Next, we aim to show that $\tilde p_n$ is upper bounded in $\tilde \cZ_n$. By Lemma \ref{lemma:flowcvg:cov}, 
\al{
\sup_{\mz \in \tilde \cZ_n} \tilde p_n(\mz) &= \sup_{\mz \in \tilde \cZ_n} p_U(\hat\mS_n^{\rm rflow}(\mz)) \cdot \det(\sD\hat\mS_n^{\rm rflow}(\mz)) \\
&= \sup_{\mz \in \tilde \cZ_n} p_U(\hat\mS_n^{\rm rflow}(\mz)) \cdot (K^pM)^{\nu} \tag{Equation \eqref{eq:flow_cvg:density_bd_ul}}\\&\leq \sup_{\bu \in \cU} p_U(\bu) \cdot (K^pM)^{\nu} \tag{for $\mz \in \tilde \cZ_n$, $\hat\mS_n^{\rm rflow}(\mz) \in \cU$}\\ 
&< \infty. \tag{Assumption \ref{assump_noise2} ensures $p_U$ is continuous on the compact set $\cU$}
}
For $\sD \tilde p_n$, 
\al{
\|\sD \tilde p_n(\mz)\|_2 &= \|\sD p_U(\hat\mS_n^{\rm rflow}(\mz)) \cdot \det(\sD\hat\mS_n^{\rm rflow}(\mz))\|_2 + \|p_U(\hat\mS_n^{\rm rflow}(\mz)) \cdot \sD \det(\sD\hat\mS_n^{\rm rflow}(\mz))\|_2 \tag{product rule and triangle inequality}\\
&\leq \|\sD p_U(\hat\mS_n^{\rm rflow}(\mz))\|_2 \cdot (K^pM)^\nu + p_U(\hat\mS_n^{\rm flow}(\mz)) \cdot \|\sD \det(\sD\hat\mS_n^{\rm rflow}(\mz)) \|_2\tag{Equation \eqref{eq:flow_cvg:density_bd_ul}}\\
&\leq \sup_{\bu \in \cU}\|\sD p_U(\bu)\|_2 \cdot (K^pM)^\nu + \sup_{\bu \in \cU}p_U(\bu) \cdot \|\sD \det(\sD\hat\mS_n^{\rm rflow}(\mz))\|_2 \tag{for $\mz \in \tilde \cZ_n$, $\hat\mS_n^{\rm rflow}(\mz) \in \cU$}.
}
We have $\sup_{\bu \in \cU}\|\sD p_U(\bu)\|_2 < \infty$ and $\sup_{\bu \in \cU}p_U(\bu) < \infty$ because Assumption \ref{assump_noise2} ensures $p_U$ and $\sD p_U$ are continuous on the compact set $\cU$. The function $\det(\sD\hat\mS_n^{\rm rflow}(\mz))$ is a polynomial in the first partial derivatives of $\hat \mS_{n}^{\rm rflow}$. By the chain rule, each component of $\|\sD \det(\sD\hat\mS_n^{\rm rflow}(\mz))\|_2$ is a polynomial in the first and second-order partial derivatives of $\hat \mS_n^{\rm rflow}$. By definition, $\hat \mS_n^{\rm rflow} \in \cF_{\nu, K,M}$, and has first and second order partial derivatives bounded by $M$ on $\R^p$. A uniform bound on $\|\sD \det(\sD\hat\mS_n^{\rm rflow}(\mz))\|_2$, in terms of $M$, holds as a result. The exact same reasoning leads to a uniform bound of $\|\sD^2 \det(\sD\hat\mS_n^{\rm rflow}(\mz))\|_{\rm op}$, since the definition of $\cF_{\nu, K,M}$ involves bounds on third partial derivatives over $\R^p$.

Applying the product rule again and using the twice-continuous differentiability of $p_U$ from Assumption \ref{assump_noise3} to show $\sup_{\bu \in \cU}\|\sD^2p_U(\bu)\|_{\rm op} < \infty$, we obtain that $\|\sD^2 \tilde p_n\|_{\rm op}$ is upper bounded by a universal constant, and thus complete the proof.
\end{proof}

\section{Supporting lemmas}\label{sec:proofsupp}
In the proofs of lemmas in Sections \ref{sec:proofsupp} and \ref{sec:otherresults}, we use  $C, C'$ to represent some generic constants whose values may change from statement to statement.

\subsection{Supporting lemmas for Theorem \ref{thm_reg}}
We demonstrate uniform convergence of expectations and variances of certain classes of functions.

\begin{lemma}\label{lemma:w2cvg}
    Assume Assumptions \ref{assump_ndata}, \ref{assump_bootstrap}, and \ref{assump_l}. Then, for any function $\mg: \cK \times (\cZ \cup \tilde\cZ) \rightarrow \R^r$ continuous in both arguments, we have
    \begin{align}\label{eq:han1}
        \sup_{\boldsymbol{\eta} \in \cK}\Big\|{\rm{Var}}(\mg(\boldsymbol{\eta}, \mZ_1)) - {\rm{Var}}(\mg(\boldsymbol{\eta}, \tilde \mZ_1) \mid  \cO)\Big\|_{\rm{max}} = o_{\P_{\cO}}(1)
    \end{align}
    and 
    \begin{align}\label{eq:han2}
    \sup_{\boldsymbol{\eta} \in \cK}\Big\|\E[\mg(\boldsymbol{\eta},\mZ_1)] - \E[\mg(\boldsymbol{\eta},\tilde \mZ_1) \mid  \cO]\Big\|_{\infty} = o_{\P_{\cO}}(1).
    \end{align}
\end{lemma}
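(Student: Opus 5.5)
The plan is to reduce both displays to uniform convergence of conditional expectations over a compact, equicontinuous family of test functions, and then to exploit convergence in Wasserstein-1 on a compact set. Set $\cC := \cZ \cup \tilde\cZ$, which is compact by Assumptions~\ref{assump_ndata}(b) and \ref{assump_bootstrap}(c). By Assumption~\ref{assump_bootstrap}(c), $\tilde \mZ_1 \in \tilde\cZ$ almost surely, so every expectation below only involves $\mg$ restricted to $\cK \times \cC$. That set is compact by Assumption~\ref{assump_l}(a). Hence each coordinate $g_j$ of $\mg$ is bounded by some constant $C$ and uniformly continuous there. The same holds for the products $g_j g_k$.

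Because variances are assembled from $\E[g_j]$, $\E[g_k]$ and $\E[g_jg_k]$, and these quantities are bounded by $C$ and $C^2$, the bound
$|ab - a'b'| \le C(|a-a'|+|b-b'|)$
shows that \eqref{eq:han1} follows from \eqref{eq:han2}. Specifically, we apply \eqref{eq:han2} to the augmented continuous function whose coordinates are $g_j$ and $g_jg_k$. It therefore suffices to prove \eqref{eq:han2} for a single real-valued, continuous $h$ on $\cK\times\cC$.

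The core step is to show that
\[
\sup_{\meta\in\cK}\big|\E[h(\meta,\mZ)] - \E[h(\meta,\tilde\mZ)\mid\cO]\big| \le \sup_{f\in\cH}\Big|\int f\,\d\P_Z - \int f\,\d\bP_{\tilde Z\mid\cO}\Big|,
\qquad \cH:=\{h(\meta,\cdot):\meta\in\cK\},
\]
tends to zero in probability. The family $\cH \subset C(\cC)$ is uniformly bounded and equicontinuous, so by Arzel\`a--Ascoli it is totally bounded in sup norm. Fix $\epsilon>0$ and choose the following approximations:
\begin{itemize}
\item a finite $\epsilon$-net $\{f_1,\dots,f_N\}$ of $\cH$;
\item for each $f_k$, a Lipschitz function $\phi_k$ on $\cC$ with $\|f_k-\phi_k\|_\infty\le\epsilon$. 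This exists because Lipschitz functions are dense in $C(\cC)$, for example by Stone--Weierstrass with polynomials, or by inf-convolution $\phi(\mz)=\inf_{\mz'}\{f(\mz')+L\|\mz-\mz'\|_2\}$ for $L$ large.
\end{itemize}
Extend each $\phi_k$ to $\R^p$ with the same Lipschitz constant $L_k$ using McShane's extension. Kantorovich--Rubinstein duality (the paper's Lemma~\ref{lemma:kr_dual}) then gives
$|\int\phi_k\,\d\P_Z-\int\phi_k\,\d\bP_{\tilde Z\mid\cO}|\le L_k\,\sW_1(\bP_{\tilde Z\mid\cO},\P_Z)$.
Combining this with the two $\epsilon$-approximations, the supremum over $\cH$ is at most
\[
4\epsilon + \max_k L_k\cdot \sW_1(\bP_{\tilde Z\mid\cO},\P_Z).
\]
The second term is $o_{\P_\cO}(1)$ by Assumption~\ref{assump_bootstrap}(b), because $N$ and the $L_k$ depend only on $\epsilon$ and not on $n$. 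Letting $\epsilon\downarrow0$ concludes the proof.

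The main obstacle is the uniformity in $\meta$ together with the fact that $\mg(\meta,\cdot)$ is only continuous, not Lipschitz, so $\sW_1$ cannot be applied directly. The Arzel\`a--Ascoli-plus-Lipschitz-approximation step is what handles this. Care is needed that the approximation constants do not depend on $n$ or on the random measure; they do not, because everything lives on the fixed compact set $\cC$.
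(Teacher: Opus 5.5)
Your proposal is correct, but it handles the core step, uniform-in-$\meta$ convergence of the means, differently from the paper.

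\textbf{The paper's route.} It avoids Lipschitz test functions and duality altogether. For an arbitrary coupling $\pi$ of $\bP_{\tilde Z\mid\cO}$ and $\P_Z$, it proceeds in four moves:
\begin{enumerate}
\item it uses joint uniform continuity of $\mg$ on $\cK\times(\cZ\cup\tilde\cZ)$ to pick $\delta$ with $\|\mg(\meta,\mz)-\mg(\meta,\mz')\|_2<\epsilon$ whenever $\|\mz-\mz'\|_2\le\delta$, uniformly in $\meta$;
\item it splits $\E_\pi[\mg(\meta,\mZ)-\mg(\meta,\tilde\mZ)]$ on the event $\{\|\mZ-\tilde\mZ\|_2\le\delta\}$ and its complement;
\item it bounds the complement by Markov's inequality;
\item it takes the infimum over $\pi$, giving the bound $\epsilon+2\delta^{-1}\max_{\cK\times(\cZ\cup\tilde\cZ)}\|\mg\|_2\cdot\sW_1(\bP_{\tilde Z\mid\cO},\P_Z)$.
\end{enumerate}
Uniformity in $\meta$ is inherited directly from the uniform modulus of continuity. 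So no Arzel\`a--Ascoli net, Lipschitz approximation, McShane extension, or Kantorovich--Rubinstein duality is needed. The constant in front of $\sW_1$ is also explicit in $\delta(\epsilon)$, rather than the non-explicit $\max_k L_k$ your approximation scheme produces.

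\textbf{Your route.} It is the standard ``weak convergence is uniform over a compact class of continuous test functions'' argument. It costs more machinery but is modular: it uses only that $\sW_1$ controls integrals of Lipschitz functions. It would therefore carry over unchanged to any discrepancy dominating the bounded-Lipschitz distance. Incidentally, compactness of $\{h(\meta,\cdot):\meta\in\cK\}$ in $C(\cZ\cup\tilde\cZ)$ follows at once from continuity of $\meta\mapsto h(\meta,\cdot)$, so Arzel\`a--Ascoli is not needed.

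\textbf{The variance reduction.} Your reduction of \eqref{eq:han1} to \eqref{eq:han2}, via the augmented function with coordinates $g_j$ and $g_jg_k$, is the same algebra the paper uses. The paper bounds the product-of-means term by $2\max\|\mg\|_2$ times the mean gap and calls the second-moment term ``alike''; you simply organize this more cleanly.
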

\begin{proof}

We first prove \eqref{eq:han1}. Note that $\mg(\boldsymbol{\eta}, \mZ_1)$ and $\mg(\boldsymbol{\eta}, \tilde \mZ_1)$ are almost surely bounded so that their conditional covariance matrices and expectations exist almost surely for every $\boldsymbol{\eta} \in \cK$.
    
    Decomposing the covariance matrix and using the triangle inequality, we obtain
    \al{
     \Big\|{\rm{Var}}(\mg(\boldsymbol{\eta},\mZ_1)) -& {\rm{Var}}(\mg(\boldsymbol{\eta},\tilde \mZ_1) \mid  \cO)\Big\|_{\rm{max}} \leq \Big\|\E[\mg(\boldsymbol{\eta},\mZ_1)\mg(\boldsymbol{\eta},\mZ_1)^\top] - \E[\mg(\boldsymbol{\eta},\tilde \mZ_1)\mg(\boldsymbol{\eta}, \tilde \mZ_1)^\top \mid  \cO]\Big\|_{\rm{max}}\\
     &+ \Big\|\E[\mg(\boldsymbol{\eta},\mZ_1)]\E[\mg(\boldsymbol{\eta},\mZ_1)]^\top - \E[\mg(\boldsymbol{\eta},\tilde \mZ_1) \mid \cO] \E[\mg(\boldsymbol{\eta},\tilde \mZ_1) \mid  \cO]^\top\Big\|_{\rm{max}}
    }
and taking a supremum over $\boldsymbol{\eta} \in \cK$ on both sides, it suffices to show that 
$$
\sup_{\boldsymbol{\eta} \in \cK} \Big\|\E[\mg(\boldsymbol{\eta},\mZ_1)]\E[\mg(\boldsymbol{\eta},\mZ_1)]^\top - \E[\mg(\boldsymbol{\eta},\tilde \mZ_1) \mid \cO] \E[\mg(\boldsymbol{\eta},\tilde \mZ_1) \mid  \cO]^\top\Big\|_{\rm{max}} = o_{\P_{\cO}}(1)
$$
and 
$$
\sup_{\boldsymbol{\eta} \in \cK} \Big\|\E[\mg(\boldsymbol{\eta},\mZ_1)\mg(\boldsymbol{\eta},\mZ_1)^\top] - \E[\mg(\boldsymbol{\eta},\tilde \mZ_1)\mg(\boldsymbol{\eta}, \tilde \mZ_1)^\top \mid  \cO]\Big\|_{\rm{max}} = o_{\P_{\cO}}(1).
$$
As the proofs of these conclusions are alike, we only show the first equality.

We then have
\begin{align*}
&\sup_{\boldsymbol{\eta} \in \cK} \Big\|\E[\mg(\boldsymbol{\eta},\mZ_1)]\E[\mg(\boldsymbol{\eta},\mZ_1)]^\top - \E[\mg(\boldsymbol{\eta},\tilde \mZ_1) \mid \cO] \E[\mg(\boldsymbol{\eta},\tilde \mZ_1) \mid  \cO]^\top\Big\|_{\rm{max}}\\
\leq & \sup_{\boldsymbol{\eta} \in \cK}\Big\|\E[\mg(\boldsymbol{\eta}, \mZ_1)]\E[\mg(\boldsymbol{\eta}, \mZ_1)]^\top - \E[\mg(\boldsymbol{\eta}, \tilde \mZ_1) \mid \cO] \E[\mg(\boldsymbol{\eta},\tilde \mZ_1) \mid  \cO]^\top\Big\|_{\rm op}\\
\leq& \sup_{\boldsymbol{\eta} \in \cK}\|\big(\E[\mg(\boldsymbol{\eta}, \mZ_1)] - \E[\mg(\boldsymbol{\eta}, \tilde \mZ_1) \mid \cO] \big) \E[\mg(\boldsymbol{\eta}, \mZ_1)]^\top + \E[\mg(\boldsymbol{\eta}, \tilde\mZ_1) \mid \cO ] \big(\E[\mg(\boldsymbol{\eta}, \mZ_1) ] -  \E[\mg(\boldsymbol{\eta},\tilde\mZ_1) \mid  \cO]\big)^\top\|_{\rm op}\\
\leq & \sup_{\boldsymbol{\eta} \in \cK}\|\big(\E[\mg(\boldsymbol{\eta}, \mZ_1)] - \E[\mg(\boldsymbol{\eta}, \tilde \mZ_1) \mid \cO] \big) \E[\mg(\boldsymbol{\eta}, \mZ_1)]^\top\|_{\rm op}\\
&+ \sup_{\boldsymbol{\eta}' \in \cK}\|\E[\mg(\boldsymbol{\eta}', \tilde\mZ_1) \mid \cO ] \big(\E[\mg(\boldsymbol{\eta}', \mZ_1) ] -  \E[\mg(\boldsymbol{\eta}',\tilde\mZ_1) \mid  \cO]\big)^\top\|_{\rm op}\\
\leq &\sup_{\boldsymbol{\eta} \in \cK}\|\E[\mg(\boldsymbol{\eta},\mZ_1)] - \E[\mg(\boldsymbol{\eta},\tilde \mZ_1) \mid \cO] \|_2 \cdot \pa{\sup_{\boldsymbol{\eta}' \in \cK}\|\E[\mg(\boldsymbol{\eta}',\mZ_1)]\|_2 + \sup_{\boldsymbol{\eta}' \in \cK} \|\E[\mg(\boldsymbol{\eta}',\tilde \mZ_1) \mid \cO] \|_2}. \yestag \label{eq:han3}
\end{align*}
As $\mg(\boldsymbol{\eta}, \mZ_1)$ and $\mg(\boldsymbol{\eta}, \tilde\mZ_1)$ are almost surely contained in the compact set $\mg(\cK \times(\tilde \cZ \cup \cZ))$, so that
$$
\sup_{\boldsymbol{\eta}' \in \cK}\|\E[\mg(\boldsymbol{\eta}',\mZ_1)]\|_2 + \sup_{\boldsymbol{\eta}' \in \cK} \|\E[\mg(\boldsymbol{\eta}',\tilde \mZ_1) \mid \cO] \|_2 \leq 2\sup_{(\boldsymbol{\eta}, \mz) \in \cK \times (\tilde\cZ \cup \cZ)} \|\mg(\boldsymbol{\eta}, \mz)\|_2 < \infty.
$$
This yields the following upper bound for \eqref{eq:han3}:
\begin{align}\label{eq:han4}
2\sup_{(\boldsymbol{\eta}, \mz) \in \cK \times (\tilde\cZ \cup \cZ)} \|\mg(\boldsymbol{\eta}, \mz)\|_2 \cdot \sup_{\boldsymbol{\eta}\in \cK}\|\E[\mg(\boldsymbol{\eta},\mZ_1)] - \E[\mg(\boldsymbol{\eta},\tilde \mZ_1) \mid \cO] \|_2. 
\end{align}
Taking any joint distribution $\pi$ between $\bP_{\tilde Z \mid \cO}$ and $\P_{Z}$ and letting  $(\tilde \mZ, \mZ) \sim \pi$,  we have
\begin{equation}
\sup_{\boldsymbol{\eta} \in \cK}\|\E[\mg(\boldsymbol{\eta},\mZ_1)] - \E[\mg(\boldsymbol{\eta},\tilde \mZ_1) \mid \cO] \|_2 = \sup_{\boldsymbol{\eta} \in \cK}\|\E_\pi[\mg(\boldsymbol{\eta},\mZ) - \mg(\boldsymbol{\eta},\tilde\mZ)]\|_2 \label{eq:crossterm}
\end{equation}
since $\tilde\mZ \sim \bP_{\tilde Z\mid\cO}$ and $\mZ \sim \P_{Z}$ so that $\E[\mg(\boldsymbol{\eta}, \mZ_1)] = \E_\pi[\mg(\boldsymbol{\eta}, \mZ)]$ and $\E[\mg(\boldsymbol{\eta},\tilde \mZ_1) \mid \cO] =  \E_\pi[\mg(\boldsymbol{\eta},\tilde\mZ)]$. 
Take any $\epsilon > 0$. Since $\mg$ is continuous on a compact set $\cK \times (\cZ \cup \tilde\cZ)$, and, hence, uniformly continuous, we choose a $\delta >0 $ such that
\begin{equation}\label{eq:supeta:aboveineq}
\sup_{\boldsymbol{\eta} \in \cK}\sup_{\mz, \mz' \in \cZ \cup \tilde\cZ : \|\mz-\mz'\|\leq \delta} \|\mg(\boldsymbol{\eta}, \mz) - \mg(\boldsymbol{\eta}, \mz')\|_2 < \epsilon.
\end{equation}
Leveraging Jensen's inequality and \eqref{eq:supeta:aboveineq}, we obtain
\al{
\sup_{\boldsymbol{\eta} \in \cK}\|\E_\pi[\mg(\boldsymbol{\eta},\mZ) - \mg(\boldsymbol{\eta},\tilde\mZ)]\|_2  
&\leq \sup_{\boldsymbol{\eta} \in \cK}\E_\pi[\|\mg(\boldsymbol{\eta},\mZ) - \mg(\boldsymbol{\eta},\tilde\mZ)\|_2 \cdot \mathds{1}(\|\mZ - \tilde\mZ\|_2 > \delta)] + \epsilon\\
&\leq \frac{1}{\delta} \cdot 2\max_{(\boldsymbol{\eta},\mz) \in \cK \times (\cZ \cup \tilde\cZ)}\|\mg(\boldsymbol{\eta},\mz)\|_2 \cdot \E_\pi[\|\mZ - \tilde\mZ\|_2] + \epsilon \tag{Markov's inequality}.
}
Putting together and taking an infimum over $\pi$, an upper bound for \eqref{eq:han4} is then
$$
\frac{4}{\delta} \cdot \pa{\max_{(\boldsymbol{\eta},\mz) \in \cK \times (\cZ \cup \tilde\cZ)}\|\mg(\boldsymbol{\eta},\mz)\|_2 }^2\cdot \sW_1\big(\bP_{\tilde Z\mid \cO},\, \P_Z\big) + 2 \max_{(\boldsymbol{\eta},\mz) \in \cK \times (\cZ \cup \tilde\cZ)}\|\mg(\boldsymbol{\eta},\mz)\|_2\cdot \epsilon.
$$ 
By Assumption \ref{assump_bootstrap}(b),  $\sW_1\big(\bP_{\tilde Z\mid \cO}, \P_Z\big)  = o_{\P_{\cO}}(1)$, so taking $n\rightarrow \infty$ and then $\epsilon \rightarrow 0$, as $\epsilon$ was arbitrary, we obtain the desired conclusion. 

Equation \eqref{eq:han2} is established in an identical way, and we thus complete the whole proof.
\end{proof}

\begin{lemma}\label{cdllinear} 
\begin{enumerate}[label=(\alph*)]
\item Under Assumptions \ref{assump_ndata}, \ref{assump_l} and \ref{assump_eta},
\[
    \sqrt{n} (\hat{\boldsymbol\eta}_n - \boldsymbol\eta_0) = -\frac{1}{\sqrt{n}} \sum_{i=1}^n (\sD_{\boldsymbol{\eta}}^2\E[\sL(\boldsymbol\eta_0, \mZ)])^{-1} \sD_{\boldsymbol{\eta}}\sL(\boldsymbol\eta_0, \mZ_i) + o_{\P_{\cO}}(1).
\]
\item    Assuming further Assumption \ref{assump_noise1}, \ref{assump_bootstrap} and Assumption \ref{assump_bsmest}, it holds true that 
\begin{enumerate}[label=(\roman*)]
\item $\|\tilde{\boldsymbol\eta}_0 - \boldsymbol\eta_0\|_2 = o_{\P_{\cO}}(1)$; 
\item  $\P\Big(\sD_{\boldsymbol{\eta}}^2\E[\sL(\tilde{\boldsymbol\eta}_0, \tilde\mZ) \mid \cO]) \text{ is invertible}\Big) \rightarrow 1$;
\item  $\|\sD_{\boldsymbol{\eta}}^2\E[\sL(\tilde{\boldsymbol\eta}_0, \tilde\mZ) \mid \cO]^{-1} - \sD_{\boldsymbol{\eta}}^2\E[\sL(\boldsymbol\eta_0, \mZ)]^{-1}\|_{\rm op} = o_{\P}(1).$
\item  $\sqrt{n}(\tilde{\boldsymbol\eta}_n -\tilde{\boldsymbol\eta}_0) = -\frac{1}{\sqrt{n}} \sum_{i=1}^n (\sD_{\boldsymbol{\eta}}^2\E[\sL(\tilde{\boldsymbol\eta}_0, \tilde\mZ) \mid \cO])^{-1}\sD_{\boldsymbol{\eta}}\sL(\tilde{\boldsymbol\eta}_0, \tilde \mZ_i) + o_{\P_{\cO\tilde\mU}}(1)$.
        \end{enumerate}
    \end{enumerate}
\end{lemma}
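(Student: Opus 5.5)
Part (a) is the classical argument for M-estimators under the "classical conditions" of \citet[Theorem~5.41 and Chapter~5.6]{Vaart_1998}. Consistency, interiority of $\meta_0$ and the near-maximizing property (Assumption~\ref{assump_eta}) imply that, with probability tending to one, $\hat\meta_n$ is interior and approximately solves the score equation $n^{-1}\sum_i \rD_{\meta}\sL(\hat\meta_n,\mZ_i)=o_{\P}(n^{-1/2})$. Obtaining this from an $o_\P(n^{-1})$ criterion gap needs a quadratic expansion, which in turn needs a preliminary $\sqrt n$-rate; I would instead run the argument of \citet[Theorem~5.23]{Vaart_1998} directly. A Taylor expansion of $\meta\mapsto n^{-1}\sum_i\sL(\meta,\mZ_i)$ around $\meta_0$ to second order uses the mean-value Hessian $n^{-1}\sum_i\rD^2_{\meta}\sL(\bar\meta,\mZ_i)$. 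This Hessian converges in probability to $\rD^2_{\meta}\E[\sL(\meta_0,\mZ)]$ by a uniform law of large numbers: $(\meta,\mz)\mapsto \rD^2_\meta\sL$ is continuous on the compact set $\cK\times\cZ$, hence the class is Glivenko--Cantelli (bracketing via uniform continuity). This yields the rate $\|\hat\meta_n-\meta_0\|_2=O_\P(n^{-1/2})$ and then the linear representation. The centering $\E[\rD_\meta\sL(\meta_0,\mZ)]=0$ follows from interiority and the uniqueness of the maximizer, with differentiation under the expectation justified by boundedness.

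For part (b)(i), I would show that $M_n(\meta):=\E[\sL(\meta,\tilde\mZ)\mid\cO]$ converges to $M(\meta):=\E[\sL(\meta,\mZ)]$ uniformly on $\cK$ in probability. This is exactly Lemma~\ref{lemma:w2cvg}, equation \eqref{eq:han2}, with $\mg=\sL$. The standard argmax argument then applies: $M$ is continuous on the compact set $\cK$ with the unique maximizer $\meta_0$, so for any $\epsilon$ there is a gap $\delta=M(\meta_0)-\sup_{\|\meta-\meta_0\|\ge\epsilon}M(\meta)>0$. On the event $\sup|M_n-M|<\delta/2$, the maximizer $\tilde\meta_0$ lies in the $\epsilon$-ball. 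For (ii) and (iii), apply \eqref{eq:han2} with $\mg=\rD^2_\meta\sL$, interchanging derivative and conditional expectation by dominated convergence. Combined with the equicontinuity of $\meta\mapsto\rD^2_\meta\E[\sL(\meta,\mZ)]$ and (i), this gives $\rD^2_\meta M_n(\tilde\meta_0)\to\rD^2_\meta M(\meta_0)$ in probability. Invertibility of the limit (Assumption~\ref{assump_l}(d)) and continuity of matrix inversion near an invertible matrix then give (ii) and (iii).

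Part (b)(iv) repeats the argument of (a) in the bootstrap world, conditionally on $\cO$, with a triangular array. This is the main obstacle, because both the centering $\tilde\meta_0$ and the distribution of $\tilde\mZ_i$ change with $n$. To handle it:
\begin{itemize}
\item Since $\tilde\meta_0\to\meta_0$ is interior, $\rD_\meta M_n(\tilde\meta_0)=0$ with probability tending to one.
\item Uniform laws of large numbers for $\rD^2_\meta\sL(\meta,\tilde\mZ_i)$ must now hold conditionally and uniformly over $n$. I would obtain them from a maximal inequality for a class of functions that is uniformly bounded and uniformly equicontinuous on the fixed compact set $\cK\times(\cZ\cup\tilde\cZ)$ (Assumption~\ref{assump_bootstrap}(c)). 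Its covering numbers do not depend on the underlying law, so a Hoeffding plus finite-net argument gives $\sup_\meta\|n^{-1}\sum_i\rD^2_\meta\sL(\meta,\tilde\mZ_i)-\E[\rD^2_\meta\sL(\meta,\tilde\mZ)\mid\cO]\|=o_{\P_{\cO\tilde U}}(1)$.
\item The score $n^{-1/2}\sum_i\rD_\meta\sL(\tilde\meta_0,\tilde\mZ_i)$ is $O_\P(1)$ by Chebyshev, because its conditional variance is bounded.
\end{itemize}
With these ingredients, together with Assumption~\ref{assump_bsmest}, the expansion from (a) goes through verbatim and yields (iv).
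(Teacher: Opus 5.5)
Your proposal is correct, and for (b)(iv), and implicitly (a), it takes a genuinely different route from the paper. Your treatment of (b)(i)--(iii) is essentially the paper's: Lemma~\ref{lemma:w2cvg} applied to $\sL$ and to $\rD^2_{\boldsymbol\eta}\sL$, an argmax argument for $\tilde{\boldsymbol\eta}_0$, and continuity of matrix inversion. Your explicit separation gap makes rigorous a step the paper leaves to ``uniqueness.''

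For (a) the paper only cites \citet[Theorem~5.23]{Vaart_1998}. For (b)(iv) it follows the empirical-process route:
\begin{enumerate}
\item it gets the $\sqrt n$-rate from \citet[Theorem~5.52]{Vaart_1998} (Lemma~\ref{lemma:vdv552}), with the modulus condition supplied by a conditional Dudley bound (Lemma~\ref{lemma:Ldonsker});
\item it linearizes via asymptotic equicontinuity of the normalized Taylor remainder $\check R$ for the conditional triangular array (Lemmas~\ref{lemma:triangular_array_1924} and~\ref{lemma:donsker_verify}), evaluated at $\tilde{\boldsymbol\eta}_n$ and at the one-step point $\check{\boldsymbol\eta}_n$, and then completes the square.
\end{enumerate}

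You instead use the full $C^2$ structure of Assumption~\ref{assump_l}:
\begin{enumerate}
\item the mean-value form of Taylor's theorem along segments in the convex set $\cK$;
\item an empirical Hessian controlled by a finite $\boldsymbol\eta$-net plus conditional Hoeffding, which is law-free because $\rD^2_{\boldsymbol\eta}\sL$ is bounded and uniformly continuous on the fixed compact set $\cK\times(\cZ\cup\tilde\cZ)$;
\item Chebyshev for the score.
\end{enumerate}
As a result you need only a uniform law of large numbers for the Hessian, not $\sqrt n$-level equicontinuity of a remainder process. That removes chaining and the triangular-array equicontinuity lemmas entirely.

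Your route also obtains the curvature directly from the negative-definite limit Hessian. That is exactly what the first condition \eqref{eq:thm552_bd1} of Theorem~5.52 demands: a quadratic \emph{decrease}. The paper's verification of that condition only bounds $|\E[\sL(\boldsymbol\eta,\tilde\mZ)-\sL(\tilde{\boldsymbol\eta}_0,\tilde\mZ)\mid\cO]|$ from above, which is the wrong direction. In exchange, the paper's route is the one that extends to criteria that are merely Lipschitz in $\boldsymbol\eta$.

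Two small points to make explicit when you write it out:
\begin{enumerate}
\item $\rD^2_{\boldsymbol\eta}\E[\sL(\boldsymbol\eta_0,\mZ)]$ is negative definite. An interior maximum gives only semidefiniteness; Assumption~\ref{assump_l}(d) upgrades it. This is what keeps the mean-value Hessians uniformly negative definite with probability tending to one.
\item The one-step comparison point lies in $\cK$ with probability tending to one, so the near-maximizing property can be applied against it.
\end{enumerate}
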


\begin{proof}
Lemma \ref{cdllinear}(a) follows from Lemma 5.23 of \cite{Vaart_1998}. It remains to prove Lemma \ref{cdllinear}(b).

\noindent\textbf{Step 1: Show Lemma \ref{cdllinear}(b)(i).}

     Since $\boldsymbol{\eta}_0$ maximizes the function $\boldsymbol{\eta} \mapsto \E[\sL({\boldsymbol{\eta}}, \mZ)]$ by Assumption \ref{assump_eta}(b), we obtain
    $$
    0  \leq \E[\sL({\boldsymbol{\eta}}_0, \mZ) ]  -\E[\sL(\tilde{\boldsymbol{\eta}}_0, \mZ)].  
    $$
    Rewriting the right-hand side,
    \al{
    \E[\sL({\boldsymbol{\eta}}_0, \mZ) ]  -\E[\sL(\tilde{\boldsymbol{\eta}}_0, \mZ)] =& \E[\sL({\boldsymbol{\eta}}_0, \mZ) ]  -\E[\sL({\boldsymbol{\eta}}_0, \tilde\mZ) \mid \cO]\\
    &+\E[\sL({\boldsymbol{\eta}}_0, \tilde\mZ) \mid \cO]  - \E[\sL(\tilde{\boldsymbol{\eta}}_0, \tilde\mZ) \mid \cO]\\
    &+ \E[\sL(\tilde{\boldsymbol{\eta}}_0, \tilde\mZ) \mid \cO]-\E[\sL(\tilde{\boldsymbol{\eta}}_0, \mZ)].
    }
    The function $\sL(\cdot,\cdot)$ is jointly continuous by Assumption \ref{assump_l}(b) and (c). Also, $\boldsymbol{\eta}_0 \in \cK$ by Assumption \ref{assump_eta}(a). Additionally, $\tilde{\boldsymbol{\eta}}_0 \in \cK$ by definition. Applying Lemma \ref{lemma:w2cvg} then yields
    $$
    \E[\sL({\boldsymbol{\eta}}_0, \mZ) ]  - \E[\sL({\boldsymbol{\eta}}_0, \tilde\mZ) \mid \cO] + \E[\sL(\tilde{\boldsymbol{\eta}}_0, \tilde\mZ) \mid \cO]-\E[\sL(\tilde{\boldsymbol{\eta}}_0, \mZ)] = o_{\P_{\cO}}(1).
    $$
    By definition, $\tilde{\boldsymbol{\eta}}_0$ is a maximizer of $\boldsymbol{\eta} \mapsto \E[\sL({\boldsymbol{\eta}}_, \tilde\mZ) \mid \cO]$ so that
    $$
    \E[\sL({\boldsymbol{\eta}}_0, \tilde\mZ) \mid \cO]  - \E[\sL(\tilde{\boldsymbol{\eta}}_0, \tilde\mZ) \mid \cO]\leq 0
    $$
    almost surely.
    Putting this all together,
    $$
    0 \leq \E[\sL({\boldsymbol{\eta}}_0, \mZ) ]  -\E[\sL(\tilde{\boldsymbol{\eta}}_0, \mZ)] \leq o_{\P_{\cO}}(1).
    $$
    Since $\boldsymbol{\eta}_0$ is the unique maximizer of $\boldsymbol{\eta} \mapsto \E[\sL({\boldsymbol{\eta}}_, \mZ)]$ by Assumption \ref{assump_eta}(b), $\|\tilde{\boldsymbol\eta}_0 - \boldsymbol\eta_0\|_2 = o_{\P_{\cO}}(1)$ as desired.

~\\    
\noindent\textbf{Step 2: Show Lemma \ref{cdllinear}(b)(ii) and Lemma \ref{cdllinear}(b)(iii).}

By the mean value theorem and the bounded convergence theorem,
$$
\sD_{\boldsymbol{\eta}}^2\E[\sL(\tilde{\boldsymbol{\eta}}_0, \tilde\mZ) \mid \cO] = \E[\sD_{\boldsymbol{\eta}}^2\sL(\tilde{\boldsymbol{\eta}}_0, \tilde\mZ) \mid \cO] 
$$
almost surely. Thus, it suffices to show the same conclusion for $\E[\sD_{\boldsymbol{\eta}}^2\sL(\tilde{\boldsymbol{\eta}}_0, \tilde\mZ) \mid \cO]$. 

 First, note $\boldsymbol{\eta} \mapsto \E[\sD_{\boldsymbol{\eta}}^2\sL(\boldsymbol{\eta}, \mZ)]$ is continuous by the mean value theorem. Applying Lemma \ref{lemma:w2cvg} to the function $\sD_{\boldsymbol{\eta}}^2\sL(\boldsymbol{\eta}, \mz)$, we obtain
\begin{equation}\label{eq:asympexp:precedingdisplay}
\sup_{\boldsymbol{\eta} \in \cK }\Big\|\E[\sD_{\boldsymbol{\eta}}^2\sL(\boldsymbol{\eta}, \tilde\mZ) \mid \cO] - \E[\sD_{\boldsymbol{\eta}}^2\sL(\boldsymbol{\eta}, \mZ)]\Big\|_{\rm max} = o_{\P}(1). 
\end{equation}
Then, by \textbf{Step 1} and \eqref{eq:asympexp:precedingdisplay},
\al{
\|\E[\sD_{\boldsymbol{\eta}}^2\sL(\tilde{\boldsymbol{\eta}}_0, \tilde\mZ) \mid \cO] - \E[\sD_{\boldsymbol{\eta}}^2\sL(\boldsymbol{\eta}_0, \mZ) ]\|_{\rm max} &\leq \|\E[\sD_{\boldsymbol{\eta}}^2\sL(\tilde{\boldsymbol{\eta}}_0, \tilde\mZ) \mid \cO] - \E[\sD_{\boldsymbol{\eta}}^2\sL(\tilde{\boldsymbol{\eta}}_0, \mZ) ]\|_{\rm max} \\
&+\|\E[\sD_{\boldsymbol{\eta}}^2\sL(\tilde{\boldsymbol{\eta}}_0, \mZ)] - \E[\sD_{\boldsymbol{\eta}}^2\sL(\boldsymbol{\eta}_0, \mZ) ]\|_{\rm max} \\
&= o_{\P}(1).
}

Next, we note a consequence of the property of matrix inversion. Pick any $\epsilon' > 0$. Since the set of invertible matrices is an open set with respect to  $\|\cdot \|_{\max }$, and matrix inversion is continuous, there is a fixed $\delta' > 0$ for which any matrix $\mathbf{M} \in \R^{q \times q}$ that satisfies $\|\mathbf{M} - \E[\sD_{\boldsymbol{\eta}}^2\sL(\boldsymbol{\eta}_0, \mZ) ]\|_{\rm max} < \delta'$ is invertible and also satisfies $\|\mathbf{M}^{-1} - \E[\sD_{\boldsymbol{\eta}}^2\sL(\boldsymbol{\eta}_0, \mZ) ]^{-1}\|_{\rm max} < \epsilon'$. The existence of $\E[\sD_{\boldsymbol{\eta}}^2\sL(\boldsymbol{\eta}_0, \mZ) ]^{-1}$ is from 
Assumption \ref{assump_l}(d).

Then,
\al{
\P\pa{\E[\sD_{\boldsymbol{\eta}}^2\sL(\tilde{\boldsymbol{\eta}}_0, \tilde\mZ) \mid \cO]^{-1} \text{ exists}, \|\sD_{\boldsymbol{\eta}}^2\E[\sL(\tilde{\boldsymbol\eta}_0, \tilde\mZ) \mid \cO]^{-1} - \sD_{\boldsymbol{\eta}}^2\E[\sL(\boldsymbol\eta_0, \mZ)]^{-1}\|_{\rm max} < \epsilon' } \geq\\ \P\pa{\|\E[\sD_{\boldsymbol{\eta}}^2\sL(\tilde{\boldsymbol{\eta}}_0, \tilde\mZ) \mid \cO] - \E[\sD_{\boldsymbol{\eta}}^2\sL(\boldsymbol{\eta}_0, \mZ) ]\|_{\rm max} < \delta'} 
}
by the definition of $\delta'$. 
Since $\delta'$ is fixed,
$$
\P\pa{\|\E[\sD_{\boldsymbol{\eta}}^2\sL(\tilde{\boldsymbol{\eta}}_0, \tilde\mZ) \mid \cO] - \E[\sD_{\boldsymbol{\eta}}^2\sL(\boldsymbol{\eta}_0, \mZ) ]\|_{\rm max} < \delta'} \rightarrow 1,
$$
which means 
$$
\P\pa{\E[\sD_{\boldsymbol{\eta}}^2\sL(\tilde{\boldsymbol{\eta}}_0, \tilde\mZ) \mid \cO]^{-1} \text{ exists}, \|\sD_{\boldsymbol{\eta}}^2\E[\sL(\tilde{\boldsymbol\eta}_0, \tilde\mZ) \mid \cO]^{-1} - \sD_{\boldsymbol{\eta}}^2\E[\sL(\boldsymbol\eta_0, \mZ)]^{-1}\|_{\rm max} < \epsilon'} \rightarrow 1.
$$
\noindent\textbf{Step 4: Show Lemma \ref{cdllinear}(b)(iv).}

Since $\boldsymbol{\eta}_0$ is an interior point of $\cK$, we take $d > 0$ small enough so that $\cB({\boldsymbol{\eta}}_0, d, \|\cdot\|_2 )$ is contained in the interior of $\cK$. We have just shown $\|\tilde{\boldsymbol\eta}_0 - \boldsymbol\eta_0\|_2 = o_{\P_{\cO}}(1)$. By Assumption \ref{assump_bsmest}(a), $\|\tilde{\boldsymbol\eta}_n - \tilde{\boldsymbol\eta}_0\|_2 = o_{\P_{\cO\tilde U}}(1)$ so that we assume without loss of generality that $\tilde {\boldsymbol{\eta}}_n, \tilde{\boldsymbol{\eta}}_0$ are within this ball.

Due to the mean value theorem and the characterization of $\tilde {\boldsymbol{\eta}}_0$ as a maximizer, we interchange the derivative and expectation to obtain
\begin{equation}\label{eq:asympexp:zeroderivativecondition}
 \mathbf{0} = \sD_{\boldsymbol{\eta}} \E[\sL(\tilde{\boldsymbol\eta}_0, \tilde\mZ) \mid \cO] = \E[\sD_{\boldsymbol{\eta}}\sL(\tilde{\boldsymbol{\eta}}_0, \tilde\mZ) \mid \cO]
\end{equation}
almost surely.

We now verify the conditions \eqref{eq:thm552_bd1} and \eqref{eq:thm552_bd2} of Lemma \ref{lemma:vdv552}. An application of Taylor's theorem (Theorem 12.14 of \cite{apostol1974mathematical}, e.g.) then gives, using \eqref{eq:asympexp:zeroderivativecondition},
\al{
 |\E[\sL({\boldsymbol\eta}, \tilde\mZ) - \sL(\tilde{\boldsymbol\eta}_0, \tilde\mZ)\mid \cO]| \leq \frac{q^2}{2} \cdot \max_{\mz \in \tilde\cZ}\max_{\boldsymbol{\eta'} \in \cK} \|\sD_{\boldsymbol{\eta}}^2\sL(\boldsymbol{\eta}', \mz)\|_{\rm max} \cdot \|{\boldsymbol\eta} - \tilde{\boldsymbol\eta}_0\|_2^2
}
for any $\boldsymbol{\eta}$ in the interior of $\cK$. Thus, for all $\delta < d/3$, by the definition of $d$ and the triangle inequality, $\cB(\tilde{\boldsymbol{\eta}}_0, \delta, \|\cdot\|_2 )$ is contained in the interior of $\cK$. As a result, we have
$$
\sup_{\|\boldsymbol{\eta} - \tilde{\boldsymbol{\eta}}_0 \|_2 < \delta}  |\E[\sL({\boldsymbol\eta}, \tilde\mZ) - \sL(\tilde{\boldsymbol\eta}_0, \tilde\mZ)\mid \cO]|\leq \frac{q^2}{2} \cdot \max_{\mz \in \tilde\cZ}\max_{\boldsymbol{\eta'} \in \cK} \|\sD_{\boldsymbol{\eta}}^2\sL(\boldsymbol{\eta}', \mz)\|_{\rm max} \cdot \delta^2
$$
and thus verify the condition \eqref{eq:thm552_bd1}.

Condition \eqref{eq:thm552_bd2} is directly verified in Lemma \ref{lemma:Ldonsker}.  Consequently, $\|\tilde{\boldsymbol\eta}_n - \tilde{\boldsymbol\eta}_0\|_2 = O_{\P_{\cO\tilde U }}(n^{-1/2})$.

Twice-differentiability of  $\boldsymbol{\eta} \mapsto \E[\sL(\boldsymbol{\eta}, \tilde\mZ) \mid \cO]$ almost surely, combined with
$\sD_{\boldsymbol{\eta}}\E[\sL(\tilde{\boldsymbol{\eta}}_0, \tilde\mZ)\mid \cO] = \mathbf{0}$ almost surely, implies
$$
\E[\sL(\tilde{\boldsymbol{\eta}}_n, \tilde\mZ) \mid \cO] - \E[\sL(\tilde{\boldsymbol{\eta}}_0, \tilde\mZ) \mid \cO] - \frac{1}{2}(\tilde{\boldsymbol{\eta}}_n - \tilde{\boldsymbol{\eta}}_0)^\top\sD_{\boldsymbol{\eta}}^2\E[\sL(\tilde{\boldsymbol{\eta}}_0, \tilde\mZ) \mid \cO](\tilde{\boldsymbol{\eta}}_n - \tilde{\boldsymbol{\eta}}_0) = o_{\P_{\cO\tilde U }}(n^{-1}). 
$$
Using Lemma \ref{lemma:donsker_verify} on $\tilde{\boldsymbol{\eta}}_n$, and manipulating terms, we obtain then
\begin{equation}\label{eq:cdllinear:tilde}
 \pa{-\frac{1}{\sqrt{n}}\sum_{i=1}^n \sD_{\boldsymbol{\eta}}\sL(\tilde{\boldsymbol{\eta}}_0, \tilde{\mZ_i})}^\top \sqrt{n}(\tilde{\boldsymbol{\eta}}_n - \tilde{\boldsymbol{\eta}}_0) - \frac{1}{2}(\sqrt{n}(\tilde{\boldsymbol{\eta}}_n - \tilde{\boldsymbol{\eta}}_0))^\top\sD_{\boldsymbol{\eta}}^2\E[\sL(\tilde{\boldsymbol{\eta}}_0, \tilde\mZ) \mid \cO](\sqrt{n}(\tilde{\boldsymbol{\eta}}_n - \tilde{\boldsymbol{\eta}}_0)) = \tilde\epsilon_n,
\end{equation}
where 
$$
\tilde\epsilon_n :=  \sum_{i=1}^n   \sL(\tilde{\boldsymbol{\eta}}_0, \tilde{\mZ_i}) - \sL(\tilde{\boldsymbol{\eta}}_n, \tilde{\mZ_i}) + o_{\P}(1).
$$
Since $\P\pa{\E[\sD_{\boldsymbol{\eta}}^2\sL(\tilde{\boldsymbol{\eta}}_0, \tilde\mZ) \mid \cO]^{-1} \text{ exists}} \rightarrow 1$ by \textbf{Step 2}, it is no loss of generality to assume $\E[\sD_{\boldsymbol{\eta}}^2\sL(\tilde{\boldsymbol{\eta}}_0, \tilde\mZ) \mid \cO]^{-1}$ exists for the remainder of the proof. 

Define 
$$
\check{\boldsymbol{\eta}}_n = \tilde{\boldsymbol{\eta}}_0
-\frac{1}{n} \sum_{i=1}^n (\sD_{\boldsymbol{\eta}}^2\E[\sL(\tilde{\boldsymbol\eta}_0, \tilde\mZ) \mid \cO])^{-1}\sD_{\boldsymbol{\eta}}\sL(\tilde{\boldsymbol\eta}_0, \tilde \mZ_i)
$$
and observe, by the Lyapunov central limit theorem, that $\check{\boldsymbol{\eta}}_n - \tilde{\boldsymbol{\eta}}_0 = O_{\P}(n^{-1/2})$. Using $\check{\boldsymbol{\eta}}_n$ in Lemma \ref{lemma:donsker_verify}, we can show
\begin{equation}\label{eq:cdllinear:check}
 \pa{-\frac{1}{\sqrt{n}}\sum_{i=1}^n \sD_{\boldsymbol{\eta}}\sL(\tilde{\boldsymbol{\eta}}_0, \tilde{\mZ_i})}^\top \sqrt{n}(\check{\boldsymbol{\eta}}_n - \tilde{\boldsymbol{\eta}}_0) - \frac{1}{2}(\sqrt{n}(\check{\boldsymbol{\eta}}_n - \tilde{\boldsymbol{\eta}}_0))^\top\sD_{\boldsymbol{\eta}}^2\E[\sL(\tilde{\boldsymbol{\eta}}_0, \tilde\mZ) \mid \cO](\sqrt{n}(\check{\boldsymbol{\eta}}_n - \tilde{\boldsymbol{\eta}}_0)) = \check\epsilon_n,
\end{equation}
where 
$$
\check\epsilon_n :=  \sum_{i=1}^n   \sL(\tilde{\boldsymbol{\eta}}_0, \tilde{\mZ_i}) - \sL(\check{\boldsymbol{\eta}}_n, \tilde{\mZ_i}) + o_{\P}(1).
$$
Expanding the definition of $\check{\boldsymbol{\eta}}_n$ and simplifying Equation \eqref{eq:cdllinear:check}, we get
\begin{equation}\label{eq:cdllinear:check_simple}
\frac{1}{2}\pa{\frac{1}{\sqrt{n}}\sum_{i=1}^n \sD_{\boldsymbol{\eta}}\sL(\tilde{\boldsymbol{\eta}}_0, \tilde{\mZ_i})}^\top \sD_{\boldsymbol{\eta}}^2\E[\sL(\tilde{\boldsymbol{\eta}}_0, \tilde\mZ_1) \mid \cO]^{-1}\pa{\frac{1}{\sqrt{n}}\sum_{i=1}^n  \sD_{\boldsymbol{\eta}}\sL(\tilde{\boldsymbol{\eta}}_0, \tilde{\mZ_i})} = \check \epsilon_n.
\end{equation}
Subtracting \eqref{eq:cdllinear:check_simple} from \eqref{eq:cdllinear:tilde}, then completing the square, we get
\al{
\frac{1}{2}\Big\|(-\sD_{\boldsymbol{\eta}}^2\E[\sL(\tilde{\boldsymbol{\eta}}_0, \tilde\mZ) \mid \cO])^{1/2} \sqrt{n}(\tilde{\boldsymbol{\eta}}_n - \tilde{\boldsymbol{\eta}}_0) - \frac{1}{\sqrt{n}}\sum_{i=1}^n (-\sD_{\boldsymbol{\eta}}^2\E[\sL(\tilde{\boldsymbol{\eta}}_0, \tilde\mZ) \mid \cO])^{-1/2} \sD_{\boldsymbol{\eta}}\sL(\tilde{\boldsymbol{\eta}}_0, \tilde{\mZ}_i) \Big\|_2^2 =
\tilde \epsilon_n - \check \epsilon_n.
}
The matrix $-\sD_{\boldsymbol{\eta}}^2\E[\sL(\tilde{\boldsymbol{\eta}}_0, \tilde\mZ) \mid \cO]$ is almost surely symmetric, as $\sL$ is twice continuously differentiable and swapping second derivatives and expectations. In addition, $-\sD_{\boldsymbol{\eta}}^2\E[\sL(\tilde{\boldsymbol{\eta}}_0, \tilde\mZ) \mid \cO]$ is almost surely positive definite because $\tilde{\boldsymbol{\eta}}_0$ is a maximizer, as $\sD_{\boldsymbol{\eta}}^2\E[\sL(\tilde{\boldsymbol{\eta}}_0, \tilde\mZ) \mid \cO]$ is negative definite, almost surely. Hence, the matrix square root exists for $-\sD_{\boldsymbol{\eta}}^2\E[\sL(\tilde{\boldsymbol{\eta}}_0, \tilde\mZ) \mid \cO]$ and its inverse. 

Expanding the definition of $\tilde \epsilon_n$ and $\check \epsilon_n$, 
\al{
\tilde \epsilon_n - \check \epsilon_n &= \sum_{i=1}^n   \sL(\check{\boldsymbol{\eta}}_n, \tilde{\mZ_i}) - \sL(\tilde{\boldsymbol{\eta}}_n, \tilde{\mZ_i}) + o_{\P}(1)\\
&=\sum_{i=1}^n   \sL(\check{\boldsymbol{\eta}}_n, \tilde{\mZ_i}) - \sup_{\boldsymbol{\eta} \in \cK} \sum_{i=1}^n   \sL({\boldsymbol{\eta}}, \tilde{\mZ_i}) + \sup_{\boldsymbol{\eta} \in \cK} \sum_{i=1}^n   \sL({\boldsymbol{\eta}}, \tilde{\mZ_i})- \sL(\tilde{\boldsymbol{\eta}}_n, \tilde{\mZ_i}) + o_{\P}(1)\\
&=\sum_{i=1}^n   \sL(\check{\boldsymbol{\eta}}_n, \tilde{\mZ_i}) - \sup_{\boldsymbol{\eta} \in \cK} \sum_{i=1}^n   \sL({\boldsymbol{\eta}}, \tilde{\mZ_i}) + o_{\P}(1). \tag{Assumption \ref{assump_bsmest}(b)}\\
&\leq o_{\P}(1).
}
Therefore, 
$$
(-\sD_{\boldsymbol{\eta}}^2\E[\sL(\tilde{\boldsymbol{\eta}}_0, \tilde\mZ) \mid \cO])^{1/2} \sqrt{n}(\tilde{\boldsymbol{\eta}}_n - \tilde{\boldsymbol{\eta}}_0) - \frac{1}{\sqrt{n}}\sum_{i=1}^n (-\sD_{\boldsymbol{\eta}}^2\E[\sL(\tilde{\boldsymbol{\eta}}_0, \tilde\mZ) \mid \cO])^{-1/2} \sD_{\boldsymbol{\eta}}\sL(\tilde{\boldsymbol{\eta}}_0, \tilde{\mZ}_i) = o_{\P}(1).
$$
Factoring out $(-\sD_{\boldsymbol{\eta}}^2\E[\sL(\tilde{\boldsymbol{\eta}}_0, \tilde\mZ) \mid \cO])^{1/2}$ and noting, from \textbf{Step 2},
$$
\|(-\sD_{\boldsymbol{\eta}}^2\E[\sL(\tilde{\boldsymbol{\eta}}_0, \tilde\mZ) \mid \cO])^{1/2} - (-\sD_{\boldsymbol{\eta}}^2\E[\sL({\boldsymbol{\eta}}_0, \mZ) ])^{1/2}\|_{\rm op} = o_{\P}(1),
$$
we have
$$
\sqrt{n}(\tilde{\boldsymbol{\eta}}_n - \tilde{\boldsymbol{\eta}}_0) + \frac{1}{\sqrt{n}}\sum_{i=1}^n \sD_{\boldsymbol{\eta}}^2\E[\sL(\tilde{\boldsymbol{\eta}}_0, \tilde\mZ) \mid \cO]^{-1} \sD_{\boldsymbol{\eta}}\sL(\tilde{\boldsymbol{\eta}}_0, \tilde{\mZ}_i) = o_{\P}(1)
$$
and thus complete the whole proof.
\end{proof}

\subsection{Supporting lemmas for Theorem \ref{thm:iso}}
 
We first establish the uniform convergence of the regression functions and their first derivatives.

\begin{lemma}\label{lemma:isocvg}
Suppose that Assumptions \ref{assump_bootstrap}, \ref{assump_iso}, and \ref{ass:generator} hold. Then, for any bounded closed interval $\cC \subseteq \R$, it holds true that
\begin{align*}
&\sup_{\mz \in \cX \times \cC}\Big|\tilde p_n(\mz) - p_Z(\mz)\Big| = o_{\P}(1),~~~ \sup_{\mz \in  \cX \times \cC}\Big\|\sD\tilde p_n(\mz) - \sD p_Z(\mz)\Big\|_2 = o_{\P}(1),\\
&\sup_{x \in  \cX}\Big|\tilde p_X(x) - p_X(x)\Big| = o_{\P}(1),~~~\sup_{x \in  \cX}\Big|\tilde p_X'(x) - p_X'(x)\Big| = o_{\P}(1),\\
&\sup_{x \in  \cX}\Big|\tilde f_0(x) - f_0(x)\Big| = o_{\P}(1),~~~{\rm and}~~~ \sup_{x \in  \cX}\Big|\tilde f_0'(x) - f_0'(x)\Big| = o_{\P}(1).
\end{align*}
\end{lemma}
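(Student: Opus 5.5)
The whole argument reduces to one compactness fact: a uniformly bounded, uniformly equi-Lipschitz family of densities whose laws converge in $\sW_1$ must converge uniformly, together with its first derivatives, on compact sets. I would prove this first for $\tilde p_n$ and then derive the statements about $\tilde p_X$, $\tilde f_0$ and their derivatives by integrating in $y$ and taking ratios. I work along subsequences. Given any subsequence, Assumption~\ref{assump_bootstrap}(b) and Lemma~\ref{lemma:prob_and_as} give a further subsequence on which $\sW_1(\bP_{\tilde Z\mid\cO},\P_Z)\to0$ almost surely, so $\bP_{\tilde Z\mid\cO}$ converges weakly to $\P_Z$ almost surely. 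Fix an outcome in this event.

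\textbf{Step 1: density and gradient convergence.} By Assumption~\ref{ass:generator}(b), the functions $\tilde p_n$, $\sD\tilde p_n$ are uniformly bounded by $\tilde K$, and $\sD\tilde p_n$ is $\tilde K$-Lipschitz on $\tilde\cZ_n$. Since $\cX\subseteq\tilde\cX_n$ and $\tilde p_n$ is defined and $C^1$ on $\tilde\cX_n\times\R$, this holds on $\cX\times\cC$. (Off $\tilde\cZ_n$ the density vanishes, so the bounds extend.) By Arzel\`a--Ascoli, every subsequence of $(\tilde p_n,\sD\tilde p_n)$ has a further subsequence converging uniformly on $\cX\times\cC$ to some $(q,\sD q)$; the limit of the gradients is the gradient of the limit by uniform convergence of derivatives. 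Weak convergence identifies $q$: for any continuous test function $\phi$ compactly supported in the interior of $\cX\times\cC$, $\int\phi\tilde p_n\to\int\phi p_Z$, so $q=p_Z$ on the interior and hence on $\cX\times\cC$ by continuity. Uniqueness of the limit gives full uniform convergence of $\tilde p_n$ and $\sD\tilde p_n$.

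\textbf{Step 2: marginals.} Since $\tilde\cZ_n\subseteq\tilde\cZ$ is compact and nonrandom by Assumption~\ref{assump_bootstrap}(c), and $\cZ=\cX\times\cY$, I can choose $\cC\supseteq$ the $y$-projection of $\tilde\cZ\cup\cZ$. Then $\tilde p_X(x)=\int_\cC\tilde p_n(x,y)\,\d y$ and $p_X(x)=\int_\cC p_Z(x,y)\,\d y$. Uniform convergence on $\cX\times\cC$ gives $\sup_x|\tilde p_X-p_X|\to0$. Differentiating under the integral, justified by the bounded $\sD_x\tilde p_n$, gives the claim for $\tilde p_X'$.

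\textbf{Step 3: regression functions.} Write
\[
\tilde f_0(x)=\frac{\int_\cC y\,\tilde p_n(x,y)\,\d y}{\tilde p_X(x)},
\]
and similarly for $f_0$. The numerator and its $x$-derivative converge uniformly by Step 1, since $y$ is bounded on $\cC$. The denominators are bounded below by $\tilde K^{-1}$ and by $\inf p_X>0$ (Assumptions~\ref{ass:generator}(b) and~\ref{assump_iso}(d)). The quotient rule then yields uniform convergence of $\tilde f_0$ and $\tilde f_0'$. Returning from almost-sure convergence along subsequences to $o_\P(1)$ uses Lemma~\ref{lemma:prob_and_as}.

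\textbf{Main obstacle.} The delicate step is identifying the Arzel\`a--Ascoli limit up to the boundary of $\cX$, where $\tilde p_n$ may have mass outside $\cX$ while $p_Z$ vanishes off $\cZ$. My plan is to identify $q=p_Z$ on the interior and extend by continuity. This requires $p_Z$ to be continuous up to $\partial\cX$, which Assumption~\ref{assump_iso}(c) supplies on $\cX\times\R$. It also requires $\tilde p_n$ to be equicontinuous on the closed set, which follows from the uniform gradient bound.
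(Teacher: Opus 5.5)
Your proof is correct and has the same skeleton as the paper's. Both use Arzel\`a--Ascoli compactness for $\tilde p_n$ and $\sD\tilde p_n$, identify the limit through the weak convergence implied by $\sW_1\to 0$, integrate in $y$ for the marginals, and take a quotient for $\tilde f_0$. Where you differ is in how the probability is handled.

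\textbf{The paper's route.} It treats $\tilde p_n$ as a random element of $C(\cX\times\cC)$. It gets tightness from Arzel\`a--Ascoli and extracts a weakly convergent subsequence via Prokhorov's theorem, with a possibly random limit $\phi$. It identifies $\phi=p_Z$ almost surely using test functions $f\cdot\mathds{1}(\cX\times\cC)$ and a countable separating family (Lemma~\ref{lemma:equality_of_measures_countable}). It then converts weak convergence to a deterministic limit into convergence in probability (Lemma~\ref{lemma:weak_to_prob}). Finally, it repeats the whole scheme separately for $\sD_x\tilde p_n$ and $\sD_y\tilde p_n$, identifying each derivative limit via the fundamental theorem of calculus.

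\textbf{Your route.} You fix an outcome in the almost-sure event where $\sW_1\to0$ along a sub-subsequence. You then run a purely deterministic Arzel\`a--Ascoli argument on the pair $(\tilde p_n,\sD\tilde p_n)$ jointly. The derivative identification comes for free, since the limit of the gradients is the gradient of the limit. The only probabilistic input is the subsequence characterization of convergence in probability.

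\textbf{What this buys.} The argument is shorter and avoids Prokhorov's theorem, random limits and the countable separating class. Your test functions supported in the interior of the rectangle handle the boundary cleanly. Because you only ever compare on $\cX\times\cC$, mass of $\tilde p_n$ outside $\cX$ is not really an obstacle. Your explicit quotient-rule treatment of $\tilde f_0'$ also fills in what the paper dispatches as an ``analogous argument.''

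One cosmetic point: if $\cC$ is a single point, $\cX\times\cC$ has empty interior. In that case, first prove the claim for a nondegenerate interval containing $\cC$.
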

\begin{proof}

We denote $\tilde f_0$ as $\tilde f_{0,n}$, and $\tilde f'_0$ as $\tilde f'_{0,n}$ and analogously for the marginal densities in this proof to emphasize the approximating sequences' dependence on the sample size.  

~\\
{\bf Step 1.} We appeal to Lemma \ref{lemma:prob_and_as}: for each subsequence $n_k$, we aim to find a further subsequence $n_{k_{\ell_m}}$ such that 
$$
\sup_{\mz \in  \cX \times \cC}\Big| \tilde p_{n_{k_{\ell_m}}}(\mz) - p_Z(\mz)\Big| \rightarrow 0
$$
almost surely.

By Lemma \ref{lemma:margdiff}, $\tilde p_{n}$ is Lipschitz on $ \cX \times \R$, with probability $1$. Hence, almost surely, the sequence of functions $\tilde p_{n}$'s is uniformly Lipschitz on $ \cX \times \cC$ for all sufficiently large $n$. Also, they are uniformly bounded on $ \cX \times \cC$, by $\tilde K$ by Assumption \ref{ass:generator}(b).

Furthermore, $\tilde p_{n}$ is a $C( \cX\times \cC)$-valued random variable for sufficiently large $n$ due to Assumption \ref{ass:generator}(a,c). By Arzela-Ascoli theorem (see Theorem 11.28 of \cite{rudin1987real} e.g.), there then exists a compact $\tilde \cC \subseteq C( \cX \times \cC)$ for which
$$
\P(\tilde p_{n} \in \tilde \cC)  = 1
$$
for sufficiently large $n$, yielding tightness of the sequences. Take any subsequence $n_k$. Applying Lemma \ref{lemma:prokhorov}, then Assumption \ref{assump_bootstrap}(b) along with Lemma \ref{lemma:prob_and_as}, there exists a further subsequence $n_{k_\ell}$ for which 
$$
\tilde p_{{n_{k_\ell}}} \text{ converges weakly to } \phi \qquad \text{and \qquad } \sW_1(\bP_{\tilde Z_{1, n_{k_\ell}} \mid \cO} , \P_{Z}) \rightarrow 0 \text{ almost surely}
$$
for some $C( \cX \times \cC)$-valued random variable $\phi$. Note that $\phi$ is a density almost surely because $\tilde p_{n_{k_\ell}}$ are densities.

Our next goal is to identify the limit $\phi$ as the (deterministic) function $p_Z$. Choose any bounded, continuous function $f : \R^2 \rightarrow \R$. As $\mZ$ admits a Lebesgue density by Assumption \ref{assump_iso}(c), the function $\mz \mapsto f(\mz) \cdot \mathds{1}(\mz \in  \cX \times \cC)$ is $\P_{Z}$-almost surely continuous. Lemma \ref{lemma:weakcvg_wass} and $\sW_1(\bP_{\tilde Z_{1, n_{k_\ell}} \mid \cO} , \P_{Z}) \rightarrow 0$ then implies, almost surely,
$$
\tilde \mZ_{n_{k_\ell}} \text{ converges weakly to } \mZ \text{ conditional on }\cO.
$$
The continuous mapping theorem then implies
$$
f(\tilde \mZ_{n_{k_\ell}}) \cdot \mathds{1}(\tilde \mZ_{n_{k_\ell}} \in  \cX \times \cC) \text{ converges weakly to } f(\mZ) \cdot \mathds{1}(\mZ \in  \cX \times \cC) \text{ conditional on }
\cO.
$$
The sequence $f(\tilde \mZ_{n_{k_\ell}}) \cdot \mathds{1}(\tilde \mZ_{n_{k_\ell}} \in  \cX \times \cC)$ is uniformly bounded. Therefore, this sequence is uniformly integrable, so that
\begin{equation}
\E\Big[f(\tilde \mZ_{n_{k_\ell}}) \cdot \mathds{1}(\tilde \mZ_{n_{k_\ell}} \in  \cX \times \cC) \mid \cO\Big] - \E\Big[f(\mZ) \cdot \mathds{1}(\mZ \in  \cX \times \cC)\Big] \rightarrow 0. \label{eq:isocvg:rv_wkcvg}
\end{equation}
almost surely. 
Define the function $g_f : C( \cX\times \cC)  \rightarrow \R$ as
$$
g_f(h) := \int_{ \cX \times \cC} f(\mz) \cdot (h(\mz)-p_Z(\mz)) \, \d \mz,
$$
which, we note, is continuous.  By the continuous mapping theorem,
$$
g_f(\tilde p_{n_{k_\ell}}) \text{ converges weakly to } g_f(\phi).
$$
By definition, 
\al{
g_f(\tilde p_{n_{k_\ell}}) = &
\E\Big[f(\tilde \mZ_{n_{k_\ell}}) \cdot \mathds{1}(\tilde \mZ_{n_{k_\ell}} \in  \cX \times \cC) \mid \cO\Big] - \E\Big[f(\mZ) \cdot \mathds{1}(\mZ \in  \cX \times \cC)\Big]
}
 and we have shown $g_f(\tilde p_{n_{k_\ell}})$ converges to $0$ almost surely by \eqref{eq:isocvg:rv_wkcvg}. Thus, $g_f(\phi) =0$ almost surely, yielding
 $$
 \int_{ \cX \times \cC} f(\mz) \cdot (\phi(\mz) - p_Z(\mz)) \, \d \mz = 0
 $$
 almost surely, for a fixed choice of $f$. Choosing the sequence of bounded continuous functions $f_1, f_2, ... : (\cX \times \cC)  \rightarrow \R$ guaranteed by Lemma \ref{lemma:equality_of_measures_countable}, we have
 $$
 \int_{ \cX \times \cC} f_i(\mz) \cdot (\phi(\mz) - p_Z(\mz)) \, \d \mz = 0
 $$
 almost surely, for all $i=1,2,...$ simultaneously, as the countable intersection of almost sure events is also almost sure. Consequently, by Lemma \ref{lemma:equality_of_measures_countable}, the measures defined by $\phi$ and $p_Z$ are equal almost surely, which implies that the density functions $\phi$ and $p_Z$ are equal for Lebesgue almost all points $ \cX \times \cC$, almost surely. Since $\phi$ and $p_Z$ are continuous on $ \cX \times \cC$, we have equality of $\phi$ and $p_Z$ at all points $ \cX \times \cC$, almost surely. Lemma \ref{lemma:weak_to_prob} then yields
 $$
 \sup_{\mz \in  \cX \times \cC }\Big|\tilde p_{n_{k_\ell}}(\mz) - p_Z(\mz)\Big| = o_{\P}(1).
 $$
Finally, pick a further subsequence of $n_{k_\ell}$, $n_{k_{\ell_m}}$, so that Lemma \ref{lemma:prob_and_as} deducing
$$
 \sup_{\mz \in  \cX \times \cC }\Big|\tilde p_{n_{k_{\ell_m}}}(\mz) - p_Z(\mz)\Big| \rightarrow 0
$$
almost surely, which yields the final result.

~\\
\noindent{\bf Step 2.} Next, we show 
$$
 \sup_{\mz \in  \cX \times \cC}\|\sD\tilde p_{n}(\mz) - \sD p_Z(\mz)\|_2 = o_{\P}(1).
$$
We show this holds componentwise and denote the partial derivative with respect to the first and second argument of $\tilde p_n$ as $\sD_x$ and $\sD_y$, respectively.

We again appeal to Lemma \ref{lemma:prob_and_as}. Lemma \ref{lemma:margdiff} ensures that the functions $\sD_x \tilde p_{n}$'s, for all sufficiently large $n$, are uniformly bounded and uniformly equicontinuous on $ \cX \times \cC$. Accordingly, by Arzela-Ascoli theorem \citep[Theorem 11.28]{rudin1987real}, there exists a compact $\tilde\cC'' \subseteq C( \cX \times \cC)$ for which 
$$
\P(\sD_x \tilde p_{n} \in \tilde \cC'') = 1
$$
holds for all sufficiently large $n$, which implies tightness of the sequence. Take any subsequence $n_k$. Applying Lemma \ref{lemma:prokhorov}, take a further subsequence $n_{k_\ell}$ for which
$$
\sD_x \tilde p_{ n_{k_\ell}} \text{ converges weakly to }\varphi ~~~~~~\text{and}~~~~~~
\sup_{\mz \in  \cX \times \cC }|\tilde p_{n_{k_{\ell}}}(\mz) - p_Z(\mz)| \rightarrow 0 \text{ almost surely}
$$
for some $C( \cX \times \cC)$-valued random variable $\varphi$. 

Our next goal is to identify the limit $\varphi$ as the (deterministic) function $\sD_x p_Z$.  Fix a $\mz = (x,y) \in  \cX \times \cC$, and let $h > 0$ so $x + h \in \cX$. The fundamental theorem of calculus ensures
$$
\tilde p_{n_{k_\ell}}(x + h, y) - \tilde p_{n_{k_\ell}}(x,y) = \int_{x}^{x+h} \sD_x \tilde p_{n_{k_\ell}} (t,y) \, \d t,
$$
while implies
$$
\frac{\tilde p_{n_{k_\ell}}(x + h,y) - \tilde p_{n_{k_\ell}}(x,y)}{h} = \frac{1}{h}\int_{x}^{x+h} \sD_x\tilde p_{n_{k_\ell}} (t,y) \, \d t.
$$
The left-hand side satisfies
$$
\lim_{\ell \rightarrow \infty}\frac{\tilde p_{n_{k_\ell}}(x + h,y) - \tilde p_{n_{k_\ell}}(x,y)}{h} = \frac{p_Z(x+h,y) - p_Z(x,y)}{h} 
$$
almost surely, using uniform convergence almost surely by our choice of $n_{k_\ell}$. Since the function
$$
g \mapsto \frac{1}{h}\int_x^{x+h} g(t, y) \, \d t
$$
is a continuous map on $C( \cX \times \cC)$, the continuous mapping theorem ensures that
$$
\frac{1}{h}\int_x^{x+h} \sD_x \tilde p_{n_{k_\ell}}(t, y) \, \d t \text{ converges weakly to }\frac{1}{h} \int_x^{x+h} \varphi(t,y) \, \d t
$$
as $\ell \rightarrow \infty$. 

Thus, for any fixed $y \in \cC$ and any fixed and sufficiently small $h > 0$,
$$
\frac{1}{h}\int_x^{x+h} \varphi(t,y) \, \d t = \frac{p_Z(x+h,y) - p_Z(x,y)}{h} = \frac{1}{h}\int_x^{x+h} \sD_xp_Z(t, y) \, \d t
$$
almost surely by the uniqueness of limits. Using the continuity of $\varphi$ and $\sD_xp_Z$, we obtain $\varphi(\mz) = \sD_xp_Z(\mz)$ for all $\mz \in \cX \times \cC$.
Since $\sD_x p_Z$ is deterministic, we apply Lemma \ref{lemma:weak_to_prob}, yielding
$$
\sup_{\mz \in  \cX \times \cC}\Big|\sD_x \tilde p_{ n_{k_\ell}}(\mz) - \sD_xp_Z(\mz)\Big| = o_{\P}(1).
$$
Lemma \ref{lemma:prob_and_as} then implies the existence of a further subsequence $n_{k_{\ell_m}}$ with 
$$
\sup_{\mz  \in  \cX \times \cC}\Big|\sD_x\tilde p_{ n_{k_{\ell_m}}}(\mz) - \sD_x p_Z(\mz)\Big| \rightarrow 0
$$
almost surely. The exact same argument, taking $\sD_y$ instead, yields the full conclusion.

~\\
{\bf Step 3.} Assumption \ref{assump_bootstrap}(c) and \ref{assump_iso}(e) implies the existence of a closed, bounded  interval $\tilde \cY$ for which $\cY \cup \tilde \cY_n \subseteq \tilde \cY$ for all $n$, almost surely. {\bf Step 1} and {\bf Step 2} combined then yield
$$
\sup_{\mz \in  \cX \times \tilde \cY}\Big|\tilde p_n(\mz) - p_Z(\mz)\Big| = o_{\P}(1) ~~~~~~\text{and}~~~~~~ \sup_{\mz \in  \cX \times \tilde\cY}\Big\|\sD\tilde p_n(\mz) - \sD p_Z(\mz)\Big\|_2 = o_{\P}(1).
$$
The above is used to establish the remaining claims.

~\\
{\bf Step 4.} Next, we show $\sup_{x \in  \cX}|\tilde p_X(x) - p_X(x)| = o_{\P}(1)$. 
Expanding,
\al{
\sup_{x \in  \cX}\Big|\tilde p_X(x) - p_X(x)\Big| 
&= \sup_{x \in  \cX}\ap{\int_{\tilde \cY} \tilde p_n(x, y)  -  p_Z(x,y) \, \d y}
\tag{definition of $\tilde \cY$}\\
&\leq  \sup_{\mz \in  \cX \times \tilde \cY}\ap{\tilde p_n(\mz)  -  p_Z(\mz)} \cdot\int_{\tilde \cY} \, \d y\\
&= o_{\P}(1),
}
where the last equality comes from the fact that $\tilde \cY$ is bounded, so its Lebesgue measure is finite.

~\\
{\bf Step 5.}  Next,  we show $\sup_{x\in  \cX} |\tilde f_0(x) - f_0(x)| = o_{\P}(1)$. Expanding the regression functions,
\al{
\sup_{x \in  \cX}|\tilde f_0(x) - f_0(x)| &=\sup_{x \in  \cX} \ap{ \int_{\R} y \cdot \pa{\frac{\tilde p_n(x,y)}{\tilde p_X(x)} - \frac{p_Z(x,y)}{p_X(x)}} \, \d y}\\
&=\sup_{x \in  \cX} \ap{ \int_{\tilde \cY} y \cdot \pa{\frac{\tilde p_n(x,y)}{\tilde p_X(x)} - \frac{p_Z(x,y)}{p_X(x)}} \, \d y} \\
&\leq  \sup_{(x',y') \in  \cX \times \tilde \cY}\ap{\frac{\tilde p_n(x',y')}{\tilde p_X(x')} - \frac{p_Z(x',y')}{p_X(x')}} \cdot \int_{\tilde\cY} |y| \, \d y \\
&= o_{\P}(1),
}
almost surely. 
Here the last equality comes from the fact that $\tilde \cY$ is bounded, and $p_X$ is uniformly bounded below on $ \cX$ so that the continuous mapping theorem applies.

By the analogous argument, by replacing $p_X$ and $f_0$ with $p_X'$ and $\tilde f_0'$, respectively, we have
$$
\sup_{x\in  \cX} |\tilde p_X'(x) - p_X'(x)| = o_{\P}(1)~~~{\rm and}~~~
\sup_{x \in  \cX}|\tilde f_0'(x) - f_0'(x)| = o_{\P}(1).
$$
This completes the whole proof.
\end{proof}

Next, we calculate the bias of the local average of $\tilde f_0$.
\begin{lemma}\label{lemma:biascalc}
    Suppose that Assumptions \ref{assump_noise1}, \ref{assump_bootstrap}, \ref{assump_iso}, and \ref{ass:generator} hold. Then, for any $k,H > 0$,
$$
\sup_{\substack{(\ell,u) \in (0,H] \times [0,H]\\ \ell + u \geq k}} \ap{n^{1/3}(\overline{\tilde {f}}_{[\ell_n, u_n]} - \tilde f_0(x_0)) - \frac{f_0'(x_0)}{2}(u-\ell) } = o_{\P}(1).
$$
\end{lemma}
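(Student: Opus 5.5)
The statistic $\overline{\tilde f}_{[\ell_n,u_n]}$ is an average of $\tilde f_0(\tilde X_i)$ over the bootstrap covariates landing in the shrinking window $[\ell_n,u_n]\cap\cX$. Since $x_0$ is interior, $[\ell_n,u_n]\subseteq\cX$ for all large $n$ uniformly over $(\ell,u)\in(0,H]\times[0,H]$, so the indicator $\mathds 1(\tilde X_i\in\cX)$ is inert. The plan has three steps.

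\textbf{Step 1 (Taylor expansion).} Write
\[
\tilde f_0(\tilde X_i)-\tilde f_0(x_0)=\tilde f_0'(x_0)(\tilde X_i-x_0)+R_i ,
\]
with $|R_i|\le \omega_n(Hn^{-1/3})\,|\tilde X_i-x_0|$, where $\omega_n$ is the modulus of continuity of $\tilde f_0'$ on $\cX$. Then
\[
n^{1/3}\big(\overline{\tilde f}_{[\ell_n,u_n]}-\tilde f_0(x_0)\big)=\tilde f_0'(x_0)\,n^{1/3}\,\overline{(\tilde X-x_0)}_{[\ell_n,u_n]}+O\big(H\,\omega_n(Hn^{-1/3})\big).
\]

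To control $\omega_n$, I would use the representation $\tilde f_0=\int y\,\tilde p_n(x,y)\,\d y/\tilde p_X(x)$ together with Assumption~\ref{ass:generator}(b). The bounds on $\sD\tilde p_n$ and $\sD^2\tilde p_n$, the lower bound on $\tilde p_X$, and the uniform compactness of the $y$-support from Assumption~\ref{assump_bootstrap}(c) give a uniform (in $n$) bound on $\tilde f_0''$ on $\cX$. Hence the remainder is $O(n^{-1/3})$ almost surely. Lemma~\ref{lemma:isocvg} then gives $\tilde f_0'(x_0)\to f_0'(x_0)$ in probability.

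\textbf{Step 2 (local average of $\tilde X-x_0$).} It remains to show
\[
\sup_{\ell,u}\Big|n^{1/3}\,\overline{(\tilde X-x_0)}_{[\ell_n,u_n]}-\tfrac{u-\ell}{2}\Big|=o_\P(1).
\]
Set $t=n^{1/3}(\tilde X_i-x_0)$. The count of bootstrap points in the window, $N_n(\ell,u)$, is conditionally $\mathrm{Bin}(n,\tilde p_{\ell,u})$, with
\[
\tilde p_{\ell,u}=\int_{\ell_n}^{u_n}\tilde p_X=n^{-1/3}(u+\ell)\,\tilde p_X(x_0)\,(1+O(n^{-1/3})),
\]
using the uniform Lipschitz bound on $\tilde p_X$ (obtained by integrating $\sD\tilde p_n$ over the compact $y$-range). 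Likewise,
\[
\E\big[t\,\mathds 1(\cdot)\mid\cO\big]=n^{-1/3}\tilde p_X(x_0)\,\frac{u^2-\ell^2}{2}\,(1+O(n^{-1/3})).
\]
So the ratio of conditional means equals $(u-\ell)/2+O(n^{-1/3})$ uniformly in $(\ell,u)$. Here $\ell+u\ge k$ keeps the denominator of order $n^{2/3}$, and $\tilde p_X(x_0)\ge\tilde K^{-1}$.

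\textbf{Step 3 (uniform concentration; the main obstacle).} The hard part is making the ratio concentrate uniformly over the index set, since both the numerator sum $\sum_i t_i\mathds 1(\cdot)$ and $N_n$ fluctuate at relative order $n^{-1/3}$. Pointwise, I would apply Bernstein's inequality conditionally on $\cO$: each summand is bounded by $H$ and has variance $O(n^{-1/3})$, giving deviations $O_\P(n^{1/3}\sqrt{\log n})$ against means of order $n^{2/3}$. For uniformity, I would use the VC property of intervals: a chaining or bracketing argument over a grid in $(\ell,u)$ of mesh $n^{-1/3}$. Between grid points, monotonicity of the counts in the endpoints, plus the fact that $t$ is bounded, controls the oscillation. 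This yields relative error $o_\P(1)$ uniformly, which I would then transfer to unconditional $o_\P(1)$ by dominated convergence.

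Combining Steps 1–3 with Slutsky-type bounds, $\tilde f_0'(x_0)(u-\ell)/2\to f_0'(x_0)(u-\ell)/2$ uniformly on the bounded set $|u-\ell|\le H$, which gives the claim.
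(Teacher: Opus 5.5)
Your argument is correct, but it is organized differently from the paper's.

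\textbf{The paper's route.} The paper writes the statistic as $\frac{n^{2/3}}{N_n}\cdot n^{-1/3}\sum_i(\tilde f_0(\tilde X_i)-\tilde f_0(x_0))\mathds 1(\cdot)$. It then proceeds in three moves:
\begin{enumerate}
\item It replaces the sum by $n$ times its conditional mean, using the conditional Dudley bound (Lemma~\ref{lemma:wellvdv229}) on a function class built from the random $\tilde f_0$.
\item It replaces the count by $n^{2/3}(u+\ell)p_X(x_0)$, using the mean value theorem and $\tilde p_X\to p_X$ (Lemma~\ref{lemma:isocvg}).
\item Only then does it Taylor-expand $\tilde f_0$ inside the conditional expectation, and it swaps $(\tilde f_0',\tilde p_X)$ for $(f_0',p_X)$ to evaluate the integral.
\end{enumerate}

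\textbf{Your route.} You Taylor-expand pathwise first, so the remainder is bounded deterministically by $CH^2n^{-1/3}$ with no concentration needed. The only stochastic object left is the self-normalized local average of $n^{1/3}(\tilde X_i-x_0)$, which is a bounded function of $\tilde X$ alone with a non-random envelope. Because you compare it with the ratio of its conditional means, $\tilde p_X(x_0)$ cancels. So you need only the lower bound $\tilde K^{-1}$ and a Lipschitz bound on $\tilde p_X$, and Lemma~\ref{lemma:isocvg} enters only through $\tilde f_0'(x_0)\to f_0'(x_0)$.

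\textbf{What each buys.} Your Bernstein-plus-grid step is also better matched to the sparse summands here. Increments of the paper's process are of order $\sqrt{|u-u'|}$ in $L_2$, so the Lipschitz $\psi_2$-increment hypothesis of Lemma~\ref{lemma:wellvdv229} does not hold verbatim. The paper's route, in turn, produces the intermediate display \eqref{eq:biascalc:margdensitycvg}, with deterministic limit $1/((u+\ell)p_X(x_0))$, which is reused in Lemma~\ref{lemma:fidi_uclt}. On your route you would obtain that separately by combining your count concentration with Lemma~\ref{lemma:isocvg}.

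\textbf{One small correction.} The relative form $(1+O(n^{-1/3}))$ for $\E[t\,\mathds 1(\cdot)\mid\cO]$ fails when $u=\ell$, because the leading term vanishes. State the error as an absolute $O(n^{-2/3})$ instead. That is all your ratio computation actually uses, since the denominator is at least $k\tilde K^{-1}n^{-1/3}$.
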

\begin{proof}
We take $n$ large enough so that $[x_0 - H n^{-1/3}, x_0 +H n^{-1/3}] \subseteq \cX$, which is possible since $x_0$ is set to be an interior point of $\cX$. Since $u,\ell \leq H$, we have $[x_0 - \ell n^{-1/3}, x_0 + u n^{-1/3}] \subseteq \cX$ for all $0 \leq u, \ell \leq H$ simultaneously. Without loss of generality, we assume $n$ is large enough that this holds for the remainder of the proof. As a result,
    $$
    \Big|\Big\{i : \tilde X_i \in [\ell_n, u_n] \cap \cX\Big\}\Big| = \Big|\Big\{i : \tilde X_i \in [\ell_n, u_n]\Big\}\Big| \quad \text{and} \quad \mathds{1}\Big(\tilde X_i \in [\ell_n, u_n] \cap \cX\Big) = \mathds{1}\Big(\tilde X_i \in [\ell_n, u_n]\Big).
    $$
        Rewriting the scaled bias,
    \begin{equation}
    n^{1/3}(\overline{\tilde f}_{[\ell_n,u_n]} - \tilde f_0(x_0)) = \frac{n \cdot n^{-1/3}}{|\{\tilde X_i : \ell_n \leq \tilde X_i \leq u_n\}|} \cdot \frac{1}{n^{1/3}}\sum_{i=1}^n ( \tilde f_0(\tilde X_i) - \tilde f_0(x_0)) \cdot \mathds{1}(\ell_n \leq \tilde X_i \leq u_n). \label{eq:biascalc:scaledbias}
    \end{equation}

    A routine application of Lemma \ref{lemma:wellvdv229} and Markov's inequality applied to the functions
    $$
    \Big\{f_{u,\ell}: x \mapsto n^{1/2}(\tilde f_0(x) - \tilde f_0(x_0)) \cdot\mathds{1}(-\ell n^{-1/3} \leq x - x_0\leq u n^{-1/3}) : (\ell,u) \in (0,H] \times [0,H]\Big\}
    $$
    demonstrates
    \begin{equation}
    \sup_{(\ell,u) \in (0,H] \times [0,H]}  \ap{\frac{1}{\sqrt{n}}\sum_{i=1}^n f_{u,\ell}(\tilde X_i) - \E[f_{u,\ell}(\tilde X_1) \mid \cO]} = O_{\P}(1) .\label{eq:biascalc:donsker}
    \end{equation}     
By similar reasoning applied to indicator functions, we also obtain
    $$
    \sup_{(\ell,u) \in (0,H] \times [0,H]}  \ap{\frac{|\{\tilde X_i : \ell_n \leq \tilde X_i \leq u_n\}|}{ n \cdot n^{-1/3}} -   \frac{\P(\ell_n \leq \tilde X_i \leq u_n \mid \cO)}{n^{-1/3} } }= o_{\P}(1).
    $$
    Next, for sufficiently large $n$, the mean value theorem combined with Lemma \ref{lemma:margdiff}  ensures
    \al{
    \sup_{(\ell,u) \in (0,H] \times [0,H]}  \ap{ \frac{\P(\ell_n \leq \tilde X_i \leq u_n \mid \cO) }{ n^{-1/3}} - (u + \ell) \cdot \tilde p_{X}(x_0)} \leq 2H^2\cdot\sup_{t \in \cX} |\tilde p_X'(t)| \cdot n^{-1/3}.
    }

By the first conclusion of Lemma \ref{lemma:isocvg}, we have 
\[ 
    \sup_{(\ell,u) \in (0,H] \times [0,H]}\Big|(u+\ell) \cdot (\tilde p_X(x_0) - p_X(x_0))\Big| = o_{\P}(1),
    \]
    so that the continuous mapping theorem, combined with the fact that $(u+\ell)\cdot p_X(x_0)$ is bounded uniformly below from Assumption \ref{assump_iso}(d), yields
    \begin{equation}
    \sup_{\substack{(\ell,u) \in (0,H] \times [0,H]\\ \ell + u \geq k}} \ap{\frac{n \cdot n^{-1/3}}{ |\{\tilde X_i : \ell_n \leq \tilde X_i \leq u_n\}|} - \frac{1}{(u+\ell) \cdot p_X(x_0)}} = o_{\P}(1). \label{eq:biascalc:margdensitycvg}
    \end{equation}
    
    Applying \eqref{eq:biascalc:donsker} and \eqref{eq:biascalc:margdensitycvg} to  \eqref{eq:biascalc:scaledbias} then yields
    $$
    \sup_{\substack{(\ell,u) \in (0,H] \times [0,H]\\ \ell + u \geq k}}  \ap{n^{1/3}(\overline{\tilde f}_{[\ell_n,u_n]} - \tilde f_0(x_0)) - \frac{n^{2/3}}{(u+\ell) \cdot p_X(x_0)} \cdot  \ep{ (\tilde f_0(\tilde X_1) - \tilde f_0(x_0)) \cdot \mathds{I}(\ell_n \leq \tilde X_1 \leq u_n) \mid \mathcal{O}}}  = o_{\P}(1).
    $$
   
    Next, we aim to show
    \begin{equation}\label{eq:biascalc:lefthandside}
    \sup_{\substack{(\ell,u) \in (0,H] \times [0,H]\\ \ell + u \geq k}}  \ap{\frac{n^{2/3}}{(u+\ell) \cdot p_X(x_0)} \cdot  \pa{\ep{ \pa{\tilde f_0(\tilde X_1) - \tilde f_0(x_0) - \tilde f_0'(x_0) \cdot (\tilde X_1 - x_0)} \cdot \mathds{I}(\ell_n \leq \tilde X_1 \leq u_n) \mid \mathcal{O}}}} = o_{\P}(1). 
    \end{equation}
    
    Since $n$ is large enough so that $[x_0 - \ell n^{-1/3}, x_0 + un^{-1/3}] \subseteq  \cX$, for all $(\ell,u) \in (0,H] \times [0,H]$, $\tilde f_0$ is twice continuously differentiable on $ \cX$,  by Lemma \ref{lemma:tildef0_twicediff}. Hence, Taylor's theorem with remainder yields an upper bound of \eqref{eq:biascalc:lefthandside} 
    \begin{equation}\label{eq:biascalc:innerintegral}
        \sup_{\substack{(\ell,u) \in (0,H] \times [0,H]\\ \ell + u \geq k}} \ap{\frac{n^{2/3}}{(u+\ell) \cdot p_X(x_0)} \cdot \sup_{x \in  \cX}\frac{\tilde f_0''(x)}{2}\cdot  \ep{ (\tilde X_1 - x_0)^2 \cdot \mathds{I}(\ell_n \leq \tilde X_1 \leq u_n) \mid \mathcal{O}}}.
    \end{equation}
    Evaluating the expectation in \eqref{eq:biascalc:innerintegral}, we deduce
    \al{
    \ep{ (\tilde X_1 - x_0)^2 \cdot \mathds{I}(\ell_n \leq \tilde X_1 \leq u_n) \mid \mathcal{O}} &= \int_{\ell_n}^{u_n} (x - x_0)^2 \tilde p_X(x) \, \d x \tag{definition of expectation}\\
    &\leq \sup_{t \in \cX} |\tilde p_X(t)| \int_{\ell_n}^{u_n} (x-x_0)^2 \, \d x \\
    &= \frac{\sup_{t \in \cX} |\tilde p_X(t)|\cdot(u^3 - \ell^3)}{3n}.
    }
    Thus, we obtain
    $$
   \sup_{\substack{(\ell,u) \in (0,H] \times [0,H]\\ \ell + u \geq k}} \ap{\frac{n^{2/3}}{(u+\ell) \cdot p_X(x_0)} \cdot \sup_{x \in  \cX}\frac{\tilde f_0''(x)}{2} \cdot \frac{\sup_{t \in \cX} |\tilde p_X(t)|\cdot(u^3 - \ell^3)}{3n}}  = o_{\P}(1),
    $$
so that
    \begin{equation}\label{eq:biascalc:theintegral}
\sup_{\substack{(\ell,u) \in (0,H] \times [0,H]\\ \ell + u \geq k}}\ap{n^{1/3}(\overline{\tilde f}_{[\ell_n,u_n]} - \tilde f_0(x_0)) - \frac{n^{2/3}\tilde f_0'(x_0) }{(u+\ell) \cdot p_X(x_0)} \cdot \E\Bigg[ (\tilde X_1 - x_0) \cdot \mathds{1}(\ell_n \leq \tilde X_1 \leq u_n) \mid \cO\Bigg]} = o_{\P}(1)
    \end{equation}
    Since Lemma \ref{lemma:isocvg} implies that $\sup_{x\in  \cX} |\tilde p_X(x) - p_X(x)| = o_{\P}(1)$, and $\sup_{x\in  \cX} |\tilde f_0'(x) - f_0'(x)| = o_{\P}(1)$, calculating the integral in \eqref{eq:biascalc:theintegral} demonstrates
    \begin{equation}\label{eq:biascalc:theexpectation}
        \sup_{\substack{(\ell,u) \in (0,H] \times [0,H]\\ \ell + u \geq k}}\ap{n^{1/3}(\overline{\tilde f}_{[\ell_n,u_n]} - \tilde f_0(x_0)) - \frac{n^{2/3}f_0'(x_0) }{(u+\ell) \cdot p_X(x_0)} \cdot \E\Bigg[ (X_1 - x_0) \cdot \mathds{1}(\ell_n \leq X_1 \leq u_n) \Bigg]} = o_{\P}(1).
    \end{equation}
    Finally, evaluating the expectation in \eqref{eq:biascalc:theexpectation}, we obtain
    $$
\sup_{\substack{(\ell,u) \in (0,H] \times [0,H]\\ \ell + u \geq k}} \ap{n^{1/3}(\overline{\tilde {f}}_{[\ell_n, u_n]} - \tilde f_0(x_0)) - \frac{f_0'(x_0)}{2}(u-\ell) } = o_{\P}(1),
$$
which finishes the proof. 
\end{proof}
Next, we note a consequence of the central limit theorem and Lemma \ref{lemma:biascalc}.

\begin{lemma}\label{lemma:fidi_uclt}
Suppose that Assumptions \ref{assump_noise1}, \ref{assump_bootstrap}, \ref{assump_iso}, and \ref{ass:generator} hold.
Then, for any $k, H >0$,
$$
    \sup_{t \in \R} \sup_{\substack{(\ell,u) \in (0,H] \times [0,H]\\ \ell + u \geq k} }\ap{\pr{n^{1/3}(\overline {\tilde Y}_{[\ell_n, u_n]} - \tilde f_0(x_0)) \leq t \cdl \mathcal{O}} - \pr{G_{\ell, u}\leq t)}} = o_{\P}(1).
    $$
\end{lemma}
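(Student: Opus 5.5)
The plan is to split the centered local average into a bias part and a noise part:
\[
n^{1/3}\big(\overline{\tilde Y}_{[\ell_n,u_n]}-\tilde f_0(x_0)\big)
= n^{1/3}\big(\overline{\tilde f}_{[\ell_n,u_n]}-\tilde f_0(x_0)\big)
+ n^{1/3}\,\overline{\tilde\xi}_{[\ell_n,u_n]} .
\]
By Lemma~\ref{lemma:biascalc}, the first term equals $\frac{f_0'(x_0)}{2}(u-\ell)+o_{\P}(1)$ uniformly over $(\ell,u)\in(0,H]\times[0,H]$ with $\ell+u\ge k$. Hence it suffices to show that $n^{1/3}\overline{\tilde\xi}_{[\ell_n,u_n]}$ is conditionally asymptotically $N\big(0,\sigma^2/\{p_X(x_0)(u+\ell)\}\big)$, uniformly in $(\ell,u)$ and in $t$. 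That Gaussian is exactly the law of the random part of $G_{\ell,u}$, since $\B(u)-\B(-\ell)\sim N(0,u+\ell)$. Because the limiting variance is bounded away from zero on the index set (using $\ell+u\ge k$ and $\ell+u\le 2H$), the Gaussian cdfs are uniformly equicontinuous in $t$. A uniformly small perturbation of the location can then be absorbed, and a Pólya-type argument (Lemma~\ref{lemma:polya_cdl}) turns weak convergence into sup-norm cdf convergence.

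For the noise term I would again use the ratio form. As in the proof of Lemma~\ref{lemma:biascalc}, \eqref{eq:biascalc:margdensitycvg} shows that $n^{2/3}/|\{i:\tilde X_i\in[\ell_n,u_n]\}|$ equals $1/\{(u+\ell)p_X(x_0)\}+o_{\P}(1)$ uniformly. It therefore remains to study
\[
S_n(\ell,u):=n^{-1/3}\sum_{i=1}^n \tilde\xi_i\,\mathds 1(\ell_n\le\tilde X_i\le u_n).
\]
Conditionally on $\cO$ the summands are i.i.d. with mean zero, because $\E[\tilde\xi_i\mid\tilde X_i,\cO]=0$ by the definition of $\tilde f_0$. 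Their variance is $n^{-2/3}\int_{\ell_n}^{u_n}\tilde\sigma^2(x)\tilde p_X(x)\,\d x$, where $\tilde\sigma^2(x)=\mathrm{Var}(\tilde Y\mid\tilde X=x,\cO)$, so $n\cdot\mathrm{Var}=(u+\ell)\tilde\sigma^2(x_0)\tilde p_X(x_0)+o(1)$ by continuity.

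The key input is $\sup_{x\in\cX}|\tilde\sigma^2(x)-\sigma^2|=o_{\P}(1)$. This follows from Lemma~\ref{lemma:isocvg}: the uniform convergence of $\tilde p_n$ on $\cX\times\tilde\cY$, the lower bound on $\tilde p_X$, and the bounded support let us write $\tilde\sigma^2(x)$ as a ratio of integrals, exactly as in Step~5 there. The true conditional variance is $\sigma^2$ at every $x$ because $\xi$ is independent of $X$. The summands are bounded by $2\sup_{\tilde\cY}|y|\,n^{-1/3}$, so the Lyapunov ratio is of order $n\cdot n^{-1}\cdot n^{-1/3}/n^{-1/2}\to0$. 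Passing to almost surely convergent subsequences via Lemma~\ref{lemma:prob_and_as}, the conditional Lyapunov CLT (Lemma~\ref{lemma:lyapunov_cdl}) gives, for each fixed $(\ell,u)$,
\[
S_n(\ell,u)\ \text{converges weakly to}\ N\big(0,\sigma^2p_X(x_0)(u+\ell)\big).
\]
Dividing by $(u+\ell)p_X(x_0)$ produces the required variance.

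The main obstacle is uniformity over $(\ell,u)$, since the CLT is pointwise. My first attempt would be to bound the CLT error quantitatively with a conditional Berry--Esseen bound. The constants involved (third moment over variance$^{3/2}$) are uniform over the compact index set because the variance is bounded below by $c\,k$, which gives an $O(n^{-1/6})$ rate uniformly in $(\ell,u)$ and $t$. The ratio and bias corrections are already uniform $o_{\P}(1)$ perturbations, and they are absorbed by the uniform equicontinuity of the Gaussian cdfs. If Berry--Esseen with a random denominator proves awkward, I would fall back on a finite grid in $(\ell,u)$. The gaps between grid points would be controlled by the equicontinuity in probability of $S_n$, obtained from the maximal inequality in Lemma~\ref{lemma:wellvdv229} applied to the increments on the shrinking intervals, combined with monotonicity of the cdfs in the endpoints.
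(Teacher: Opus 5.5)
Your proposal is correct and follows the paper's skeleton. It uses the same split into bias, random normalizer and noise, and Lemma~\ref{lemma:biascalc}, \eqref{eq:biascalc:margdensitycvg} and Lemma~\ref{lemma:isocvg} do the same jobs. The paper's Step~3 is your $\tilde\sigma^2\to\sigma^2$ argument phrased as a variance ratio, and the mean term it bounds in its Step~1 is, as you note, identically zero.

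The genuine difference is the CLT engine. The paper passes to almost surely convergent subsequences and invokes the empirical-process uniform CLT of Lemma~\ref{lemma:uclt}. It applies this to the array $(n^{1/3}(\tilde X_i-x_0),n^{1/6}\tilde\xi_i)$ over the class $\{\xi\mathds{1}(-\ell\le x\le u)\}$. You instead use a conditional Berry--Esseen bound with constants uniform on the index set.

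Because the supremum over $(\ell,u)$ in the statement sits outside $\P(\cdot\mid\cO)$, your approach suffices, and the random denominator you were worried about is harmless:
\begin{itemize}
\item Apply Berry--Esseen only to $S_n(\ell,u)$, which is an i.i.d.\ sum given $\cO$.
\item Absorb the normalizer error times $S_n(\ell,u)$ by Chebyshev, $\sup_{(\ell,u)}\P(|S_n(\ell,u)|>M\mid\cO)\le CH/M^2$.
\item Use the fact that a joint $o_{\P}(1)$ bound on a supremum gives $\P(\sup|\cdot|>\epsilon\mid\cO)=o_{\P}(1)$ by Markov's inequality. The bias term is handled the same way.
\end{itemize}

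Your route is more elementary. It needs no equicontinuity, entropy or subsequence bookkeeping, and it gives an explicit rate. The paper's route is built on the hypotheses of a functional CLT, that is, conditional weak convergence of the whole process $(\ell,u)\mapsto S_n(\ell,u)$ in $\ell^\infty$. That process-level statement is what the proof of Theorem~\ref{thm:iso} actually uses when it pushes this lemma through the $\sup_\ell\inf_u$ functional, and a marginal Berry--Esseen bound does not deliver it.

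If you ever need the process-level version, your grid fallback would require a Bernstein-type or bracketing maximal inequality rather than Lemma~\ref{lemma:wellvdv229}. The summands of $S_n$ are bounded only by $Cn^{-1/3}$, so Hoeffding gives $\psi_2$-constants of order $n^{1/6}$ for its increments. Also, the natural metric is $(|\ell-\ell'|+|u-u'|)^{1/2}$, not a Lipschitz one.

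Two small corrections.
\begin{itemize}
\item As written, $n\cdot n^{-1}\cdot n^{-1/3}/n^{-1/2}=n^{1/6}$, which diverges. Set $V_1=n^{-1/3}\tilde\xi_1\mathds{1}(\ell_n\le\tilde X_1\le u_n)$ and $s_n^2=n\,\mathrm{Var}(V_1\mid\cO)$, which is of order one. The correct Lyapunov (and Berry--Esseen) ratio is then $n\,\E[|V_1|^3\mid\cO]/s_n^3\lesssim(u+\ell)n^{-1/3}/(ck)^{3/2}=O(n^{-1/3})$. Your $O(n^{-1/6})$ is valid but loose.
\item The lower bound $s_n^2\ge ck$ relies on $\inf_x\tilde\sigma^2(x)\ge\sigma^2/2$. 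It therefore holds only on an event whose probability tends to one, which is all you need.
\end{itemize}
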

\begin{proof}
 Rewrite the local average as
    \begin{equation}
    n^{1/3}\Big(\overline {\tilde Y}_{[\ell_n, u_n]} - \tilde f_0(x_0)\Big) = n^{1/3}\Big(\overline {\tilde \xi}_{[\ell_n, u_n]} + \overline{\tilde f}_{[\ell_n,u_n]}- \tilde f_0(x_0)\Big).\label{eq;fidi:firsteq}
\end{equation}
    We assume without loss of generality that $n$ is large enough so that $[x_0 - H n^{-1/3}, x_0 + Hn^{-1/3}] \subseteq \cX$.     
    
~\\    
{\bf Step 1.}  For \eqref{eq;fidi:firsteq}, we first obtain
    \begin{align}
   n^{1/3}\overline {\tilde \xi}_{[\ell_n, u_n]}= &\frac{n^{1/3} \cdot \sqrt{n \cdot n^{-1/3}}} {|\{i: \ell_n \leq \tilde X_i\leq u_n\}|} \cdot \frac{\sum_{i=1}^n  \{\tilde \xi_{i} \cdot\mathds{1}(\ell_n \leq \tilde X_i\leq u_n) - \E[\tilde\xi_{1} \cdot\mathds{1}(\ell_n \leq \tilde X_1\leq u_n) \mid \cO]\}}{\sqrt{n \cdot n^{-1/3}}} \nonumber\\
    &+ \frac{n^{1/3}}{|\{i: \ell_n \leq \tilde X_i\leq u_n\}|} \cdot \E[\tilde\xi_{1} \cdot\mathds{1}(\ell_n \leq \tilde X_1\leq u_n) \mid \cO]. \label{eq:fidi:bigterm}
    \end{align}
 Supposing $[\ell_n, u_n] \subseteq  \cX$, it holds true that
\begin{align}
\Big|\tilde\xi_{1} \cdot \mathds{1}(\ell_{n} \leq \tilde X_{1}\leq u_{n})\Big| &= \Big|(\tilde Y_{1} - \tilde f_{0}(\tilde X_{1})) \cdot \mathds{1}(\ell_{n} \leq \tilde X_{1}\leq u_{n})\Big| \nonumber \tag{definition of $\tilde \xi_{1}$}\\
&\leq (|\tilde Y_{1}| + |\tilde f_{0}(\tilde X_{1})|) \cdot \mathds{1}(\ell_{n} \leq \tilde X_{1}\leq u_{n}) \nonumber \tag{triangle inequality}\\
&\leq 2\cdot \sup_{\mz \in \tilde \cZ} \|\mz\|_2 \cdot \mathds{1}(\ell_{n} \leq \tilde X_{1}\leq u_{n}), \label{eq:fidi:xibound}
\end{align}
where the final bound comes from the fact that $\tilde f_0$ is defined as a conditional expectation of $\tilde Y_1$ and Assumption \ref{assump_bootstrap}(c). Equation \eqref{eq:biascalc:margdensitycvg} then ensures
$$
\sup_{\substack{(\ell,u) \in (0,H] \times [0,H]\\ \ell + u \geq k}}\ap{\frac{n^{1/3}}{|\{i: \ell_n \leq \tilde X_i\leq u_n\}|} \cdot \E[\tilde\xi_{1} \cdot\mathds{1}(\ell_n \leq \tilde X_1\leq u_n) \mid \cO] }= o_{\P}(1).
$$

~\\
\textbf{Step 2: Reduce to a subsequence.}

Assumptions \ref{assump_bootstrap}(c) and \ref{assump_iso}(e) ensure that there exists a closed, bounded  interval $\tilde \cY$ for which $\cY \cup \tilde \cY_n \subseteq \tilde \cY$ for all $n$, almost surely.  We then appeal to Lemma \ref{lemma:prob_and_as}. Take any subsequence $n_k$. 
 Lemma \ref{lemma:isocvg} and Assumption \ref{assump_bootstrap}(b) ensure the existence of a further subsequence $n_{k_m}$ such that
$$
\sup_{x \in  \cX}\Big|\tilde f_{0,n_{k_m}}(x) - f_0(x)\Big|\rightarrow 0, \sup_{x \in  \cX \times \tilde\cY}\Big|\tilde p_{n_{k_m}}(\mz) - p_Z(\mz)\Big|\rightarrow 0,  \text{ and }  \sW_1(\bP_{\tilde Z_{1, n_{k_m}} \mid \cO} , \P_{Z}) \rightarrow 0
$$
almost surely. Furthermore, by \eqref{eq:biascalc:margdensitycvg} and Lemma \ref{lemma:biascalc},  suppose that this subsequence also satisfies
$$
\sup_{\substack{(\ell,u) \in (0,H] \times [0,H]\\ \ell + u \geq k}}\ap{\frac{n_{k_m} \cdot n_{k_m}^{-1/3}}{ |\{i : \ell_{n_{k_m}} \leq \tilde X_{i,n_{k_m}} \leq u_{n_{k_m}}\}|} - \frac{1}{(u+\ell)\cdot p_X(x_0)}} \rightarrow 0
$$
and 
$$
\sup_{\substack{(\ell,u) \in (0,H] \times [0,H]\\ \ell + u \geq k}}\ap{n^{1/3}(\overline{\tilde {f}}_{[\ell_{n_{k_m}}, u_{n_{k_m}}]} - \tilde f_0(x_0)) - \frac{f_0'(x_0)}{2}(u-\ell) } \rightarrow 0 
$$
almost surely.

~\\
\textbf{Step 3: Check that the ratio of variances uniformly converges to $1$.}

This step aims to show that
$$
\sup_{(\ell,u) \in (0,H] \times [0,H]}\ap{\frac{{\textrm{Var}}(\tilde \xi_{1,n_{k_m}} \cdot\mathds{1}(\ell_{n_{k_m}} \leq \tilde X_{1,n_{k_m}}\leq u_{n_{k_m}}) \mid \cO)}{{\textrm{Var}}( \xi_{1} \cdot\mathds{1}(\ell_{n_{k_m}} \leq  X_{1}\leq u_{n_{k_m}})) }-1} \rightarrow 0 
$$
almost surely. Equivalently, we show 
\begin{equation}\label{eq:uclt:theratio}
\sup_{(\ell,u) \in (0,H] \times [0,H]}\ap{\frac{{\textrm{Var}}(\tilde \xi_{1,n_{k_m}} \cdot\mathds{1}(\ell_{n_{k_m}} \leq \tilde X_{1,n_{k_m}}\leq u_{n_{k_m}}) \mid \cO) - {\textrm{Var}}( \xi_{1} \cdot\mathds{1}(\ell_{n_{k_m}} \leq  X_{1}\leq u_{n_{k_m}})) }{{\textrm{Var}}( \xi_{1} \cdot\mathds{1}(\ell_{n_{k_m}} \leq  X_{1}\leq u_{n_{k_m}})) } } \rightarrow 0
\end{equation}
almost surely. Rewriting the denominator of \eqref{eq:uclt:theratio},
\begin{align}
{\textrm{Var}}( \xi_{1} \cdot\mathds{1}(\ell_{n_{k_m}} \leq  X_{1}\leq u_{n_{k_m}}))  &= \E[\xi_1^2 \cdot \mathds{1}(\ell_{n_{k_m}} \leq X_1 \leq u_{n_{k_m}})^2] - (\E[\xi_1 \cdot \mathds{1}(\ell_{n_{k_m}} \leq X_1 \leq u_{n_{k_m}})])^2 \notag\\
&= \E[\xi_1^2 \cdot \mathds{1}(\ell_{n_{k_m}} \leq X_1 \leq u_{n_{k_m}})^2] \nonumber\\ 
&=\E[\xi_1^2 \cdot \mathds{1}(\ell_{n_{k_m}} \leq X_1 \leq u_{n_{k_m}})] \notag\\ 
&= \sigma^2 \cdot \P(\ell_{n_{k_m}} \leq X_1 \leq u_{n_{k_m}})\notag\\ 
&= \sigma^2 \int_{\ell_{n_{k_m}}}^{u_{n_{k_m}}} p_X(t) \, \d t \nonumber\\ 
&\geq \sigma^2 \cdot \inf_{t \in \cX} p_X(t) \cdot (u + \ell) \cdot n_{k_m}^{-1/3}.\label{eq:fidi:ratio_denom}
\end{align}

Next, the numerator of \eqref{eq:uclt:theratio} is upper bounded by
\al{
\Big|\E\Big[\tilde \xi_{1,n_{k_m}}^2 \cdot \mathds{1}(\ell_{n_{k_m}} \leq \tilde X_{1,n_{k_m}}\leq u_{n_{k_m}}) - \xi_1^2 \cdot \mathds{1}(\ell_{n_{k_m}} \leq X_{1}\leq u_{n_{k_m}}) \mid \cO\Big]\Big| \\
+  \Big|\Big(\E\Big[\tilde \xi_{1,n_{k_m}} \cdot \mathds{1}(\ell_{n_{k_m}} \leq \tilde X_{1,n_{k_m}}\leq u_{n_{k_m}}) \mid \cO\Big]\Big)^2\Big|.
}
Similar arguments to \eqref{eq:fidi:xibound} give
\begin{equation}\label{eq:uclt:firstequality}
\Big|\Big(\E\Big[\tilde \xi_{1,n_{k_m}} \cdot \mathds{1}(\ell_{n_{k_m}} \leq \tilde X_{1,n_{k_m}}\leq u_{n_{k_m}}) \mid \cO\Big]\Big)^2\Big| = (u + \ell)^2 \cdot  O_{\P}(n_{k_m}^{-2/3})
\end{equation}
 and, by Lemma \ref{lemma:isocvg}, we have
\begin{equation}\label{eq:uclt:secondequality}
\Big|\E\Big[\tilde \xi_{1,n_{k_m}}^2 \cdot \mathds{1}(\ell_{n_{k_m}} \leq \tilde X_{1,n_{k_m}}\leq u_{n_{k_m}}) - \xi_1^2 \cdot \mathds{1}(\ell_{n_{k_m}} \leq X_{1}\leq u_{n_{k_m}}) \mid \cO\Big]\Big| = (u + \ell) \cdot o_{\P}(n_{k_m}^{-1/3}).
\end{equation}
Combining \eqref{eq:fidi:ratio_denom}, \eqref{eq:uclt:firstequality}, and \eqref{eq:uclt:secondequality} in \eqref{eq:uclt:theratio} then proves the claim. 

~\\
\textbf{Step 4: Obtain a uniform central limit theorem using Lemma \ref{lemma:uclt}, and conclude the proof.}

Consider the following function class
$$
\Big\{h_{\ell,u}(x,\xi) = \xi \cdot \mathds{1}(-\ell  \leq x  \leq u):0 \leq  \ell, u \leq K, \ell + u \geq k \Big\}
$$
coupled with the triangular array $
\{(n^{1/3}(\tilde X_{i,n} - x_0) ,  n^{1/6}\tilde\xi_{i,n})\}_{i \in [n]} $ and envelope function $(x, \xi) \mapsto \xi \cdot \mathds{1}(-K \leq x \leq K)$. Lemma \ref{lemma:uclt} combined with {\bf Step 3} demonstrates that the following term that comes from \eqref{eq:fidi:bigterm}, 
$$
\frac{1}{\sqrt{{n_{k_m}}}}\sum_{i=1}^{n_{k_m}}  {n_{k_m}^{1/6}} \cdot \Big(\tilde \xi_{i,n_{k_m}} \cdot\mathds{1}(\ell_{n_{k_m}} \leq \tilde X_{i,n_{k_m}}\leq u_{n_{k_m}}) - \E\Big[\tilde\xi_{1,n_{k_m}} \cdot\mathds{1}(\ell_{n_{k_m}} \leq \tilde X_{1,n_{k_m}}\leq u_{n_{k_m}}) \mid \cO\Big]\Big),
$$
converges weakly to some mean-zero Gaussian process uniformly in $u, \ell \in (0,H] \times [0,H]$ with $\ell + u \geq k$ with variance function $(\ell, u) \mapsto (\ell + u) \cdot \sigma^2 \cdot p_X(x_0)$,  conditionally on $\cO$. 


Note that, by our choice of the subsequence, it holds true that
$$
 \sup_{\substack{(\ell,u) \in (0,H] \times [0,H]\\ \ell + u \geq k}}\ap{\frac{n_{k_m} \cdot n_{k_m}^{-1/3}}{ |\{i : \ell_{n_{k_m}} \leq \tilde X_{i,n_{k_m}} \leq u_{n_{k_m}}\}|} - \frac{1}{(u+\ell) \cdot p_X(x_0)}} \rightarrow 0
$$
almost surely.  
We then obtain
$$
{n_{k_m}}^{1/3}(\overline {\tilde Y}_{[\ell_{n_{k_m}}, u_{n_{k_m}}]} - \tilde f_{0,n_{k_m}}(x_0)) \text{ converges weakly}\text{ to } G_{\ell, u} \text{ conditional on }\cO
$$
uniformly in $(\ell,u) \in (0,H] \times [0,H]$, with $\ell + u \geq k$, which further implies 
$$
    \sup_{t \in \R} \sup_{\substack{(\ell,u) \in (0,H] \times [0,H]\\ \ell + u \geq k}} \ap{\pr{n_{k_m}^{1/3}(\overline {\tilde Y}_{[\ell_{n_{k_m}}, u_{n_{k_m}}]} - \tilde f_{0,n_{k_m}}(x_0)) \leq t \cdl \mathcal{O}} - \pr{G_{\ell, u}\leq t)}} \rightarrow 0
$$
almost surely. Since $n_{k}$ was an arbitrary subsequence, we conclude that 
$$
    \sup_{t \in \R}\sup_{\substack{(\ell,u) \in (0,H] \times [0,H]\\ \ell + u \geq k}} \ap{\pr{n^{1/3}(\overline {\tilde Y}_{[\ell_{n}, u_{n}]} - \tilde f_{0}(x_0)) \leq t \cdl \mathcal{O}} - \pr{G_{\ell, u}\leq t)}} = o_{\P}(1),
$$
and thus finish the proof.
\end{proof}

\begin{lemma}\label{lemma:ul_o1}
Suppose that Assumptions \ref{assump_noise1}, \ref{assump_bootstrap}, \ref{assump_iso}, and \ref{ass:generator} hold.  Then, there exist some random variables $(\tilde L^*, \tilde U^*)$ such that
$$
\P\pa{ \tilde f_n(x_0) = \overline{\tilde Y}_{[\tilde L^*_n, \tilde U^*_n]}} \rightarrow 1
$$
as $n\rightarrow \infty$, where $\tilde L^*_n = x_0 - \tilde L^* n^{-1/3}$ and $\tilde U^*_n = x_0 + \tilde U^* n^{-1/3}$. Furthermore, these random variables satisfy 
\begin{enumerate}
    \item $\tilde L^*_n, \tilde U^*_n \in \cX$ always;
    \item  $\P(|\{i \in [n] : \tilde X_i \in [\tilde L_n^*, \tilde U_n^*]\}| \neq 0) \rightarrow 1$;
    \item $\tilde U^* = \Theta_\P(1)$ and  $\tilde L^* = \Theta_\P(1)$
\end{enumerate} 
\end{lemma}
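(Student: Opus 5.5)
We will rely on the classical max–min (min–max) characterization of the isotonic least squares estimator, applied to the bootstrap sample restricted to $\{\tilde X_i\in\cX\}$:
\[
\tilde f_n(x_0)=\max_{s\le x_0}\ \min_{t\ge x_0}\ \overline{\tilde Y}_{[s,t]},
\]
where $s,t$ range over the design points in $\cX$. The random endpoints $(\tilde L^*_n,\tilde U^*_n)$ are then a saddle point, namely the endpoints of the block of the pooled-adjacent-violators partition that contains $x_0$. We write them as $\tilde L^*_n=x_0-\tilde L^* n^{-1/3}$ and $\tilde U^*_n=x_0+\tilde U^* n^{-1/3}$, which defines $\tilde L^*,\tilde U^*$. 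On the event that $x_0$ lies strictly inside the convex hull of the retained $\tilde X_i$, we have $\tilde f_n(x_0)=\overline{\tilde Y}_{[\tilde L^*_n,\tilde U^*_n]}$ by construction. This event has conditional probability tending to one because $\tilde p_X\ge \tilde K^{-1}$ on $\cX$ (Assumption \ref{ass:generator}(b)). Since the endpoints are chosen among retained design points, property 1 holds. Property 2 holds because the block always contains at least one point whenever the estimator is defined at $x_0$. Off this event, we set the endpoints to some default points in $\cX$.

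The substantive claim is property 3, $\tilde L^*,\tilde U^*=\Theta_\P(1)$, which amounts to the standard $n^{-1/3}$ localization of the block containing $x_0$, transferred to the bootstrap world. The plan is to follow the usual argument (for example, the switching or direct block argument of Groeneboom–Wellner type) via a peeling or chaining bound, using the bootstrap-world ingredients already established.

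\textbf{Tightness from above.} Take $\tilde U^*$ and first show $\P(\tilde U^*>K)\to0$ as $K\to\infty$, uniformly in large $n$. If $\tilde U^*>K$, then by the min–max characterization there exists $s\le x_0$ such that the average $\overline{\tilde Y}_{[s,t]}$ for $t=x_0+Kn^{-1/3}$ is at most the average over the whole block. Decompose each average as $\overline{\tilde f}+\overline{\tilde\xi}$. By Lemma \ref{lemma:isocvg}, $\tilde f_0'$ is uniformly close to $f_0'\ge c>0$ with probability tending to one, so the drift difference grows at least like $cK n^{-1/3}$. The noise averages over windows of length $h$ are controlled by a maximal inequality applied to the class $\{\tilde\xi\,\mathds{1}(s\le x\le t)\}$. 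This uses Lemma \ref{lemma:wellvdv229}, boundedness of $\tilde\xi$ from Assumption \ref{assump_bootstrap}(c), and the lower bound on $\tilde p_X$. Peeling over dyadic shells $t-x_0\in[2^jKn^{-1/3},2^{j+1}Kn^{-1/3}]$ gives noise fluctuations of order $(2^jK)^{-1/2}n^{-1/3}$ against drift of order $2^jK n^{-1/3}$. Summing the resulting tail bounds yields a quantity that vanishes as $K\to\infty$. Shells reaching the boundary of $\cX$ are handled with the same bound at order-one scale, using uniform convergence of $\tilde f_0$. The argument for $\tilde L^*$ is symmetric.

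\textbf{Tightness from below.} Next show $\P(\tilde U^*<k)\to0$ as $k\to0$. For this we use that on the event $\tilde U^*<k$ the block average must exceed the averages over all longer right extensions. The two-sided Brownian fluctuation makes this improbable for small $k$. Concretely, we would apply the conditional finite-dimensional limit from Lemma \ref{lemma:fidi_uclt}, together with the continuity of the limiting process $G_{\ell,u}$, to conclude that the probability of a block boundary falling within $kn^{-1/3}$ of $x_0$ is bounded by the probability that $\tilde f_n$ jumps in that window. The latter probability tends to zero with $k$, since the limiting process has no atoms for its argmax location.

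The main obstacle is the peeling step, because the bootstrap errors $\tilde\xi_i$ are not independent of $\tilde X_i$ and $\tilde f_0$ need not be monotone. The fix is to use only conditional mean zero, boundedness, and the uniform derivative convergence of Lemma \ref{lemma:isocvg}; monotonicity of $\tilde f_0$ is only needed locally, where it follows from $\tilde f_0'\approx f_0'>0$. We would first try the direct bound with Lemma \ref{lemma:wellvdv229} applied conditionally on $\cO$ along almost surely convergent subsequences, using Lemma \ref{lemma:prob_and_as}.
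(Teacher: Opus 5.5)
\textbf{There is a genuine gap in your tightness-from-above step.} The outer structure of your plan matches the paper's: saddle-point endpoints from the max--min formula of Lemma \ref{lemma:maxmin}, a drift-versus-noise bound for $\tilde U^*,\tilde L^*=O_\P(1)$, and the argmax continuous mapping theorem with Lemmas \ref{lemma:fidi_uclt} and \ref{lemma:glu_exists_tight} for the bound away from zero. But the step that carries the content rests on an inequality that is both misstated and too weak. For the block endpoints you chose, the max--min and min--max formulas give the saddle-point relations
\[
\overline{\tilde Y}_{[s,\tilde U_n^*]}\;\le\;\tilde f_n(x_0)\;\le\;\overline{\tilde Y}_{[\tilde L_n^*,t]}\qquad\text{for all } s\le x_0\le t .
\]
So for $t=x_0+Kn^{-1/3}$ inside the block, the left-anchored average is \emph{at least} the block average, not at most.

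Even with the direction corrected, comparing $[\tilde L_n^*,t]$ with the whole block $[\tilde L_n^*,\tilde U_n^*]$ gives a drift gap of order $f_0'(x_0)(\tilde U_n^*-t)/2$. This is arbitrarily small when $\tilde U_n^*$ lies just beyond $x_0+Kn^{-1/3}$, so no gap of order $Kn^{-1/3}$ is available to beat the noise. An inequality in which the drift does grow with $\tilde U^*$, such as the left relation above, gives a contradiction only when paired with an \emph{upper} bound on $\tilde f_n(x_0)$. That bound involves a window whose left endpoint is the uncontrolled $\tilde L_n^*$, and your peeling over shells in $t$ never controls it.

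\textbf{What the paper does instead.} Its Steps 2--3 supply exactly this missing piece.
\begin{itemize}
\item First, $\tilde f_n(x_0)\le\overline{\tilde Y}_{[\tilde L_n^*,x_0+n^{-1/3}]}$. The noise part is bounded uniformly over the unknown left endpoint by $n^{1/3}\sup_{\ell\ge0}|\overline{\tilde\xi}_{[x_0-\ell n^{-1/3},x_0+n^{-1/3}]}|=O_\P(1)$ (Lemma \ref{lemma:xi_o1}). The drift part is bounded by $\overline{\tilde f}_{[x_0,x_0+n^{-1/3}]}$, using monotonicity of $\tilde f_0$ on all of $\cX$ with probability tending to one, and Lemma \ref{lemma:biascalc} controls that term.
\item With the symmetric lower bound, this gives the a priori rate $n^{1/3}(\tilde f_n(x_0)-\tilde f_0(x_0))=O_\P(1)$.
\item Only then is the left relation with $s=x_0-n^{-1/3}$ decisive. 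On $\{\tilde U^*\ge H\}$, monotonicity gives $\overline{\tilde f}_{[x_0-n^{-1/3},\tilde U_n^*]}\ge\overline{\tilde f}_{[x_0-n^{-1/3},x_0+Hn^{-1/3}]}$, whose scaled bias is about $f_0'(x_0)(H-1)/2$. The scaled noise is $O_\P(1)$ uniformly in $u$, so this contradicts the rate for large $H$.
\end{itemize}
Your dyadic peeling is a legitimate substitute for citing Lemma \ref{lemma:xi_o1}. However, it must also cover windows with an arbitrary far endpoint, and it must be combined with this sandwich.

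\textbf{Two smaller points.} Lemma \ref{lemma:wellvdv229} does not apply as stated. Window-indicator sums have increments whose $L^2$ norm scales like the square root of the change in window length, with Bernstein rather than sub-Gaussian tails, so you need a VC- or bracketing-type maximal inequality. Also, your lower-tightness sentence reverses the inequality: the block average is the \emph{minimum} over right extensions from $\tilde L_n^*$. The argument you then invoke is the paper's, but it needs the upper bound first, because Lemma \ref{lemma:fidi_uclt} is uniform only over compact ranges of $(\ell,u)$.
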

\begin{proof}
\textbf{Step 1: Apply the max-min formula to obtain the representation of $\tilde f_n(x_0)$ and choose $\tilde L_n^*, \tilde U_n^*$ that satisfy the first two conditions.}

By Assumption \ref{assump_iso}(e), $\cX$ takes the form $\cX = [a^*, b^*]$ for some $a^*, b^* \in \R$. The probability of the event
$$
\cp{\tilde f_n(x_0) = \max_{a \in [n] :  \tilde X_a \leq x_0} \min_{b \in [n] : \tilde X_b\geq x_0} \overline{\tilde Y}_{[\tilde X_a, \tilde X_b]}}
$$
tends to $1$ by Lemma \ref{lemma:maxmin}.  A reparameterization yields 
$$
\max_{\ell > 0 :a^* \leq  \ell_n \leq \max_{i \in [n]: \tilde X_i \leq x_0} \tilde X_i} \min_{u \geq 0 :  b^* \geq u_n\geq \min_{i \in [n]: \tilde X_i \geq x_0} \tilde X_i} \overline{\tilde{Y}}_{[\ell_n, u_n]}
$$
always, for $\ell_n = x_0 - \ell n^{-1/3}$ and $u_n = x_0 + un^{-1/3}$ .  Indeed, when the set $\{i : \tilde X_i \in \cX\}$ is empty, both are defined by convention to be $0$.

When $\{i \in [n] : \tilde X_i \in \cX\}$ is nonempty, we then choose $\tilde L^*$ and $\tilde U^*$ to satisfy
$$
\overline{\tilde Y}_{[\tilde L_n^*, \tilde U_n^*]}  = \max_{\ell > 0 : a^* \leq \ell_n \leq \max_{i \in [n]: \tilde X_i \leq x_0} \tilde X_i} \min_{u \geq 0 :  b^* \geq u_n\geq \min_{i \in [n]: \tilde X_i \geq x_0} \tilde X_i} \overline{\tilde{Y}}_{[\ell_n, u_n]}
$$ 
for 
\[
\tilde{L}_n^* = x_0 - \tilde L^* n^{-1/3}~~~ {\rm and}~~~ \tilde{U}_n^* = x_0 + \tilde U^* n^{-1/3}. 
\]
When $\{i \in [n] : \tilde X_i \in \cX\}$ is empty, we define $\tilde L^*$ and $\tilde U^*$ to be arbitrary constants such that $\tilde L_n^*, \tilde U_n^* \in \cX$. 

By construction, the first and second conditions in Lemma \ref{lemma:ul_o1} are then automatically satisfied.

~\\
\textbf{Step 2: Show that $n^{1/3}(\tilde f_n(x_0) - \tilde f_0(x_0)) = O_{\P}(1)$.}
    
We have
    \begin{align}
        n^{1/3}(\tilde f_n(x_0) - \tilde f_0(x_0))
        &= n^{1/3}(\overline{\tilde Y}_{[\tilde L_n^*, \tilde U_n^*]} - \tilde f_0(x_0))\nonumber \\
        &\leq  n^{1/3}(\overline{\tilde Y}_{[\tilde L_n^*, x_0 + n^{-1/3}]} - \tilde f_0(x_0)) \nonumber\\
        &= n^{1/3}\Big(\overline{\tilde \xi}_{[\tilde L_n^*,x_0+ n^{-1/3}]} +  \overline{\tilde f}_{[\tilde L_n^*,x_0+ n^{-1/3}]}- \tilde{f}_0(x_0)\Big)\nonumber \\
        &\leq  n^{1/3}\Big(\sup_{\ell \geq 0}|\overline{\tilde \xi}_{[x_0 - \ell n^{-1/3},x_0+ n^{-1/3}]}| +  \overline{\tilde f}_{[\tilde L_n^*,x_0+ n^{-1/3}]}- \tilde{f}_0(x_0)\Big) \label{eq:ul_o1:upper}\notag\\
        &= O_{\P}(1) + n^{1/3}(\overline{\tilde f}_{[\tilde L_n^*,x_0+ n^{-1/3}]}- \tilde{f}_0(x_0)). \nonumber\tag{Lemma \ref{lemma:xi_o1}}
    \end{align}
    Assumption \ref{assump_iso}(a) ensures that  $\inf_{x \in \cX} f'_0(x) > 0$ and Lemma \ref{lemma:isocvg} ensures that $\sup_{x \in \cX} |\tilde f'_{0}(x) - f'_0(x)| =o_\P(1)$. For large enough $n$, $\tilde f_0$ is continuously differentiable almost surely on $\cX$ by Lemma \ref{lemma:tildef0_twicediff}. Also, $\cX$ is a closed interval, so we obtain
    $$
    \P\Big(\tilde f_0 \text{ is strictly increasing on }\cX\Big) \rightarrow 1
    $$
    as $n\rightarrow \infty$.
    Next, since $\tilde L^*_n \in \cX$,
     \al{
    n^{1/3}(\overline{\tilde f}_{[\tilde L_n^*, x_0 + n^{-1/3}]} - \tilde{f}_0(x_0)) &\leq n^{1/3}(\overline{\tilde f}_{[x_0, x_0 + n^{-1/3}]} -\tilde{f}_0(x_0))\\
    &= O_{\P}(1). \tag{Lemma \ref{lemma:biascalc}}
    }

A lower bound follows analogously, and we thus prove the claim.

~\\
\textbf{Step 3: Show that $\tilde U^*, \tilde L^* = O_{\P}(1).$}

Take any $\epsilon > 0$. Denote the events
$$
\Omega_{\epsilon} := \cp{|n^{1/3}(\tilde f_n(x_0) - \tilde f_0(x_0))| > f_0'(x_0) \cdot (H_\epsilon - 1)/8}
$$
and
$$
\check\Omega_\epsilon := \cp{-n^{1/3} \sup_{u \geq 0} |\overline{\tilde \xi}_{[x_0 -  n^{-1/3}, x_0 + un^{-1/3}]}| < -f_0'(x_0) \cdot (H_\epsilon - 1)/8},
$$
where $H_\epsilon > 0$ is chosen so that
$$
\limsup_{n\rightarrow \infty }\Big\{\P(\Omega_\epsilon) + \P(\check \Omega_\epsilon)\Big\}  < \epsilon.
$$
We then have
\al{
\P(\tilde U^* \geq H_\epsilon) \leq \P\Big(\tilde U^* \geq H_\epsilon, \Omega_\epsilon^c, \check{\Omega}_\epsilon^c \Big) + \P(\Omega_\epsilon) + \P(\check \Omega_\epsilon).
}
We now work on the intersection of events $\Omega_\epsilon^c\,$,$\, \check{\Omega}_\epsilon^c$, and $\tilde U^* \geq H_\epsilon$. In a calculation similar to \textbf{Step 2},
\al{
n^{1/3}\Big(-\sup_{u \geq 0}|\overline{\tilde \xi}_{[x_0- n^{-1/3},x_0 + u n^{-1/3}]}| +  \overline{\tilde f}_{[x_0 - n^{-1/3}, \tilde U_n^*]}- \tilde{f}_0(x_0)\Big)  \leq n^{1/3}(\tilde f_n(x_0) - \tilde f_0(x_0)).
}
On the events $\check{\Omega}_\epsilon^c$ and $\Omega_\epsilon^c$,
$$
-\frac{f_0'(x_0)}{8}(H_\epsilon - 1) + n^{1/3}(\overline{\tilde f}_{[x_0 - n^{-1/3}, \tilde U_n^*]}- \tilde{f}_0(x_0))  \leq \frac{f_0'(x_0)}{8}(H_\epsilon - 1).
$$
As we have assumed $\tilde f_0$ is strictly increasing in $\cX$, and $ \tilde U_n^* \in \cX$ without loss of generality, $x_0 + H_\epsilon n^{-1/3} \in \cX$ for large enough $n$. Then, the inequality 
$$
-\frac{f_0'(x_0)}{8}(H_\epsilon - 1) + n^{1/3}(\overline{\tilde f}_{[x_0 - n^{-1/3}, x_0 + H_\epsilon n^{-1/3}]}- \tilde{f}_0(x_0))  \leq \frac{f_0'(x_0)}{8}(H_\epsilon - 1)
$$
holds since $\tilde U^* \geq H_\epsilon$. By Lemma \ref{lemma:biascalc},  we then obtain
$$
\ap{n^{1/3}(\overline{\tilde f}_{[x_0 - n^{-1/3}, x_0 + H_\epsilon n^{-1/3}]}- \tilde{f}_0(x_0))  - \frac{f'_0(x_0)}{2} (H_\epsilon - 1)} = o_{\P}(1).
$$
Hence, $ \P(\tilde U^* \geq H_\epsilon, \Omega_\epsilon^c, \check{\Omega}_\epsilon^c ) \rightarrow 0$. By our choice of $H_\epsilon$, 
$$
\limsup_{n\rightarrow \infty }\P(\tilde U^* \geq H_\epsilon) \leq \epsilon 
$$
and since $\epsilon$ was arbitrary, we have shown $\tilde U^* = O_{\P}(1).$

Analogous arguments follow to show lower bounds so that $\tilde L^* = O_\P(1)$. Showing $\tilde U^*, \tilde L^* = \Theta_\P(1)$ is then done  by combining the argmax continuous mapping theorem (see Lemma 3.2.2 of \cite{van1996weak}, e.g.), Lemma \ref{lemma:fidi_uclt}, and Lemma \ref{lemma:glu_exists_tight}. 
\end{proof}
We state a result that summarizes those proven in  \cite{han2022berry}, demonstrating the maximizer and minimizer of the limit process, $G_{\ell, u},$ exist almost surely and are tight.

\begin{lemma}[Lemmas 4.4 and 4.5 of \cite{han2022berry}] \label{lemma:glu_exists_tight}
Suppose that Assumption \ref{assump_iso} holds. Define the random variables $(L_G^*, U_G^*)$ as the solutions to
$$
G_{L_G^*, U_G^*} = \max_{\ell > 0} \min_{u \geq 0} G_{\ell, u}.
$$
Then, $L_G^* = \Theta_\P(1)$ and $U_G^* = \Theta_\P(1)$.
\end{lemma}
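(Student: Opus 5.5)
The statement is quoted from \cite{han2022berry}, so the plan is to reduce it to properties of the process $G_{\ell,u}$ that hold for any fixed constants $\sigma>0$, $p_X(x_0)>0$ and $f_0'(x_0)>0$. These positivity facts are guaranteed by Assumption~\ref{assump_iso}. The first step is Brownian scaling to a canonical form. Write $c_1=\sigma/\sqrt{p_X(x_0)}$ and $c_2=f_0'(x_0)/2$. Substituting $\ell=a s$ and $u=a t$ with $a=(c_1/c_2)^{2/3}$, and using $\bB(a\,\cdot)\overset{d}{=}\sqrt a\,\bB(\cdot)$, we get
\[
G_{as,at}\overset{d}{=}c_1^{2/3}c_2^{1/3}\Big\{\tfrac{\bB(t)-\bB(-s)}{t+s}+(t-s)\Big\}.
\]
So it suffices to treat the standardized process $\bar G_{s,t}:=\{\bB(t)-\bB(-s)\}/(t+s)+(t-s)$. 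The rescaling is a fixed positive multiple, so $\Theta_\P(1)$ statements transfer directly.

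For existence, I would use the classical identification of the max--min of slopes with the slope of the greatest convex minorant. The value $\max_{s>0}\min_{t\ge0}\bar G_{s,t}$ is the left derivative at $0$ of the greatest convex minorant of $t\mapsto \bB(t)+t^2$. The maximizing pair $(L_G^*,U_G^*)$ corresponds to the endpoints of the face of that minorant which contains $0$. Almost surely, $\bB(t)+t^2$ has a unique such face. Its endpoints are finite because $t^2$ dominates $|\bB(t)|=O(\sqrt{|t|\log\log|t|})$. They are strictly nonzero because of Brownian oscillation at $0$: by the law of the iterated logarithm, $0$ is almost surely not a kink point of the minorant. This gives existence and almost sure uniqueness of $(L_G^*,U_G^*)$.

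Tightness from above is the direction where I would use a deterministic drift bound. For $s,t$ large the drift $t-s$ dominates the noise term $\sup|\bB(t)-\bB(-s)|/(t+s)$, which is of order $(t+s)^{-1/2}$ up to log factors. I would show that $\P(U_G^*>H)\to0$ as $H\to\infty$ by a peeling argument over dyadic shells $\{2^j\le t<2^{j+1}\}$. On each shell, a maximal inequality for Brownian motion bounds the noise. This shows that taking $t$ near $0$ beats any $t\ge H$ in the inner minimization, and the symmetric argument handles $L_G^*$.

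Tightness from below, meaning $\P(U_G^*<\delta)\to0$ as $\delta\to0$, is the main obstacle. Here the noise dominates the drift, and one must show that the minimizing window does not collapse. I would attack it with the small-ball behaviour of $\bB$. For $t$ tiny, $(\bB(t)-\bB(-s))/(t+s)$ is close to $\bB(-s)/s$-type fluctuations. The law of the iterated logarithm at $0$ makes the local slopes oscillate unboundedly, which rules out an optimal endpoint at $0$. Converting this into a uniform-in-$\delta$ probability bound should follow by scaling invariance. The event $\{U_G^*<\delta\}$ has the same law as an event for the process rescaled by $\delta$, in which the drift becomes negligible. The latter is the event that the drift-free max--min is attained in a bounded window, and this has probability tending to $0$. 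This matches Lemmas~4.4 and~4.5 of \cite{han2022berry}, so I would invoke them for these two tail estimates if the direct argument becomes heavy.
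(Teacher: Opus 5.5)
The paper gives no argument of its own here; it imports the lemma from Lemmas 4.4--4.5 of \cite{han2022berry}. Your route is a self-contained proof, and its core is sound. After Brownian rescaling, $\bar G_{s,t}$ is exactly the slope of the chord of $X(t)=\mathbb{B}(t)+t^2$ between $-s$ and $t$, because $(t-s)(t+s)=t^2-s^2$. So the max--min pair is the pair of endpoints of the face of the greatest convex minorant of $X$ lying over $0$. This also makes explicit something the bare citation leaves implicit: the equation defining $(L_G^*,U_G^*)$ must be read as selecting this saddle pair, since taken literally it describes a whole level set of $(\ell,u)$.

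What you did not exploit is that $(L_G^*,U_G^*)$ is a single pair of random variables, not a sequence indexed by $n$. Under the paper's definition, $\Theta_{\P}(1)$ then just means $0<L_G^*,U_G^*<\infty$ almost surely. Indeed, $\P(U_G^*>H)\to\P(U_G^*=\infty)$ and $\P(U_G^*<\delta)\downarrow\P(U_G^*=0)$ by continuity of measure. Your existence paragraph already proves exactly this, so the dyadic peeling and the rescaling argument for the lower tail are unnecessary. The rescaling argument is also the weakest step as written. As the drift $\delta^{3/2}r^2$ vanishes, there is no limiting \emph{drift-free} max--min, because two-sided Brownian motion has no convex minorant on $\mathbb{R}$.

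Two precision points in the existence step:
\begin{itemize}
\item You need $0$ to be almost surely not a contact point of $X$ with its minorant, not merely not a kink. A supporting line $\kappa t$ at $0$ would force $\mathbb{B}(t)\ge \kappa t-t^2$ for small $t>0$. This contradicts $\liminf_{t\downarrow 0}\mathbb{B}(t)/t=-\infty$, a single almost sure event that handles every (random) $\kappa$ at once.
\item Uniqueness of the saddle pair requires that the face over $0$ touch $X$ only at its two endpoints. This holds almost surely, since between consecutive vertices $X$ minus its minorant is strictly positive.
\end{itemize}
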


Finally, we state the main conclusion of \cite{han2022berry} which summarizes the asymptotic theory for the the original estimator.

\begin{lemma}[Theorem 2.2 of \cite{han2022berry}] \label{lemma:iso_original_data}
Suppose Assumption \ref{assump_iso} holds. Then,
$$
    \sup_{t \in \R} \ap{\pr{n^{1/3}(\hat f_n(x_0) -  f_0(x_0)) \leq t \cdl \mathcal{O}} - \P(\sup_{\ell > 0} \inf_{u \geq 0} G_{\ell, u} \leq t)} = o_{\P_{\cO}}(1).
$$
    
\end{lemma}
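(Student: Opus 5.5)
The plan is to reduce the statement to the known cube-root limit theory for isotonic regression at an interior point, via the max--min (switching) representation of $\hat f_n(x_0)$. Note first that the conditioning on $\cO$ is trivial here: $\hat f_n(x_0)$ and $f_0(x_0)$ are $\cO$-measurable, so the conditional probability is an indicator. The substantive content is therefore that $n^{1/3}(\hat f_n(x_0)-f_0(x_0))$ converges in distribution to $\sup_{\ell>0}\inf_{u\ge 0}G_{\ell,u}$, with the Kolmogorov distance going to zero. The plan mirrors the bootstrap argument already carried out for $\tilde f_n$, specialised to the original data where all ingredients are classical.

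\textbf{Step 1 (representation).} By the max--min formula, with probability tending to one
$$
\hat f_n(x_0)=\max_{\ell>0}\min_{u\ge 0}\overline{Y}_{[\ell_n,u_n]},
$$
with $\ell_n=x_0-\ell n^{-1/3}$ and $u_n=x_0+un^{-1/3}$. Let $(L^*,U^*)$ be optimal indices. Repeating Steps 2--3 of Lemma~\ref{lemma:ul_o1}, using strict monotonicity of $f_0$ (Assumption~\ref{assump_iso}(a)) and a maximal inequality for $\overline\xi$, gives $L^*,U^*=\Theta_\P(1)$. Hence, for any $\epsilon>0$, one may restrict to $(\ell,u)\in[k_\epsilon,K_\epsilon]^2$ outside an event of probability $<\epsilon$.

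\textbf{Step 2 (local process).} On this compact set, decompose
$$
n^{1/3}\bigl(\overline Y_{[\ell_n,u_n]}-f_0(x_0)\bigr)=n^{1/3}\overline\xi_{[\ell_n,u_n]}+n^{1/3}\bigl(\overline f_{[\ell_n,u_n]}-f_0(x_0)\bigr).
$$
For the bias term, a Taylor expansion and the uniform LLN for $|\{i:X_i\in[\ell_n,u_n]\}|/n^{2/3}\to(u+\ell)p_X(x_0)$ give uniform convergence to $\tfrac{f_0'(x_0)}{2}(u-\ell)$, exactly as in Lemma~\ref{lemma:biascalc} with tildes removed. For the noise term, the uniform CLT (Lemma~\ref{lemma:uclt}) applied to $\xi_i\mathds 1(-\ell\le n^{1/3}(X_i-x_0)\le u)$, scaled by $n^{1/6}$, yields a Brownian-increment limit with variance $\sigma^2p_X(x_0)(u+\ell)$. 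Combining with the denominator gives weak convergence in $\ell^\infty([k_\epsilon,K_\epsilon]^2)$ to $G_{\ell,u}$.

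\textbf{Step 3 (conclusion).} Apply the continuous mapping theorem with $h\mapsto\sup_\ell\inf_u h(\ell,u)$. Then use Lemma~\ref{lemma:glu_exists_tight} to remove the truncation at cost $\epsilon$, and let $\epsilon\downarrow0$. Continuity of the limit law (Chernoff-type, via the scaling identity with \eqref{eq:weaklimit-iso}) upgrades weak convergence to uniform convergence of distribution functions by P\'olya's theorem. Alternatively, one can cite Theorem~2.2 of \cite{han2022berry} directly, whose hypotheses Assumption~\ref{assump_iso} is designed to cover.

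The main obstacle is the tightness $L^*,U^*=\Theta_\P(1)$, especially the lower bound that keeps them away from $0$, since the limit object degenerates as $\ell+u\to0$. There, I would first try the argmax continuous mapping argument together with the a.s.\ uniqueness and tightness supplied by Lemma~\ref{lemma:glu_exists_tight}.
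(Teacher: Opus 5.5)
Your first paragraph draws the wrong conclusion from a correct observation. Since $\hat f_n(x_0)$ is $\cO$-measurable, the conditional probability is indeed the indicator $\mathds{1}\{T_n\le t\}$, with $T_n=n^{1/3}(\hat f_n(x_0)-f_0(x_0))$. Now write $F(t)=\P(\sup_{\ell>0}\inf_{u\ge 0}G_{\ell,u}\le t)$, which is continuous because it is a rescaled Chernoff law. Then $\sup_{t}|\mathds{1}\{T_n\le t\}-F(t)|=\max\{F(T_n),1-F(T_n)\}\ge 1/2$ for every $n$ and every sample.

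So the display as printed is false, and convergence in distribution of $T_n$ does not ``therefore'' give it. The $\mid\cO$ is a typo. Theorem~\ref{thm:iso} compares the bootstrap law with the \emph{unconditional} law $\P(T_n\le t)$. What its proof needs, and what Theorem~2.2 of \cite{han2022berry} provides, is $\sup_t|\P(T_n\le t)-F(t)|\to 0$. You should say explicitly that you are proving this corrected statement, rather than claiming the printed one reduces to it. Your Steps 1--3 do aim at the corrected statement.

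With that fix, the paper gives no argument of its own for this lemma: it simply invokes Theorem~2.2 of \cite{han2022berry}, which is your ``alternatively'' option. Your main route instead re-derives the result by running the paper's bootstrap machinery (Lemmas~\ref{lemma:biascalc}, \ref{lemma:fidi_uclt}, \ref{lemma:ul_o1}) on the original sample and finishing with P\'olya's theorem. That plan is coherent, but it has three costs:

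1. It delivers only an $o(1)$ statement, whereas the citation gives a Berry--Esseen rate.
2. It leaves open exactly the step you flag. The argmax continuous mapping theorem cannot keep the optimal $\ell$ away from $0$, because process convergence to $G_{\ell,u}$ is available only on $\{\ell+u\ge k\}$. Near $\ell+u=0$ the normalized local averages have variance of order $(\ell+u)^{-1}$, and windows with $\ell+u\lesssim n^{-2/3}$ contain only $O(1)$ points. A separate finite-sample localization argument is therefore needed; the cited result takes care of this.
3. Lemma~\ref{lemma:uclt} as stated only gives marginal convergence uniformly over the index. The sup--inf continuous mapping in your Step 3 needs weak convergence of the whole process in $\ell^\infty([k_\epsilon,K_\epsilon]^2)$, so you would have to invoke the functional (Donsker-type) version from \cite{van1996weak}.
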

\subsection{Supporting lemmas for Theorem \ref{theorem:gan}}

We note a condition for a uniform law of large numbers to hold uniformly over a class of probability measures, which is important in the triangular array setting. This is Theorem 2.8.1 of \cite{van1996weak} adapted to our setting.

\begin{lemma}[Theorem 2.8.1 of \cite{van1996weak}] \label{lemma:uniform_gc_class}
    Let $\cH$ be a class of uniformly bounded measurable functions from some subset $\cV \subseteq \R^r$ to $\R$. Suppose, as $n\rightarrow\infty$,
    $$
    \frac{\log N(\epsilon, \cH, \|\cdot\|_{\cV})}{n} \rightarrow 0
\text{ for every $\epsilon > 0$}    
$$
where $\|h\|_{\cV}$ denotes the infinity norm on $\cV$.
Then, for any $\epsilon > 0$, and for any collection of probability measures $\cQ$, on $\R^r$
$$
\sup_{\Q \in \cQ}\P\pa{ \sup_{h \in \cH} \ap{\frac{1}{n }\sum_{i=1}^n h(\mV_i) - \E[h(\mV_1)]}  > \epsilon} \rightarrow 0
$$
where $\mV_1, \mV_2, ..., \mV_n$ are independently distributed as $\Q$ inside the supremum.
\end{lemma}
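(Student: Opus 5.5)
The plan is a direct finite-cover argument. Because the covering numbers are taken in the sup-norm on $\cV$, they do not depend on the sampling measure $\Q$, so a union bound over a net combined with Hoeffding's inequality should give a bound that is uniform over $\cQ$ at once. No symmetrization or random entropy is needed.

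\textbf{Setup.} Let $M:=\sup_{h\in\cH}\sup_{\mv\in\cV}|h(\mv)|<\infty$ and fix $\epsilon>0$. Write $N_n:=N(\epsilon/3,\cH,\|\cdot\|_{\cV})$ and pick centers $h_1,\ldots,h_{N_n}$ whose closed $\epsilon/3$-balls cover $\cH$. The centers need not lie in $\cH$. However, any center whose ball meets $\cH$ satisfies $\|h_j\|_{\cV}\le M+\epsilon/3$, and centers whose balls miss $\cH$ can be discarded. If measurability of the $h_j$ is an issue, replace each center by an element of $\cH$ inside its ball; this gives a $2\epsilon/3$-cover of the same cardinality, and the constants below adjust trivially.

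\textbf{Reduction to the net.} Fix $h\in\cH$ and choose $j$ with $\|h-h_j\|_{\cV}\le\epsilon/3$. Since every $\mV_i\in\cV$ almost surely,
\[
\Big|\tfrac1n\textstyle\sum_{i}h(\mV_i)-\E[h(\mV_1)]\Big|\le \Big|\tfrac1n\textstyle\sum_{i}h_j(\mV_i)-\E[h_j(\mV_1)]\Big|+\tfrac{2\epsilon}{3}.
\]
Hence the event $\{\sup_{h\in\cH}|\cdot|>\epsilon\}$ is contained in the event that $\max_{j\le N_n}|\frac1n\sum_i h_j(\mV_i)-\E[h_j(\mV_1)]|>\epsilon/3$. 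This containment also settles measurability: the dominating event is a finite maximum of measurable variables, so the statement holds with outer probability if the supremum is not measurable.

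\textbf{Union bound.} Each $h_j(\mV_i)$ is bounded in absolute value by $M+\epsilon/3$. Hoeffding's inequality and a union bound then give, for every $\Q\in\cQ$,
\[
\P\Big(\max_{j}\Big|\tfrac1n\textstyle\sum_i h_j(\mV_i)-\E h_j(\mV_1)\Big|>\tfrac{\epsilon}{3}\Big)\le 2N_n\exp\Big(-\frac{n\epsilon^2}{18(M+\epsilon/3)^2}\Big).
\]
The right-hand side does not depend on $\Q$, so it also bounds the supremum over $\cQ$. It can be written as $2\exp\{\log N_n - c_\epsilon n\}$ with $c_\epsilon>0$. The hypothesis $\log N_n/n\to0$ makes this tend to $0$.

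\textbf{Main obstacle.} The only delicate points are bookkeeping: keeping the covering centers bounded and measurable, and handling measurability of the supremum. Both are resolved by the choices above. The uniformity over $\cQ$ is automatic because the net is built in sup-norm rather than in an $L_p(\Q)$ norm.
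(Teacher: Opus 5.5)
Your proof is correct, but it does not follow the paper's route. The paper gives no argument of its own; it simply invokes Theorem 2.8.1 of van der Vaart and Wellner. That theorem is a general uniform Glivenko--Cantelli result, formulated with \emph{random} $L_1(\mathbb{P}_n)$ covering numbers of envelope-truncated classes and proved by symmetrization under measurability conditions. The lemma is the special case in which the envelope is a constant and the sup-norm entropy on $\cV$ dominates the $L_1(\mathbb{P}_n)$ entropy for every realization of the sample. The random-entropy hypothesis therefore holds deterministically and uniformly in $\Q$.

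You bypass all of this. A sup-norm net is nonrandom and does not depend on $\Q$, so Hoeffding plus a union bound gives an explicit bound of the form $2N_n e^{-c_\epsilon n}$. This bound has several advantages:
\begin{itemize}
\item it is uniform over $\cQ$ for free;
\item it requires measurability only of finitely many net elements, and your outer-probability remark handles the supremum;
\item it is fully self-contained.
\end{itemize}
What the cited theorem buys in return is generality the paper never uses here. It allows unbounded envelopes and data-dependent entropy, which is finite for classes such as indicators of half-lines whose sup-norm covering numbers are infinite. As formulated there, it also gives a stronger uniform statement controlling $\sup_{m\ge n}$.

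Minor points:
\begin{itemize}
\item After replacing centers by elements of $\cH$, you need to start from an $\epsilon/6$-net, as you indicate.
\item Your use of $\mV_i\in\cV$ almost surely relies on each $\Q\in\cQ$ being concentrated on $\cV$. The statement implicitly requires this anyway for $h(\mV_i)$ to be defined.
\item Since $\cH$ does not depend on $n$, the entropy hypothesis literally reduces to total boundedness of $\cH$ in $\|\cdot\|_{\cV}$, which is exactly what your argument uses. Your argument would equally cover an $n$-dependent class with a common uniform bound $M$.
\end{itemize}
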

Next is a covering number lemma for Lipschitz functions on a compact, convex domain.

\begin{lemma}[Theorem 2.7.1 of \cite{van1996weak}] \label{lemma:lip_covering_num}
Suppose that $\cV$ is a bounded, convex subset of $\R^r$. Fix $R >0$. Then, for the norm $\|f\|_{\cV} = \sup_{\mv \in \cV} |f(\mv)|$ defined for real-valued functions on $\cV$, we have
$$
\log N(\epsilon, BL(\cV,R), \|\cdot\|_{\cV}) \leq C(r,R,\cV) \cdot  \epsilon^{-r}.
$$
Here, $C(r,R,\cV)$ is a constant depending only on $(r, R, \cV)$, and $BL(\cV,R)$ denotes the set of bounded, real-valued Lipschitz functions $f$ on $\cV$ that satisfy
$$
\sup_{\mv \in \cV} |f(\mv)| + \sup_{\mv \neq \mv'} \frac{|f(\mv) - f(\mv')|}{\|\mv - \mv'\|_2} \leq R.
$$.
\end{lemma}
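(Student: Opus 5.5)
This is the metric entropy bound for bounded Lipschitz functions on a bounded convex set; the paper cites it as Theorem 2.7.1 of van der Vaart and Wellner. I would prove it with the classical Kolmogorov--Tikhomirov construction: place a grid on $\cV$, then build a finite family of grid-adapted functions. The argument is one-step for $r=1$ and needs only mild care for general $r$.

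\textbf{Step 1: grid and nodal values.} Fix $\epsilon\in(0,1]$ and set $\delta=\epsilon/(4R)$, where $R$ bounds both the sup norm and the Lipschitz constant of any $f\in BL(\cV,R)$. Since $\cV$ is bounded, it lies in a cube of side $D=\operatorname{diam}(\cV)$. Partition that cube into $\lesssim (D/\delta)^r$ subcubes of side $\delta/\sqrt r$. From each subcube meeting $\cV$, pick a point $\mv_j\in\cV$, giving points $\mv_1,\dots,\mv_m$ with $m\le C(D\sqrt r/\delta)^r$. Every $\mv\in\cV$ then lies within distance $\delta$ of some $\mv_j$.

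\textbf{Step 2: discretize and count efficiently.} For each $f$, define $\tilde f$ to be constant on the cells $A_j=\{\mv\in\cV: \mv \text{ is in subcube } j\}$. Its value on $A_j$ is $\delta\lfloor f(\mv_j)/\delta\rfloor$, which is $f(\mv_j)$ rounded down to the lattice $\delta\mathbb Z$. By the Lipschitz bound,
\[
\|f-\tilde f\|_{\cV}\le R\delta+\delta\le \epsilon .
\]
A naive count of the possible $\tilde f$ gives $(2R/\delta)^m$, whose logarithm is of order $\epsilon^{-r}\log(1/\epsilon)$. That is off by a log factor, so the counting must be sharpened.

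\textbf{Step 3: chain along a path.} Order the cells so that consecutive cells are adjacent. One way is a boustrophedon (snake) ordering of the subcube grid, restricted to cells meeting $\cV$. Because $\cV$ is convex, and hence connected, gaps in this ordering can be bridged through intermediate cells at bounded additional distance. For adjacent (or bridged) cells, $|f(\mv_j)-f(\mv_{j+1})|\le R\cdot c_r\delta$. Consequently the rounded values differ by at most a bounded number $K_r$ of lattice steps. The number of possible vectors $(\tilde f(\mv_j))_j$ is then at most
\[
(2R/\delta+1)\,(2K_r+1)^{m-1}.
\]
Taking logarithms gives
\[
\log N\le \log(8R^2/\epsilon+1)+m\log(2K_r+1)\le C(r,R,\cV)\,\epsilon^{-r}
\]
for $\epsilon\le1$. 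For $\epsilon>1\ge$ (after rescaling to $\epsilon>R$), one ball suffices, so the bound adjusts trivially.

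\textbf{Main obstacle.} The delicate point is Step 3: making the adjacency chain rigorous on a general convex $\cV$. When cells are skipped in the snake ordering, the bridging distance between consecutive chosen points must be bounded by a constant times $\delta$. I would handle this with the convexity-based observation that the cells meeting $\cV$ form a connected set under face-adjacency, up to an $O(\delta)$ enlargement of $\cV$. A depth-first traversal of the corresponding spanning tree visits each cell at most twice, with each step of length at most $c_r\delta$, so the count above holds with $2m$ in place of $m$. If needed, one can instead extend $f$ to a Lipschitz function on the enclosing cube (McShane extension, then truncation at level $R$) and run the snake argument on the full cube grid, which removes the connectivity issue entirely.
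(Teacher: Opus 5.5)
Your proposal is correct and is essentially the standard argument behind the cited result. The paper gives no proof beyond invoking Theorem~2.7.1 of \cite{van1996weak}, whose proof likewise rounds the values of $f$ at a $\delta$-net and counts the rounded vectors by ordering the net points so that each has an earlier neighbor within $O(\delta)$, with connectivity coming from convexity; your spanning-tree version of Step~3 does exactly this.

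Two minor repairs are needed:
\begin{itemize}
\item With $\delta=\epsilon/(4R)$, the bound $R\delta+\delta\le\epsilon$ only holds when $R\ge 1/3$. Simply assume $R\ge 1$, since $BL(\cV,R)$ increases in $R$.
\item The cells meeting $\cV$ need not be face-connected; a diagonal segment is a counterexample. Use corner-touching adjacency instead, which still puts consecutive chosen points within $2\delta$. Alternatively, use your McShane-extension route, which in fact removes the need for convexity altogether.
\end{itemize}
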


\subsection{Supporting lemmas for Theorem \ref{flow_wcvg}}

We first state a few analytical results to relate topological properties of a forward map to its inverse. The first is a direct consequence of the inverse function theorem, which we state without proof.
\begin{lemma}[Theorem 19.24 of \cite{rudin1976principles}]\label{lemma:c1inverse}
Suppose $\mh : \R^r \rightarrow \R^r$ is a bijective, continuously differentiable map with $\det(\sD \mh(\mv)) \neq 0$ for all $\mv \in \R^r$. Then, the inverse map $\mh^{-1}:\R^r \rightarrow \R^r$ is also continuously differentiable with
$$
\sD(\mh^{-1})(\mv) = \sD\mh(\mh^{-1}(\mv))^{-1}. 
$$
\end{lemma}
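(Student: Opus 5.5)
The statement is Lemma~\ref{lemma:c1inverse}, a form of the inverse function theorem for a global bijection. The plan is to combine the local inverse function theorem with the hypothesis that $\mh$ is bijective, so that local inverses agree with the global inverse $\mh^{-1}$.

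\textbf{Step 1: apply the local theorem at an arbitrary point.} Fix $\mv \in \R^r$ and put $\mw = \mh^{-1}(\mv)$, which exists because $\mh$ is bijective. Since $\mh$ is continuously differentiable and $\det(\sD\mh(\mw)) \neq 0$, the local inverse function theorem \citep[Theorem~9.24]{rudin1976principles} supplies open sets $A \ni \mw$ and $B = \mh(A) \ni \mv$ with the following properties:
\begin{itemize}
\item $\mh|_A : A \to B$ is a bijection;
\item its inverse $\mg : B \to A$ is continuously differentiable on $B$;
\item $\sD\mg(\mv') = \sD\mh(\mg(\mv'))^{-1}$ for every $\mv' \in B$.
\end{itemize}

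\textbf{Step 2: identify the local inverse with $\mh^{-1}$ on $B$.} Take $\mv' \in B$. By construction $\mh(\mg(\mv')) = \mv'$. By definition $\mh(\mh^{-1}(\mv')) = \mv'$. Since $\mh$ is injective on all of $\R^r$, these two preimages coincide, so $\mg(\mv') = \mh^{-1}(\mv')$. Hence $\mh^{-1}|_B = \mg$.

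\textbf{Step 3: conclude.} Because $B$ is an open neighbourhood of $\mv$ and $\mh^{-1}$ equals the $C^1$ map $\mg$ there, $\mh^{-1}$ is continuously differentiable near $\mv$. The derivative formula at $\mv$ is
\[
\sD(\mh^{-1})(\mv) = \sD\mg(\mv) = \sD\mh(\mh^{-1}(\mv))^{-1}.
\]
As $\mv$ was arbitrary, $\mh^{-1}$ is continuously differentiable on all of $\R^r$ and satisfies the stated formula everywhere.

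The only step needing care is Step 2: the local theorem yields merely \emph{some} local inverse, and global injectivity of $\mh$ is exactly what forces that local inverse to agree with $\mh^{-1}$. Everything else is a direct citation of the local theorem.
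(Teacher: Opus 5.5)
Your proof is correct and is exactly the argument the paper has in mind when it states this lemma without proof as a direct consequence of the inverse function theorem: you apply the local theorem (Rudin's Theorem~9.24) at each point, then use global injectivity of $\mh$ to identify the local inverse with $\mh^{-1}$ on an open neighbourhood, which gives both the $C^1$ property and the formula. One small note: in Rudin the derivative formula is established inside the proof rather than in the theorem statement, but it also follows immediately from the chain rule applied to $\mg\circ\mh=\mathrm{id}$ on $A$.
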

Next, we state a basic lemma for upper triangular matrices that bounds the operator norm of their inverse.
\begin{lemma}\label{lemma:inverse_op_bound}
 Suppose $\Ab \in \R^{r\times r}$ is an upper triangular matrix, with diagonal terms lower bounded by a constant $k > 0$. Then,
 $$
 \|\Ab^{-1}\|_{\rm op} \leq \frac{\sqrt{r}}{k}\pa{1+ \frac{\|\Ab\|_{\rm op}}{k}}^{r-1}.
 $$
\end{lemma}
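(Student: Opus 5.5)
We need to prove Lemma \ref{lemma:inverse_op_bound}: for an upper triangular $\Ab\in\R^{r\times r}$ with diagonal entries at least $k>0$, we have $\|\Ab^{-1}\|_{\rm op}\le \frac{\sqrt r}{k}(1+\|\Ab\|_{\rm op}/k)^{r-1}$.

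\textbf{Approach.} We bound each column of $\Ab^{-1}$ in $\ell_2$ by back substitution, and then pass to the operator norm. Write $\Ab^{-1}=[\mx_1,\ldots,\mx_r]$, so that $\Ab\mx_j=\me_j$, where $\me_j$ is the $j$-th standard basis vector. The standard comparison gives
\[
\|\Ab^{-1}\|_{\rm op}\le \|\Ab^{-1}\|_{F}\le \sqrt r\,\max_j\|\mx_j\|_2 .
\]
It therefore suffices to show $\|\mx_j\|_2\le k^{-1}(1+\|\Ab\|_{\rm op}/k)^{r-1}$ for each $j$.

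\textbf{Step 1: entrywise recursion.} Fix $j$ and write $\mx=\mx_j=(x_1,\ldots,x_r)^\top$. Since $\Ab$ is upper triangular, $x_i=0$ for $i>j$ and $x_j=1/A_{jj}$, so $|x_j|\le 1/k$. For $i<j$, back substitution gives
\[
x_i=-A_{ii}^{-1}\sum_{m=i+1}^{j}A_{im}x_m .
\]
Row $i$ of $\Ab$ has $\ell_2$ norm at most $\|\Ab\|_{\rm op}$, because it equals $\|\Ab^\top\me_i\|_2$. Cauchy--Schwarz then yields
\[
|x_i|\le k^{-1}\|\Ab\|_{\rm op}\,\|(x_{i+1},\ldots,x_j)\|_2 .
\]

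\textbf{Step 2: induction on tail norms.} Let $T_i:=\|(x_i,\ldots,x_j)\|_2$, with $T_j=|x_j|\le 1/k$. The triangle inequality for norms gives $T_i\le |x_i|+T_{i+1}$. Combined with Step 1, this yields $T_i\le (1+\|\Ab\|_{\rm op}/k)\,T_{i+1}$. Iterating at most $j-1\le r-1$ times gives
\[
\|\mx\|_2=T_1\le k^{-1}(1+\|\Ab\|_{\rm op}/k)^{r-1},
\]
which is the required column bound.

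\textbf{Main point of care.} The argument is elementary; the only things to get right are the row-norm bound by $\|\Ab\|_{\rm op}$ and using $T_i\le |x_i|+T_{i+1}$ rather than a sum of squares, so that the recursion closes exactly with factor $(1+\|\Ab\|_{\rm op}/k)$. The Frobenius step then supplies the $\sqrt r$ factor.
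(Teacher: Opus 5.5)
Your proof is correct and follows the back-substitution route that the paper only sketches (its details are omitted there). The row bound $\|\Ab^\top \mathbf{e}_i\|_2 \le \|\Ab\|_{\rm op}$, the tail recursion $T_i \le (1+\|\Ab\|_{\rm op}/k)\,T_{i+1}$, and the Frobenius comparison for the $\sqrt{r}$ factor together give exactly the stated constant.
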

\begin{proof}
Since $\Ab$ is upper triangular, the proof idea is to use backsubstitution and explicitly bound the terms of $\Ab^{-1}$. We omit the proof as it involves tedious algebraic details.
\end{proof}
Next is a bound of the Wasserstein distance via the KL-divergence, provided the measures' supports are uniformly bounded.
\begin{lemma}\label{lemma:w1_kl_bddsupport}
    Suppose that $\mV_1, \mV_2,...$ and $\mV$ are $\R^r$-valued random variables with supports $\cV_1,  \cV_2,... $ and $\cV$ contained in a fixed and compact set $\tilde \cV \subseteq \R^r$. Also, suppose $ \cV_i \supseteq \cV$ for all $i=1,2,...$  Then,
    $$
    \sW_1(\P_{V_i}, \P_V) \leq 2\cdot  \sup_{\mv \in \tilde\cV} \|\mv\|_2 \cdot \sqrt{\frac{1}{2} \cdot \KL(\P_V||\P_{V_i} )} < \infty
    $$
    for all $i=1,2, ...$
    \end{lemma}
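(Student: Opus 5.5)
The plan is to bound $\sW_1$ through the Kantorovich--Rubinstein duality and then control the Lipschitz test functions using total variation and Pinsker's inequality. Let $D:=\sup_{\mv\in\tilde\cV}\|\mv\|_2$. This is finite because $\tilde\cV$ is compact, which gives the final ``$<\infty$'' once the KL term is finite. It is also the natural constant, since every point of $\cV_i\cup\cV$ has norm at most $D$.

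First, I would invoke Lemma~\ref{lemma:kr_dual}:
\[
\sW_1(\P_{V_i},\P_V)=\sup_{f\in{\rm Lip}_1(r,1)}\big\{\E f(\mV_i)-\E f(\mV)\big\}.
\]
Subtracting a constant does not change the difference of expectations, so I may restrict to $f$ with $f(\mathbf 0)=0$. For such $f$, the Lipschitz property gives $|f(\mv)|\le\|\mv\|_2\le D$ on $\tilde\cV$, which contains both supports.

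Second, for any measurable $f$ with $|f|\le D$ on the common support, I would use
\[
|\E f(\mV_i)-\E f(\mV)|\le D\int|\d\P_{V_i}-\d\P_V| = 2D\,{\rm TV}(\P_{V_i},\P_V).
\]
Third, I would apply Pinsker's inequality, ${\rm TV}(\P_V,\P_{V_i})\le\sqrt{\tfrac12\KL(\P_V\|\P_{V_i})}$, which yields exactly $\sW_1\le 2D\sqrt{\tfrac12\KL(\P_V\|\P_{V_i})}$. If $\KL=\infty$ the inequality holds trivially.

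The only delicate point is the claim $<\infty$, which requires $\KL(\P_V\|\P_{V_i})<\infty$. The hypothesis $\cV_i\supseteq\cV$ is what is meant to supply absolute continuity $\P_V\ll\P_{V_i}$, but support containment alone does not guarantee this in general, nor does it guarantee finiteness of the divergence. I expect this to be the main obstacle. In the application to Theorem~\ref{flow_wcvg}, finiteness follows from the two-sided density bounds \eqref{eq:flow_cvg:density_bd_ul} together with $\E|\log p_Z(\mZ)|<\infty$. Here I would state the lemma's finiteness under the implicit understanding that both laws have densities, with the one for $\mV_i$ bounded below on $\cV$.

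Note that $\sW_1\le 2D$ is finite regardless of the KL term, since the supremum in the duality is bounded by $2D$ after the restriction to $f(\mathbf 0)=0$.
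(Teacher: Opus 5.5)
Your argument is correct and follows the paper's route, which is boundedness of $\tilde\cV$ followed by Pinsker's inequality. Your Kantorovich--Rubinstein step with the restriction to $f(\mathbf 0)=0$ and the bound $|\E f(\mV_i)-\E f(\mV)|\le 2D\,{\rm TV}(\P_{V_i},\P_V)$ spells out the paper's one-line proof and gives exactly the stated constant $2\sup_{\mv\in\tilde\cV}\|\mv\|_2$.

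Your caveat on the trailing ``$<\infty$'' is well founded, since support containment does not give $\P_V\ll\P_{V_i}$. One refinement: a lower bound on the density of $\mV_i$ is not by itself enough for finiteness. You also need $\E[\log p_V(\mV)]<\infty$, which in Theorem~\ref{flow_wcvg} is supplied by $\E[|\log p_Z(\mZ)|]<\infty$.
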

\begin{proof}
    This  follows by using that $\tilde \cV$ is bounded, then Pinsker's inequality. 
\end{proof}

    We then note a dual representation of $\sW_1$.
    
\begin{lemma}[Remark 5.16 of \cite{villani2008optimal}]\label{lemma:kr_dual}
    Suppose that $\mV$ and $\mW$ are $\R^r$-valued random variables with $\E[\|\mV\|_2] < \infty$ and $\E[\|\mW\|_2]< \infty$. Denoting ${\rm{Lip}}(r,1)$ as the collection of all real-valued, $1$-Lipschitz functions on $\R^r$, we then have
    $$
    \sW_1(\P_V, \P_W) = \sup_{D \in {\rm Lip}(r,1)}\E\Big[D(\mV)-D(\mW)\Big].
    $$
\end{lemma}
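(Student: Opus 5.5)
The statement to prove is the Kantorovich--Rubinstein duality: for $\mV,\mW$ with finite first moments,
\[
\sW_1(\P_V,\P_W)=\sup_{D\in{\rm Lip}(r,1)}\E[D(\mV)-D(\mW)].
\]
The plan is to prove the two inequalities separately. The easy direction ($\ge$) is elementary. The reverse direction ($\le$) is the genuine duality and will go through Kantorovich duality for the cost $c(\mv,\mw)=\|\mv-\mw\|_2$ followed by a $c$-concavity reduction.

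For $\ge$, fix $D\in{\rm Lip}(r,1)$ and any coupling $\pi$ of $(\P_V,\P_W)$. Since $|D(\mv)|\le |D(\mathbf 0)|+\|\mv\|_2$, the finite first moments make $D(\mV)$ and $D(\mW)$ integrable. Then
\[
\E[D(\mV)-D(\mW)]=\E_\pi[D(\mV)-D(\mW)]\le \E_\pi\|\mV-\mW\|_2 .
\]
Taking the infimum over $\pi$ and the supremum over $D$ gives the bound.

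For $\le$, I will invoke the general Kantorovich duality for a lower semicontinuous, nonnegative cost on a Polish space. This is Theorem 5.10 of \citet{villani2008optimal}, which I take as given. It yields
\[
\sW_1(\P_V,\P_W)=\sup\Big\{\E[\varphi(\mV)]+\E[\psi(\mW)]\Big\},
\]
where the supremum runs over bounded continuous pairs $(\varphi,\psi)$ with $\varphi(\mv)+\psi(\mw)\le\|\mv-\mw\|_2$. Finiteness of $\sW_1$ follows from the moment assumptions via the product coupling.

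The key step is to replace an arbitrary admissible pair by a pair of the form $(D,-D)$ with $D$ 1-Lipschitz, without decreasing the objective. I will do this by a double $c$-transform. First set
\[
D(\mv):=\inf_{\mw}\{\|\mv-\mw\|_2-\psi(\mw)\}.
\]
This $D$ is finite, since $\psi$ is bounded. It is an infimum of 1-Lipschitz functions, hence 1-Lipschitz. It satisfies $D\ge\varphi$ by admissibility, and choosing $\mw=\mv$ shows $D(\mv)\le -\psi(\mv)$, so $\psi\le -D$. Therefore
\[
\E[\varphi(\mV)]+\E[\psi(\mW)]\le \E[D(\mV)]-\E[D(\mW)],
\]
and both expectations on the right are finite by the linear-growth bound used in the first direction. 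Taking the supremum gives $\le$.

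The main obstacle is the general Kantorovich duality itself, that is, the absence of a duality gap, which I do not reprove here and simply cite. The secondary point requiring care is integrability and finiteness in the $c$-transform step, which the argument above handles through the first-moment assumptions.
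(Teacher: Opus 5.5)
Your proof is correct and follows essentially the same route as the paper, which gives no argument of its own and simply cites Remark 5.16 of \citet{villani2008optimal}; your argument is exactly the reasoning behind that remark, namely Kantorovich duality (Theorem 5.10 there) followed by the $c$-transform $D(\mathbf{v})=\inf_{\mathbf{w}}\{\|\mathbf{v}-\mathbf{w}\|_2-\psi(\mathbf{w})\}$, which is $1$-Lipschitz with $\varphi\le D$ and $\psi\le -D$. (Only a single $c$-transform is needed, as your computation in fact uses, and this $D$ is bounded by $\sup|\psi|$, so integrability on the right-hand side is immediate.)
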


Next is the change-of-variables formula for densities. 

\begin{lemma}[Corollary 4.7.4 of \cite{10.1093/oso/9780198572237.003.0008}]\label{lemma:flowcvg:cov}
Suppose that $\mV$ is an $\R^r$-valued random variable with Lebesgue density $p_V$. Take a continuously differentiable, bijective function with a continuously differentiable inverse $\mH^{-1}: \R^r \rightarrow \R^r$. The Lebesgue density of $\mH(\mV)$, denoted by $p_{H(V)}$, then exists and is given by 
$$
p_{H(V)}(\mv) = p_V(\mH^{-1}(\mv)) \cdot |\det(\sD (\mH^{-1})(\mv))|
$$
for any $\mv \in \R^r$.
\end{lemma}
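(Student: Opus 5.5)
We apply the classical change-of-variables formula for a $C^1$ diffeomorphism, using the hypothesis that $\mH^{-1}$ exists and is continuously differentiable. Set $\mW=\mH(\mV)$. It suffices to show that for every Borel set $\cA\subseteq\R^r$,
\[
\P(\mW\in\cA)=\int_{\cA} p_V(\mH^{-1}(\mv))\,|\det(\sD(\mH^{-1})(\mv))|\,\d\mv ,
\]
since a nonnegative integrable function reproducing the law on all Borel sets is by definition a Lebesgue density.

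First, because $\mH$ is bijective, $\{\mW\in\cA\}=\{\mV\in\mH^{-1}(\cA)\}$. Hence $\P(\mW\in\cA)=\int_{\mH^{-1}(\cA)}p_V(\mx)\,\d\mx$. Measurability of $\mH^{-1}(\cA)$ follows from continuity of $\mH$.

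Second, we apply the multivariate substitution theorem for Lebesgue integrals to the $C^1$ bijection $\boldsymbol\phi:=\mH^{-1}:\R^r\to\R^r$. Its inverse $\mH$ is also $C^1$, so $\boldsymbol\phi$ is a $C^1$ diffeomorphism. Since $\sD\boldsymbol\phi(\mv)\,\sD\mH(\boldsymbol\phi(\mv))=\mathbf I$ by the chain rule, $\det\sD\boldsymbol\phi$ vanishes nowhere. The theorem gives, for any nonnegative measurable $g$ and Borel $\cA$,
\[
\int_{\boldsymbol\phi(\cA)} g(\mx)\,\d\mx=\int_{\cA} g(\boldsymbol\phi(\mv))\,|\det \sD\boldsymbol\phi(\mv)|\,\d\mv .
\]
Taking $g=p_V$ and noting $\boldsymbol\phi(\cA)=\mH^{-1}(\cA)$ yields the display above. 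This establishes the formula for every $\mv\in\R^r$ as a version of the density. The right-hand side is defined pointwise everywhere, and it is a genuine density: it is nonnegative, and setting $\cA=\R^r$ shows it integrates to one.

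The only substantive step is the substitution theorem. If it is not taken as given, I would prove it in the standard way. One first verifies it for indicators of small boxes, using local linearization of $\boldsymbol\phi$ and the fact that a linear map $\mathbf{M}$ scales Lebesgue measure by $|\det\mathbf M|$. One then extends to all Borel sets by a $\pi$-$\lambda$ (monotone class) argument, and to nonnegative $g$ by simple-function approximation and monotone convergence. The main care point is the local linearization estimate, which needs uniform continuity of $\sD\boldsymbol\phi$ on compacts; this is where the continuous differentiability of $\mH^{-1}$ is used.
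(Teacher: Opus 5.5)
Your proposal is correct and follows the standard route. The paper gives no proof of its own and simply cites the textbook corollary, whose argument is the same as yours: combine the pushforward identity $\P(\mH(\mV)\in A)=\P(\mV\in\mH^{-1}(A))$ for Borel $A$ with the multivariate substitution theorem for the $C^1$ diffeomorphism $\mH^{-1}$. Your reading of ``for any $\mv\in\R^r$'' as selecting one everywhere-defined version of the density is also the right one.
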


\section{Auxiliary results}\label{sec:otherresults}

\subsection{Auxiliary results for Theorem \ref{thm_reg}}

We collect auxiliary technical lemmas in this section. We cite without proof the following as they apply directly to our situation. 

\begin{lemma}[Theorem 2.3.2 of \cite{Durrett2019}]\label{lemma:prob_and_as}
Let $\mV_1, \mV_2,...$ be a sequence of $\R^r$-valued random variables. The sequence converges in probability to some random variable $\mV$ if and only if for each subsequence $n_k$ of $n$, there is a further subsequence $n_{k_\ell}$ that converges almost surely to $\mV$.
\end{lemma}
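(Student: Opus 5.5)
The statement to prove is Lemma~\ref{lemma:prob_and_as}: convergence in probability of $\mV_n$ to $\mV$ is equivalent to every subsequence having a further subsequence that converges almost surely. I would prove the two directions separately, working with the real-valued sequence $D_n:=\|\mV_n-\mV\|_2$. Both properties in the statement are properties of $D_n\to 0$ (in probability, respectively almost surely along subsequences), so this reduces everything to the scalar case.

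\textbf{Forward direction.} Assume $\mV_n\to\mV$ in probability and fix a subsequence $n_k$. Since $\P(D_{n_k}>2^{-j})\to 0$ as $k\to\infty$ for each $j$, I would choose indices inductively, $k_1<k_2<\cdots$, with
\[
\P\big(D_{n_{k_j}}>2^{-j}\big)\le 2^{-j}.
\]
These probabilities are summable, so the Borel--Cantelli lemma gives that almost surely $D_{n_{k_j}}\le 2^{-j}$ for all but finitely many $j$. Hence $D_{n_{k_j}}\to 0$ almost surely.

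\textbf{Reverse direction.} Argue by contradiction. If $\mV_n$ does not converge to $\mV$ in probability, there exist $\epsilon,\delta>0$ and a subsequence $n_k$ with $\P(D_{n_k}>\epsilon)\ge\delta$ for every $k$. By hypothesis there is a further subsequence $n_{k_\ell}$ along which $D_{n_{k_\ell}}\to 0$ almost surely. Almost sure convergence implies convergence in probability: apply dominated convergence to the indicators $\mathds{1}(D_{n_{k_\ell}}>\epsilon)$, which are bounded by $1$ and tend to $0$ almost surely. This gives $\P(D_{n_{k_\ell}}>\epsilon)\to 0$, contradicting $\P(D_{n_{k_\ell}}>\epsilon)\ge\delta$ along the whole subsequence.

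\textbf{Main obstacle.} Neither direction is hard. The only step needing care is the inductive diagonal choice in the forward direction, where each $k_j$ must be chosen strictly larger than $k_{j-1}$ so that the indices actually form a subsequence. This is always possible because the probability condition holds for all sufficiently large $k$.
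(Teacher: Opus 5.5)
Your proof is correct and is essentially the standard argument of the cited source (Durrett, Theorem~2.3.2). The forward direction applies Borel--Cantelli along a subsequence chosen so that $\P(D_{n_{k_j}}>2^{-j})\le 2^{-j}$. The converse combines the subsequence principle for the real sequence $\P(D_n>\epsilon)$ with the fact that almost sure convergence implies convergence in probability, which you unpack as a direct contradiction argument; the paper itself gives no proof and only cites this result.
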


Since our notion of weak convergence conditional on $\cO$ is nonstandard, we state analogues of weak convergence results. Their proofs follow by applying the classical results to each realization of the sequence of conditional measures. First is Polya's theorem.
\begin{lemma}[Problem 3.2.9 of \cite{Durrett2019}]\label{lemma:polya_cdl}
Let $\mV, \mV_1, \mV_2,...$ be $\R^r$-valued random variables, with $\mV$ continuous. Then, $\mV_1, \mV_2, ...$ converges weakly to $\mV$ conditionally on $\cO$ if and only if 
$$
\sup_{\mv \in \R^r} | \P(\mV_n \leq \mv \mid \cO) - \P(\mV \leq \mv)| \rightarrow 0 
$$
almost surely.
\end{lemma}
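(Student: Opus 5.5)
The plan is to reduce the statement to the classical, unconditional P\'olya theorem applied separately to each realization $\omega$ of the conditioning. Fix a regular conditional distribution $\bQ_n^\omega$ of $\mV_n$ given $\cO$. Such a version exists because $\mV_n$ is $\R^r$-valued. Then $\E[f(\mV_n)\mid\cO](\omega)=\int f\,\d\bQ_n^\omega$ almost surely for every bounded measurable $f$, and likewise $\P(\mV_n\le\mv\mid\cO)(\omega)=\bQ_n^\omega((-\infty,\mv])$. Throughout I use that $\E[f(\mV)\mid\cO]=\E[f(\mV)]$, i.e.\ the limit law $\P_V$ is nonrandom, as it is in every application of this lemma in the paper. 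The only genuine subtlety is that ``almost surely'' in the definition of conditional weak convergence is quantified \emph{for each fixed} $f$. The null set may therefore depend on $f$, and I must produce a single null set before passing to individual $\omega$.

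For the forward direction, assume conditional weak convergence. I take a countable family $\{f_k\}_{k\ge1}$ of bounded continuous functions that is convergence determining on $\R^r$. Concretely, I use the piecewise-linear approximations
\[
f_{\mq,m}(\mx)=\prod_{j=1}^r \bigl(1-m(x_j-q_j)\bigr)_+\wedge 1,\qquad \mq\in\mathbb{Q}^r,\ m\in\mathbb{N},
\]
of the orthant indicators $\mathds{1}(\mx\le\mq)$. Let $N$ be the union of the countably many exceptional null sets. Off $N$, $\int f_{\mq,m}\,\d\bQ_n^\omega\to\E f_{\mq,m}(\mV)$ for all $\mq$ and $m$. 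By sandwiching $\mathds{1}(\mx\le\mq)$ between $f_{\mq-1/m,m}$ and $f_{\mq,m}$ and using continuity of the cdf $F$ of $\mV$, this yields $\bQ_n^\omega((-\infty,\mq])\to F(\mq)$ for all rational $\mq$, for every $\omega\notin N$.

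I then run the classical P\'olya grid argument for each fixed $\omega\notin N$. Since $F$ is continuous and $\P_V$ is tight, for any $\epsilon>0$ choose a finite rational grid such that $F$ varies by less than $\epsilon$ between consecutive grid points and the mass outside the grid box is below $\epsilon$. Monotonicity of multivariate cdfs then gives $\limsup_n\sup_{\mv}|\bQ_n^\omega((-\infty,\mv])-F(\mv)|\le C_r\epsilon$. This is exactly Problem 3.2.9 of \cite{Durrett2019} in its multivariate form, and it gives the displayed uniform convergence almost surely.

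For the converse, assume the uniform cdf convergence. Off a single null set, $\bQ_n^\omega$ has cdf converging to $F$ at every point. By the classical multivariate Helly--Bray/portmanteau theorem this gives $\bQ_n^\omega\Rightarrow\P_V$, hence $\int f\,\d\bQ_n^\omega\to\E f(\mV)$ for every bounded continuous $f$ simultaneously on this event. This is stronger than what is required.

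I expect the main obstacle to be the forward direction's null-set bookkeeping: choosing the countable approximating family and checking that it suffices to recover convergence at all rational orthants. The remaining steps are verbatim applications of the classical theorems to a fixed $\omega$.
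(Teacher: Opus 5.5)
Your proposal is correct and takes the same route as the paper, which justifies this lemma only by remarking that one applies the classical P\'olya theorem to each realization of the sequence of conditional measures. What you add is the bookkeeping the paper leaves implicit: you fix regular conditional distributions, reduce the $f$-dependent null sets to a single one via a countable convergence-determining family, and state explicitly that the statement requires the limit law to be nonrandom and its multivariate cdf to be continuous.
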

Next is the Cramer-Wold device.

\begin{lemma}[Theorem 3.10.6 of \cite{Durrett2019}]\label{lemma:cramerwold_cdl}
    Let $\mV, \mV_1, \mV_2, ...$ be $\R^r$-valued random variables. The sequence $\mV_1, \mV_2, ...$ converges weakly to $\mV$ conditionally on $\cO$ if, for any $\ma \in \R^r$, $\ma^\top \mV$ converges weakly to $\ma^\top\mV$  conditionally on $\cO$. 
\end{lemma}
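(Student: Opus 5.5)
The statement just above is the conditional Cramér--Wold device, Lemma~\ref{lemma:cramerwold_cdl}. The plan is to reduce it, realization by realization, to the classical Cramér--Wold theorem. The approach rests on a key convention: weak convergence conditional on $\cO$ means that for each bounded continuous $f$, $\E[f(\mV_n)\mid\cO]-\E[f(\mV)\mid\cO]\to 0$ almost surely. Since $\R^r$ is Polish, we fix versions of the regular conditional distributions $\bP_{V_n\mid\cO}(\omega,\cdot)$ and $\bP_{V\mid\cO}(\omega,\cdot)$. Weak convergence of probability measures on $\R^r$ is then characterized by pointwise convergence of characteristic functions (Lévy's continuity theorem), and these form a countable-friendly family.

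\textbf{Step 1.} Apply the hypothesis with $f(x)=\cos(x)$ and $f(x)=\sin(x)$ to $\ma^\top\mV_n$, for $\ma$ ranging over the countable set $\mathbb{Q}^r$. This gives an event $\Omega_0$ of probability one on which
\[
\varphi_n(\omega,\ma):=\E[e^{i\ma^\top\mV_n}\mid\cO](\omega)\to\varphi(\omega,\ma)
\]
for every $\ma\in\mathbb{Q}^r$.

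\textbf{Step 2.} Upgrade convergence from $\mathbb{Q}^r$ to all $\ma\in\R^r$. This is the main obstacle, because the null sets depend on $\ma$ and there are uncountably many directions. First enlarge $\Omega_0$ to include tightness of $\{\bP_{V_n\mid\cO}(\omega,\cdot)\}_n$. Tightness of each coordinate follows from convergence of $e_j^\top\mV_n$ with $e_j\in\mathbb{Q}^r$: by Lévy's theorem in one dimension, this gives conditional weak convergence of each coordinate, hence tightness of the marginals, hence joint tightness. Tightness yields equicontinuity of $\ma\mapsto\varphi_n(\omega,\ma)$ on compacts, through the bound
\[
|\varphi_n(\ma)-\varphi_n(\mb)|\le \E[\min(2,\|\ma-\mb\|_2\|\mV_n\|_2)\mid\cO].
\]
Combined with continuity of $\varphi(\omega,\cdot)$, a standard $\epsilon/3$ argument extends the convergence to all $\ma\in\R^r$ on $\Omega_0$.

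\textbf{Step 3.} For each $\omega\in\Omega_0$, the multivariate Lévy continuity theorem then gives $\bP_{V_n\mid\cO}(\omega,\cdot)\Rightarrow\bP_{V\mid\cO}(\omega,\cdot)$. Consequently, for every bounded continuous $f$, $\E[f(\mV_n)\mid\cO](\omega)\to\E[f(\mV)\mid\cO](\omega)$ for all $\omega\in\Omega_0$, which is a set of full measure independent of $f$. This is exactly conditional weak convergence of $\mV_n$ to $\mV$, completing the proof.
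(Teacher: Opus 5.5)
Your strategy is the paper's own. The paper gives no argument beyond saying that the conditional versions follow by applying the classical result to each realization of the conditional laws. You correctly identify what that sentence hides: the hypothesis only yields a null set that depends on $\ma$ (and on the test function). Steps 1 and 3, and the equicontinuity/$\epsilon/3$ extension in Step 2, are fine once tightness holds on a single full-measure event. You also implicitly fix the typo in the statement: the hypothesis should read $\ma^\top\mV_n\to\ma^\top\mV$.

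The gap is in how you justify that tightness. You derive it from convergence along $e_j\in\mathbb{Q}^r$ via the one-dimensional L\'evy theorem. But on $\Omega_0$ you only know $\varphi_n(\omega,te_j)\to\varphi(\omega,te_j)$ for rational $t$, whereas L\'evy's theorem needs every $t\in\R$. This is the same countability problem as Step 2, now in dimension one, and information on a dense set genuinely does not suffice. For example, $\mu_n=\delta_{2\pi n!}$ has $e^{2\pi i t\,n!}=1$ for every rational $t$ once $n$ is at least its denominator, yet $(\mu_n)$ is not tight. If you instead mean the hypothesis applied with $\ma=e_j$, then L\'evy's theorem is superfluous, but you still must turn ``for each $f$, almost surely'' into ``almost surely, tight''.

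The repair is a countable family of test functions. For $M\in\mathbb{N}$, take a continuous $g_M$ with $\mathds{1}(|x|\ge M+1)\le g_M(x)\le\mathds{1}(|x|>M)$. Apply the hypothesis with $\ma=e_j$ and $f=g_M$ for all $j\in[r]$ and $M\in\mathbb{N}$, and intersect these countably many full-measure events with $\Omega_0$. On the resulting event,
\[
\limsup_{n\to\infty}\bP_{V_n\mid\cO}\big(\omega,\{|x_j|\ge M+1\}\big)\le\bP_{V\mid\cO}\big(\omega,\{|x_j|>M\}\big)\xrightarrow[M\to\infty]{}0,
\]
and since each individual measure is tight, the whole sequence is tight. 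A Fubini argument in $(\omega,t)$ also works: it gives, for almost every $\omega$, convergence at Lebesgue-almost every $t$ on each axis, which suffices for the usual truncation inequality. With this in place your proof is complete. You could even drop the equicontinuity step, since any subsequential weak limit has characteristic function equal to $\varphi(\omega,\cdot)$ on $\mathbb{Q}^r$, hence everywhere by continuity.
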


Next is the Lyapunov central limit theorem.
\begin{lemma}[Problem 3.4.12 of \cite{Durrett2019}]\label{lemma:lyapunov_cdl}
    Let $\{V_{i,n}\}_{n \geq 1, 1 \leq i \leq n}$ be a triangular array, conditional on $\cO$,  of real-valued random variables. Suppose $0 < \E[V_{i,n}^2 \mid \cO] < \infty$ almost surely, for all $n\geq 1$ and $1 \leq i \leq n$. Furthermore, define $s_n^2 = \sum_{i=1}^n \mathrm{Var}(V_{i,n} \mid \cO)$ and suppose that, for some $\delta > 0$, the Lyapunov condition
    $$
    \frac{1}{s_n^{2+\delta}} \sum_{i=1}^n \E[|V_{i,n} - \E[V_{i,n} \mid \cO]|^{2 + \delta} \mid \cO] \rightarrow 0 \text{ almost surely}
    $$
    is satisfied. We then have
    $
    \frac{\sum_{i=1}^n V_{i,n} - \E[V_{i,n} \mid \cO]}{s_n}$  converges weakly to the standard normal distribution  conditionally on $\cO$. 
\end{lemma}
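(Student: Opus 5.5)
This lemma is a conditional version of the Lyapunov central limit theorem, so I would reduce it to the classical one realization by realization. Consider the regular conditional probability $\bP_{\mid \cO}$. For $\P_\cO$-almost every $\omega$, the conditional law of $(V_{1,n},\ldots,V_{n,n})$ given $\cO$, evaluated at $\omega$, is a genuine probability measure on $\R^n$. I would first check that, under this measure, the $V_{i,n}$ are independent for each fixed $n$. This is what ``triangular array conditional on $\cO$'' means: in our application the rows are measurable functions of $\tilde\mU_1,\ldots,\tilde\mU_n$, which are independent of $\cO$ by Assumption~\ref{assump_noise1}. With this in hand, the conditional moments $\E[V_{i,n}^2\mid\cO](\omega)$, $\mathrm{Var}(V_{i,n}\mid\cO)(\omega)$ and $s_n^2(\omega)$ are exactly the moments of a row-wise independent triangular array under a fixed probability measure.

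Next I would intersect the almost-sure events: the moment conditions $0<\E[V_{i,n}^2\mid\cO]<\infty$ for all $(i,n)$, which are countably many, and the single event on which the Lyapunov ratio tends to zero. Call the resulting intersection $\Omega_0$; it has $\P_\cO(\Omega_0)=1$. For each fixed $\omega\in\Omega_0$, the centered variables $V_{i,n}-\E[V_{i,n}\mid\cO](\omega)$ form a row-wise independent, mean-zero, square-integrable triangular array with positive total variance $s_n^2(\omega)$, and they satisfy the $(2+\delta)$ Lyapunov condition. The classical Lyapunov theorem, which follows from Lindeberg--Feller as in the cited problem of \cite{Durrett2019}, then gives weak convergence of the normalized row sums to $N(0,1)$ under the measure indexed by $\omega$. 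The passage from Lyapunov to Lindeberg is the usual bound $\E[X^2\mathds{1}(|X|>\epsilon s_n)]\le \E|X|^{2+\delta}/(\epsilon^\delta s_n^\delta)$.

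Finally I would translate this pointwise statement into the paper's definition of conditional weak convergence. For each bounded continuous $f$, $\E[f(\cdot)\mid\cO](\omega)$ is the integral of $f$ under the $\omega$-measure, so it converges to $\E f(N(0,1))$ for every $\omega\in\Omega_0$, hence almost surely.

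The main obstacle is measure-theoretic. I need the regular conditional distribution to exist, which holds because the variables are real-valued and Polish-valued, and I need conditional independence within rows to hold for almost every $\omega$ simultaneously across all $n$. The latter works because only countably many rows are involved, and each row is a product-measure statement about functions of noise that is independent of $\cO$. Once these points are secured, the rest is the classical argument applied verbatim.
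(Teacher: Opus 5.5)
Your proposal is correct and takes the same route as the paper, which justifies this lemma only by noting that one applies the classical Lyapunov theorem (via Lindeberg--Feller) to each realization of the regular conditional distribution given $\cO$. Your treatment of the countably many null sets and of row-wise conditional independence spells out the measure-theoretic details the paper leaves implicit. In the application the rows also depend on $\cO$-measurable quantities such as $\tilde{\boldsymbol\eta}_0$, not only on the noise, but these are fixed once $\omega$ is fixed, so your argument is unaffected.
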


Next, we have Dudley's entropy integral stated conditionally.

\begin{lemma}[Corollary 2.2.9 of \cite{van1996weak}]\label{lemma:wellvdv229}
Suppose $\{V(\mt) : \mt \in \cT\}$ is a collection of real-valued random variables indexed by a subset $\cT \subseteq \R^r$. Suppose
$$
\|V(\mt) - V(\mt')\|_{\bP_{\mid \cO}, \psi_2} \leq C \|\mt - \mt'\|_2 
$$
almost surely, for all $\mt, \mt' \in \cT$ and some non-random constant $C > 0$. There then exists a non-random constant $C' > 0$ such that, for every $\delta > 0$, 
\al{
\E\Bigg[\sup_{\mt,\mt \in \cT : \|\mt - \mt'\|_2 \leq \delta} |V(\mt) - V(\mt')| \mid \cO \Bigg] \leq C' \int_{0}^\delta \sqrt{\log N(\epsilon/2, \cT, \|\cdot \|_2)} \, \d \epsilon
}
almost surely.
\end{lemma}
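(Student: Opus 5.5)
The statement to prove is the conditional Dudley entropy bound (Lemma~\ref{lemma:wellvdv229}). I would prove it by fixing a realization of the conditioning and running the classical chaining argument of \citet[Corollary~2.2.9]{van1996weak} under the regular conditional probability $\bP_{\mid\cO}$.

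The idea is that $\bP_{\mid\cO}(\omega,\cdot)$ is, for $\P_\cO$-almost every $\omega$, a genuine probability measure. Under each such measure the process $\{V(\mt)\}$ satisfies the subgaussian increment condition $\|V(\mt)-V(\mt')\|_{\psi_2}\le C\|\mt-\mt'\|_2$. The constant $C$ is nonrandom, so the hypothesis holds uniformly over $\omega$ outside a single null set. There is one subtlety: the almost-sure qualifier in the hypothesis is pointwise in $(\mt,\mt')$. So I would first restrict to a countable dense subset $\cT_0\subseteq\cT$, on which a single null set suffices, and assume separability of the process. The supremum over $\cT$ then equals the supremum over $\cT_0$ almost surely, under each conditional measure.

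On each good $\omega$ I would apply the deterministic-measure version of Corollary~2.2.9 with $\psi=\psi_2$ and metric $d=C\|\cdot\|_2$. This gives
\[
\E\Big[\sup_{\|\mt-\mt'\|_2\le\delta}|V(\mt)-V(\mt')|\,\Big|\,\cO\Big](\omega)\le K\int_0^{C\delta}\sqrt{\log D(\eta,\cT,C\|\cdot\|_2)}\,\d\eta ,
\]
where $K$ is a universal constant depending only on $\psi_2$. Here $D$ denotes packing numbers. The $\delta$-increment term in the corollary's original form vanishes, because the increments are taken directly.

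It remains to convert this to the stated form. First, substitute $\eta=C\epsilon$, which gives $D(C\epsilon,\cT,C\|\cdot\|_2)=D(\epsilon,\cT,\|\cdot\|_2)$. Second, use $D(\epsilon)\le N(\epsilon/2)$ to pass from packing to covering numbers. This yields the bound $C'\int_0^\delta\sqrt{\log N(\epsilon/2,\cT,\|\cdot\|_2)}\,\d\epsilon$ with $C'=KC$, which is nonrandom. Since this holds for almost every $\omega$, the bound holds almost surely.

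The main obstacle is measurability. I need the conditional expectation of the supremum to coincide, for almost every $\omega$, with the expectation of the supremum under $\bP_{\mid\cO}(\omega,\cdot)$. The countable reduction to $\cT_0$ handles this, since a countable supremum of measurable maps is measurable and regular conditional expectations of measurable functions agree a.s. with integrals against the kernel. Everything else is the standard chaining argument.
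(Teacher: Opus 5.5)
Your proposal is correct and matches the paper's treatment. The paper gives no separate proof; it simply applies the chaining bound of \citet{van1996weak} realization by realization under the regular conditional probability, which is what you do, with the countable reduction, the rescaling $\eta = C\epsilon$, and $D(\epsilon)\le N(\epsilon/2)$ filled in. Two small remarks. First, the separability you add is genuinely needed: the statement as written fails for non-separable processes, and separability is the implicit convention behind the citation. Second, the extra term of the general chaining theorem does not ``vanish''; it is absorbed using monotonicity of the packing numbers, or is simply absent if you cite the sub-Gaussian-process form of the corollary.
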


Next, we cite a covering number bound.

\begin{lemma}[Problem 7, Chapter 2.1.1 of \cite{van1996weak}]\label{lemma:ballcvnum}

Take $R > 0$ and $\mv \in \R^r$. Then, for all $\epsilon > 0$,
$$
 N(\epsilon, \cB(\mv, R, \|\cdot\|_2), \|\cdot \|_2) \leq \pa{
 \frac{3R}{\epsilon}}^r.
$$
\end{lemma}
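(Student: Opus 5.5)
The statement to prove is the covering number bound for the Euclidean ball (Lemma~\ref{lemma:ballcvnum}): $N(\epsilon, \cB(\mv, R, \|\cdot\|_2), \|\cdot\|_2) \le (3R/\epsilon)^r$. I would use the standard volumetric (packing) argument.

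\textbf{Reductions.} By translation invariance of $\|\cdot\|_2$ and of Lebesgue measure, it suffices to take $\mv = \mathbf{0}$. I then split on the size of $\epsilon$.
\begin{itemize}
\item[] If $\epsilon \ge R$, the single ball $\cB(\mathbf{0}, \epsilon, \|\cdot\|_2)$ covers the set, so $N = 1$. This is at most $(3R/\epsilon)^r$ whenever $\epsilon \le 3R$.
\item[] If $\epsilon > 3R$, the stated bound is below $1$ and is false as written, since $N = 1$. I would either note this, or read the lemma as intended for $\epsilon \le R$, or use the version $N \le \max\{1, (3R/\epsilon)^r\}$. The only place the lemma is used is Dudley-type entropy integrals over small $\epsilon$, so the restriction is harmless there.
\end{itemize}
The main case is therefore $0 < \epsilon \le R$.

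\textbf{Maximal separated set.} Take a maximal $\epsilon$-separated subset $\{x_1, \dots, x_N\}$ of $\cB(\mathbf{0}, R, \|\cdot\|_2)$, meaning pairwise distances exceed $\epsilon$. Such a set is finite by the volume bound below, so a maximal one exists (for instance by a greedy construction). Maximality implies that every point of the ball lies within distance $\epsilon$ of some $x_i$. Hence the closed balls $\cB(x_i, \epsilon, \|\cdot\|_2)$ cover the set, and $N(\epsilon, \cdot, \cdot) \le N$.

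\textbf{Volume comparison.} The balls $\cB(x_i, \epsilon/2, \|\cdot\|_2)$ are disjoint up to null boundaries, and they are all contained in $\cB(\mathbf{0}, R + \epsilon/2, \|\cdot\|_2)$. Comparing Lebesgue measures, using homogeneity $\mathrm{vol}(\cB(\cdot, s)) = s^r\,\mathrm{vol}(\cB(\mathbf{0}, 1))$, gives
\[
N(\epsilon/2)^r \le (R + \epsilon/2)^r,
\]
so
\[
N \le \Big(\frac{2R + \epsilon}{\epsilon}\Big)^r \le \Big(\frac{3R}{\epsilon}\Big)^r,
\]
where the last step uses $\epsilon \le R$.

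\textbf{Main obstacle.} There is no real technical difficulty. The only points needing care are the edge case $\epsilon > 3R$ noted above and the finiteness and existence of a maximal separated set, which follows from the same volume bound.
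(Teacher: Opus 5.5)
The paper gives no proof of this lemma and simply cites the van der Vaart--Wellner exercise. Your maximal-separated-set plus volume-comparison argument is the standard solution to that exercise, and it is correct for $0<\epsilon\le 3R$. You are also right that the bound as literally stated fails for $\epsilon>3R$: a nonempty set has $N\ge 1$, while $(3R/\epsilon)^r<1$. The lemma should therefore be read with $\epsilon\le 3R$, or as $N\le\max\{1,(3R/\epsilon)^r\}$, which does no harm in the small-$\epsilon$ entropy integrals where it is meant to be used.
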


Next, we show that empirical processes over certain classes of functions indexed by $\cK$ satisfy an asymptotic equicontinuity condition.

\begin{lemma}\label{lemma:Ldonsker}

Suppose Assumption \ref{assump_noise1},  \ref{assump_bootstrap}, and  \ref{assump_l} hold. 
Let 
    $$
    \tilde V_n(\boldsymbol{\eta}) := \frac{1}{\sqrt{n}} \sum_{i=1}^n \sL(\boldsymbol{\eta}, \tilde\mZ_i)- \E[\sL(\boldsymbol{\eta}, \tilde\mZ)\mid \cO].
    $$
    Also, denote, for $\boldsymbol{\eta} \in \cK$ and $\mz \in \tilde\cZ$,
    $$
    \check R(\boldsymbol{\eta}, \mz) = \|\boldsymbol{\eta} - \tilde{\boldsymbol{\eta}}_0\|_2^{-1} \cdot \pa{\sL(\boldsymbol{\eta}, \mz) - \sL(\tilde{\boldsymbol{\eta}}_0, \mz) - \sD_{\boldsymbol{\eta}}\sL(\tilde{\boldsymbol{\eta}}_0, \mz)^\top (\boldsymbol{\eta} - \tilde{\boldsymbol{\eta}}_0)},
    $$
    and define the process 
    $$
    \check V_n(\boldsymbol{\eta}) := \frac{1}{\sqrt{n}}\sum_{i=1}^n \check R(\boldsymbol{\eta},\tilde\mZ_i)- \E\Big[\check R(\boldsymbol{\eta},\tilde\mZ)\mid \cO\Big].
    $$
There then exist some constants $C, C'>0$ such that any $\delta > 0$ small enough, we have
$$
   \E\Bigg[\sup_{\boldsymbol{\eta} \in \cK : \|\boldsymbol{\eta} - \tilde{\boldsymbol{\eta}}_0\|_2 < \delta} \ap{\tilde V_n(\boldsymbol{\eta}) - \tilde V_n(\tilde{\boldsymbol{\eta}}_0)} \mid \cO \Bigg] \leq C\delta
~~~{\rm and}~~~
\E\Bigg[\sup_{\boldsymbol{\eta} \in \cK : \|\boldsymbol{\eta} - \tilde{\boldsymbol{\eta}}_0\|_2 < \delta} \ap{\check V_n(\boldsymbol{\eta})} \mid \cO \Bigg] \leq C'\delta
$$
almost surely for all $n$.
\end{lemma}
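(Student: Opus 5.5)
We prove both bounds with the conditional Dudley entropy bound, Lemma \ref{lemma:wellvdv229}. The index set is $\cT=\{\meta\in\cK:\|\meta-\tilde\meta_0\|_2<\delta\}$, a subset of a Euclidean ball of radius $\delta$. The two processes are $V(\meta)=\tilde V_n(\meta)-\tilde V_n(\tilde\meta_0)$ and $V(\meta)=\check V_n(\meta)$. Conditional on $\cO$, the variables $\tilde\mZ_i=\hat\mG_n(\tilde\mU_i)$ are i.i.d.\ and, by Assumption \ref{assump_bootstrap}(c), supported in the nonrandom compact set $\tilde\cZ$. So each increment $V(\meta)-V(\meta')$ is $n^{-1/2}$ times a sum of $n$ conditionally i.i.d., mean-zero, bounded terms. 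Hoeffding's inequality (equivalently, the sub-Gaussian bound for bounded centered sums) then gives
\[
\|V(\meta)-V(\meta')\|_{\bP_{\mid\cO},\psi_2}\le C_0\sup_{\mz\in\tilde\cZ}|h_{\meta}(\mz)-h_{\meta'}(\mz)|,
\]
with $C_0$ universal, where $h_{\meta}$ is the corresponding summand function. It therefore suffices to show that $\meta\mapsto h_\meta(\mz)$ is Lipschitz with a nonrandom constant, uniformly in $\mz\in\tilde\cZ$.

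For $\tilde V_n$, take $h_\meta(\mz)=\sL(\meta,\mz)$. We would like to use the mean value theorem in $\meta$ on the convex set $\cK$ (Assumption \ref{assump_l}(a)). Assumption \ref{assump_l}(b,c) gives joint continuity of $\rD_\meta\sL$ on the compact set $\cK\times(\cZ\cup\tilde\cZ)$, so $L_1:=\sup\|\rD_\meta\sL\|_2<\infty$ is a nonrandom Lipschitz constant. Lemma \ref{lemma:wellvdv229} with Lemma \ref{lemma:ballcvnum} then gives
\[
\E\Big[\sup_{\cT}|V(\meta)-V(\tilde\meta_0)|\,\Big|\,\cO\Big]\le C'\int_0^{\delta}\sqrt{q\log(6\delta/\epsilon)}\,\d\epsilon = C''\delta,
\]
after the substitution $\epsilon=\delta s$. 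Here we take $\meta'=\tilde\meta_0$ and use that $\tilde\meta_0$ lies in the closure of $\cT$.

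For $\check V_n$, write $\check R(\meta,\mz)=\int_0^1\big(\rD_\meta\sL(\tilde\meta_0+t(\meta-\tilde\meta_0),\mz)-\rD_\meta\sL(\tilde\meta_0,\mz)\big)^\top\mv\,\d t$ with $\mv=(\meta-\tilde\meta_0)/\|\meta-\tilde\meta_0\|_2$.

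\textbf{Main obstacle.} We need a Lipschitz bound in $\meta$ for $\check R$, and the normalization by $\|\meta-\tilde\meta_0\|_2$ makes this singular near $\tilde\meta_0$. Because $\sL$ is only $C^2$, differentiating the unit vector $\mv$ costs a factor $1/\|\meta-\tilde\meta_0\|_2$. The first thing I would try is a local peeling: split $\cT$ into dyadic shells $2^{-j-1}\delta\le\|\meta-\tilde\meta_0\|_2<2^{-j}\delta$. On the shell at radius $r$, the representation above together with $\sup\|\rD^2_\meta\sL\|_{\op}=:L_2<\infty$ yields
\begin{itemize}
\item a sup bound $|\check R|\le L_2 r$;
\item a Lipschitz constant of order $L_2$, up to an absolute factor, from combining the $1/r$ factor with increments of size at most $L_2 r$.
\end{itemize}
Dudley on each shell then contributes order $r$, and the shell contributions sum geometrically to order $\delta$.

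If the Lipschitz estimate on shells proves delicate, the fallback is this: replace the chaining step by the crude bound of the supremum of $|\check V_n|$ via the $\psi_2$-norm at a single point plus the oscillation, using the $L_2r$ envelope. Smallness of $\delta$ ensures $\cT$ lies in the interior ball used in Lemma \ref{cdllinear}, so that all segments stay in $\cK$. All constants come from $L_1$, $L_2$, $q$ and $\tilde\cZ$, hence are nonrandom, and the bounds hold almost surely for all $n$.
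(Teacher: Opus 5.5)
Your argument is correct and follows the paper's route: Taylor's theorem gives a nonrandom, $\mz$-uniform Lipschitz bound in $\boldsymbol{\eta}$, Hoeffding converts it into the conditional $\psi_2$ hypothesis, and Lemma~\ref{lemma:wellvdv229} with the ball covering bound then gives order $\delta$. The peeling is unnecessary, because your own shell estimate already gives a Lipschitz constant of order $L_2$ that does not depend on $r$. Indeed $\|\nabla_{\boldsymbol{\eta}}\check R\|_2\le \tfrac32 L_2$ on $\cK\setminus\{\tilde{\boldsymbol{\eta}}_0\}$ and $|\check R|\le \tfrac12 L_2\|\boldsymbol{\eta}-\tilde{\boldsymbol{\eta}}_0\|_2$, so setting $\check R(\tilde{\boldsymbol{\eta}}_0,\cdot):=0$ makes $\check R$ globally Lipschitz, and your $\tilde V_n$ argument applies verbatim with base point $\tilde{\boldsymbol{\eta}}_0$. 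You also do not need $\delta$ small or $\tilde{\boldsymbol{\eta}}_0$ interior: convexity of $\cK$ already keeps all segments inside $\cK$, and interiority would in any case hold only with probability tending to one, not almost surely for all $n$.
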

\begin{proof}
    The proof of this theorem is a consequence of using Taylor's theorem to verify the hypothesis of Lemma \ref{lemma:wellvdv229}. We omit the details as it only involves some algebraic manipulation.
\end{proof}


From Theorem 5.52 of \cite{Vaart_1998}, in the context of Section \ref{sec:regular}, we obtain the rate of convergence for the studied estimator sequence.

\begin{lemma}[Theorem 5.52 of \cite{Vaart_1998}]\label{lemma:vdv552}
Suppose Assumption \ref{assump_bootstrap} and Assumption \ref{assump_bsmest} holds. Furthermore, assume that there exists some constant $C > 0$ such that, for every $n$ and for every sufficiently small $\delta > 0$,
\begin{equation}
\sup_{\|\boldsymbol{\eta} - \tilde{\boldsymbol{\eta}}_0 \|_2 < \delta} \E[\sL(\boldsymbol{\eta}, \tilde \mZ) - \sL(\tilde{\boldsymbol{
\eta
}}_0, \tilde\mZ) \mid \cO] \leq -C\delta^2\label{eq:thm552_bd1}
\end{equation}
and
\begin{equation}
\E\Bigg[\sup_{\|\boldsymbol{\eta} - \tilde{\boldsymbol{\eta}}_0 \|_2 < \delta} \ap{\frac{1}{\sqrt{n}} 
\sum_{i=1}^n\sL(\boldsymbol{\eta}, \tilde\mZ_i) - \sL(\tilde{\boldsymbol{\eta}}_{0}, \tilde\mZ_i) - \E[\sL(\boldsymbol{\eta}, \tilde\mZ) - \sL(\tilde{\boldsymbol{\eta}}_{0}, \tilde\mZ) \mid \cO]} \mid \cO \Bigg] \leq C\delta \label{eq:thm552_bd2}
\end{equation}
almost surely. We then have
$$
n^{1/2}(\tilde{\boldsymbol{\eta}}_n - \tilde{\boldsymbol{\eta}}_0) = O_{\P_{\cO\tilde U}}(1).
$$
\end{lemma}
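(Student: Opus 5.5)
This is a direct adaptation of van der Vaart's Theorem 5.52 (the peeling or "shelling" argument), run conditionally on $\cO$ with the bootstrap empirical process. The plan is to work on the event where $\tilde{\boldsymbol\eta}_0$ lies in the interior ball around $\boldsymbol\eta_0$ and $\|\tilde{\boldsymbol\eta}_n-\tilde{\boldsymbol\eta}_0\|_2<\delta_0$, with $\delta_0$ small enough that \eqref{eq:thm552_bd1} and \eqref{eq:thm552_bd2} hold on that scale. By Assumption~\ref{assump_bsmest}(a) and Lemma~\ref{cdllinear}(b)(i), this event has probability tending to one.

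Set $\tilde{\mathbb M}_n(\boldsymbol\eta)=n^{-1}\sum_i \sL(\boldsymbol\eta,\tilde\mZ_i)$ and $\tilde M(\boldsymbol\eta)=\E[\sL(\boldsymbol\eta,\tilde\mZ)\mid\cO]$. Fix $M>0$ and let $r_n=\sqrt n$. Partition the set $\{\eta: 2^{j-1}<r_n\|\boldsymbol\eta-\tilde{\boldsymbol\eta}_0\|_2\le 2^j\}$ into shells $S_{j,n}$ for $j>M$ with $2^j\le r_n\delta_0$. By Assumption~\ref{assump_bsmest}(b), on the event $\{r_n\|\tilde{\boldsymbol\eta}_n-\tilde{\boldsymbol\eta}_0\|_2>2^M\}$ there is some $j$ with $\tilde{\boldsymbol\eta}_n\in S_{j,n}$. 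For that $j$,
\[
\sup_{S_{j,n}}\big(\tilde{\mathbb M}_n(\boldsymbol\eta)-\tilde{\mathbb M}_n(\tilde{\boldsymbol\eta}_0)\big)\ge -R_n, \qquad R_n=o_{\P}(n^{-1}).
\]
We also intersect with the event $\{R_n\le n^{-1}\}$, whose complement has vanishing probability.

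On $S_{j,n}$, condition \eqref{eq:thm552_bd1} gives $\tilde M(\boldsymbol\eta)-\tilde M(\tilde{\boldsymbol\eta}_0)\le -C2^{2j-2}/n$. It follows that the centered process $\tilde V_n(\boldsymbol\eta)-\tilde V_n(\tilde{\boldsymbol\eta}_0)$ must exceed $C2^{2j-2}/\sqrt n - n^{-1/2}\gtrsim 2^{2j}/\sqrt n$ in absolute value somewhere in $S_{j,n}$. Markov's inequality, applied conditionally on $\cO$, combined with \eqref{eq:thm552_bd2} at radius $\delta=2^j/\sqrt n$, bounds the conditional probability of that shell by
\[
\frac{C\,2^j/\sqrt n}{2^{2j}/\sqrt n}\lesssim 2^{-j}.
\]
Summing over $j>M$ gives a conditional bound of $\lesssim 2^{-M}$, almost surely, uniformly in $n$. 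Taking expectations over $\cO$ and adding the vanishing probabilities of the excluded events then yields $\limsup_n\P(\sqrt n\|\tilde{\boldsymbol\eta}_n-\tilde{\boldsymbol\eta}_0\|_2>2^M)\lesssim 2^{-M}$, which is the claimed $O_{\P_{\cO\tilde U}}(1)$.

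The main obstacle is the measurability and conditioning bookkeeping. The centering point $\tilde{\boldsymbol\eta}_0$ and the criterion $\tilde M$ are $\cO$-measurable, while the shells are random through $\tilde{\boldsymbol\eta}_0$. The constants in \eqref{eq:thm552_bd1} and \eqref{eq:thm552_bd2} must be nonrandom for the bound to be uniform. A second issue is that the $o_{\P}(n^{-1})$ optimization error is only controlled unconditionally. I would handle this by splitting off the event $\{R_n>n^{-1}\}$ before conditioning, as above, so that Markov's inequality is applied only to the $\cO$-conditional empirical process term, with all constants free of $\cO$.
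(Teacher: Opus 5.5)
Your proposal is correct and follows essentially the paper's route. The paper proves this lemma only by citing Theorem~5.52 of van der Vaart, and you reproduce that theorem's peeling argument conditionally on $\cO$, correctly splitting off the unconditional optimization error via $\{R_n>n^{-1}\}$ before applying the conditional Markov bound from \eqref{eq:thm552_bd2}. Two cosmetic points: the appeal to Lemma~\ref{cdllinear}(b)(i) is unnecessary, and it also uses assumptions absent from the statement, since \eqref{eq:thm552_bd1}--\eqref{eq:thm552_bd2} are already centered at $\tilde{\boldsymbol\eta}_0$ and Assumption~\ref{assump_bsmest}(a) alone localizes $\tilde{\boldsymbol\eta}_n$; and you should localize to $\|\tilde{\boldsymbol\eta}_n-\tilde{\boldsymbol\eta}_0\|_2<\delta_0/2$, so that the shell containing $\tilde{\boldsymbol\eta}_n$ satisfies $2^j\le\sqrt n\,\delta_0$.
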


The following lemma is the conditional triangular array analogue of Lemma 19.24 of \cite{Vaart_1998}. As usual, we ignore measure-theoretic complications.

\begin{lemma}[Lemma 19.24 of \cite{Vaart_1998}] \label{lemma:triangular_array_1924}
 Suppose $\{\mV_{i,n}\}_{1 \leq i \leq n, n\geq 1}$ forms a  triangular array conditional on $\cO$ of $\R^r$-valued random variables, and $\mV_{1,n}, ..., \mV_{n,n}$ are identically distributed conditional on $\cO$. Furthermore, suppose that $\cH_n$'s are classes of functions from $\R^r$ to $\R$ such that
\begin{enumerate}[label=(\alph*)]
\item for each $n$, $\cH_n$ always contains the function $h_0: \R^r \rightarrow \R$ with $h_0(\mv) = 0$ for any $\mv\in \R^r$;

\item for each $n$ and all $h \in \cH_n$, we have $\E[h(\mV_{1,n})^2 \mid \cO] < \infty$ almost surely; 

\item letting $\rho_n: \cH_n \times \cH_n \rightarrow [0,\infty)$ be 
$$
\rho_n(h_1,h_2) := \Big(\E\Big[(h_1(\mV_{1,n}) - h_2(\mV_{1,n}))^2 \mid \cO\Big]\Big)^{1/2},
$$
the conditional asymptotic continuity condition at $h_0$,
$$
\lim_{\delta \rightarrow 0} \limsup_{n\rightarrow \infty} \P\pa{\sup_{\rho_n(h,h_0) \leq \delta} \ap{\frac{1}{\sqrt{n}} \sum_{i=1}^n h(\mV_{i,n})  - \E[h(\mV_{1,n})  \mid \cO]} > \epsilon \mid \cO} = 0,
$$
holds for any choice of $\epsilon > 0$.
\end{enumerate}
Finally, suppose $\hat H_n: \R^r \rightarrow \R$ satisfies that $\P(\hat H_n \not\in \cH_n) = o_{\P}(1)$ and $\E[\hat H_n(\mV_{1,n})^2 \mid \cO] = o_{\P}(1)$.
We then have 
$$
\frac{1}{\sqrt{n}}\sum_{i=1}^n \hat{H}_n(\mV_{i,n}) - \E[\hat{H}_n(\mV_{1,n}) \mid \cO] = o_{\P}(1).
$$
\end{lemma}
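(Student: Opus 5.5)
The plan is the standard ``plug a random function into an asymptotically equicontinuous process'' argument, run conditionally on $\cO$. Write $\mathbb{G}_n h := n^{-1/2}\sum_{i=1}^n \{h(\mV_{i,n}) - \E[h(\mV_{1,n})\mid\cO]\}$. By hypothesis (a), $\mathbb{G}_n h_0 \equiv 0$ and $\rho_n(h,h_0) = (\E[h(\mV_{1,n})^2\mid\cO])^{1/2}$. So the assumption $\E[\hat H_n(\mV_{1,n})^2\mid\cO] = o_{\P}(1)$ says exactly that $\rho_n(\hat H_n,h_0) = o_{\P}(1)$. The whole proof then reduces to bounding $|\mathbb{G}_n\hat H_n|$ by the supremum appearing in (c) on a high-probability event.

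Concretely, I fix $\epsilon>0$ and $\delta>0$ and introduce the event $A_{n,\delta} := \{\hat H_n\in\cH_n,\ \rho_n(\hat H_n,h_0)\le \delta\}$. On $A_{n,\delta}$ one has the deterministic inequality
\[
|\mathbb{G}_n \hat H_n| \le \sup_{h\in\cH_n:\,\rho_n(h,h_0)\le\delta} |\mathbb{G}_n h| .
\]
This holds pointwise, whatever the dependence of $\hat H_n$ on the $\mV_{i,n}$; this is precisely why the supremum is used. Hence
\[
\P(|\mathbb{G}_n\hat H_n|>\epsilon) \le \P(A_{n,\delta}^c) + \E\Big[\P\Big(\sup_{\rho_n(h,h_0)\le\delta}|\mathbb{G}_n h|>\epsilon \,\Big|\, \cO\Big)\Big].
\]
The first term tends to $0$ for every fixed $\delta$, since $\P(\hat H_n\notin\cH_n)\to0$ and $\rho_n(\hat H_n,h_0)=o_{\P}(1)$. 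For the second term, the integrand is bounded by $1$, so the reverse Fatou lemma gives
\[
\limsup_n \E[\cdots] \le \E\Big[\limsup_n \P\Big(\sup_{\rho_n(h,h_0)\le\delta}|\mathbb{G}_n h|>\epsilon \,\Big|\, \cO\Big)\Big].
\]
Now I let $\delta\downarrow 0$ along a countable sequence. The inner $\limsup$ is monotone in $\delta$, and by (c) it tends to $0$ (almost surely, in the conditional reading). Monotone/dominated convergence therefore sends the expectation to $0$. Since the left-hand side does not depend on $\delta$, this yields $\limsup_n \P(|\mathbb{G}_n\hat H_n|>\epsilon)=0$, which is the claim. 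Hypothesis (b) is used only to make $\rho_n$ and the centerings well defined.

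The step I expect to need the most care is the interchange of $\limsup_n$ and $\lim_{\delta\to0}$ with the outer expectation over $\cO$. Condition (c) is phrased for the conditional probabilities, so I would read it as holding $\P_{\cO}$-almost surely (or, if it is meant deterministically, the argument only gets easier), and justify the interchange by boundedness plus monotonicity in $\delta$ as above. Measurability of the suprema and of the event $A_{n,\delta}$ is ignored, in line with the paper's convention (or handled with outer probabilities exactly as in \citet[Lemma 19.24]{Vaart_1998}).
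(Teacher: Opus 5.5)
Your proof is correct. It does not follow the argument behind the paper's citation: the paper gives no proof of its own and defers to \citet[Lemma~19.24]{Vaart_1998}. That proof goes through the Donsker property: $\mathbb{G}_n$ converges in $\ell^\infty$ of a fixed class to a Gaussian limit with continuous sample paths, Slutsky's lemma gives joint convergence of $(\mathbb{G}_n,\hat f_n)$, and the continuous mapping theorem is applied to $(z,f)\mapsto z(f)-z(f_0)$.

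The paper's version replaces the Donsker hypothesis with the conditional asymptotic equicontinuity condition (c). Its class $\cH_n$ changes with $n$ and its semimetric $\rho_n$ depends on $\cO$, so there is no fixed index space and no limit process for van der Vaart's argument to act on. Your route bounds $|\mathbb{G}_n\hat H_n|$ by the supremum over the random $\rho_n$-ball on $\{\hat H_n\in\cH_n,\ \rho_n(\hat H_n,h_0)\le\delta\}$. It then uses reverse Fatou in $n$ and bounded convergence in $\delta$ to turn the conditional statement (c) into an unconditional one. This uses exactly the hypotheses as stated. Both arguments rest on equicontinuity at the limit point; van der Vaart reaches it through sample-path continuity of the limit, while you use it directly, which is what the triangular-array setting requires.

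One point worth stating explicitly: you read $\E[\hat H_n(\mV_{1,n})^2\mid\cO]$ and the centering in the conclusion as plug-in quantities. That is, they are integrals of the frozen random function $\hat H_n$ against the conditional law of $\mV_{1,n}$, the analogue of $\int(\hat f_n-f_0)^2\,dP$. The lemma is true only under this reading. Under the literal joint conditional expectation it fails. Take the class $\{0\}\cup\{c\,\mathds{1}(\cdot\neq w):w\in\R^r\}$ with a continuous conditional law and $\hat H_n=c\,\mathds{1}(\cdot\neq\mV_{1,n})$. All hypotheses hold, yet the literally centred sum equals $(n-1)c/\sqrt{n}$, which diverges.
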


The following is a consequence of Lemma \ref{lemma:triangular_array_1924}.

\begin{lemma}\label{lemma:donsker_verify}

Suppose Assumptions \ref{assump_noise1}, \ref{assump_bootstrap}, \ref{assump_l} hold and that $\tilde{\boldsymbol{\eta}}_0$ is in the interior of $\cK$ with probability tending to $1$. Let $\boldsymbol{\eta}_n$ be a sequence of random variables on the bootstrap space such that $\|\boldsymbol{\eta}_n - \tilde{\boldsymbol{\eta}}_0\| = o_{\P}(1)$, and define
$$
\check R({\boldsymbol{\eta}}_n, \mz) = \|{\boldsymbol{\eta}}_n - \tilde{\boldsymbol{\eta}}_0\|_2^{-1} \cdot \pa{\sL({\boldsymbol{\eta}}_n, \mz) - \sL(\tilde{\boldsymbol{\eta}}_0, \mz) - \sD_{\boldsymbol{\eta}}\sL(\tilde{\boldsymbol{\eta}}_0, \mz)^\top ({\boldsymbol{\eta}}_n - \tilde{\boldsymbol{\eta}}_0)}.
$$
We then have
$$
\frac{1}{\sqrt{n}}\sum_{i=1}^{n}  \check R({\boldsymbol{\eta}}_n, \tilde \mZ_{i}) - \E[\check R({\boldsymbol{\eta}}_n, \tilde \mZ_1) \mid \cO] = o_{\P}(1).
$$
\end{lemma}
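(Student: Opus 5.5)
The plan is to derive Lemma~\ref{lemma:donsker_verify} from Lemma~\ref{lemma:triangular_array_1924}. I will take $\mV_{i,n} = \tilde\mZ_{i}$, which form a conditional triangular array that is identically distributed given $\cO$ by Assumptions~\ref{assump_noise1} and~\ref{assump_bootstrap}(a). I will also set $\hat H_n(\cdot) = \check R(\boldsymbol\eta_n,\cdot)$. For each $\delta_0>0$, the candidate class is
\[
\cH_n = \Big\{\check R(\boldsymbol\eta,\cdot) : \boldsymbol\eta\in\cK,\ 0<\|\boldsymbol\eta-\tilde{\boldsymbol\eta}_0\|_2 \le \delta_0\Big\}\cup\{h_0\},
\]
where $\delta_0$ is small enough that the ball around $\tilde{\boldsymbol\eta}_0$ lies in the interior of $\cK$ on an event of probability tending to one. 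Since $\|\boldsymbol\eta_n-\tilde{\boldsymbol\eta}_0\|_2=o_\P(1)$, we get $\P(\hat H_n\notin\cH_n)\to 0$. When $\boldsymbol\eta_n=\tilde{\boldsymbol\eta}_0$, I adopt the convention $\check R=0$, so that $\hat H_n = h_0$.

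The next step is to verify conditions (a)--(b) of Lemma~\ref{lemma:triangular_array_1924} and the requirement $\E[\hat H_n(\tilde\mZ_1)^2\mid\cO]=o_\P(1)$. Both follow from Taylor's theorem with integral remainder on the segment between $\tilde{\boldsymbol\eta}_0$ and $\boldsymbol\eta$, which stays in $\cK$ by convexity (Assumption~\ref{assump_l}(a)). This gives
\[
|\check R(\boldsymbol\eta,\mz)| \le \|\boldsymbol\eta-\tilde{\boldsymbol\eta}_0\|_2 \sup_{\boldsymbol\eta'\in\cK,\ \mz\in\tilde\cZ}\|\sD^2_{\boldsymbol\eta}\sL(\boldsymbol\eta',\mz)\|_{\rm op}.
\]
The supremum is finite because $\sD^2_{\boldsymbol\eta}\sL$ is jointly continuous on the compact set $\cK\times(\cZ\cup\tilde\cZ)$ by Assumption~\ref{assump_l}(b,c) and Assumption~\ref{assump_bootstrap}(c). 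Hence $\hat H_n$ is uniformly bounded by a constant times $\|\boldsymbol\eta_n-\tilde{\boldsymbol\eta}_0\|_2=o_\P(1)$, and its conditional second moment is $o_\P(1)$. The same bound yields $\rho_n(h,h_0)\le C\|\boldsymbol\eta-\tilde{\boldsymbol\eta}_0\|_2$. This is the wrong direction for converting $\rho_n$-balls into parameter balls.

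The main obstacle is condition (c), the conditional asymptotic equicontinuity at $h_0$. I would avoid the $\rho_n$ geometry by bounding the empirical process over all of $\cH_n$ with parameter radius $\delta_0$ and then letting $\delta_0\to0$. This is legitimate because $\P(\hat H_n\notin\cH_n)\to 0$ for every fixed $\delta_0$, and the proof of Lemma~\ref{lemma:triangular_array_1924} only uses the supremum over a class containing $\hat H_n$ with high probability. Concretely, the second bound of Lemma~\ref{lemma:Ldonsker} gives
\[
\E\Big[\sup_{\|\boldsymbol\eta-\tilde{\boldsymbol\eta}_0\|_2<\delta_0}|\check V_n(\boldsymbol\eta)|\,\Big|\,\cO\Big]\le C'\delta_0
\]
almost surely for all $n$. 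By conditional Markov's inequality, $\P(\sup_{\cH_n}|\check V_n|>\epsilon\mid\cO)\le C'\delta_0/\epsilon$. Therefore
\[
\P\Big(|\check V_n(\boldsymbol\eta_n)|>\epsilon\Big)\le C'\delta_0/\epsilon+\P\big(\|\boldsymbol\eta_n-\tilde{\boldsymbol\eta}_0\|_2>\delta_0\big)+\P(\text{interior event fails}).
\]
Taking $\limsup_n$ and then $\delta_0\to0$ finishes the argument.

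If one insists on using Lemma~\ref{lemma:triangular_array_1924} verbatim, there is a fallback for converting parameter-ball control into $\rho_n$-ball control. Note that $\rho_n$ is bounded by a constant times the parameter distance, the process vanishes at $h_0$, and the Dudley-type bound in Lemma~\ref{lemma:Ldonsker} is linear in the parameter radius. Together these show that the supremum over small parameter radii is uniformly small. I expect the cleaner direct route above to be the one actually written.
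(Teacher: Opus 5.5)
Your proposal is correct, and your main argument follows a different, more direct route than the paper's. The paper's proof is a one-line sketch. It invokes Lemma~\ref{lemma:triangular_array_1924}, the conditional version of van der Vaart's Lemma 19.24. It says the equicontinuity condition (c) comes from Lemma~\ref{lemma:Ldonsker}, and the $L_2$-smallness of $\hat H_n$ comes from twice-differentiability of $\sL$ plus consistency of $\boldsymbol\eta_n$.

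You bypass Lemma~\ref{lemma:triangular_array_1924} altogether. On the event $\|\boldsymbol\eta_n-\tilde{\boldsymbol\eta}_0\|_2<\delta_0$, you dominate the target quantity, read as $\check V_n$ evaluated at $\boldsymbol\eta_n$, by the supremum of $|\check V_n|$ over the parameter ball. Lemma~\ref{lemma:Ldonsker} bounds the conditional mean of that supremum by $C'\delta_0$. Conditional Markov, the tower property and consistency then finish the argument after taking $\limsup_n$ and then $\delta_0\to0$.

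This uses the same key input as the paper, the parameter-indexed maximal inequality that is linear in the radius. But it makes your Taylor bound on $\E[\hat H_n^2\mid\cO]$ unnecessary and avoids the $\rho_n$ geometry entirely.

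Your remark about that geometry is accurate. The bound $\rho_n(h,h_0)\le C\|\boldsymbol\eta-\tilde{\boldsymbol\eta}_0\|_2$ only places small parameter balls inside small $\rho_n$-balls, not the reverse. So with a fixed-radius class, Lemma~\ref{lemma:Ldonsker} yields $C'\delta_0/\epsilon$ in condition (c), not $0$. Condition (c) therefore does not follow as directly as the paper's sketch suggests. The paper's template would pay off in settings with equicontinuity in the intrinsic $L_2$ semimetric but no parameter-indexed bound. Here, yours is shorter and fully rigorous given Lemma~\ref{lemma:Ldonsker}.

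Your fallback paragraph is vague. The clean way to apply Lemma~\ref{lemma:triangular_array_1924} verbatim is as follows:
\begin{enumerate}
\item Choose deterministic radii $\delta_n\downarrow0$ with $\P(\|\boldsymbol\eta_n-\tilde{\boldsymbol\eta}_0\|_2>\delta_n)\to0$. This is possible because the difference is $o_\P(1)$.
\item Define $\cH_n$ using radius $\delta_n$.
\item Then the supremum of the empirical process over all of $\cH_n$ has conditional mean at most $C'\delta_n\to0$, so (c) holds trivially, almost surely.
\end{enumerate}
This also shows that the lemma adds nothing beyond your direct argument.
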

\begin{proof}
The proof of this lemma reduces to applying Lemma \ref{lemma:triangular_array_1924}. In turn, the non-trivial conditions of this lemma are verified by Lemma \ref{lemma:Ldonsker}, the twice-differentiability of $\sL$, and the consistency of $\boldsymbol{\eta}_n$. We omit the details.
\end{proof}

\subsection{Auxiliary results for Theorem \ref{thm:iso}}

We recount Prokhorov's theorem applied to $C(\cE)$, for $\cE \subseteq \R^r$ compact, which we recall is complete and separable with respect to the supremum norm.
\begin{lemma}[Theorems 5.1 and 5.2 of \cite{billing}]\label{lemma:prokhorov}
Suppose that $G_1, G_2, ... $ is a tight sequence of $C(\cE)$-valued random variables. Then, for each subsequence $n_k$, there exists a further subsequence $n_{k_\ell}$ for which $G_{n_{k_1}}, G_{n_{k_2}}, ...$ converges weakly to some random variable $G_0 \in C(\cE)$.
\end{lemma}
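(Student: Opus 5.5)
The statement is Prokhorov's theorem on $C(\cE)$, so the plan is to reduce it to relative compactness of the laws in the space of Borel probability measures on a Polish space. I would first record that $C(\cE)$, with $\cE \subseteq \R^r$ compact and the supremum norm, is a complete separable metric space. Completeness holds because uniform limits of continuous functions are continuous. Separability follows from Stone--Weierstrass: polynomials with rational coefficients restricted to $\cE$ form a countable dense set.

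Next, let $\mu_n$ denote the law of $G_n$ on the Borel $\sigma$-algebra of $C(\cE)$. Tightness means that for every $\epsilon>0$ there is a compact $\cC_\epsilon\subseteq C(\cE)$ with $\mu_n(\cC_\epsilon)\ge 1-\epsilon$ for all $n$. By the direct half of Prokhorov's theorem (Theorem 5.1 of \cite{billing}), a tight family of probability measures on a metric space is relatively compact in the topology of weak convergence. Hence every subsequence $\mu_{n_k}$ has a further subsequence $\mu_{n_{k_\ell}}$ converging weakly to some probability measure $\mu_0$ on $C(\cE)$.

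It remains to realize $\mu_0$ as the law of a $C(\cE)$-valued random variable $G_0$. I would take $G_0$ to be the identity map on the probability space $(C(\cE),\mathcal B,\mu_0)$. Weak convergence of $\mu_{n_{k_\ell}}$ to $\mu_0$ is then exactly $\E[f(G_{n_{k_\ell}})]\to\E[f(G_0)]$ for every bounded continuous $f$, which is the paper's definition of weak convergence.

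The only delicate point is making sure the limit is a genuine probability measure, with no mass escaping. This is precisely where tightness enters. If needed, I would verify it directly: any weak subsequential limit assigns mass at least $1-\epsilon$ to the closed set $\cC_\epsilon$ by the portmanteau theorem. The completeness and separability are needed only for the converse half (Theorem 5.2) and for good measurability of the Borel structure, so they pose no obstacle.
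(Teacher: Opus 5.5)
Your proposal is correct and takes essentially the same route as the paper, which simply cites Prokhorov's theorem from Billingsley; as you note, only the direct half (Theorem 5.1, tight implies relatively compact) is needed for this statement, and realizing the limit law as the identity map on $(C(\cE),\mathcal B,\mu_0)$ is the right way to obtain $G_0$. Your closing portmanteau remark is superfluous, and circular as a standalone check, because it presupposes that a probability-measure limit already exists, which Theorem 5.1 supplies; it does not affect the argument.
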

The next is a basic relation between weak convergence to a deterministic limit, and convergence in probability explicitly applied to the space $C(\cE)$. This follows from the discussion around Equation (3.7) of \cite{billing}.

\begin{lemma}[Equation (3.7) of \cite{billing}]\label{lemma:weak_to_prob}
    Suppose that $G_1, G_2, ...$ is a sequence of $C(\cE)$-valued random variables that converges weakly to a deterministic $G_0 \in C(\cE)$. Then, 
    $$
    \sup_{\mx \in \cE}\Big|G_n(\mx) - G_0(\mx)\Big| = o_{\P}(1).
    $$
\end{lemma}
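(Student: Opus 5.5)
This is the standard fact that weak convergence to a constant implies convergence in probability, transplanted to the Polish space $C(\cE)$ with the supremum norm. The plan is to apply the Portmanteau theorem directly to the complement of an open ball.

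Fix $\epsilon>0$ and consider the closed set
\[
F_\epsilon := \{g \in C(\cE) : \sup_{\mx\in\cE}|g(\mx)-G_0(\mx)| \ge \epsilon\},
\]
which is the complement of the open sup-norm ball of radius $\epsilon$ around $G_0$. Closedness holds because $g\mapsto \sup_{\mx\in\cE}|g(\mx)-G_0(\mx)|$ is $1$-Lipschitz in the supremum norm, hence continuous on $C(\cE)$.

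\textbf{Step 1 (Portmanteau).} Since $G_n$ converges weakly to $G_0$, the closed-set form of the Portmanteau theorem gives
\[
\limsup_{n\to\infty}\P(G_n\in F_\epsilon)\le \P(G_0\in F_\epsilon).
\]

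\textbf{Step 2 (deterministic limit).} Because $G_0$ is deterministic, $\P(G_0\in F_\epsilon)=\mathds{1}(G_0\in F_\epsilon)$. Since $\sup_{\mx\in\cE}|G_0(\mx)-G_0(\mx)| = 0<\epsilon$, we have $G_0\notin F_\epsilon$, so this indicator is $0$.

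Combining the two steps yields $\P(\sup_{\mx\in\cE}|G_n(\mx)-G_0(\mx)|\ge\epsilon)\to0$ for every $\epsilon>0$, which is exactly the claim.

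There is an equivalent route that avoids Portmanteau. One applies the definition of weak convergence to the bounded continuous function $h(g)=\min\{1,\epsilon^{-1}\sup_{\mx\in\cE}|g(\mx)-G_0(\mx)|\}$, which satisfies $h(G_0)=0$. Markov's inequality then gives
\[
\P\big(\sup_{\mx\in\cE}|G_n-G_0|\ge\epsilon\big)\le \E[h(G_n)]\to 0.
\]

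The only point needing care is measurability. Since $C(\cE)$ is separable for compact $\cE$, the supremum can be taken over a countable dense subset of $\cE$, so the events above are measurable and the sup-norm distance is a genuine random variable.
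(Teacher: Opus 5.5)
Your argument is correct and is essentially the paper's route. The paper gives no proof beyond citing the discussion around Billingsley's (3.7), which is exactly the Portmanteau bound $\limsup_n \P(G_n \in F_\epsilon) \le \mathds{1}(G_0 \in F_\epsilon) = 0$ for the closed set $F_\epsilon$ you define; your Markov-inequality variant is an equally valid rephrasing. One small simplification: $\sup_{\mx\in\cE}|G_n(\mx)-G_0(\mx)|$ is measurable because it is a continuous functional of the Borel-measurable $C(\cE)$-valued map $G_n$, so the countable-dense-subset remark is not needed.
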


The next lemma states that there is a \textit{countable} collection of functions that one must integrate against to test equality of measures.
\begin{lemma}[Problem 1.10 of \cite{billing}]\label{lemma:equality_of_measures_countable}
    Take two $\R^r$-valued random variables $\mV$ and $\mW$ with supports contained in a compact subset $\cE \subseteq \R^r$. There then exists a sequence of real-valued continuous functions on $\cE$, denoted $f_1, f_2, ... $, for which 
    $$
    \E[f_i(\mV)] = \E[f_i(\mW)] \text{ for all }i=1,2,... \text{ if and only if } \mV \text{ and } \mW \text{ have equal distributions.}
    $$
\end{lemma}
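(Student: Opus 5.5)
The statement to prove is Lemma~\ref{lemma:equality_of_measures_countable}: a countable family of continuous functions on a compact $\cE\subseteq\R^r$ determines distributions supported in $\cE$.

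The plan is to combine separability of $C(\cE)$ with the Riesz representation theorem (uniqueness part). Because $\cE$ is a compact metric space, $C(\cE)$ with the supremum norm is separable. One concrete way to see this is Stone--Weierstrass: the restrictions to $\cE$ of polynomials in $r$ variables with rational coefficients form a countable set that is dense in $C(\cE)$. I will enumerate this countable dense set as $f_1,f_2,\ldots$.

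The ``only if'' direction is trivial. If $\mV$ and $\mW$ have the same distribution, then $\E[f_i(\mV)]=\E[f_i(\mW)]$ for every $i$.

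For the ``if'' direction, fix an arbitrary $g\in C(\cE)$ and $\epsilon>0$, and choose $i$ with $\sup_{\cE}|g-f_i|<\epsilon$. The supports of $\mV$ and $\mW$ lie in $\cE$, so the sup-norm bound controls both expectations:
\[
|\E g(\mV)-\E g(\mW)|\le |\E g(\mV)-\E f_i(\mV)|+|\E f_i(\mV)-\E f_i(\mW)|+|\E f_i(\mW)-\E g(\mW)|<2\epsilon.
\]
Since $\epsilon$ is arbitrary, $\E g(\mV)=\E g(\mW)$ for all $g\in C(\cE)$.

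It remains to conclude that $\P_V=\P_W$ as Borel measures. I will use a closed-set approximation. For a closed $F\subseteq\cE$, set $g_k(\mx)=\max\{0,1-k\,\mathrm{dist}(\mx,F)\}$. These functions are continuous, take values in $[0,1]$, and decrease pointwise to $\mathds{1}_F$. Dominated convergence then gives $\P(\mV\in F)=\P(\mW\in F)$. Closed sets form a $\pi$-system generating the Borel $\sigma$-field of $\cE$, so Dynkin's $\pi$--$\lambda$ theorem gives equality of the two laws on $\cE$, and hence on $\R^r$ because both are supported in $\cE$. Alternatively, this step follows directly from uniqueness in the Riesz representation theorem.

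The only step needing real care is the separability of $C(\cE)$, and it is handled by the rational-polynomial Stone--Weierstrass argument above. Everything else is routine approximation.
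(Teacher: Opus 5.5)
Your proof is correct and complete. The paper gives no argument of its own here; it simply cites Billingsley, so there is no internal proof to match.

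The usual textbook route to such determining-class results is different from yours. It takes the countable $\pi$-system of finite intersections of open balls with rational centers and radii, and approximates each indicator from below by continuous functions; this needs only separability of the underlying space. You instead use compactness of $\cE$ through the separability of $C(\cE)$ (rational polynomials via Stone--Weierstrass), and then recover closed sets with the functions $g_k$.

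Two features of your construction are worth stating explicitly, because the paper's only use of the lemma depends on them.

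First, your sequence $f_1,f_2,\ldots$ depends only on $\cE$, not on $\mV$ and $\mW$. The lemma as worded (``take $\mV,\mW$ \ldots\ there exists a sequence'') leaves this quantifier order open. However, the proof of Lemma~\ref{lemma:isocvg} applies it to a random limit $\phi$ for all $i$ simultaneously on one almost-sure event, so the family must be fixed in advance.

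Second, the measures compared there, $\phi(\mz)\,\d\mz$ and $p_Z(\mz)\,\d\mz$ on $\cX\times\cC$, are finite but generally not probability measures. Your argument still applies, for three reasons. The constant $1$ is among your $f_i$ (equivalently, $F=\cE$ gives $g_k\equiv 1$), so the total masses agree. The density step only picks up a factor of the total mass. And the $\pi$--$\lambda$ step works for finite measures agreeing on a $\pi$-system that contains the whole space.
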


The following result connects weak convergence to convergence in Wasserstein distance.

\begin{lemma}[Theorem 6.9 of \cite{villani2008optimal}]\label{lemma:weakcvg_wass}
Take an integer $d \geq 1$. Suppose that $\mV, \mV_1, \mV_2$, ... are random variables in $\R^r$ with $\E[\|\mV\|_2^d] < \infty$ and $\E[\|\mV_i\|_2^d]< \infty$ for each $i=1,2,...$ Then 
$$
\sW_{d}(\P_{V_i}, \P_{V}) \rightarrow 0
$$
if and only if $\mV_1, \mV_2, ...$ converges weakly to $\mV$ and 
$$
\E[\|\mV_i\|_2^d] \rightarrow \E[\|\mV\|_2^d].
$$
Here, $\sW_{d}(\cdot,\cdot)$ is the Wasserstein-$d$ metric using the Euclidean 2-norm.
\end{lemma}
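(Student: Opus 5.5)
This statement is the classical characterization of $\sW_d$-convergence (weak convergence plus convergence of $d$-th moments), so I will reduce it to the duality/coupling facts already available and standard tightness arguments, and split it into its two directions.

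\emph{Necessity.} Suppose $\sW_d(\P_{V_i},\P_V)\to0$. Since $\sW_1\le\sW_d$ (Jensen applied to any coupling), we get $\sW_1(\P_{V_i},\P_V)\to0$. By Lemma \ref{lemma:kr_dual}, this gives $\E[D(\mV_i)]\to\E[D(\mV)]$ for every $1$-Lipschitz $D$, hence for every bounded Lipschitz $D$ after rescaling. Bounded Lipschitz test functions determine weak convergence (portmanteau), so $\mV_i$ converges weakly to $\mV$.

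For the moments, the map $\mu\mapsto(\int\|\mv\|_2^d\,\d\mu)^{1/d}$ is the $\sW_d$-distance from $\mu$ to the Dirac mass $\delta_{\mathbf 0}$. The triangle inequality for $\sW_d$ then yields
\[
\big|\E[\|\mV_i\|_2^d]^{1/d}-\E[\|\mV\|_2^d]^{1/d}\big|\le\sW_d(\P_{V_i},\P_V)\to0 .
\]

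\emph{Sufficiency.} Assume weak convergence and $\E\|\mV_i\|_2^d\to\E\|\mV\|_2^d$. By Skorokhod's representation theorem, construct $\mV_i'\overset{d}{=}\mV_i$ and $\mV'\overset{d}{=}\mV$ on a common space with $\mV_i'\to\mV'$ almost surely. This is a coupling, so
\[
\sW_d^d(\P_{V_i},\P_V)\le\E\|\mV_i'-\mV'\|_2^d .
\]
It therefore suffices to show $\|\mV_i'-\mV'\|_2^d\to0$ in $L^1$.

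The integrand tends to $0$ almost surely and is dominated by $2^{d-1}(\|\mV_i'\|_2^d+\|\mV'\|_2^d)$. Since $\|\mV_i'\|_2^d\to\|\mV'\|_2^d$ almost surely and their expectations converge, Scheffé's lemma (equivalently, Vitali's theorem) gives uniform integrability of $\{\|\mV_i'\|_2^d\}$. The dominating family is then uniformly integrable, and Vitali's convergence theorem finishes the argument.

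The main obstacle is exactly this uniform-integrability step: convergence of $d$-th moments must be upgraded to uniform integrability. The generalized dominated convergence theorem (Pratt's lemma) applied to the dominating sequence handles it cleanly.
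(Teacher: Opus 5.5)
Your proof is correct in both directions. The paper gives no argument of its own and simply cites Theorem~6.9 of Villani, so the comparison is with that source. Villani works on a general Polish space and argues through optimal transference plans. For sufficiency, roughly: take optimal couplings $\pi_i$ of $(\P_{V_i},\P_V)$, use tightness to extract a weak limit, invoke stability of optimality to identify that limit as the diagonal coupling, and then pass to the limit in $\int\|x-y\|_2^d\,\d\pi_i(x,y)$ by a truncation argument driven by the moment condition.

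You instead use Skorokhod's representation to build an explicit, generally non-optimal coupling that converges almost surely. Scheff\'e plus Vitali (or Pratt's lemma), applied to the dominating sequence $2^{d-1}(\|\mV_i'\|_2^d+\|\mV'\|_2^d)$, then finishes the argument. This avoids existence and stability of optimal couplings altogether and is entirely adequate in $\R^r$.

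Your necessity direction can be made even more elementary. For any coupling $(X,Y)$ of $(\P_{V_i},\P_V)$ and any $L$-Lipschitz $D$, Jensen gives $|\E D(X)-\E D(Y)|\le L\,(\E\|X-Y\|_2^d)^{1/d}$. Minkowski gives $|(\E\|X\|_2^d)^{1/d}-(\E\|Y\|_2^d)^{1/d}|\le(\E\|X-Y\|_2^d)^{1/d}$. Taking the infimum over couplings bounds both quantities by $\sW_d$, without using the nontrivial half of Kantorovich--Rubinstein duality.

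What the cited route buys in exchange is generality: arbitrary Polish spaces, and the equivalent tail and growth-bounded test-function characterizations of this mode of convergence. It also gives a structural link to optimal plans that a Skorokhod coupling does not provide.
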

We next present a few analytical facts about the joint distribution. 

\begin{lemma}\label{lemma:margdiff}
    Suppose that Assumptions \ref{assump_bootstrap} and \ref{ass:generator} hold and assume that $n$ is sufficiently large. Then, $\tilde p_n(x,y)$ and $\sD \tilde p_n(x,y)$ are $\tilde K$- Lipschitz on $ \cX \times \R$ almost surely.
\end{lemma}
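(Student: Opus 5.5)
The claim follows from the mean value theorem once we have uniform bounds on the first and second derivatives of $\tilde p_n$ on a convex set. Fix $n$ large enough that Assumption~\ref{ass:generator} applies. By Assumption~\ref{ass:generator}(a), almost surely $\tilde p_n$ is twice continuously differentiable on $\tilde\cX_n\times\R$. By Assumption~\ref{ass:generator}(c), $\tilde\cX_n$ is an interval containing $\cX$, so $\cX\times\R$ is a convex subset of that domain.

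\textbf{Lipschitz property of $\tilde p_n$.} Take $\mz,\mz'\in\cX\times\R$. The segment between them lies in $\cX\times\R$, so the mean value inequality gives
\[
|\tilde p_n(\mz)-\tilde p_n(\mz')|\le \sup_{\mz''}\|\rD\tilde p_n(\mz'')\|_2\,\|\mz-\mz'\|_2 ,
\]
with the supremum over points $\mz''$ on the segment. For $\sD\tilde p_n$ the argument is the same, applied to the vector-valued map $\mz\mapsto \rD\tilde p_n(\mz)$:
\[
\|\rD\tilde p_n(\mz)-\rD\tilde p_n(\mz')\|_2\le \sup_{\mz''}\|\rD^2\tilde p_n(\mz'')\|_{\rm op}\,\|\mz-\mz'\|_2 .
\]
This uses the integral form $\int_0^1 \rD^2\tilde p_n(\mz'+t(\mz-\mz'))(\mz-\mz')\,\d t$, which is valid because $\rD^2\tilde p_n$ is continuous almost surely.

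\textbf{Bounding the suprema.} Assumption~\ref{ass:generator}(b) bounds $\|\rD\tilde p_n\|_2$ and $\|\rD^2\tilde p_n\|_{\rm op}$ by $\tilde K$, but only on $\tilde\cZ_n$, the support of $\tilde\mZ$. Segment points may fall outside $\tilde\cZ_n$, and this is the main obstacle. I would handle it as follows. On the complement of the support, $\tilde p_n\equiv 0$ on the open set $(\tilde\cX_n\times\R)\setminus\tilde\cZ_n$, so $\rD\tilde p_n=0$ and $\rD^2\tilde p_n=0$ there. On the boundary of $\tilde\cZ_n$, continuity of $\rD\tilde p_n$ and $\rD^2\tilde p_n$ extends the bounds by taking limits from either side. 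The boundary lies in the closed set $\tilde\cZ_n$, where the bound $\tilde K$ already holds, and limits from the complement give $0$. Hence
\[
\|\rD\tilde p_n\|_2\le\tilde K \quad\text{and}\quad \|\rD^2\tilde p_n\|_{\rm op}\le\tilde K \qquad\text{on all of } \cX\times\R \text{ almost surely}.
\]
If one only has $C^1$ regularity everywhere and $C^2$ almost surely, apply this on the almost-sure event.

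\textbf{Conclusion.} Substituting these bounds into the two mean value inequalities shows that $\tilde p_n$ and $\rD\tilde p_n$ are $\tilde K$-Lipschitz on $\cX\times\R$ almost surely. The constant is nonrandom and does not depend on $n$, which is the uniformity used later in Lemma~\ref{lemma:isocvg}.
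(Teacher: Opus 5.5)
Your proof is correct and follows the paper's argument. Both apply the mean value theorem on the convex set $\cX\times\R$ and bound the first and second derivatives of $\tilde p_n$ by $\tilde K$, using Assumption~\ref{ass:generator}(b) on $\tilde\cZ_n$ and the fact that these derivatives vanish off the closed support (your extra treatment of the boundary of $\tilde\cZ_n$ is harmless but unnecessary, since the boundary already lies in $\tilde\cZ_n$).
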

\begin{proof}
The conclusion follows from applying the mean value theorem to $\tilde p_n$ and $\sD \tilde p_n$ on the set $\cX \times \R$. Then, we observe $\|\sD\tilde p_n(\mz)\|_{2}$ and $\|\sD^2 \tilde p_n(\mz)\|_{\rm op}$ are zero outside $\tilde \cZ_n$, and bounded by $\tilde K$, by Assumption \ref{ass:generator}(b), on $\tilde \cZ_n$.
\end{proof}

Next, we show that $\tilde f_0$ is twice-continuously differentiable on $\cX$ with a uniformly bounded second derivative.
\begin{lemma}\label{lemma:tildef0_twicediff}
Suppose that $n$ is large enough so that Assumptions \ref{assump_bootstrap} and \ref{ass:generator} hold. Then, $\tilde f_0$ is twice-continuously differentiable on $\cX$. Furthermore, $\tilde f_0''$ is upper bounded by a universal constant in $\cX$.
\end{lemma}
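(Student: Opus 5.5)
We need to prove Lemma \ref{lemma:tildef0_twicediff}: $\tilde f_0$ is $C^2$ on $\cX$ with $\tilde f_0''$ bounded by a universal constant.

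The plan is to write $\tilde f_0$ explicitly as a ratio of two integrals and differentiate under the integral sign. By Assumption \ref{assump_bootstrap}(c) there is a fixed bounded closed interval $\tilde\cY$ with $\tilde\cY_n\subseteq\tilde\cY$ almost surely. Hence $\tilde p_n(x,y)=0$ for $y\notin\tilde\cY$, and for $x\in\cX$ we have
\[
\tilde f_0(x)=\frac{N_n(x)}{\tilde p_X(x)},\qquad N_n(x):=\int_{\tilde\cY} y\,\tilde p_n(x,y)\,\d y,\qquad \tilde p_X(x)=\int_{\tilde\cY}\tilde p_n(x,y)\,\d y .
\]
By Assumption \ref{ass:generator}(a), $\tilde p_n$ is twice continuously differentiable on $\tilde\cX_n\times\R\supseteq\cX\times\R$ (using Assumption \ref{ass:generator}(c)). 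Assumption \ref{ass:generator}(b) bounds $\tilde p_n$, $\sD\tilde p_n$ and $\sD^2\tilde p_n$ by $\tilde K$ on $\tilde\cZ_n$; these quantities vanish off $\tilde\cZ_n$ in the relevant region, in the same spirit as Lemma \ref{lemma:margdiff}. So the integrands $y^k\sD_x^j\tilde p_n(x,y)$, for $k,j\in\{0,1,2\}$, are continuous in $x$ and dominated by the integrable constant $\tilde K\sup_{y\in\tilde\cY}(1+|y|)$ on the bounded set $\tilde\cY$. Dominated convergence, combined with the mean value theorem for the difference quotients, then justifies differentiating under the integral twice. It gives $N_n,\tilde p_X\in C^2(\cX)$, with $|N_n^{(j)}|$ and $|\tilde p_X^{(j)}|$ bounded by $C:=\tilde K\cdot|\tilde\cY|\cdot(1+\sup_{\tilde\cY}|y|)$ for $j=0,1,2$. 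This bound is nonrandom and independent of $n$.

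Next, Assumption \ref{ass:generator}(b) gives $\tilde p_X\ge\tilde K^{-1}>0$ on $\cX$, so the quotient rule applies. It shows $\tilde f_0\in C^2(\cX)$ with
\[
\tilde f_0''=\frac{N_n''}{\tilde p_X}-\frac{2N_n'\tilde p_X'+N_n\tilde p_X''}{\tilde p_X^2}+\frac{2N_n(\tilde p_X')^2}{\tilde p_X^3}.
\]
Plugging in the bounds gives $\sup_{\cX}|\tilde f_0''|\le \tilde K C+3\tilde K^2C^2+2\tilde K^3C^3$. This constant depends only on $\tilde K$ and $\tilde\cY$, so it is universal.

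The only delicate point is the interchange of derivative and integral near the boundary of $\tilde\cZ_n$ in the $y$-direction. There the bounds of Assumption \ref{ass:generator}(b) are stated only on $\tilde\cZ_n$. I would handle this as in Lemma \ref{lemma:margdiff}: $\tilde p_n$ is $C^1$ everywhere (and almost surely $C^2$) on $\tilde\cX_n\times\R$ and vanishes identically outside $\tilde\cZ_n$. By continuity, its derivatives are therefore bounded by the same $\tilde K$ on all of $\cX\times\R$, which supplies the dominating function. Since $\cX$ is a closed interval, derivatives at the endpoints are understood one-sidedly, and the same argument applies.
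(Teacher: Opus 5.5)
Your proposal is correct and follows essentially the same route as the paper's sketch. Both restrict the $y$-integral to a fixed compact $\tilde\cY$ furnished by Assumption~\ref{assump_bootstrap}(c), differentiate under the integral twice via the mean value theorem and bounded (dominated) convergence using the $\tilde K$ bounds of Assumption~\ref{ass:generator}(b), and use the lower bound on $\tilde p_X$ to control the denominator.

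The one cosmetic difference is that you treat numerator and denominator separately and apply the quotient rule, which makes the universal constant explicit. You also correctly use $\tilde p_X\ge\tilde K^{-1}$, whereas the paper's sketch writes $\tilde p_X(x)\ge\tilde K$.
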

\begin{proof}
    We provide a sketch of the proof, as it involves tedious algebraic details. For any $x \in \cX$, it is almost surely true that $\tilde p_X(x) \geq \tilde K$, so that we can rewrite the regression function as
    $$
    \tilde f_0(x) = \int_{\tilde \cY} y \cdot \frac{\tilde p_n(x,y)}{\tilde p_X(x)} \, \d y
    $$
    for some compact set $\tilde \cY \subseteq \R$. By the mean value theorem and the bounded convergence theorem, we can demonstrate the identities
    $$
    \tilde f_0'(x) = \int_{\tilde \cY} y \cdot \sD_x\pa{\frac{\tilde p_n(x,y)}{\tilde p_X(x)}} \, \d y \qquad \text{and} \qquad
    \tilde f_0''(x) = \int_{\tilde \cY} y \cdot \sD_x^2\pa{\frac{\tilde p_n(x,y)}{\tilde p_X(x)}} \, \d y
    $$
    hold almost surely. Continuity and boundedness of the latter display by a universal constant follows due to the continuity and boundedness of the second derivatives of $\tilde p_n$ and $\tilde p_X$.
\end{proof}
We cite an algebraic identity of \cite{robertson1988order}.

\begin{lemma}[Theorem 1.4.4 of \cite{robertson1988order}]\label{lemma:robertson_maxmin}
Suppose that $m \geq 1$ is a positive integer, $\cV$ is a nondegenerate closed interval, and that $(v_1, w_1), ..., (v_m,w_m) \in \cV \times \R$ are fixed points. Define $\check f: \cV \rightarrow \R$ as the solution to 
$$
\check f := 
\underset{f : \cV \to \R \text{ nondecreasing}}{\arg\min}
\sum_{i=1}^m \big(w_i - f(v_i)\big)^2.
$$
Take any point $v_0 \in \cV$. The following identity then holds true:
$$
\check f(v_0) = \max_{a \in [m] : v_a < v_0} \min_{b \in [m]: v_b \geq v_0}\frac{1}{|\{1 \leq i \leq m : v_i \in [v_a,v_b] \}|} \sum_{1 \leq i \leq m : v_i \in [v_a,v_b]} w_i.
$$
\end{lemma}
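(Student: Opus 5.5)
The identity in Lemma~\ref{lemma:robertson_maxmin} is the classical max--min formula for the isotonic least squares estimator. I would prove it by combining the pool-adjacent-violators (block) characterization of $\check f$ with a pair of inequalities. There is a subtlety to settle first. If no $v_a<v_0$ exists, the outer maximum is empty. In that case I would read the formula with the convention $v_a\le v_0$, or note that $\check f(v_0)$ is only pinned down at data points and between them by the nondecreasing convention. For the argument I assume $v_0$ is a design point, or interpret $\check f$ as the left-continuous step extension, so that the index sets are nonempty.

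\textbf{Step 1 (block structure).} Sort the distinct design values. The minimizer $\check f$ is constant on consecutive level sets (blocks) $B_1<\dots<B_J$, and its value on each block is the average of the $w_i$ in that block. The KKT conditions of the convex program give the following characterization, which I would state and verify directly from the quadratic objective by a perturbation argument.
\begin{itemize}
\item Any initial segment $S$ of a block satisfies $\overline w_S\ge \check f(B)$.
\item Any final segment $T$ of a block satisfies $\overline w_T\le \check f(B)$.
\item Block values are strictly increasing.
\end{itemize}
Both inequalities come from noting that lowering the initial segment (respectively raising the final segment) slightly would preserve monotonicity and cannot decrease the loss.

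\textbf{Step 2 (two inequalities).} Let $B$ be the block containing $v_0$, with left endpoint index $a^\star$ and right endpoint index $b^\star$.

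For the lower bound, take $a=a^\star$ and an arbitrary $b$ with $v_b\ge v_0$. The window $[v_{a^\star},v_b]$ is a union of an initial segment of $B$, possibly all of $B$, together with whole later blocks. The initial segment has average at least $\check f(B)$ by Step 1. Each later block has average strictly greater than $\check f(B)$. A weighted average of these is therefore $\ge\check f(B)$, so $\min_b\ge \check f(v_0)$, and hence the max--min is $\ge\check f(v_0)$.

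For the upper bound, take $b=b^\star$ and an arbitrary $a$ with $v_a<v_0$. The window $[v_a,v_{b^\star}]$ consists of whole earlier blocks, each with average below $\check f(B)$, together with a final segment of $B$, whose average is at most $\check f(B)$ by Step 1. So for every $a$, $\min_b\le$ this average $\le \check f(B)$, which gives max--min $\le\check f(v_0)$.

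Combining the two inequalities proves the identity. Ties in the $v_i$ are handled by treating equal design values as a single site carrying the pooled mean, since the intervals $[v_a,v_b]$ are closed.

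\textbf{Main obstacle.} I expect Step 1 to be the main difficulty, namely rigorously deriving the initial- and final-segment inequalities from optimality when there are ties. My first attempt would be the directional-derivative argument: perturb $f$ by $\mp t\,\mathds{1}_S$, which is feasible for small $t>0$ exactly because $S$ is an initial (respectively final) segment of a level set. Nonnegativity of the derivative of the loss at $t=0$ then gives $\sum_{i\in S}(w_i-\check f(v_i))\ge0$ for the initial segment, and the reverse inequality for the final segment.
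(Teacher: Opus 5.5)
Your Step~1--Step~2 block argument is the standard self-contained proof of the max--min formula. The paper gives no proof of its own, only the citation to Robertson, Wright and Dykstra. Your argument is correct for the formula with the outer maximum over $v_a\le v_0$, evaluated at a design point $v_0$; this is also the form the paper actually applies in Lemma~\ref{lemma:maxmin}. It does not, however, prove the statement as written, and the issue is not only an empty index set. Your lower bound uses $a=a^\star$, the leftmost index of the block $B$ containing $v_0$. Under the stated constraint $v_a<v_0$, this index is admissible only when $v_0$ is not the smallest design value of $B$. When it is the smallest, every admissible $a$ lies in an earlier block, and the identity is then false. Take $m=2$, $v_1<v_2=v_0$, $w_1=0$, $w_2=10$: one has $\check f(v_0)=10$, while the only admissible pair $(a,b)=(1,2)$ gives $5$. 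More generally, if the $w_i$ are strictly increasing in $v_i$, all blocks are singletons. The strict-inequality formula then returns $(w_{j-1}+w_j)/2$ instead of $w_j$ at every design point except the first, where it is undefined.

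Your proposed conventions do not repair this. Replacing $<$ by $\le$ only when no $v_a<v_0$ exists leaves the failure above intact; you need $v_a\le v_0$ throughout. For $v_0$ strictly between consecutive design values, the least-squares solution is not unique. The formula selects a particular value between the neighbouring fitted values, and that value generally matches no step extension. With the same data and $v_1<v_0<v_2$ it gives $5$, whereas the left- and right-continuous extensions give $10$ and $0$. So state explicitly that you prove $\check f(v_0)=\max_{a:v_a\le v_0}\min_{b:v_b\ge v_0}(\cdots)$ for $v_0\in\{v_1,\dots,v_m\}$. Off the design points, the formula can only be read as a choice of minimizer, which is how the paper uses it at $x_0$.

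With that correction your argument goes through, after one small fix. When $v_b$ lies beyond $B$, the window contains whole intermediate blocks plus only an initial segment of the block containing $v_b$. Symmetrically, in the upper bound it contains only a final segment of the block containing $v_a$. Step~1 handles that partial block: its average is at least (resp.\ at most) its own block value, which is strictly above (resp.\ below) $\check f(B)$.
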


Lemma \ref{lemma:robertson_maxmin} then implies that the max-min identity holds true for the bootstrap data with probability tending to $1$.

\begin{lemma}\label{lemma:maxmin}
    Suppose Assumptions \ref{assump_iso} and \ref{ass:generator} hold. Then, as $n\rightarrow\infty$,
    $$
    \P\Big(\Big|\Big\{i \in [n] : \tilde X_i \in \cX  \cap [x_0, \infty)\Big\}\Big| \neq 0\Big) \wedge\P\Big(\Big|\Big\{i \in [n] : \tilde X_i \in \cX \cap (-\infty, x_0]\Big\}\Big| \neq 0\Big) \rightarrow 1.
    $$
    As a result, the probability of the event 
    $$
    \cp{\tilde f_n(x_0) =  \max_{a \in [n] : \tilde X_a \leq x_0} \min_{b \in [n] : \tilde X_b\geq x_0} \overline{\tilde Y}_{[\tilde X_a, \tilde X_b]}}
    $$
    tends to $1$ as $n\rightarrow \infty$.  That is, the bootstrap least-squares solution at $x_0$, $\tilde f_n(x_0)$, satisfies a max-min formula with high probability as $n\rightarrow\infty$.
\end{lemma}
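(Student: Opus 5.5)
We must prove Lemma \ref{lemma:maxmin}. The plan has two parts: first show that with probability tending to one there is at least one bootstrap covariate in $\cX\cap[x_0,\infty)$ and at least one in $\cX\cap(-\infty,x_0]$; then apply the deterministic identity of Lemma \ref{lemma:robertson_maxmin} on that event.

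For the first part, fix $n$ large enough that Assumption \ref{ass:generator} holds. Write $\cX=[a^*,b^*]$ by Assumption \ref{assump_iso}(e). Since $x_0$ is interior, both $[x_0,b^*]$ and $[a^*,x_0]$ have positive length, which we denote $c_+,c_->0$. By Assumption \ref{ass:generator}(b), $\tilde p_X\ge \tilde K^{-1}$ on $\cX$ almost surely. Hence
\[
\pi_+:=\P(\tilde X_1\in[x_0,b^*]\mid\cO)\ge c_+/\tilde K ,
\]
and similarly $\pi_-\ge c_-/\tilde K$. By Assumptions \ref{assump_noise1} and \ref{assump_bootstrap}(a), the variables $\tilde X_i$, $i\in[n]$, are i.i.d.\ conditionally on $\cO$. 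Therefore the conditional probability that no $\tilde X_i$ lies in $[x_0,b^*]$ equals $(1-\pi_+)^n\le(1-c_+/\tilde K)^n$, a nonrandom bound tending to zero. Taking expectations gives the unconditional statement, and the union bound handles both sides simultaneously.

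For the second part, on the event just described, let $m$ be the number of indices $i$ with $\tilde X_i\in\cX$, and relabel these points as $(v_j,w_j)=(\tilde X_{i_j},\tilde Y_{i_j})$. By definition \eqref{eq:bootstrap-isotonic}, $\tilde f_n$ is exactly the isotonic least squares fit $\check f$ of these points on the nondegenerate closed interval $\cV=\cX$. We then invoke Lemma \ref{lemma:robertson_maxmin} at $v_0=x_0$. The resulting block averages over $[v_a,v_b]\subseteq\cX$ coincide with $\overline{\tilde Y}_{[\tilde X_a,\tilde X_b]}$ as defined, because the indicator there already restricts attention to $\cX$. Indices outside $\cX$ contribute nothing, so the max/min over $a,b\in[n]$ should be understood as ranging over those $a,b$ with $\tilde X_a,\tilde X_b\in\cX$, which is the intended reading.

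We expect the main obstacle to be a mismatch in convention: Lemma \ref{lemma:robertson_maxmin} uses $v_a<v_0$ strictly, whereas the target formula uses $\tilde X_a\le x_0$. This is harmless, since $\tilde X$ has a conditional Lebesgue density by Assumption \ref{ass:generator}, so ties at $x_0$ occur with conditional probability zero. For the same reason, we need a point strictly below $x_0$ in $\cX$; this is provided by the density lower bound above, applied to $[a^*,x_0)$. With these adjustments, the max-min identity holds on an event whose probability tends to one, which is the claim.
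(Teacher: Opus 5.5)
Your proposal is correct and follows essentially the same route as the paper. Both arguments first use Assumption~\ref{ass:generator}(b) and conditional independence to show that the conditional probability of having no bootstrap covariate on either side of $x_0$ in $\cX$ is at most $(1-c_\pm/\tilde K)^n\to0$, and then apply Lemma~\ref{lemma:robertson_maxmin} on that event. Your constant $\tilde K^{-1}$ is the right one; the paper's display writes $\tilde K$.

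Your handling of ties at $x_0$ is a welcome addition. You also do not need to reinterpret the range of $a,b$. On the event, indices with $\tilde X_a<a^*$ or $\tilde X_b>b^*$ only reproduce block averages already attained by the smallest or largest in-$\cX$ point, because $\overline{\tilde Y}$ averages only over $\cX$. So the max-min value is unchanged.
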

\begin{proof}
To demonstrate the first conclusion, recall that 
$$
\P\Big(\tilde X_i \in \cX \cap (-\infty,x_0] \mid \cO\Big) \geq \int_{\cX\cap (-\infty,x_0]} \tilde K \, \d x > 0
$$
almost surely from Assumption \ref{ass:generator}(b). Since $\tilde X_i$'s are conditionally independent given $\cO$, we have 
$$
    \P\Big(\Big|\Big\{i \in [n] : \tilde X_i \in \cX \cap (-\infty,x_0]\Big\}\Big| = 0 \mid \cO\Big) \leq \pa{1 - \int_{\cX \cap (-\infty, x_0]}\tilde K\,\d x}^n \rightarrow 0
$$
almost surely.  The exact same reasoning for
$$
\P\Big(\Big|\Big\{i \in [n] : \tilde X_i \in \cX \cap [x_0, \infty)\Big\}\Big| = 0 \mid \cO\Big)
$$
combined with the bounded convergence theorem completes the proof of the first claim.

To show that the second claim is true, we first work on the intersection of the events $\{|\{i \in [n] : \tilde X_i \in \cX \cap (-\infty, x_0]\}| \neq 0\}$ and $\{|\{i \in [n] : \tilde X_i \in \cX \cap [x_0, \infty)\}| \neq 0\}$. 

Since $\tilde f_n$ solves \eqref{eq:bootstrap-isotonic},
Lemma \ref{lemma:robertson_maxmin} yields 
$$
\tilde f_n(x_0) = \max_{a \in [n] : \tilde X_a \leq x_0} \min_{b \in [n] : \tilde X_b\geq x_0} \frac{1}{|\{i : \tilde X_i \in \cX \cap [\tilde X_a,\tilde X_b]\}|} \sum_{1 \leq i \leq n : \tilde X_i \in \cX \cap [\tilde X_a,\tilde X_b]} \tilde Y_i.
$$
where $\cX = [a^*, b^*]$. This is exactly the claim of the lemma.
\end{proof}

Next is a consequence of Lemma 4.3 from \cite{han2022berry}.
\begin{lemma}[Lemma 4.3 of \cite{han2022berry}]\label{lemma:xi_o1}
   Suppose Assumptions \ref{assump_bootstrap} and \ref{ass:generator} hold. Then,
    $$
    n^{1/3} \sup_{u \geq 0} \Big|\overline{\tilde \xi}_{[x_0-n^{-1/3}, x_0 + un^{-1/3}]}\Big| = O_\P(1) \quad \text{and} \quad  n^{1/3} \sup_{\ell \geq 0} \Big|\overline{\tilde \xi}_{[x_0 - \ell n^{-1/3}, x_0 + n^{-1/3}]}\Big| = O_\P(1).
    $$
\end{lemma}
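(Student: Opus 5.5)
The bound is a bootstrap analogue of the cited result of \cite{han2022berry}, so I would reproduce its argument conditionally on $\cO$. Split the two suprema; the one over $\ell\ge 0$ with right endpoint $x_0+n^{-1/3}$ is symmetric to the one over $u\ge 0$ with left endpoint $x_0-n^{-1/3}$. For the latter, write
\[
n^{1/3}\,\overline{\tilde\xi}_{[x_0-n^{-1/3},\,x_0+un^{-1/3}]}=\frac{n^{1/3}\sum_{i}\tilde\xi_i\mathds 1(\tilde X_i\in[x_0-n^{-1/3},x_0+un^{-1/3}]\cap\cX)}{|\{i:\tilde X_i\in[x_0-n^{-1/3},x_0+un^{-1/3}]\cap\cX\}|}.
\]
Conditionally on $\cO$ the pairs $(\tilde X_i,\tilde\xi_i)$ are i.i.d. They satisfy $\E[\tilde\xi_i\mid\tilde X_i,\cO]=0$ by the definition of $\tilde f_0$. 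They are uniformly bounded by $2\sup_{\mz\in\tilde\cZ}\|\mz\|_2$ by Assumption \ref{assump_bootstrap}(c). Finally, $\tilde X_i$ has density at least $\tilde K^{-1}$ on $\cX$ by Assumption \ref{ass:generator}(b), and at most some constant by integrating the bound $\tilde p_n\le\tilde K$ over the bounded set $\tilde\cY$.

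I would use a peeling argument over dyadic blocks $u\in[2^{j}-1,2^{j+1}-1)$, $j\ge 0$, truncated once $x_0+un^{-1/3}$ exits $\cX$; beyond that point the interval no longer grows. On block $j$ the denominator is bounded below by about $c\,n^{2/3}2^{j}$ with conditional probability at least $1-e^{-cn^{2/3}2^j}$, by Bernstein's inequality and the density lower bound. The centered numerator is a martingale in $u$ after ordering the $\tilde X_i$, and its conditional variance is at most $C n^{2/3}2^{j}$. A maximal inequality (Doob's, or Lemma \ref{lemma:wellvdv229} with the sub-Gaussian increments supplied by boundedness) then gives
\[
\P\Big(\sup_{u\in\text{block }j}\Big|\sum_i\tilde\xi_i\mathds 1(\cdot)\Big|>t\,n^{1/3}\,c\,2^{j}\,\Big|\,\cO\Big)\le \frac{C}{t^{2}2^{j}},
\]
since $n^{2/3}\cdot n^{-2/3}2^{j}/2^{2j}=2^{-j}$. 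Summing over $j$ gives a bound $C/t^2$ uniform in $n$, plus the exponentially small denominator failures, which are summable.

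Combining these gives $\P(n^{1/3}\sup_u|\overline{\tilde\xi}|>t\mid\cO)\le C t^{-2}+o(1)$ almost surely, with $C$ depending only on $\tilde K$ and $\tilde\cZ$. Taking expectations (bounded convergence) yields $O_\P(1)$ unconditionally. The main obstacle is making every constant nonrandom; this is exactly why the uniform bounds in Assumption \ref{ass:generator}(b) and the fixed compact $\tilde\cZ$ are needed. A second point to handle is that the $u$ near $0$ block is nondegenerate, since the interval already has length $n^{-1/3}$; this is what keeps the denominator of order $n^{2/3}$ rather than vanishing.
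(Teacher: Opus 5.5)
Your proof is correct, and it takes a different route from the paper, which gives no argument for this lemma beyond citing Lemma~4.3 of \cite{han2022berry}. That citation is formulated for the original-data model, where the errors are i.i.d.\ under a fixed law and independent of the covariates. In the bootstrap world neither holds:
\begin{itemize}
\item $\tilde\xi_i$ is only conditionally mean zero given $\tilde X_i$, and the paper itself stresses that it is not independent of $\tilde X_i$;
\item the law of $(\tilde X_i,\tilde\xi_i)$ changes with $n$ and with $\cO$.
\end{itemize}
So transferring the cited bound requires exactly the check you carry out.

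Your argument uses only what is actually available here:
\begin{itemize}
\item the conditional i.i.d.\ structure of $(\tilde X_i,\tilde\xi_i)$ given $\cO$;
\item $\E[\tilde\xi_i\mid\tilde X_i,\cO]=0$, by the definition of $\tilde f_0$;
\item the bound $|\tilde\xi_i|\le 2\sup_{\mz\in\tilde\cZ}\|\mz\|_2$;
\item nonrandom two-sided bounds on $\tilde p_X$, from Assumption~\ref{ass:generator}(b) and compactness of $\tilde\cZ$.
\end{itemize}
Dyadic peeling, a Bernstein lower bound on the counts, and a Kolmogorov/Doob maximal inequality for the numerator then give a nonrandom bound $Ct^{-2}+o(1)$ on the conditional tail. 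Given the design, the numerator is a partial-sum process of independent mean-zero terms, so the maximal inequality applies. The resulting bound is more than $O_\P(1)$ requires. The paper's route buys brevity. Yours buys an explicit verification that the bound survives the dependence between $\tilde\xi_i$ and $\tilde X_i$ and the $n$-dependence of the bootstrap law.

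Two small remarks. First, drop the parenthetical alternative via Lemma~\ref{lemma:wellvdv229}; it would not work as stated. The process $u\mapsto n^{-1/3}\sum_i\tilde\xi_i\mathds{1}(\tilde X_i\in[x_0-n^{-1/3},x_0+un^{-1/3}])$ has increments whose variance scales like $|u-u'|$, not $|u-u'|^2$. So it cannot satisfy that lemma's Lipschitz-in-$\psi_2$ hypothesis. Kolmogorov's inequality, which needs only second moments, is the right tool. Second, your argument also uses two facts the lemma's hypothesis list leaves implicit, so state them:
\begin{itemize}
\item $x_0$ is interior to $\cX$ (Assumption~\ref{assump_iso}(e)), so that eventually $[x_0-n^{-1/3},x_0+n^{-1/3}]\subseteq\cX$;
\item the conditional i.i.d.\ structure of the bootstrap draws, which comes from Assumption~\ref{assump_noise1}.
\end{itemize}
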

The next lemma states conditions, under which a uniform central limit theorem holds conditional on $\cO$ in a triangular array setting. This is the combination of Lemmas 2.8.2 and 2.8.7 in \cite{van1996weak}.

\begin{lemma}[Lemmas 2.8.2 and 2.8.7 in \cite{van1996weak}]\label{lemma:uclt}
Suppose $\cH_{\cT} = \{h_{t} : \mt \in \cT\}$ is a collection of real-valued functions defined on $\R^r$ indexed by a subset $\cT \subseteq \R^r$. Suppose that $\cH_{\cT}$ contains $h_0$, the function that maps all points to zero. Additionally, suppose that there exists a non-negative function $H : \R^r \rightarrow \R$ such that
\begin{enumerate}[label=(\alph*)]   
    \item $\sup_{h_t \in \cH_{\cT}} |h_t(\mv)| \leq H(\mv) \text{ for all }\mv \in \R^r;$
    \item $\sup_{n} \E[H(\mV_{1,n})^2 \mid \cO] < \infty$ almost surely; 
    \item $\limsup_{n\rightarrow \infty} \E[H(\mV_{1,n})^2 \cdot \mathds{1}(H(\mV_{1,n}) \geq \epsilon n^{1/2}) \mid \cO]=0$ almost surely, for every $\epsilon > 0$,
    \end{enumerate}
    where $\{\mV_{i,n}\}_{1\leq i\leq n, n\geq 1}$ are $\R^r$-valued random variables forming a triangular array conditional on $\cO$ and $\mV_{1,n}, ..., \mV_{n,n}$ are identically distributed conditional on $\cO$. Let $\mV$ be another $\R^r$-valued random variable. Let
    $$
    \rho_{n}(h_t, h_{t'}) = \E\Big[\Big(h_{t}(\mV_{1,n}) - h_{t'}(\mV_{1,n})\Big)^2 \mid \cO \Big]^{1/2} \quad \text{and} \quad \rho_0(h_t, h_{t'}) = \E\Big[\Big(h_{t}(\mV) - h_{t'}(\mV)\Big)^2 \Big]^{1/2}.
    $$
    Suppose that
\begin{enumerate}[label=(\alph*)]
        \item as $n\rightarrow \infty$, $\sup_{\mt, \mt' \in \cT} |\rho_n(h_t, h_{t'}) - \rho_0(h_t, h_{t'})| \rightarrow 0
        $ almost surely; 
        \item uniformly in $n=1,2,...$, $\cH_{\cT}$ is totally bounded with respect to $\rho_n$ almost surely; 
        \item for every $\epsilon > 0$,
        $$
        \lim_{\delta \rightarrow 0} \limsup_{n\rightarrow \infty} \P\pa{\sup_{\rho_n(h_t, h_{t'}) < \delta} \ap{\frac{1}{\sqrt{n}} \sum_{i=1}^n h_{t}(\mV_{i,n}) -h_{t'}(\mV_{i,n}) - \E[h_{t}(\mV_{i,n}) - h_{t'}(\mV_{i,n}) \mid \cO]} > \epsilon \mid \cO} = 0
        $$
        almost surely.
    \end{enumerate}
    Then,
    $$
    \sup_{\mt \in \cT, s \in \R} \ap{\P\pa{\frac{1}{\sqrt{n}}\sum_{i=1}^n h_{t}(\mV_{i,n}) - \E[h_t(\mV_{1,n})\mid \cO]  \leq s \mid \cO}
     - \P\pa{S \leq s}} \rightarrow 0
    $$
    almost surely, where $S$ is a mean-zero normal random variable with variance ${\rm Var}(h_t(\mV))$.
\end{lemma}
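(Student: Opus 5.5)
The plan is to reduce the statement to the unconditional results of \citet[Lemmas~2.8.2 and~2.8.7]{van1996weak} by fixing a realization of $\cO$. First I intersect the countably many almost-sure events in the hypotheses into a single probability-one event $\Omega_0$: the envelope bounds (b), (c), the convergence $\rho_n\to\rho_0$, the uniform total boundedness, and the equicontinuity condition along rational $\epsilon,\delta$. For $\omega\in\Omega_0$ the conditional laws $\Q_n^\omega$ of $\mV_{1,n}$ given $\cO$ are deterministic, and $\mV_{1,n},\ldots,\mV_{n,n}$ is a row-wise i.i.d.\ triangular array under $\Q_n^\omega$. It therefore suffices to prove the displayed convergence for each fixed $\omega\in\Omega_0$, as in the proofs of Lemmas~\ref{lemma:polya_cdl}--\ref{lemma:lyapunov_cdl}.

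Fix $\omega$. The first step is a uniform Lindeberg CLT on finite subsets. For any finite $\{t_1,\ldots,t_m\}\subseteq\cT$, the vector $n^{-1/2}\sum_i (h_{t_j}(\mV_{i,n})-\E[h_{t_j}(\mV_{1,n})\mid\cO])_{j\le m}$ satisfies the Lindeberg condition, because $|h_{t_j}|\le H$ and condition (c) controls $\E[H^2\mathds 1(H\ge\epsilon\sqrt n)\mid\cO]$. Its covariance converges to that of $(h_{t_j}(\mV))_j$. For this I plan to use polarization: $\rho_n(h_t,h_{t'})\to\rho_0(h_t,h_{t'})$ together with $\rho_n(h_t,h_0)\to\rho_0(h_t,h_0)$ give convergence of the second moments and cross moments. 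Convergence of the first moments $\E[h_t(\mV_{1,n})\mid\cO]\to\E[h_t(\mV)]$ is also needed for the variances. I would obtain it from $\rho_0$-continuity and the limit identification in Lemma~2.8.7, or, in our application, directly from the bound \eqref{eq:uclt:firstequality}.

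The second step passes from a finite grid to uniformity over $\cT$. By the total boundedness hypothesis, for each $\delta>0$ there is a finite $\delta$-net $\{t_1,\ldots,t_{m_\delta}\}$ for $\rho_n$ whose size does not depend on $n$; using $\rho_n\to\rho_0$ uniformly, it can be taken to be a single net for all large $n$. Every $h_t$ then lies within $\rho_n$-distance $\delta$ of some $h_{t_j}$. The asymptotic equicontinuity condition shows that the process at $t$ differs from the process at $t_j$ by more than $\epsilon$ with conditional probability at most $\eta(\delta)\to 0$. Moreover $|{\rm Var}(h_t(\mV))-{\rm Var}(h_{t_j}(\mV))|\lesssim\delta$, uniformly in $t$, since $\sup_t\rho_0(h_t,h_0)<\infty$. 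Combining these with the finite-grid CLT and the Lipschitz continuity in $s$ of normal distribution functions whose variance is bounded away from zero yields the claimed uniform bound in $(\mt,s)$: first send $n\to\infty$, then $\delta\to 0$, then $\epsilon\to 0$.

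The main obstacle is the $\sup_s$ uniformity near degenerate indices, where ${\rm Var}(h_t(\mV))$ is close to $0$ and the limit distribution function is nearly a jump at $0$. For such $t$ I would argue separately: Chebyshev's inequality bounds the conditional probability that the process exceeds $\epsilon$ by ${\rm Var}(h_t(\mV_{1,n})\mid\cO)/\epsilon^2$, which is small, so both distribution functions are within $o(1)$ of the step at $0$ except on $|s|\le\epsilon$. If the Kolmogorov-type supremum still fails there, I would restrict $\cT$ to variances bounded below. This is harmless for Lemma~\ref{lemma:fidi_uclt}, which only uses index sets with $\ell+u\ge k$, where the variance is at least $k\sigma^2 p_X(x_0)>0$.
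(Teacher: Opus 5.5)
Your reduction to a fixed realization of $\cO$ is the route the paper intends. The paper gives no argument beyond citing Lemmas 2.8.2 and 2.8.7 of \citet{van1996weak}, which deliver weak convergence in $\ell^\infty(\cH_\cT)$. Your net-plus-Lipschitz-cdf step is what would turn that into the uniform Kolmogorov bound. The gap is that the two difficulties you flag are genuine failures of the statement as written, and your first remedy does not work.

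\textbf{Means.} With the uncentered $\rho_n,\rho_0$ of the statement, the hypotheses control only second moments. So convergence of $\E[h_t(\mV_{1,n})\mid\cO]$ cannot come ``from $\rho_0$-continuity and the limit identification.'' Take $r=1$, $\cT=\{0,1\}$, $h_1(v)=v$, $H(v)=|v|$, $\mV$ uniform on $\{0,2\}$, and rows i.i.d.\ uniform on $\{-\sqrt2,\sqrt2\}$ independent of $\cO$. Then $\rho_n(h_1,h_0)=\sqrt2=\rho_0(h_1,h_0)$. The envelope and total-boundedness conditions are trivial, and equicontinuity is trivial for $\delta<\sqrt2$ because only diagonal pairs qualify. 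Yet $n^{-1/2}\sum_i\mV_{i,n}$ converges to $N(0,2)$, not to $N(0,\mathrm{Var}(\mV))=N(0,1)$.

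The repair is to use standard-deviation semimetrics, $\rho_n(h_t,h_{t'})^2=\mathrm{Var}(h_t(\mV_{1,n})-h_{t'}(\mV_{1,n})\mid\cO)$, and likewise for $\rho_0$; this is presumably why $h_0\in\cH_\cT$ is imposed. Then $\rho_n(h_t,h_0)\to\rho_0(h_t,h_0)$ is exactly variance convergence, and your polarization gives the covariances with no first moments at all. In Lemma~\ref{lemma:fidi_uclt} the issue is moot for a simpler reason than \eqref{eq:uclt:firstequality}. By the definition of $\tilde f_0$, $\E[\tilde\xi\,\mathds{1}(\tilde X\in A)\mid\cO]=0$ exactly, so the means vanish identically. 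That is cleaner than an $O_{\P}$ bound, which would need a further subsequence inside a per-realization argument.

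\textbf{Degenerate indices.} The uniform Kolmogorov conclusion genuinely fails near degenerate limits, so the Chebyshev fallback cannot close the window $|s|\le\epsilon$. Take $\mV\equiv1$ and rows uniform on $\{1-\sigma_n,1+\sigma_n\}$ with $\sigma_n\downarrow0$. Every hypothesis holds, under either version of $\rho$. But the conditional probability of $\{n^{-1/2}\sum_i(\mV_{i,n}-1)\le0\}$ tends to $1/2$, while $\P(S\le0)=1$. So a lower bound on $\tau_t^2:=\mathrm{Var}(h_t(\mV))$ over the index set (apart from $h_0$, which is trivial) has to be a hypothesis, not a contingency.

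With these two amendments your argument is complete:
\begin{itemize}
\item apply Lindeberg at the finitely many net points;
\item use the equicontinuity condition to pass to arbitrary $t$;
\item use $|\tau_t-\tau_{t_j}|\le\rho_0(h_t,h_{t_j})$, with the $\tau$'s bounded below, to control the normal distribution functions uniformly in $s$.
\end{itemize}
Both amendments hold in the application: the means vanish, and the variance is at least $k\sigma^2p_X(x_0)$ because $\ell+u\ge k$. Adjoining $h_0$ there is harmless, since its two distribution functions coincide. You should therefore state and prove the amended lemma rather than the one written.
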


{\small
\bibliographystyle{apalike}
\bibliography{AMS}
}
\end{document}